\date{}
\newcommand{\centerskip}{\medskip}
\newcommand{\beforelistskip}{\medskip}
\newcommand{\beforeproofpartskip}{\bigskip}
\newcommand{\afterproofpartskip}{\medskip}
\pgfplotsset{compat=1.17}
\newcommand{\name}[1]{\emph{#1}}
\newcommand{\defname}[1]{\emph{#1}}
\newcommand{\entails}{\models}
\newcommand{\imp}{\rightarrow}
\newcommand{\la}{\langle}
\newcommand{\ra}{\rangle}
\newcommand{\eqdef}{\; 
\raisebox{-0.1ex}[0mm]{$ \stackrel{\raisebox{-0.2ex}{\tiny 
      \textnormal{def}}}{=} $}\; }
\newcommand{\emptysubst}{\epsilon}
\newcommand{\nhphantom}[1]{\sbox0{#1}\hspace*{-\the\wd0}}
\renewcommand{\f}[1]{\mathsf{#1}}
\newcommand{\g}[1]{\mathit{#1}}
\mathchardef\hyph="2D
\newcolumntype{L}[1]{>{\raggedright\let\newline\\\arraybackslash\hspace{0pt}}p{#1}}
\newcolumntype{C}[1]{>{\centering\let\newline\\\arraybackslash\hspace{0pt}}p{#1}}
\newcolumntype{R}[1]{>{\raggedleft\let\newline\\\arraybackslash\hspace{0pt}}p{#1}}
\newcolumntype{X}[1]{>{\raggedright\let\newline\\\arraybackslash\hspace{0pt}$}p{#1}<$}
\newcolumntype{Y}[1]{>{\centering\let\newline\\\arraybackslash\hspace{0pt}$}p{#1}<$}
\newcolumntype{Z}[1]{>{\raggedleft\let\newline\\\arraybackslash\hspace{0pt}$}p{#1}<$}
\newenvironment{arrayprf}
{\begin{array}{Z{2.5em}@{\hspace{1em}}X{32em}}}
{\end{array}}
\newenvironment{arrayprfeq}
{\begin{array}{Z{2.5em}@{\hspace{1em}}Y{1.5em}X{30.5em}}}
{\end{array}}
\newcommand{\supterm}{\mathrel{\rhd}}
\newcommand{\suptermq}{\mathrel{\unrhd}}
\newcommand{\notsuptermq}{\not \mathrel{\unrhd}}
\newcommand\subsumedBy{\mathrel{\ooalign{$\geq$\cr
      \hidewidth\raise.225ex\hbox{$\cdot\mkern7.0mu$}\cr}}}
\newcommand\subsumes{\mathrel{\ooalign{$\leq$\cr
      \hidewidth\raise.225ex\hbox{$\cdot\mkern2.0mu$}\cr}}}
\newcommand\strictlySubsumes{\mathrel{\ooalign{$<$\cr
      \hidewidth\raise.0ex\hbox{$\cdot\mkern2.0mu$}\cr}}}
\newcommand\strictlySubsumedBy{\mathrel{\ooalign{$>$\cr
       \hidewidth\raise.0ex\hbox{$\cdot\mkern7.0mu$}\cr}}}
\newcommand\variant{\mathrel{\ooalign{$=$\cr
      \hidewidth\raise.7ex\hbox{$\cdot\mkern4.5mu$}\cr}}}
\newcommand{\Paragraph}{§}
\renewcommand{\P}{\f{P}}
\renewcommand{\i}{\f{i}}
\newcommand{\D}{\f{D}}
\newcommand{\Dterms}{\mbox{D-terms}\xspace}
\newcommand{\Dterm}{\mbox{D-term}\xspace}
\newcommand{\DTerms}{\mbox{D-Terms}\xspace}
\newcommand{\DTerm}{\mbox{D-Term}\xspace}
\newcommand{\dterms}{\mbox{D-terms}\xspace}
\newcommand{\dterm}{\mbox{D-term}\xspace}
\newcommand{\DSUBQ}{\m{\mathcal{S}{\hspace{-0.12em}ubeq}}}
\newcommand{\DSUBSTRICT}{\m{\mathcal{S}{\hspace{-0.12em}ub}}}
\newcommand{\DPRIMSET}{\m{\mathcal{DP}\hspace{-0.12em}rim}}
\newcommand{\DPRIM}{\m{\mathcal{DP}\hspace{-0.12em}rim}}
\newcommand{\dconstant}{primitive D-term\xspace}
\newcommand{\CompD}{\m{\mathcal{C}\hspace{-0.12em}omp\mathcal{D}}}
\newcommand{\SCsize}{SC size\xspace}
\newcommand{\compDterms}{compacted \mbox{D-terms}\xspace}
\newcommand{\compDterm}{compacted \mbox{D-term}\xspace}
\newcommand{\compd}{\delta}
\newcommand{\CD}{CD\xspace}
\newcommand{\tsize}{\f{t\hyph size}}
\newcommand{\csize}{\f{c\hyph size}}
\newcommand{\height}{\f{height}}
\newcommand{\scsize}{\f{sc\hyph size}}
\newcommand{\replace}[3]{#1{[}#2 \mapsto #3{]}}
\newcommand{\UCN}[1]{\forall #1}
\newcommand{\UCM}[1]{\forall\mspace{2mu} #1}
\newcommand{\UC}[1]{\forall\, #1}
\newcommand{\gtrc}{\mathrel{>_{\mathrm{c}}}}
\newcommand{\geqc}{\mathrel{\geq_{\mathrm{c}}}}
\newcommand{\m}[1]{\mathit{#1}}
\newcommand{\vars}{\m{\mathcal{V}\hspace{-0.11em}ar}}
\newcommand{\dom}{\m{\mathcal{D}\hspace{-0.08em}om}}
\newcommand{\rng}{\m{\mathcal{R}\hspace{-0.02em}ng}}
\newcommand{\vrng}{\m{\mathcal{VR}\hspace{-0.02em}ng}}
\newcommand{\pos}{\m{\mathcal{P}\!os}}
\newcommand{\leafpos}{\m{\mathcal{L}\hspace{-0.05em}eaf\mathcal{P}\!os}}
\newcommand{\innerpos}{\m{\mathcal{I}\hspace{-0.05em}nner\mathcal{P}\!os}}
\newcommand{\posvar}{\m{\mathcal{P}\!os}\mathcal{V}ar}
\newcommand{\positionalvars}{positional variables\xspace}
\newcommand{\mgu}{\f{mgu}}
\newcommand{\emptypos}{\epsilon}
\newcommand{\pairing}{\f{pairing}}
\newcommand{\sshift}[1]{\f{shift}_{#1}}
\newcommand{\mf}[1]{\mathit{#1}}
\newcommand{\mgt}{\mf{Mgt}}
\newcommand{\ipt}{\mf{Ipt}}
\newcommand{\MGT}{MGT\xspace}
\newcommand{\IPT}{IPT\xspace}
\newcommand{\IPTs}{IPTs\xspace}
\newcommand{\expand}{\f{expand}}
\newcommand{\Det}{\text{\textit{Det}}\xspace}
\newcommand{\Luk}{\text{\textit{{\L}ukasiewicz}}\xspace}
\newcommand{\Syll}{\text{\textit{Syll}}\xspace}
\newcommand{\Peirce}{\text{\textit{Peirce}}\xspace}
\newcommand{\Simp}{\text{\textit{Simp}}\xspace}
\newcommand{\comb}[1]{\bm{\mathsf{#1}}}
\newcommand{\PSYLL}{\text{\textit{{\L}DS}}\xspace}
\renewcommand{\IF}{\textbf{IF}\xspace}
\newcommand{\xeqdef}{\eqdef} %
\newcommand{\eeqdef}{\eqdef} %
\newcommand{\vy}{y}
\newcommand{\vx}[2]{x^{#1}_{#2}}
\newcommand{\sall}{\sigma}
\newcommand{\ssubold}{\tau}
\newcommand{\ssubnew}{\tau^{\prime}}
\newcommand{\NT}{\overline{T}}
\newcommand{\simpn}{\f{simp\hyph n}}
\newcommand{\Lukasiewicz}{{\L}u\-ka\-sie\-wicz\xspace}
\newcommand{\TPTP}{\name{TPTP}\xspace}
\newcommand{\ProverN}{\name{Prover9}\xspace}
\newcommand{\CDTools}{\name{CD~Tools}\xspace}
\newcommand{\E}{\name{E}\xspace}
\newcommand{\Vampire}{\name{Vampire}\xspace}
\newcommand{\leancop}{\name{leanCoP}\xspace}
\newcommand{\OEISNUM}[1]{\href{https://oeis.org/#1}{oeis:#1}}
\newcommand{\TP}[1]{\mbox{\name{#1}}\xspace}
\newcommand{\SETHEO}{\textit{SETHEO}\xspace}
\newcommand{\Otter}{\textit{OTTER}\xspace}
\newcommand{\PTTP}{\textit{PTTP}\xspace}
\newcommand{\SWIProlog}{\textit{SWI Prolog}\xspace}
\newcommand{\SGCD}{\textit{SGCD}\xspace}
\newcommand{\CCS}{\textit{CCS}\xspace}
\newcommand{\tabyes}{\textbullet}
\newcommand{\tabno}{--}
\newcommand{\psplevel}{\m{\mathcal{PSPL}\hspace{-0.05em}evel}}
\newcommand{\primelevel}{\m{\mathcal{P}\hspace{-0.05em}rime\mathcal{L}\hspace{-0.05em}evel}}
\newcommand{\trs}[1]{\hspace{#1em}}
\newcommand{\trsgray}[2]{\colorbox{gray!30}{\parbox{#1em}{$#2$\rule[-0.8ex]{0pt}{3.2ex}}}}
\newcommand{\trsbox}[2]{\fbox{\parbox{#1em}{$#2$\rule[-0.8ex]{0pt}{3.2ex}}}}
\newcommand{\trslash}{\hspace{1em}/\hspace{1em}}
\newcommand{\trskip}[1]{\ldots}
\newcommand{\IS}{IS\xspace}
\newcommand{\MS}{MS\xspace}
\newcommand{\XS}{S\xspace}
\newcommand{\MC}{MC\xspace}
\newcommand{\XC}{C\xspace}
\newcommand{\mereq}{\hspace{0.15em}=\hspace{0.15em}}
\newcommand{\n}{\mathrm{n}}
\newcommand{\mer}[1]{\textit{M#1}}
\newcommand{\subsumesmer}[1]{\textit{$\strictlySubsumes$M#1}}
\newcommand{\luk}[1]{\textit{{\L}#1}}
\newcommand{\subsumesluk}[1]{\textit{$\strictlySubsumes${\L}#1}}
\newcommand{\subsumedByluk}[1]{\textit{$\strictlySubsumedBy${\L}#1}}
\newcommand{\spec}[1]{\textit{{N}#1}}
\newcommand{\pr}{$\,'$}
\newcommand{\xconn}[1]{{\tiny \textbf{#1}}}
\renewcommand{\tl}{1.7cm}
\newcommand{\tll}{0.85cm}
\newcommand{\tlll}{0.425}
\newcommand{\tllll}{0.25cm}
\newcommand{\DLUK}{D_{\textsf{{\L}UK}}}
\newcommand{\DMER}{D_{\textsf{MER}}}
\newcommand{\DSHORT}{D_{\textsf{29}}}
\newcommand{\tabatt}[1]{\scalebox{0.8}{\textbf{#1}}}
\newcommand{\xatt}[1]{\textnormal{\textbf{#1}}}
\newcommand{\LUK}{{\L}UK}
\newcommand{\iv}[2]{{[\![#1,\!#2]\!]}}
\newcommand{\thes}[1]{\name{Thesis-#1}}
\newcommand{\attitem}[1]{\subsubsection{#1}}
\newcommand{\tp}[1]{\name{#1}}
\title[Investigations into Proof Structures]{Investigations into Proof
Structures}
\author*[1]{\fnm{Christoph} \sur{Wernhard}}\email{info@christophwernhard.com}
\author*[2]{\fnm{Wolfgang} \sur{Bibel}}\email{bibel@gmx.net}
\affil*[1]{\orgname{University of Potsdam}, \country{Germany}}
\affil[2]{\orgname{Technical University Darmstadt}, \country{Germany}}
\abstract{\noindent We introduce and elaborate a novel formalism for the manipulation
  and analysis of proofs as objects in a global manner. In this first approach
  the formalism is restricted to first-order problems characterized by
  condensed detachment. It is applied in an exemplary manner to a coherent and
  comprehensive formal reconstruction and analysis of historical proofs of a
  widely-studied problem due to \Lukasiewicz. The underlying approach opens
  the door towards new systematic ways of generating lemmas in the course of
  proof search to the effects of reducing the search effort and finding
  shorter proofs. Among the numerous reported experiments along this line, a
  proof of \Lukasiewicz's problem was automatically discovered that is much
  shorter than any proof found before by man or machine.}
\keywords{%
Analysis of historic formal proofs,
Automated theorem proving in first-order logic,
Condensed detachment,
Connection method,
Lemma generation,
Proof structure terms
}
\begin{document}

\maketitle\showabstract\newpage%
\setcounter{tocdepth}{3}%
\fancyhead[CO]{\small Contents}
\fancyhead[CE]{\small Contents}
\tableofcontents\newpage
\fancyhead[CO]{\small \nouppercase{\rightmark}}
\fancyhead[CE]{\small \nouppercase{\leftmark}}

\section{Introduction}
\label{sec-intro}

In Automated Theorem Proving (ATP) -- or Automated Deduction, or Automated
Reasoning -- the general research topic consists in the search for proofs of
formulas in order to establish their validity or theoremhood. We consider
proofs as syntactic objects defined on the basis of some formal system. There
is a variety of such formal proof systems; hence the formal objects
representing proofs in these differ widely and in consequence also the methods
for finding proofs.

In popular proof methods such as the resolution method, superposition, or the
tableau methods, proofs are sets of formulas arranged in a structured way.
This could be, for instance, in the form of a tree or graph with the formulas
-- or clauses -- labeling its nodes. Connections in the graph indicate that a
succeeding formula is derived from preceding formulas by some manipulation
such as forming the resolvent out of two clauses. Let us here refer to this
subclass of proof systems as \emph{formula-manipulative} ones.

From the point of view of proofs as a whole, formula manipulation of this kind
is a local operation. For both representing as well as finding proofs, more
global operations might be helpful. The use of lemmas may be regarded as such
a global operation. If a proof of a lemma is known, this proof may be inserted
into the overall proof wherever the formula representing the lemma occurs. For
formula-manipulative proof systems such a replacement operation is performed
implicitly by associating with an inferred formula pointers to the parent
formulas from which it was inferred. The proof structure as a whole is made
available in retrospect after a proof has been found, as a DAG (directed
acyclic graph) or as a tree formed by such pointers.

There are proof systems beyond the purely formula-manipulative ones. One such
system has been introduced and applied by Carew A.~Meredith, e.g., in a paper
from 1963 jointly authored with Arthur Prior \cite{meredith:notes:1963}. It
became known under the label of \name{condensed detachment (CD)}. A proof in
this system is represented as a list of pairs of a formula and a proof term.
The focus is on the proof-structural part represented as a term. The
formula-manipulative aspect is reduced to presenting intermediate lemmas.

The proof system underlying the connection method (CM) \cite{bibel:atp:1982}
is even more extreme in this sense. Proofs consist there exclusively of
structural information on the given formula without any manipulative part as
in formula-manipulative systems.

In ATP, CD was so far considered mainly as a special case of hyperresolution,
not taking into account its non-formula-manipulative characteristics. So far,
no adequate formal account of CD from the perspective of ATP could be found in
the literature.

The mutual advantages or disadvantages of these different kinds of formal
systems for proof search or proof representation are not at all clear at this
point in the development of ATP. Lemma-related techniques of general
importance for saturation-based provers such as the advanced use of weighting
templates, e.g., \cite{wos:resonance:95}, and hints \cite{veroff:hints:1996}
were initially devised for CD problems. There are several approaches to
integrating forms of lemma generation into variants of the CM
\cite{eder:cs:1989,astrachan:stickel:caching:1992,schumann:delta:1994,letz:cut:1994,fuchs:lemmas:ijcai:1999,leancop}. Nevertheless,
for the more structurally complicated systems such as CD or CM global
operations like the use of lemmas have never been studied systematically.

The work reported in this paper provides first results in exactly this
direction. Since a comparative analysis of different proof systems such as
those just mentioned is a truly complex enterprise, the task has to be
drastically restricted in this first approach. We thus focus on the simplest
nontrivial class of first-order formulas: a structurally simple goal statement
to be derived from an axiom and a rule with two premises and a single
conclusion. The obvious generalizations are deferred to future work: more than
one axiom, more and more complex rules, and so forth, up to arbitrary
first-order formulas.

Even under the drastic restriction just specified, our comparative task turns
out to be rather involved and proof search for this class of formulas is not
at all trivial for leading ATP systems. Global techniques for directing proof
search such as the use of lemmas or the replacement of proof parts appear to
be particularly intricate for systems that are not formula-manipulative.

The required extensive formal basis is worked out in this paper. Proofs are
represented as terms, which offers advantages not present in
formula-manipulative systems. Altogether, we open here the door towards a
better understanding of the distinctive features of known formal proof systems
with regard to their better or worse suitability for proof search, taking
first steps in this important direction.

Since CD falls into the considered class of first-order formulas, our work
includes the first comprehensive formalization of Meredith's proof system from
an ATP perspective, quasi as a side-result. At the same time this amounts to a
very detailed reconstruction of the historical proofs of a much-studied
problem first stated and proved by {\L}ukasiewicz. Our paper also gives a
rather comprehensive account of the work reported in the literature about this
well-known problem. This account includes numerous experimental results
achieved with a variety of systems. Incorporating the presented original insights,
one of our systems (\SGCD) discovered in a few seconds a new proof of
this problem, which is shorter than all previously known ones.

This work extends the results presented at CADE 2021 \cite{cwwb:lukas:2021}.
The concepts and techniques described here are backed by an implemented
system, \CDTools \cite{cw:cdtools:2022}, a library for experimenting with \CD
and related techniques, which is written in \SWIProlog \cite{swiprolog} and
available as free software. \CDTools includes two provers, \SGCD
\cite{cw:sgcd} (the name
suggesting \name{Structure-Generating proving for Condensed Detachment}) for
\CD problems, and \CCS \cite{cw:ccs} (the name suggesting \name{Compressed Combinatory
  Structures}) for \CD problems and first-order Horn problems. In the paper we
will discuss particular features of these provers and report experimental
results obtained with them. For more details on \SGCD and \CCS we refer to
\cite{cw:sgcd} and \cite{cw:ccs}, respectively.\footnote{The \CDTools system is
available from \url{http://cs.christophwernhard.com/cdtools}. Supplementary
material specific for the paper and to reproduce the experimental results is
provided at \url{http://cs.christophwernhard.com/cdtools/exp-investigations/}.}

The contributions of the paper can be summarized as follows.
\begin{enumerate}
\setlength{\parskip}{0pt}
\item A new formal characterization of CD with the proof structure as a whole
  in the focus, based on concepts and techniques known from the CM.
  
\item  New aspects concerning the interplay of tree and DAG structures in ATP.
  They relate the tree-oriented proceeding of clausal tableau methods with the
  DAG-oriented structure of CD and resolution proofs.

\item New regularity properties of proof structures and new criteria for
  shortening proofs by rewriting. Some of these are consequences of the
  interplay of tree and DAG structures.

\item Identifying and systematizing a set of ATP-relevant features of proofs
  on the basis of our formal framework.
  
\item A detailed analysis of a historic formal proof by Jan
  {\L}ukasiewicz and a variation by Meredith, from an ATP perspective, with
  respect to the identified proof features.
  
\item Generalizing specific structural features observed in the
  historic proofs to novel proof-structure-oriented techniques
  for proof search and lemma generation in ATP.

\item Providing the basis for an implemented system to experiment
  with CD problems and their proof structures \cite{cw:cdtools:2022,cw:sgcd,cw:ccs}. It
  includes two provers, each addressing a specific main aspect. One of
  them, \SGCD, realizes the newly discovered structure-oriented techniques.

\item A new short proof of \Lukasiewicz's problem, found by \SGCD with
   one of the new techniques. It is substantially shorter than the human-made
   proofs and drastically shorter than known proofs by first-order provers.
   Although the proofs by \ProverN \cite{prover9} can be substantially
   shortened with our new proof rewritings, they still remain drastically
   larger.

\item Foundation for follow-up work, including a novel approach to proof search
   over compressed combinatory structures \cite{cw:ccs} and studying the
   generation, selection and application of
   lemmas \cite{mrcwzzwb:lemmas:2023}, also with machine learning.
   As described in the latter reference,
   lemmas utilizing the new techniques already led
   to remarkable success in
   improving competitive first-order provers and solving a challenge
   problem.\footnote{The table
\url{http://cs.christophwernhard.com/cdtools/exp-lemmas/lemmas.html} indicates
the state of the art in ATP with respect to the CD problems in the TPTP --
taking into account results that already emerged from this foundational
work.}

\end{enumerate}

The paper is organized as follows. In Sect.~\ref{sec-background}, after a
very brief illustration of the CM, we
introduce {\L}ukasiewicz's problem as well as different representations of it.
We also compare different formal representations of proofs, in particular
the representation by Meredith and the ATP-oriented representation
of the CM.
Section~\ref{sec-cd-basis} presents Meredith’s proof of the
problem. There we reconstruct the historical method of CD in a novel way as a
restricted variation of the CM where proof structures are represented as
terms. The section introduces the formal basis for the comparative analysis
described above. On this basis, Sect.~\ref{sec-red} focuses on global
features to support proof search. It presents the underlying formalism and
results on reducing the size of such proof terms in order to shorten proofs
and to restrict the search space. The formalism worked out in the preceding
two sections is applied in Sect.~\ref{sec-inspecting} to provide a
comprehensive analysis of the two historical proofs by \Lukasiewicz and by
Meredith of our widely studied guiding problem. The results are summarized in
detailed feature tables for each proof. In Sect.~\ref{sec-atp} we contrast
these proofs with proofs of the same problem that were obtained as outputs of ATP
systems, general first-order provers as well as postprocessors and specialized
provers that realize observations and new techniques discussed in the paper.
Section~\ref{sec-conclusion} concludes the paper.

\section{Relating Formal Human Proofs with ATP Proofs}
\label{sec-background}

Our investigations are centered around a historic formal proof, a landmark
result by Jan \Lukasiewicz from 1936, published in 1948 \cite{luk:1948}. It is
expressed with the method of substitution and detachment. In the early 1960s
\Lukasiewicz's proof was modified and slightly shortened by Carew A. Meredith
with his method of condensed detachment (\CD) \cite{meredith:notes:1963}.
Thus, our basis are two slightly different versions of an advanced human-made
formal proof. The proven problem was, upon suggestion in 1988 by Frank
Pfenning \cite{pfenning:single:1988}, a prominent challenge problem for ATP
\cite{wos:contributes:1990,roo:parallel:1992,mccune:wos:cd:1992,fitelson:missing:2001}.
Also the background technique of \CD, translated to hyperresolution
\cite{kalman:cd:1983}, led to many successes of ATP in the 1990s
\cite{mccune:wos:cd:1992,ulrich:legacy:2001}.

Although the problem can be solved by modern ATP systems, the current state is
not satisfying. For implemented provers that operate in a goal-driven way with
the CM or with clausal tableaux the problem is still completely out of reach.
Its difficulty rating in the \TPTP \cite{tptp,tptp-rating} has not stabilized
at ``most easy'', but fluctuates and recent versions of two competition
champions fail on it.\footnote{See Sect.~\ref{subsec-challenge-zero}.}
Since the problem was proven formally by humans, this indicates that proof
search in ATP remains in need for further improvement. Also the proofs
obtained with ATP systems are much longer than the human-made proofs, which
indicates a general weakness in our methods with negative effects on their
performance, let alone the involved annoyance for ATP users.

Our aim here is to improve on these issues of general relevance for ATP.
Nevertheless, we focus on a single problem, which is solvable, yet remains a
challenge for both humans and ATP systems. Its basic structure and features
are common to many first-order problems such that results obtained for the
problem can be assumed to apply also more generally. What justifies the
particular choice of the problem is that we have two related human-made formal
proofs at hand, developed by world-leading masters in the field. Their proofs
by far improve on those by today's ATP systems with respect to invested search
effort and size. Hence, inspecting the human-made proofs in depth should lead
through some sort of ``reverse engineering'' to the discovery of techniques
that were used -- intentionally or intuitively -- by \Lukasiewicz and Meredith
and are useful to advance modern ATP.

In this section we introduce the problem proven originally by \Lukasiewicz and
indicate a new adaptation of \CD, the technique used by Meredith, for ATP,
which will be elaborated in later sections. In contrast to most previous
accounts of \CD in ATP and type theory
\cite{kalman:cd:1983,mccune:wos:cd:1992,hindley:meredith:cd:1990,hindley:book:1997}
our focus is not on regarding \CD as an inference rule, but rather on the
aspect that \CD originally comes with explicitly reified proof structures that
are as a whole accessible as trees or terms, or in compacted form as DAGs. In
this respect our modeling of \CD is oriented at the CM, and could in fact be
understood as a simple special case of the CM. For the background concerning
the CM we refer
to~\cite{bibel:atp:1982,bibel:deduction:1993,bibel:otten:2020}.
Here we just briefly illustrate the CM in the following subsection.

\subsection{A Very Short Illustration of the Connection Method (CM)}

Proof systems in ATP are designed to establish the validity of statements
represented as formulas in some logic, like the first-order logic formula
\[(\forall a\, \lnot \f{P} a \imp \lnot \exists y\, \f{Q} y) \imp \exists z\,
\f{P} z\lor \lnot\exists b\, \f{Q}\f{f} b.\]
In order to simplify the involved mechanisms, often the number of different
logical operators is minimized, e.g., restricted to $\lnot,\lor,\exists$, in
accordance with well-known logical rules. In the present example this leads to
the formula $\lnot(\exists a\, \f{P} a\lor\lnot\exists y\, \f{Q} y)\lor\exists z\, \f{P}
z\lor \lnot\exists b\, \f{Q} \f{f} b$.

Gentzen introduced his traditional calculus of natural deduction for the
same purpose. In a simplified variant of this calculus the validity of
this formula is established as its derivation shown in the following figure.

\begin{figure}[H]
\centering
\begin{minipage}{0.95\textwidth}
\input{img8/briefcm_01_gs}
\end{minipage}
\medskip
\caption{A derivation in GS.}
\label{fig:briefcm:gsderivation}
\end{figure}

This kind of proof representation is extremely redundant. The CM is designed
to eliminate this redundancy; it represents the relevant information given by
this derivation in the following structure attached to the formula.

\begin{figure}[H]
  \centering
  \begin{minipage}{28.5em}
    \vspace{0.5cm}
    $\neg(\f{P}^1\f{a}\lor\neg\exists y\, \f{Q}^0y)\lor\exists z\, \f{P}^0z\lor \neg
    \f{Q}^1\f{f} \f{b}\;$ with
    $\;\sigma = \{y_1\mapsto \f{f} \f{b}, z_1\mapsto \f{a}\}$\\
      {\setlength{\unitlength}{0.32mm}
        \begin{picture}(0.00,0.00)(0.00,0.00)
          \qbezier(20.00,25.00)(70.00,43.00)(116.00,25.00)
          \qbezier(70.00,8.00)(110.00,-10.00)(154.00,8.00)
        \end{picture}}
  \end{minipage}
  \medskip
  \caption{The connection proof for the Skolemized formula from
    Fig.~\ref{fig:briefcm:gsderivation}.}
  \label{fig:briefcm:gs2derivation}
\end{figure}

The details of this redundancy elimination are formally presented in the paper
\cite{bibel:otten:2020}. The upper indices attached to the predicates indicate
whether the literal occurs positively ($0$) or negatively ($1$) in the
formula. A \name{connection}, i.e., a pair of occurrences in the formula,
links a positive with a negative literal, both with the same predicate symbol.
The CM proof structure can as well be attached to the formula in its original
presentation (i.e., before reducing the number of used logical operators) in a
straightforward way.

CM calculi test such structures according to a certain criterion which is best
illustrated in the matrix representation of the same formula shown in the
following figure.

\begin{figure}[H]
  \centering
  \begin{minipage}{23.3em}
    \setlength\arraycolsep{2\p@}    
    \input{img8/briefcm_03_matrix}
  \end{minipage}
  \vspace{0.2cm}
  \medskip
\caption{The connection proof from Fig.~\ref{fig:briefcm:gs2derivation} in
  matrix representation.}
\label{fig:briefcm:matrix_proof}
\end{figure}

The matrix features three columns or clauses. A \name{path} through such a
matrix (or the corresponding formula) is a set of literals such that exactly
one is picked from each clause. There are exactly two different paths in the
example. A set of connections is called \name{spanning} for a formula if each
path contains (as a subset) at least one of the set's connections. If the
attached substitution unifies each pair of connected literals then the set of
connections is called \name{complementary}. The mentioned criterion for
validity of a formula is the existence of a spanning and complementary set of
connections.

Clauses may be needed multiple times for achieving a proof. This may be
realized by listing them explicitly with different variables or by indexing
just a single occurrence resulting in indexed variables as indicated in
Figs.~\ref{fig:briefcm:gsderivation} and \ref{fig:briefcm:matrix_proof}. The
latter variant complicates the illustrated concepts accordingly.

CM calculi search for spanning and complementary connection sets. A popular
method does this in a certain systematic manner starting from the goal of the
formula (top-down). More involved search strategies are mixing top-down with
bottom-up search (see, e.g., \cite{wb:conjecture:2024}). For numerous further
refinements see
\cite{bibel:atp:1982,bibel:deduction:1993,bibel:otten:2020,wb:comparison:2024}

Proofs in the CM are formal structures attached to the given formula. For ease
of understanding the present paper focuses on equivalent, less compact but
more familiar proof structures like trees and DAGs. Yet the results of the
present paper are also evidence for the advantages of such compact proof
structures like those of the CM.

\subsection{\Lukasiewicz's Shortest Single Axiom for the Implicational
  Fragment of Propositional Logic}
\label{subsec-luk-shortest-axiom}

Classical propositional logic can be formalized with different sets of logic
operators such as, for example, implication and negation, $\{\rightarrow,
\lnot\}$. Abandoning $\neg$ and leaving $\rightarrow$ as the only logic
operator yields a restricted propositional logic, the \name{implicational
  fragment \textbf{IF}}. The original investigations of this logic use
\Lukasiewicz's so-called Polish notation where the implication $p \imp q$ is
written as $\g{Cpq}$.
Following Pfenning \cite{pfenning:single:1988} we formalize \IF in the setting of
modern first-order ATP with a single unary predicate $\P$ to be interpreted as
something like ``provable'' and represent the \IF formulas by terms using the
binary function symbol $\i$ (instead of $\rightarrow$) for implication.
Implicational propositional logic is characterized by the \name{Tarski-Bernays
  Axioms}, that is, the set of the following three axioms called {\em
  simplification\/} (\Simp), {\em Peirce's law\/} (\Peirce) and {\em
  hypothetical syllogism\/} (\Syll).\footnote{As noted by Prior
\cite{prior:logicians:1956}, the Dublin logic school with Prior and Meredith
contracted from \Lukasiewicz the habit of referring to various key formulas by
proper names, in some cases by names used in the Principia Mathematica, for
example \Simp for $\g{CpCqp}$ and \Syll for $\g{CCpqCCqrCpr}$, and in other
cases by the names of logicians associated with the formulas. Thus
$\g{CCCpqpp}$ is called \Peirce and $\g{CCCpqrCCrpCsp}$ is called \Luk. Tables
of such formula nicknames are provided by Prior
\cite[p.~319]{prior:formal:logic:1962} and Dolph Ulrich \cite{ulrich:legacy:2001}.
See also Sect.~\ref{subsubsec-xatt-nicknames}.}

\centerskip
\begin{center}
  \begin{tabular}{L{6em}L{10em}L{13em}}
   Nickname & \Lukasiewicz's notation & First-order representation\\\midrule
  \Simp & $\g{CpCqp}$ & $\forall pq\, \P(\i(p,\i qp))$\\
  \Peirce & $\g{CCCpqpp}$ & $\forall pq\, \P(\i(\i(\i pq),p),p)$\\
  \Syll & $\g{CCpqCCqrCpr}$ & $\forall pqr\, \P(\i(\i pq,\i(\i qr,\i pr)))$
\end{tabular}
\end{center}
\centerskip

Alfred Tarski in 1925 raised the problem to characterize \IF by a single axiom
and solved it with a general technique for packaging axioms
together,\footnote{This technique was reconstructed by Adrian Rezuş
  \cite{rezus:1982,rezus:tc:2016,rezus:tarski:2019}.} which inherently
produced very long axioms. Jan \Lukasiewicz worked on shortening them,
initially by modifying Tarski's packaging method \cite{luk:1948}. As of 1926,
the shortest known single axioms, found by \Lukasiewicz and Mordechaj
Wajsberg, had 25 letters in \Lukasiewicz parenthesis-free
notation~\cite[p.~43]{luk:tarski:aussagenkalkuel:1930}. Two further single
axioms consisting of 17~letters were found by \Lukasiewicz in 1930 and 1932
\cite{luk:1948,sobocinski:1932,rezus:tarski:2019}. In 1936 he then found the
\emph{shortest} single axiom \cite{luk:1948}, which in the literature is
nicknamed after him.\footnote{In modern times Dolph Ulrich
  developed single axioms and axiom-pairs for further logics \cite{ulrich:single:2016}.
  His paper also surveys the known single axioms for implicational logic and provides
  references to a uniqueness result about \Lukasiewicz's shortest axiom, e.g.,
  \cite{thomas:final}.}

\centerskip
\begin{center}
  \begin{tabular}{L{6em}L{10em}L{13em}}
   Nickname & \Lukasiewicz's notation & First-order
   representation\\\midrule
   \Luk & $\g{CCCpqrCCrpCsp}$ & $\forall pqrs\, \P(\i(\i(\i pq,r),\i(\i rp,\i sp)))$
  \end{tabular}
\end{center}
\centerskip

\noindent
In order to show that \Luk is an axiom for implicational propositional logic,
\Lukasiewicz derived \Simp, \Peirce, and \Syll from \Luk with the proof method
of \name{substitution and detachment}, used by him and other logicians since
about 1930. Detachment is also familiar as \name{modus ponens}. His formal
proof published in 1948 \cite{luk:1948}\footnote{\Lukasiewicz's paper is also
reproduced in his \name{Selected Works} \cite[pp.~295--205]{luk:selected:1970}, however with a typo in
the proof: The substitution of thesis 18 reads $r/CCrCsp$ instead of the
correct $r/CCrpCsp$.} is presented in 29 steps, most of them corresponding to
a single application of substitution and detachment, but some to two
consecutive applications, such that the proof in total involves 34
applications of detachment. Among the three Tarski-Bernays axioms \Syll is by
far the most challenging to prove such that \Lukasiewicz's proof is centered
around the proof of \Syll, with \Simp and \Peirce spinning off as side
results.
In 1963 Carew A. Meredith \cite{meredith:notes:1963} presented a ``very slight
abridgement'' of \Lukasiewicz's proof, expressed in his framework of \CD
\cite{prior:logicians:1956}, where the performed substitutions are no longer
explicitly presented but implicitly assumed through unification. Meredith's
variation involves only 33 applications of detachment, one less than
\Lukasiewicz's original proof.

\subsection{The First-Order ATP View on Detachment}

In our first-order setting, detachment can be modeled with the following
axiom.
\[\Det\; \eqdef\; \forall xy\, (\P x \land \P\i xy \imp \P y).\]
In~\Det the atom $\P x$ is called the \name{minor premise}, $\P\i xy$ the
\name{major premise}, and $\P y$ the \name{conclusion}. Let us now focus on
the following particular formula.
\[\PSYLL\; \eqdef\; \Luk \land \Det \imp \Syll.\]
Showing that \Luk together with the detachment axiom implies \Syll, is then
the problem of proving the validity of the first order formula~\PSYLL. This
formula features a rather simple structure: it asserts that from the proper
axiom \Luk at the left the goal \Syll at the right can be derived via \Det,
the rule in the middle, coding the well-known modus ponens -- or detachment.
Although it looks so simple, finding its proof amounts to a real challenge,
both for humans and machines. Since formulas of a similar structure with
axiom(s), rule(s) and goal(s) are quite frequent, progress in finding their
proofs automatically is clearly desirable. We believe that a deeper
understanding of the underlying proof structure is indispensable for such
progress. The study in this paper aims exactly at such an understanding.

In view of the CM \cite{bibel:atp:1982,bibel:deduction:1993,bibel:otten:2020},
a formula is valid if and only if there is a spanning and complementary set of
connections in it. In Fig.~\ref{fig-ConnT} the formula \PSYLL is presented
again, nicknames dereferenced and quantifiers omitted as usual in ATP, with
the five unifiable connections in it. The symbols $\f{a},\f{b},\f{c}$ in the
conclusion are Skolem constants introduced for the universal variables in
\Syll. The pair consisting of axiom \Luk and the conclusion might be seen as a
further connection~\textbf{0}, but is not depicted because it is not unifiable
and thus irrelevant for any proof. Any CM proof of \PSYLL consists of a number
of instances of the five shown connections. For example, Meredith's proof of
\Syll from \Luk involves 491~instances of \Det (as shown in more detail in
Sect.~\ref{subsec-properties}), each linked with three instances of its
five incident connections. This large number already demonstrates that such a
proof cannot be found and overlooked by humans except with some structural
concept for reducing the sheer proof size.

\begin{figure}
  \centering
  \begin{minipage}{27em}
    \vspace{1cm}
    $\P\i(\i(\i pq,r),\i(\i rp,\i sp))\land (\P x\wedge \P\i xy\rightarrow
    \P y)\rightarrow \P\i(\i\f{ab},\i(\i\f{bc},\i\f{ac}))$\\
       {%
         \setlength{\unitlength}{0.28mm}
         \hspace*{1em}\begin{picture}(0.00,0.00)(-15.00,3.10)
           \qbezier(-6.00,30.00)(54.00,49.00)(114.00,30.00)
           \put(52,29.5){\scriptsize\bf 5}
           \qbezier(-9.00,32.00)(66.00,58.00)(140.00,32.00)
           \put(66,47){\scriptsize\bf 4}
           \qbezier(118.00,14.00)(152.00,-10.00)(186.00,14.00)
           \put(151,-7){\scriptsize\bf 3}
           \qbezier(144.00,15.00)(163.00,0.00)(182.00,15.00)
           \put(162,9){\scriptsize\bf 2}
           \qbezier(186.00,30.00)(204.00,44.00)(222.00,30.00)
           \put(202,39){\scriptsize\bf 1}
         \end{picture}
       }
  \end{minipage}
  \vspace{12pt}
  \caption{The first-order formula \PSYLL along with its five unifiable
    connections.}
  \label{fig-ConnT}
\end{figure}

The concept for such a reduction consists in the well-known feature of
involving lemmas. In terms of the shown connection structure this means that a
certain number of rule instances along with their connection instances are
noted as such a lemma in some abbreviated form that can be referenced several
times in the presentation of the final proof. This way the size of the proof
may be reduced substantially without dispensing the basic characterization of
proofs in the CM. By the use of lemmas that permit reusing subproofs with the
same structure but different instantiations, Meredith's proof of $\Syll$
reduces from 491~to 31~detachment steps. With two more steps, the proof also
yields $\Peirce$ and $\Simp$, resulting in the total number of 33~detachment
steps mentioned above.

Under this extended view, our aim for a deeper understanding of such proofs
raises further questions. Can lemmas that are useful for such reductions be
characterized by syntactic features of the re-used formulas? Or by features of
the proof structure, the re-used subproofs of lemmas in the context of the
overall proof? If we find such features, how could they be utilized to support
the automated search for proofs?

\subsection{Comparing Proof Representations}
\label{subsec-comparing}

\begin{figure}[t]
  \centering
  \begin{minipage}{0.81\textwidth}
    \normalsize
    \renewcommand{\xconn}[1]{{\scriptsize \textbf{#1}}}

\noindent(a)\\[-14pt]
\noindent
\hspace*{-0.28cm}
\scalebox{0.722}{{%
    \begin{tikzpicture}
      \input{img7/cmproof_expanded_02_xy}
    \end{tikzpicture}
}}

\vspace{-0.76cm}
  
\noindent(b)
\raisebox{0.2cm}{%
\scalebox{0.8}{%
\begin{tikzpicture}
  [baseline=(current bounding box.north),
    xdot/.style = {minimum size=11pt,
      inner sep=0pt, outer sep=2pt, on grid},
    xsqr/.style = {minimum size=11pt,
      inner sep=0pt, outer sep=2pt, on grid}]  
\node[xdot] (e) {\small $D_1$};
\node[xdot, below left=1cm and \tl of e] (1) {\small $D_2$};
\node[xdot, below right=1cm and \tl of e] (2) {\small $D_3$};
\node[xsqr, below left=1cm and \tll of 1] (11) {\small $A_1$};
\node[xsqr, below right=1cm and \tll of 1] (12) {\small $A_2$};
\node[xdot, below left=1cm and \tll of 2] (21) {\small $D_4$};
\node[xdot, below right=1cm and \tll of 2] (22) {\small $D_6$};
\node[xsqr, below left=1cm and \tlll of 21] (211) {\small $A_3$};
\node[xdot, below right=1cm and \tlll of 21] (212) {\small $D_5$};
\node[xsqr, below left=1cm and \tlll of 22] (221) {\small $A_6$};
\node[xdot, below right=1cm and \tlll of 22] (222) {\small $D_7$};
\node[xsqr, below left=1cm and \tllll of 212] (2121) {\small $A_4$};
\node[xsqr, below right=1cm and \tllll of 212] (2122) {\small $A_5$};
\node[xsqr, below left=1cm and \tllll of 222] (2221) {\small $A_7$};
\node[xsqr, below right=1cm and \tllll of 222] (2222) {\small $A_8$};
\draw[] (e) -- node [pos=0.5,above left] {\xconn{2}} (1);
\draw[] (e) -- node [pos=0.5,above right] {\xconn{3}} (2);
\draw[] (1) -- node [pos=0.6,above left] {\xconn{4}} (11);
\draw[] (1) -- node [pos=0.6,above right] {\xconn{5}} (12);
\draw[] (2) -- node [pos=0.6,above left] {\xconn{2}} (21);
\draw[] (2) -- node [pos=0.6,above right] {\xconn{3}} (22);
\draw[] (21) -- node [pos=0.6,above left] {\xconn{4}} (211);
\draw[] (21) -- node [pos=0.6,above right] {\xconn{3}} (212);
\draw[] (22) -- node [pos=0.6,above left] {\xconn{4}} (221);
\draw[] (22) -- node [pos=0.6,above right] {\xconn{3}} (222);
\draw[] (212) -- node [pos=0.6,above left] {\xconn{4}} (2121);
\draw[] (212) -- node [pos=0.6,above right] {\xconn{5}} (2122);
\draw[] (222) -- node [pos=0.6,above left] {\xconn{4}} (2221);
\draw[] (222) -- node [pos=0.6,above right] {\xconn{5}} (2222);
\end{tikzpicture}}}%
\hspace{35pt}%
\raisebox{-28pt}{%
\noindent(c)
\scalebox{0.8}{%
\begin{tabular}[t]{R{2em}@{\hspace{0.5em}}l}
 1. & $\g{CCCpqrCqr}$\\
 2. & $\g{CpCqp} \mereq \f{D}11$\\
 3. & $\g{CpCqCrp} \mereq \f{D}12$\\
 * 4. & $\g{CpCqCrCsCtCus} \mereq \f{D}2\f{D}33$\\
  \end{tabular}}}\\[-7pt]

\noindent(d)%
\raisebox{0.2cm}{%
\scalebox{0.8}{%
\begin{tikzpicture}
  [baseline=(current bounding box.north),
    dot/.style = {circle, minimum size=11pt,
      inner sep=0pt, outer sep=0pt, draw, on grid},
    sqr/.style = {regular polygon sides=4, minimum size=11pt,
      inner sep=0pt, outer sep=0pt, draw, on grid}]  
\node[dot] (e) {\small $4$};
\node[dot, below left=1cm and \tl of e] (1) {\small $2$};
\node[dot, below right=1cm and \tl of e] (2) {\small $$};
\node[sqr, below left=1cm and \tll of 1] (11) {\small $1$};
\node[sqr, below right=1cm and \tll of 1] (12) {\small $1$};
\node[dot, below left=1cm and \tll of 2] (21) {\small $3$};
\node[dot, below right=1cm and \tll of 2] (22) {\small $3$};
\node[sqr, below left=1cm and \tlll of 21] (211) {\small $1$};
\node[dot, below right=1cm and \tlll of 21] (212) {\small $2$};
\node[sqr, below left=1cm and \tlll of 22] (221) {\small $1$};
\node[dot, below right=1cm and \tlll of 22] (222) {\small $2$};
\node[sqr, below left=1cm and \tllll of 212] (2121) {\small $1$};
\node[sqr, below right=1cm and \tllll of 212] (2122) {\small $1$};
\node[sqr, below left=1cm and \tllll of 222] (2221) {\small $1$};
\node[sqr, below right=1cm and \tllll of 222] (2222) {\small $1$};
\draw[] (e) -- node [pos=0.5,above left] {\xconn{2}} (1);
\draw[] (e) -- node [pos=0.5,above right] {\xconn{3}} (2);
\draw[] (1) -- node [pos=0.6,above left] {\xconn{4}} (11);
\draw[] (1) -- node [pos=0.6,above right] {\xconn{5}} (12);
\draw[] (2) -- node [pos=0.6,above left] {\xconn{2}} (21);
\draw[] (2) -- node [pos=0.6,above right] {\xconn{3}} (22);
\draw[] (21) -- node [pos=0.6,above left] {\xconn{4}} (211);
\draw[] (21) -- node [pos=0.6,above right] {\xconn{3}} (212);
\draw[] (22) -- node [pos=0.6,above left] {\xconn{4}} (221);
\draw[] (22) -- node [pos=0.6,above right] {\xconn{3}} (222);
\draw[] (212) -- node [pos=0.6,above left] {\xconn{4}} (2121);
\draw[] (212) -- node [pos=0.6,above right] {\xconn{5}} (2122);
\draw[] (222) -- node [pos=0.6,above left] {\xconn{4}} (2221);
\draw[] (222) -- node [pos=0.6,above right] {\xconn{5}} (2222);
\end{tikzpicture}}}
\hspace{\fill}%
(e)%
\raisebox{0.2cm}{%
\scalebox{0.8}{%
\begin{tikzpicture}
  [baseline=(current bounding box.north),
    >={Latex[length=4.5pt]},
    dot/.style = {circle, minimum size=11pt,
      inner sep=0pt, outer sep=0pt, draw, on grid},
    sqr/.style = {regular polygon sides=4, minimum size=11pt,
      inner sep=0pt, outer sep=0pt, draw, on grid}]  
\node[dot] (e) {\small $4$};
\node[dot, below left=1cm and \tl of e] (1) {\small $2$};
\node[dot, below right=1cm and \tl of e] (2) {\small $$};
\node[sqr, below left=1cm and \tll of 1] (11) {\small $1$};
\node[dot, below left=1cm and \tll of 2] (21) {\small $3$};
\draw[->] (e) -- node [pos=0.5,above left] {\xconn{2}} (1);
\draw[->] (e) -- node [pos=0.5,above right] {\xconn{3}} (2);
\draw[->] (1) -- node [pos=0.5,above left] {\xconn{4}} (11);
\draw[->] (1) to [out=-40,in=15] node [pos=0.5,below right] {\xconn{5}} (11);
\draw[->] (2) -- node [pos=0.5,above left] {\xconn{2}} (21);
\draw[->] (2) to [out=-40,in=15] node [pos=0.5,below right] {\xconn{3}} (21);
\draw[->] (21) to [out=240,in=-22] node [pos=0.5,above] {\xconn{4}} (11);
\draw[->] (21) to [out=-60,in=-10,out looseness=1.5] node [pos=0.2,below] {\xconn{3}} (1);
\end{tikzpicture}}}
  \end{minipage}
  \vspace{10pt}
  \caption{A proof in different representations.}
  \label{fig-representations}
\end{figure}

Figure~\ref{fig-representations} compares different representations of a short
formal proof with the \Det axiom. There is a single proper axiom,
\name{Syll-Simp}\footnote{As noted by Prior \cite{prior:formal:logic:1962},
\name{Syll-Simp} is the goal theorem of a proof with a single \CD step applied
to \Syll and \Simp as axioms. It appears as a lemma in the investigated proofs
by \Lukasiewicz and Meredith, see Sect.~\ref{sec-inspecting}.} defined as
  follows.

\centerskip
\begin{center}
  \begin{tabular}{L{6em}L{10em}L{13em}}  
   Nickname & \Lukasiewicz's notation & First-order
   representation\\\midrule
  \name{Syll-Simp} & $\g{CCCpqrCqr}$ & $\forall pqr\,
\P\i(\i(\i pq,r),\i qr)$\\
  \end{tabular}
\end{center}
\centerskip

\noindent
The goal theorem is $\forall abcdef\, \P \i(a,\i(b,\i(c,\i(d,\i(e,\i
fd)))))$.\footnote{This theorem has been chosen as proof goal just because it
has a proof that is suitable to illustrate the interplay of the considered
proof representations.} Figure~\ref{fig-representations}a shows the structure
of a CM proof. It involves seven instances of \Det, shown in columns
$D_1, \ldots, D_7$.\footnote{Note that the separate display of these instances
of \Det is only for a better understanding of the reader but not a feature of
the CM, which rather involves instead indexed connections for the formula
given in Fig.~\ref{fig-ConnT}.} The major premise $\P\i x_i y_i$ is displayed
there on top of the minor premise~$\P x_i$, and the (negated) conclusion
$\lnot \P y_i$, where $x_i, y_i$ are variables. Instances of the axiom appear
as literals $\lnot \P a_i$, with $a_i$ a shorthand for the term $\i(\i(\i p_i
q_i,r_i),\i q_i r_i)$. The rightmost literal~$\P g$ is a shorthand for the
Skolemized goal theorem.
The clause instances are linked through edges representing connection
instances. The edge labels identify the respective connections as in
Fig.~\ref{fig-ConnT}. An actual connection proof is obtained by supplementing
this structure with a substitution under which all pairs of literals related
through a connection instance become complementary.

Figure~\ref{fig-representations}b represents the \emph{tree} implicit in the
CM proof. Its inner nodes $D_1,\ldots,D_7$ correspond to the seven instances
of \Det, and its leaf nodes $A_1,\ldots,A_8$ to the instances of the axiom.
Edges appear ordered to the effect that those originating in a major premise
of \Det are directed to the left and those from a minor premise to the right.
The goal clause~$\P g$ is dropped. The resulting tree is a \emph{full binary
tree}, i.e., a binary tree where each node has 0 or 2 children. We observe
that the ordering of the children makes the connection labeling redundant as
it directly corresponds to the tree structure.

Figure~\ref{fig-representations}c presents the proof in Meredith's notation
for \CD. Each line shows a formula, line~1 the axiom and lines 2--4 derived
formulas, with proofs annotated in the last column. Proofs are written as
terms in Polish notation with the binary function symbol $\f{D}$ for
\name{detachment} where the subproofs of the major and minor premise are
supplied as first and second argument, respectively. Formula~4, for example,
is obtained as conclusion of \Det applied to formula~2 as major premise and
another formula not made explicit in the presentation as minor premise, namely
the conclusion of \Det applied to formula~3 as both, major and minor,
premises. An asterisk marks the goal theorem.

Figure~\ref{fig-representations}d is like Fig.~\ref{fig-representations}b, but
with a different labeling: Node labels now refer to the line in
Fig.~\ref{fig-representations}c, which corresponds to the subproof rooted at
the node. The blank node represents the mentioned subproof of the formula that
is not made explicit in Fig.~\ref{fig-representations}b. An inner node
represents a \CD step applied to the subproof of the major premise (left
child) and minor premise (right child).

Figure~\ref{fig-representations}e shows a DAG representation
of~Fig.~\ref{fig-representations}d. It is the unique minimal, or maximally
factored, DAG representation of the tree, i.e., it has no multiple occurrences
of the same subtree. Each of the four proof line labels of
Fig.~\ref{fig-representations}c appears exactly once in the DAG. The
presentation layout of the DAG reflects a tree compacting procedure, the
\name{value-number method}, which computes unique identifiers for all subtrees
in a post-order tree traversal \cite{aho:compilers:86,genitrini:2020}. A
straight edge corresponds to the first visit of the subtree rooted at its
endpoint, and a bended edge to a pointer to a previously identified subtree.
Observe that each of the four proof line labels of Meredith's representation
(Fig.~\ref{fig-representations}c) appears exactly once in the DAG. In fact,
the structural component of the textual proof representation (that is, if we
disregard the displayed formulas) can be considered as a compact notation for
such a DAG.

\section{Condensed Detachment and a Formal Basis}
\label{sec-cd-basis}

With Fig.~\ref{fig-representations}c we already have seen a small example of
a \CD proof in Meredith's notation. Figure~\ref{fig-proof-mer} shows
Meredith's \CD proof that \Luk entails \Syll, \Peirce and \Simp, taken from a
1963 paper by Meredith and Prior \cite{meredith:notes:1963}. There is a single
axiom, 1, which is \Luk. The proven theorems are \Syll (17), \Peirce (18)
and \Simp (19), marked by asterisks. In addition to line numbers also the
symbol ``$\n$'' appears in some of the proof terms. We will discuss its
meaning in Sect.~\ref{subsec-simp-n} and, for now, read it just as ``1''.
Dots are used in the Polish notation to disambiguate numeric identifiers with
more than a single digit, for example in line~11.

\begin{figure}
  \centering
  \begin{tabular}{rl@{\hspace{0.2em}}l}
      1. & $\g{CCCpqrCCrpCsp}$\\
      2. & $\g{CCCpqpCrp} \mereq \f{DDD}1\f{D}111\n$\\
      3. & $\g{CCCpqrCqr}  \mereq  \f{DDD}1\f{D}1\f{D}121\n$\\
      4. & $\g{CpCCpqCrq}  \mereq  \f{D}31$\\
      5. & $\g{CCCpqCrsCCCqtsCrs}  \mereq  \f{DDD}1\f{D}1\f{D}1\f{D}141\n$\\
      6. & $\g{CCCpqCrsCCpsCrs}  \mereq  \f{D}51$\\
      7. & $\g{CCpCqrCCCpsrCqr}  \mereq  \f{D}64$\\
      8. & $\g{CCCCCpqrtCspCCrpCsp}  \mereq  \f{D}71$\\
      9. & $\g{CCpqCpq}  \mereq  \f{D}83$\\
     10. & $\g{CCCCrpCtpCCCpqrsCuCCCpqrs}  \mereq  \f{D}18$\\
     11. & $\g{CCCCpqrCsqCCCqtsCpq}  \mereq  \f{DD}10.10.\n$\\
     12. & $\g{CCCCpqrCsqCCCqtpCsq}  \mereq  \f{D}5.11$\\
     13. & $\g{CCCCpqrsCCsqCpq}  \mereq  \f{D}12.6$\\
     14. & $\g{CCCpqrCCrpp}  \mereq  \f{D}12.9$\\
     15. & $\g{CpCCpqq}  \mereq  \f{D}3.14$\\
    16. & $\g{CCpqCCCprqq}  \mereq  \f{D}6.15$\\
    *  17. & $\g{CCpqCCqrCpr}  \mereq  \f{DD}13.\f{D}16.16.13$\\
    *  18. & $\g{CCCpqpp}  \mereq  \f{D}14.9$\\
    * 19. & $\g{CpCqp}  \mereq  \f{D}33$\\
  \end{tabular}
  \vspace{6pt}
  \caption{Meredith's CD variation of \Lukasiewicz's proof \cite{luk:1948} of
    \Syll (17), \Peirce (18) and \Simp (19) from \Luk. From Meredith and Prior
    \cite{meredith:notes:1963}, with correction of a typo in the formula of
    line~7. We will refer to this proof as $\DMER$.}
  \label{fig-proof-mer}
\end{figure}

Following Martin W.~Bunder \cite{bunder:cd:1995}, the idea of \CD can be
described as follows: Given premises \mbox{$F \imp G$} and $H$, we can
conclude $G'$, where $G'$ is the most general result that can be obtained by
using a substitution instance $H'$ of $H$ as minor premise with the
substitution instance $F' \imp G'$ of $F \imp G$ as major premise in modus
ponens. \CD was introduced by Meredith in the mid-1950s as an evolution of the
earlier \name{method of substitution and detachment}, where the involved
substitutions were explicitly given.

The original presentations of \CD are informal, by means of examples
\cite{prior:logicians:1956,lemmon:meredith:purestrict:1957,prior:formal:logic:1962,meredith:memoriam:1977}.
Only later, formal specifications have been given. John A. Kalman
\cite{kalman:cd:1983} provides two characterizations, one in terms of
resolution. \CD was then considered in the context of type theory, the
formulas-as-types view, where J.~Roger Hindley and David Me\-re\-dith
\cite{hindley:meredith:cd:1990,hindley:book:1997} notice and fix an inaccuracy
related to the notion of \name{most general unifier} in the early
formalizations of \CD and Bunder \cite{bunder:cd:1995} provides a
formalization that is independent from this notion. A particular investigated
topic concerning \CD in type theory is the relationship to substitution and
detachment.

Unfortunately it seems that not much is bequeathed about the methods by which
humans \emph{found} advanced \CD
proofs. \Lukasiewicz \cite[\Paragraph~4]{luk:1948} discusses an important
intermediate step for his proof by substitution and detachment. Legend has
it that Meredith often sent his finished \CD proofs as
postcards \cite{prior:logicians:1956,bull:interview}.

In ATP, the rendering of \CD by positive hyperresolution with the clausal form
of axiom \Det is so far the prevalent view. As overviewed by William McCune
and Larry Wos \cite{mccune:wos:cd:1992}, and Dolph
Ulrich \cite{ulrich:legacy:2001}, many of the early successes of ATP were
based on \CD. Starting from the hyperresolution view, structural aspects
of \CD have been considered by Robert Veroff
\cite{veroff:shortest:2001} with the use of term representations of proofs and
\emph{linked} resolution. Results of ATP systems on deriving the
Tarski-Bernays axioms from \Luk are reported in several papers
\cite{pfenning:single:1988,wos:contributes:1990,roo:parallel:1992,mccune:wos:cd:1992,fitelson:missing:2001}.
The problems of deriving \Syll, \Peirce and \Simp from \Luk are in the \TPTP
as \TP{LCL038-1}, \TP{LCL083-1} and \TP{LCL082-1}, respectively.
In general, many refinements of the \Otter prover \cite{otter} in the 1990s,
some of which have found their ways into modern saturating provers, were
originally conceived and explored in the setting of CD
\cite{wos:contributes:1990,wos:bledsoe:91,mccune:wos:cd:1992,wos:resonance:95,wos:combining:96,veroff:shortest:2001,fitelson:missing:2001,wos:meredith}.
Various sources compile open and challenge problems concerning CD, along with
some solutions or partial solutions
\cite{ulrich:legacy:2001,ulrich:cd:web,veroff:cd:2011,fitelson:walsh:2021}.
A sustaining and far-reaching application of CD
is \name{Metamath} \cite{megill:1995,metamath:book}, a successful
computer-processable language for verifying, archiving, and presenting
mathematical proofs. ``Simple by design'', it is entirely based on CD extended
by a second rule for \name{condensed generalization}.

From the viewpoint of general first-order ATP, \CD basically offers a
simplified, streamlined setting for investigations and developments that
nevertheless includes with first-order variables, binary function symbols and
cyclic predicate dependency core characteristics of first-order ATP. The
simplifications concern the restricted application domain, axiomatizations of
propositional logics, which is, however, not difficult to lift to Horn
problems in general,\footnote{This has, e.g., been implemented in the
\name{CCS} \CD reasoner \cite{cw:ccs}.} no explicit consideration of non-Horn
problems\footnote{First-order logic permits to encode a non-Horn problem as a
Horn problem.}, and no explicit use of equality.\footnote{There are, however,
relationships to equality. It is well-known that equality can be axiomatized
by Horn clauses expressing reflexivity, symmetry, transitivity and
substitutivity. It is also possible to encode Horn problems as purely
equational problems \cite{claessen:smallbone:2018}, where,
e.g., \TP{LCL038-10} is an equational variation of \TP{LCL038-1} (\Syll
from \Luk). Some \CD problems, e.g., \TP{LCL006-1}, are about axiomatizations
of an equivalential calculus.} The \TPTP contains easy and still very hard \CD
problems.

But \CD offers more. It integrates various features of relevance to ATP in a
natural and formally accessible way, which we outline in the following
paragraphs.

\CD distinguishes from its predecessor, the method of substitution and
detachment, by applying most general substitutions that are obtained through
unification. In \CD proof presentations, just most general formulas resulting
from unification are written, the involved substitutions are left implicit.
Remarkably, unification was applied with \CD extensively in formal deduction a
decade before it became popular in the context of resolution through
John Alan Robinson \cite{robinson:1965}.

\CD proofs are presented in the literature as a sequence of pairs of a lemma
and a proof structure term that describes how the lemma is proven from
previous lemmas. The structure terms can be combined to form a tree for each
goal theorem or to a DAG representing the set of these trees more compactly
such that subtrees with multiple occurrences appear only once. Both
representations have their merits. The explicit tree view facilitates to
associate semantic properties and formula substitutions in an inductive
fashion. It permits to understand variables in a particular simple way as
scoped over the whole structure, known as \name{rigid} variables in tableaux.
The compacted view in particular provides an adequate notion of \emph{proof
size} and, in printed form, is much easier to comprehend by humans.

A related separation of concerns regarding proof structure and associated
formulas is provided among the modern approaches to ATP by the CM. In fact, as
illustrated with Fig.~\ref{fig-representations} above, \CD can be understood
as an adaptation of the CM to inputs of a specific simple form: a single
clause with three literals, which represents the \Det axiom, and otherwise
just unit clauses, representing proper axioms and the theorem to be proven.

The separation of a deductive derivation into a formula part and a proof
structural part, as illustrated in Fig.~\ref{fig-proof-mer}, can be seen as a
precursor of the CM. Namely, the CM has carried this separation to the extreme
in that it keeps the formula part completely unchanged within such a
derivation and shifts all deductive information into the proof structural part
(see, e.g., \cite[Section~III.6]{bibel:atp:1982}).

In the traditional presentation of a \CD proof the members of the sequence of
lemma-structure pairs are labeled with numbers, where the labeling turns out
to be useful for the following two purposes. For a lemma that is referenced
multiple times in the overall proof, a label is necessary to represent
the proof structure compactly as a DAG. For a lemma that is referenced only
once, the presentation by a labeled pair is optional and serves the
convenience of a human reader or points out some special significance of the
lemma. Otherwise, lemmas that are referenced only once do not appear
explicitly in the proof presentation but could be obtained as the most general
formulas proven by the substructures of the structure components of the
labeled lemma-structure pairs.

The term view of proof structures lets the replacement of subproof occurrences
appear as a form of term rewriting, with shorter subproofs that preserve
equivalence in some sense. A suitable notion of equivalence can be based on
the most general formula that can be proven with a given proof term by
applying detachment steps according to the term structure from given axioms.
Such proof reductions can be applied to simplify given proofs, or in proof
search, to justify that a subproof recognized as reducible can be immediately
discarded, because there must exist a different preferable subproof.

The term view of proof structures is also the basis of a recent technique
where combinators are applied to express stronger compressions of the proof
structure than just to DAGs \cite{cw:ccs}. Such compressions can be applied to
shorten given proofs and in proof search. They correspond to more complex
lemma formulas than the unit lemmas considered in the DAG compression, and can
express simulations of other calculi.

Search for a \CD proof can be performed goal- or axiom-driven. Consideration
of a goal (e.g., a ground atom resulting from Skolemizing a universally
quantified atom) in the unifying substitution to determine the formulas
involved in the proof is optional. Taking the goal into account effects
restriction of the search space, as in the conventional goal-driven
realizations of the CM. Nevertheless, also axiom-driven proceeding without
supplied goal is possible with very similar search mechanisms, enumerating
proof structures interwoven with unification. The results then are
consequences derived from axioms, which optionally may be used as lemmas to
improve proof search in a second goal-driven phase
\cite{schumann:delta:1994,cw:sgcd}.

In a wider perspective the consideration of the proof structure as a whole,
for example as term, which may be compacted into a DAG, introduces an
important separation of concerns for proof search. Namely, the way in which
the concrete structure is built up in proof search is not obliged to follow
the inductive specification of the structure. The concrete structure can be
built up in various ways, including rewriting of subproofs as indicated above,
or by combining given proof fragments. This contrasts with calculi such as
typical tableau methods where proof construction rules are directly taken to
build up the proof structures.

Our goal in this section is to provide a formal framework that takes account
of these aspects and provides a basis for experiments and future developments
in ATP.

\subsection{Notation}
\label{subsec-notation}

Most of our notation follows common practice \cite{dershowitz:notations:1991}.
The set of variables occurring in a term~$s$ is denoted by~$\vars(s)$. We
extend this to other objects~$s$ such as, e.g., sets of terms.
A \defname{substitution} is a mapping from variables to terms which is almost
everywhere equal to identity. If $\sigma$ is a substitution, then
the \defname{domain} of $\sigma$ is the set of variables
$\dom(\sigma) \eqdef \{x \mid x\sigma \neq x\}$, the
\defname{range} of $\sigma$ is $\rng(\sigma) \eqdef \bigcup_{x \in
  \dom(\sigma)} \{x\sigma\}$, and the \defname{restriction} of $\sigma$ to a
set~$X$ of variables, denoted by $\sigma|_X$, is the substitution which is
equal to the identity everywhere except over $X \cap \dom(\sigma)$, where it
is equal to~$\sigma$. The identity substitution is denoted by $\emptysubst$.
We write the set~$\vars(\rng(\sigma))$ of variables in the range of
substitution~$\sigma$ also as $\vrng(\sigma)$. A substitution can be
represented by a set of assignments of the variables in its
domain, e.g., $\{x_1 \mapsto t_1, \ldots, x_n \mapsto t_n\}$. The application
of a substitution~$\sigma$ to a term $s$ is written as $s\sigma$, $s\sigma$ is
called an \defname{instance} of $s$ and $s$ is said to
\defname{subsume} $s\sigma$.  That~$s$ subsumes~$t$, or synonymously, that $t$
is an instance of $s$, is expressed symbolically by \[t \subsumedBy s.\] If
both, $s \subsumedBy t$ and $t \subsumedBy s$, hold we say that $s$ and~$t$
are \defname{variants} of each other, expressed symbolically as $s \variant
t$. Composition of substitutions is written as juxtaposition. Hence, if
$\sigma$ and $\theta$ are both substitutions, then $E\sigma\theta$ stands for
$(E\sigma)\theta$. A substitution~$\sigma$ is \defname{idempotent} if
$\sigma\sigma = \sigma$, or, equivalently, $\dom(\sigma) \cap \vrng(\sigma) =
\emptyset$.  A substitution $\sigma$ is called \defname{more general than} a
substitution $\theta$, in symbols $\theta \subsumedBy \sigma$, if there exists
a substitution~$\rho$ such that $\sigma\rho = \theta$. That both, $\sigma
\subsumedBy \theta$ and $\theta \subsumedBy \sigma$ hold is expressed by
$\sigma \variant \theta$.

A \defname{position} is a sequence of positive integers that specifies the
occurrence of a subterm in a term as a path in Dewey decimal notation starting
from the root of the term. The set of all positions of a term $s$ is denoted
by $\pos(s)$. For example, $\pos(\f{f}(x,\f{g}(y))) = \{\emptypos, 1, 2,
2.1\}$. If position~$p$ is a prefix of position~$q$, we write \[p \leq q\] and
say that $p$ is \defname{above}~$q$, and $q$ is \defname{below}~$p$. We also
use $p \not \leq q$, $p < q$ and $p \not < q$ for positions $p,q$ with the
obvious analog meanings. For $p \in \pos(s)$, the \defname{subterm of $s$ at
  position $p$} is denoted by $s|_p$. For example, if $s = \f{f}(x,\f{g}(y))$,
then $s|_{\emptypos} = s = \f{f}(x,\f{g}(y))$, $s|_{1} = x$, $s|_{2} =
\f{g}(y)$ and $s|_{2.1} = y$. That $s$ is a subterm of $t$ is expressed
symbolically as \[t \suptermq s\] and that $s$ is a strict subterm of $t$ as
$t \supterm s$.
For $p \in \pos(s)$, the expression \[s[t]_p\] denotes the term obtained from
$s$ by replacing the subterm occurrence at position $p$ with term~$t$, or, in
case $s|_p = t$, to denote~$s$ with indicating that $t$ occurs at position $p$
in $s$.

In addition to common notation, we use a few special symbols and conventions:
The set of positions $p \in \pos(s)$ such that $s|_p$ is a variable or a
constant is denoted by $\leafpos(s)$ and the set of positions $p \in \pos(s)$
such that $s|_p$ is a compound term by~$\innerpos(s)$. We use the postfix
notation for the application of a substitution~$\sigma$ also for sets~$M$ of
pairs of terms: $M\sigma$ stands for $\{\{s\sigma, t\sigma\} \mid \{s, t\} \in
M\}$. For terms $s,t,u$, the expression \[\replace{s}{t}{u}\] denotes $s$
after simultaneously replacing all occurrences of $t$ with $u$. If $F$ is a
formula, then $\UCN{F}$ denotes the universal closure of $F$.

\subsection{Proof Structures: \DTerms}
\label{subsec-struct-base}

In this subsection (as well as in Sect.~\ref{subsec-struct-red} below) we
consider only the purely structural aspects of \CD proofs. Emphasis is on a
twofold view on the proof structure, as a tree and as a DAG (directed acyclic
graph), which factorizes multiple occurrences of the same subtree. Both
representation forms are useful: the compacted DAG form captures that lemmas
can be repeatedly used in a proof, whereas the tree form facilitates to
specify properties in an inductive manner.

\subsubsection{Basic Definitions: Term View and Tree View}

We call the tree representation of proofs by terms with the binary function
symbol~$\D$ \name{\dterms}.

\begin{defn}%
  \label{def:d-term}
  \ \par \subdef{def-d-variables} We assume a distinguished set
  \defname{$\DPRIMSET$} of symbols, called the \defname{primitive \Dterms}.

  \subdef{def-d-term} A \defname{\dterm} is specified inductively as follows.

  \begin{enumerate}[leftmargin=4em]
    \item Any member of $\DPRIMSET$ is a \dterm.
    \item If $d_1$ and $d_2$ are \dterms, then $\D(d_1,d_2)$ is a \dterm.
  \end{enumerate}

  \subdef{def-compound} A \Dterm of the form $\D(d_1,d_2)$ is called
  \defname{compound}.
  
  \subdef{def-dconst} For \Dterms~$d$ define $\DPRIM(d)\; \eqdef\; \{e \mid d
  \suptermq e\} \cap \m{\DPRIMSET}$.

\end{defn}

\noindent
A \dterm~$d$ is a full binary tree (a binary tree where every inner node has
exactly two children, its left and its right child) whose leaves are labeled by primitive \Dterms. $\DPRIM(d)$
denotes the set of the primitive \Dterms that occur in~$d$, or, in other
words, the set of leaf labels of $d$.

\begin{examp}%
  \label{examp-dterm}
  Assume that $\DPRIMSET$ contains the numeral~$1$. Then
  \[d\; \eeqdef\; \D(\D(1,1),\D(\D(1,\D(1,1)),\D(1,\D(1,1))))\]
  is a \Dterm with $\DPRIM(d) = \{1\}$ that represents the structure of the
  proof shown in Fig.~\ref{fig-representations}. Its visualization is shown in
  Fig.~\ref{fig-dterm-tree} (which is identical to
  Fig.~\ref{fig-representations}d after removing all labels with exception of
  the leaf labels).
\end{examp}

\begin{figure}
  \centering
  \scalebox{0.8}{\begin{tikzpicture}
[baseline=(current bounding box.north),
  dot/.style = {circle, minimum size=11pt,
    inner sep=0pt, outer sep=0pt, draw, on grid},
  sqr/.style = {regular polygon sides=4, minimum size=11pt,
    inner sep=0pt, outer sep=0pt, draw, on grid}]
\node[dot] (e) {\small $$};
\node[dot, below left=1cm and \tl of e] (1) {\small $$};
\node[dot, below right=1cm and \tl of e] (2) {\small $$};
\node[sqr, below left=1cm and \tll of 1] (11) {\small $1$};
\node[sqr, below right=1cm and \tll of 1] (12) {\small $1$};
\node[dot, below left=1cm and \tll of 2] (21) {\small $$};
\node[dot, below right=1cm and \tll of 2] (22) {\small $$};
\node[sqr, below left=1cm and \tlll of 21] (211) {\small $1$};
\node[dot, below right=1cm and \tlll of 21] (212) {\small $$};
\node[sqr, below left=1cm and \tlll of 22] (221) {\small $1$};
\node[dot, below right=1cm and \tlll of 22] (222) {\small $$};
\node[sqr, below left=1cm and \tllll of 212] (2121) {\small $1$};
\node[sqr, below right=1cm and \tllll of 212] (2122) {\small $1$};
\node[sqr, below left=1cm and \tllll of 222] (2221) {\small $1$};
\node[sqr, below right=1cm and \tllll of 222] (2222) {\small $1$};
\draw[] (e) -- (1);
\draw[] (e) -- (2);
\draw[] (1) -- (11);
\draw[] (1) -- (12);
\draw[] (2) -- (21);
\draw[] (2) -- (22);
\draw[] (21) -- (211);
\draw[] (21) -- (212);
\draw[] (22) -- (221);
\draw[] (22) -- (222);
\draw[] (212) -- (2121);
\draw[] (212) -- (2122);
\draw[] (222) -- (2221);
\draw[] (222) -- (2222);
\end{tikzpicture}}
  \vspace{8pt}
  \caption{The \Dterm~$d$ from Example~\ref{examp-dterm}.}
  \label{fig-dterm-tree}
\end{figure}

\begin{examp}
The proof annotations in Fig.~\ref{fig-representations}c and
Fig.~\ref{fig-proof-mer} are \dterms written in Polish notation, where
$\DPRIMSET$ is a set $\{1,2,3,\ldots\}$ of numerals. The expression
$\f{D}2\f{D}33$ in line~4 of Fig.~\ref{fig-representations}, for example,
stands for the \dterm $\D(2,\D(3,3))$. Its set $\DPRIM(\D(2,\D(3,3)))$ of
primitive \Dterms is $\{2,3\}$.
\end{examp}

\subsubsection{Tree Size and Height}

The following definition specifies two basic size measures of \DTerms.

\begin{defn}\ \par
  \label{def-treesize-height}
  \subdef{def-treesize}
  The \defname{tree size} of a \dterm $d$, in symbols $\tsize(d)$, is the
  number of occurrences of the function symbol~$\D$ in $d$.

  \subdef{def-height} The \defname{height} of a \dterm $d$, in symbols
  $\height(d)$ is, viewing the term as a tree, the number of edges of the
  longest downward path from the root to a leaf.
\end{defn}

The tree size of a \dterm can equivalently be characterized as the number of
its inner nodes. Veroff \cite{veroff:shortest:2001} calls it \name{CDcount}.
As will be explicated in more detail in Sect.~\ref{subsec-subst-base},
each occurrence of the function symbol~$\D$ in a \dterm corresponds to an
instance of the axiom $\Det$ in the represented proof. Hence the tree size
measures the number of instances, or multiplicity, of $\Det$ in the proof.
Another view is that each occurrence of $\f{D}$ in a \dterm corresponds to a
detachment step, without re-using already proven lemmas and instead again
re-proving each lemma whenever it is used. The tree size of the \Dterm~$d$ of
Example~\ref{examp-dterm} is $\tsize(d) = 7$.

The height of a \Dterm is just its height according to the conventional notion
of the height of a tree. Applied to terms it is often also
called \name{depth}. For \Dterms, it is called \name{level}
by Veroff \cite{veroff:shortest:2001}. The height of the \Dterm~$d$ of
Example~\ref{examp-dterm} is $\height(d) = 4$.

\subsubsection{DAG Representation and Compacted Size}

A finite tree and, more generally, a finite set of finite trees can be
represented as a DAG, where each node in the DAG corresponds to a
subtree\footnote{We use \name{subtree} with the meaning common in computer
science and matching the notion of \name{subterm}: A subtree of a tree $T$ is
a tree consisting of a node in $T$ and all of its descendants in $T$.} of a
tree in the given set. It is well known that there is a unique (modulo
isomorphism) \name{minimal} such DAG, which is maximally factored (it has no
multiple occurrences of the same subtree) or, equivalently, is minimal with
respect to the number of nodes, and, moreover, can be computed in linear time
\cite{downey:variations:1980}. The number of nodes of the minimal DAG is the
number of distinct subtrees of the members of the set of trees. This can be
used as the basis for proof size measures defined as follows.
\begin{defn}%
  \ \par

  \subdef{def-subq} For \Dterms~$d$ define $\DSUBQ(d)\; \eqdef\; \{\D(e_1,e_2)
  \mid d \suptermq \D(e_1,e_2)\}$.
  
  \subdef{def-csize} For \Dterms~$d$ define the \defname{compacted size of
  $d$} as $\csize(d)\; \eqdef\; |\DSUBQ(d)|.$

  \subdef{def-csize-set} For finite sets~$D$ of \Dterms define
  the \defname{compacted size} of $D$ as $\csize(D)\; \eqdef\; |\bigcup_{d \in D}
  \DSUBQ(d)|$.
\end{defn}

\noindent
$\DSUBQ(d)$ denotes the set of all compound subterms of a \Dterm~$d$. The
compacted size\footnote{We took the name \name{compacted size} from
Flajolet, Sipala and Steyaert \cite{flajolet:1990}.} of a \dterm, called \name{length} by Veroff
\cite{veroff:shortest:2001}, is the number of its distinct compound subterms,
reflecting the view that the size of the proof of a lemma is only counted
once, even if the lemma is used multiple times in the proof. It can equivalently be
characterized as the number of the inner nodes of its minimal DAG.

\begin{figure}
  \centering
  \scalebox{0.8}{\begin{tikzpicture}
  [baseline=(current bounding box.north),
    >={Latex[length=4.5pt]},
    dot/.style = {circle, minimum size=11pt,
      inner sep=0pt, outer sep=0pt, draw, on grid},
    sqr/.style = {regular polygon sides=4, minimum size=11pt,
      inner sep=0pt, outer sep=0pt, draw, on grid}]  
\node[dot] (e) {\small $$};
\node[dot, below left=1cm and \tl of e] (1) {\small $$};
\node[dot, below right=1cm and \tl of e] (2) {\small $$};
\node[sqr, below left=1cm and \tll of 1] (11) {\small $1$};
\node[dot, below left=1cm and \tll of 2] (21) {\small $$};
\draw[->] (e) -- (1);
\draw[->] (e) -- (2);
\draw[->] (1) -- (11);
\draw[->] (1) to [out=-40,in=15] (11);
\draw[->] (2) -- (21);
\draw[->] (2) to [out=-40,in=15] (21);
\draw[->] (21) to [out=240,in=-22] (11);
\draw[->] (21) to [out=-60,in=-10,out looseness=1.5] (1);
\end{tikzpicture}
  }
  \vspace{-8pt}
  \caption{The minimal DAG of the \Dterm~$d$ from Fig.~\ref{fig-dterm-tree}
    and Examples~\ref{examp-dterm} and~\ref{examp-dag}.}
  \label{fig-dterm-dag-plain}
\end{figure}

\begin{examp}%
  \label{examp-dag}
  Consider the \dterm
  \[d\; \eeqdef\; \D(\D(1,1),\D(\D(1,\D(1,1)),\D(1,\D(1,1))))\]
  from Example~\ref{examp-dterm}. Its compacted size is $\csize(d) = 4$. This
  is the number of inner nodes of the minimal DAG of~$d$, which is shown in
  Fig.~\ref{fig-dterm-dag-plain} (which is identical to
  Fig.~\ref{fig-representations}e after removing all labels with exception of
  the leaf label), or, equivalently, the cardinality of the set \[\DSUBQ(d) =
  \{\D(1,1),\; \D(1,\D(1,1)),\; \D(\D(1,\D(1,1)),\D(1,\D(1,1))),\; d\}\] of
  compound subterms of~$d$.
\end{examp}

A textual representation of \Dterms that respects the compacted size, that is,
is at most linearly larger than the compacted size, is possible by introducing
labels and references for subterms with multiple occurrences, which can be
done with a variety of concrete mechanisms. Our approach is to extend the set
of primitive \Dterms with labels used for referencing subproofs.
Formally, we view a \name{compacted \Dterm} as
a special kind of substitution whose domain members are primitive
\Dterms. Written out as a set of bindings, as common for substitutions, a
compacted \Dterm provides the desired compact textual representation of a set
of \Dterms.

\vspace{-3pt} %
\begin{defn}%
  \label{def:ldt}
  \

  \subdef{def-cdterm} A \defname{\compDterm} is a mapping~$\compd$ whose domain is a
  finite set of primitive \Dterms and whose range is a set of compound \Dterms
  such that the relation~$<_{\compd}$, called \defname{label dependency ordering}
  of $\compd$, defined as the transitive closure of $\{\la l, l'\ra \mid l,l' \in
  \DPRIMSET \text{ and } l \in \DPRIM(l'\compd)\}$ is a strict partial order.

  \subdef{def-roots} The \defname{roots} of a \compDterm~$\compd$ are the elements
  of $\dom(\compd)$ that are maximal with respect to $<_{\compd}$.

  \subdef{def-expand}  The binary function $\expand$ from \compDterms~$\compd$
  and primitive \Dterms~$l \in \dom(\compd)$ to \Dterms is defined as
  $\expand_{\compd}(l)\; \eqdef\; l\compd\{l_n \mapsto l_n\compd\}\{l_{n-1} \mapsto
  l_{n-1}\compd\}\ldots\{l_1 \mapsto l_1\compd\}$, where $l_1,l_2,\ldots,l_n$ is
  some $<_{\compd}$-linearization of the set~$\{l' \in \dom(\compd) \mid l' <_{\compd}
  l\}$.
\end{defn}
\vspace{-3pt} %

We write the application of a \compDterm (a special kind of substitution) in
postfix notation. A \compDterm~$\compd$ represents the finite \emph{set} of
\dterms, or trees, that correspond to its roots, that is,
$\{\expand_{\compd}(l) \mid l \text{ is a root of } \compd\}$. If~$\compd$ has
a single root~$l$, we also say that it represents the
\Dterm~$\expand_{\compd}(l)$.

\begin{examp}%
  \label{examp-cdterm}
   The \dterm $d$ from Examples~\ref{examp-dterm} and~\ref{examp-dag} is
   represented by the \compDterm
  \[\begin{array}{l@{\hspace{0.5em}}l@{\hspace{0.5em}}c@{\hspace{0.5em}}l}
  \compd\; \eeqdef\; \{ & 2 & \mapsto & \D(1,1),\\
  & 3 & \mapsto & \D(1,2),\\
  & 4 & \mapsto & \D(2,\D(3,3))\;\}.
  \end{array}\]
  The label dependency ordering $<_\compd$ can be described as $1 <_{\compd} 2
  <_{\compd} 3 <_{\compd} 4$ and $\compd$ has a single root, namely $4$.
\end{examp}

\begin{examp}%
  \label{examp-cdterm-mer}
  Consider Meredith's proof shown in Fig.~\ref{fig-proof-mer}. Its structure
  can be represented by the \compDterm $\compd_{\f{mer}} \eeqdef \{i \mapsto d_i
  \mid 2 \leq i \leq 19\}$ where $d_i$ is the \dterm representation of the
  proof term in line~$i$. Thus, $\compd_{\f{mer}} = \{2 \mapsto
  \D(\D(\D(1,\D(1,1)),1),n),\; 3\mapsto \D(\D(\D(1,\D(1,\D(1,2))),1),n),\;
  4\mapsto \D(3,1),\; \ldots,\; 19\mapsto \D(3,3)\}$. The label dependency
  ordering is visualized in Fig.~\ref{fig-mercd-ordering}. The
  \compDterm~$\compd_{\f{mer}}$ has three roots, $17$, $18$ and $19$. Meredith's
  representation of the proof structure can be reconstructed in full as a
  linearization of the label dependency ordering from the \compDterm
  $\compd_{\f{mer}}$.
\end{examp}

\begin{figure}
  \centering
  \scalebox{0.9}{
  \begin{tikzpicture}[>=latex',line join=bevel,scale=0.65]
      \pgfsetlinewidth{1bp}
\pgfsetcolor{black}
  \draw [] (29.891bp,64.0bp) .. controls (27.385bp,64.0bp) and (24.698bp,64.0bp)  .. (22.188bp,64.0bp);
  \draw [] (59.891bp,64.0bp) .. controls (57.385bp,64.0bp) and (54.698bp,64.0bp)  .. (52.188bp,64.0bp);
  \draw [] (89.891bp,81.252bp) .. controls (87.385bp,78.836bp) and (84.698bp,76.244bp)  .. (82.188bp,73.824bp);
  \draw [] (119.89bp,91.0bp) .. controls (117.39bp,91.0bp) and (114.7bp,91.0bp)  .. (112.19bp,91.0bp);
  \draw [] (149.89bp,91.0bp) .. controls (147.39bp,91.0bp) and (144.7bp,91.0bp)  .. (142.19bp,91.0bp);
  \draw [] (179.89bp,91.0bp) .. controls (177.39bp,91.0bp) and (174.7bp,91.0bp)  .. (172.19bp,91.0bp);
  \draw [] (209.89bp,91.0bp) .. controls (207.39bp,91.0bp) and (204.7bp,91.0bp)  .. (202.19bp,91.0bp);
  \draw [] (269.9bp,68.703bp) .. controls (259.38bp,73.598bp) and (242.99bp,81.227bp)  .. (232.38bp,86.169bp);
  \draw [] (239.89bp,106.97bp) .. controls (237.39bp,104.74bp) and (234.7bp,102.34bp)  .. (232.19bp,100.1bp);
  \draw [] (269.89bp,117.28bp) .. controls (267.39bp,117.1bp) and (264.7bp,116.91bp)  .. (262.19bp,116.73bp);
  \draw [] (299.89bp,118.0bp) .. controls (297.39bp,118.0bp) and (294.7bp,118.0bp)  .. (292.19bp,118.0bp);
  \draw [] (329.89bp,118.0bp) .. controls (327.39bp,118.0bp) and (324.7bp,118.0bp)  .. (322.19bp,118.0bp);
  \draw [] (329.9bp,64.0bp) .. controls (319.38bp,64.0bp) and (302.99bp,64.0bp)  .. (292.38bp,64.0bp);
  \draw [] (330.52bp,82.282bp) .. controls (327.6bp,87.912bp) and (324.39bp,94.108bp)  .. (321.47bp,99.737bp);
  \draw [] (359.89bp,69.112bp) .. controls (357.39bp,68.396bp) and (354.7bp,67.628bp)  .. (352.19bp,66.911bp);
  \draw [] (389.89bp,72.0bp) .. controls (387.39bp,72.0bp) and (384.7bp,72.0bp)  .. (382.19bp,72.0bp);
  \draw [] (419.91bp,97.638bp) .. controls (403.02bp,102.05bp) and (369.23bp,110.88bp)  .. (352.25bp,115.32bp);
  \draw [] (419.89bp,86.696bp) .. controls (417.39bp,84.638bp) and (414.7bp,82.43bp)  .. (412.19bp,80.369bp);
  \draw [] (359.89bp,34.607bp) .. controls (357.39bp,38.725bp) and (354.7bp,43.139bp)  .. (352.19bp,47.263bp);
  \draw [] (89.891bp,46.748bp) .. controls (87.385bp,49.164bp) and (84.698bp,51.756bp)  .. (82.188bp,54.176bp);
\begin{scope}
  \definecolor{strokecol}{rgb}{0.0,0.0,0.0}
  \pgfsetstrokecolor{strokecol}
  \draw (11.0bp,64.0bp) node {1};
\end{scope}
\begin{scope}
  \definecolor{strokecol}{rgb}{0.0,0.0,0.0}
  \pgfsetstrokecolor{strokecol}
  \draw (41.0bp,64.0bp) node {2};
\end{scope}
\begin{scope}
  \definecolor{strokecol}{rgb}{0.0,0.0,0.0}
  \pgfsetstrokecolor{strokecol}
  \draw (71.0bp,64.0bp) node {3};
\end{scope}
\begin{scope}
  \definecolor{strokecol}{rgb}{0.0,0.0,0.0}
  \pgfsetstrokecolor{strokecol}
  \draw (101.0bp,91.0bp) node {4};
\end{scope}
\begin{scope}
  \definecolor{strokecol}{rgb}{0.0,0.0,0.0}
  \pgfsetstrokecolor{strokecol}
  \draw (131.0bp,91.0bp) node {5};
\end{scope}
\begin{scope}
  \definecolor{strokecol}{rgb}{0.0,0.0,0.0}
  \pgfsetstrokecolor{strokecol}
  \draw (161.0bp,91.0bp) node {6};
\end{scope}
\begin{scope}
  \definecolor{strokecol}{rgb}{0.0,0.0,0.0}
  \pgfsetstrokecolor{strokecol}
  \draw (191.0bp,91.0bp) node {7};
\end{scope}
\begin{scope}
  \definecolor{strokecol}{rgb}{0.0,0.0,0.0}
  \pgfsetstrokecolor{strokecol}
  \draw (221.0bp,91.0bp) node {8};
\end{scope}
\begin{scope}
  \definecolor{strokecol}{rgb}{0.0,0.0,0.0}
  \pgfsetstrokecolor{strokecol}
  \draw (281.0bp,64.0bp) node {9};
\end{scope}
\begin{scope}
  \definecolor{strokecol}{rgb}{0.0,0.0,0.0}
  \pgfsetstrokecolor{strokecol}
  \draw (251.0bp,116.0bp) node {10};
\end{scope}
\begin{scope}
  \definecolor{strokecol}{rgb}{0.0,0.0,0.0}
  \pgfsetstrokecolor{strokecol}
  \draw (281.0bp,118.0bp) node {11};
\end{scope}
\begin{scope}
  \definecolor{strokecol}{rgb}{0.0,0.0,0.0}
  \pgfsetstrokecolor{strokecol}
  \draw (311.0bp,118.0bp) node {12};
\end{scope}
\begin{scope}
  \definecolor{strokecol}{rgb}{0.0,0.0,0.0}
  \pgfsetstrokecolor{strokecol}
  \draw (341.0bp,118.0bp) node {13};
\end{scope}
\begin{scope}
  \definecolor{strokecol}{rgb}{0.0,0.0,0.0}
  \pgfsetstrokecolor{strokecol}
  \draw (341.0bp,64.0bp) node {14};
\end{scope}
\begin{scope}
  \definecolor{strokecol}{rgb}{0.0,0.0,0.0}
  \pgfsetstrokecolor{strokecol}
  \draw (371.0bp,72.0bp) node {15};
\end{scope}
\begin{scope}
  \definecolor{strokecol}{rgb}{0.0,0.0,0.0}
  \pgfsetstrokecolor{strokecol}
  \draw (401.0bp,72.0bp) node {16};
\end{scope}
\begin{scope}
  \definecolor{strokecol}{rgb}{0.0,0.0,0.0}
  \pgfsetstrokecolor{strokecol}
  \draw (431.0bp,95.0bp) node {17};
\end{scope}
\begin{scope}
  \definecolor{strokecol}{rgb}{0.0,0.0,0.0}
  \pgfsetstrokecolor{strokecol}
  \draw (371.0bp,18.0bp) node {18};
\end{scope}
\begin{scope}
  \definecolor{strokecol}{rgb}{0.0,0.0,0.0}
  \pgfsetstrokecolor{strokecol}
  \draw (101.0bp,37.0bp) node {19};
\end{scope}
  \end{tikzpicture}}%
  \vspace{2pt}
  \caption{The label dependency ordering $<_{\compd}$ of Meredith's proof
    (Fig.~\ref{fig-proof-mer}) when viewed as a \compDterm according to
    Example~\ref{examp-cdterm-mer}.}
  \label{fig-mercd-ordering}
\end{figure}

\vspace{-2pt} %
A \compDterm directly represents a DAG: The \defname{DAG of} a \compDterm $\compd =
\{l_1 \mapsto d_1, \ldots, l_n \mapsto d_n\}$ is the graph obtained from the
trees $d_1, \ldots, d_n$ by considering any edge to a leaf labeled with~$l_i$
as an edge to the root of $d_i$, and any edge to a leaf labeled with a symbol
not in $\dom(\compd)$ as an edge to a unique node representing that
symbol in the DAG. Figure~\ref{fig-dterm-dag-numbered} shows an example. The
DAGs of \compDterms inherit from \dterms, full binary trees, the property that
each inner node has exactly two children, a left and a right
child.\footnote{Properties of such binary DAGs for the special case of a
single root and a single leaf have been recently investigated by Genitrini et al.
\cite{genitrini:2020}, where they are called \name{compacted trees}.}

\begin{figure}
  \centering
  \scalebox{0.8}{  \begin{tikzpicture}
  [baseline=(current bounding box.north),
    >={Latex[length=4.5pt]},
    dot/.style = {circle, minimum size=11pt,
      inner sep=0pt, outer sep=0pt, draw, on grid},
    sqr/.style = {regular polygon sides=4, minimum size=11pt,
      inner sep=0pt, outer sep=0pt, draw, on grid}]  
\node[dot] (e) {\small $4$};
\node[dot, below left=1cm and \tl of e] (1) {\small $2$};
\node[dot, below right=1cm and \tl of e] (2) {\small $$};
\node[sqr, below left=1cm and \tll of 1] (11) {\small $1$};
\node[dot, below left=1cm and \tll of 2] (21) {\small $3$};
\draw[->] (e) -- (1);
\draw[->] (e) -- (2);
\draw[->] (1) -- (11);
\draw[->] (1) to [out=-40,in=15] (11);
\draw[->] (2) -- (21);
\draw[->] (2) to [out=-40,in=15] (21);
\draw[->] (21) to [out=240,in=-22] (11);
\draw[->] (21) to [out=-60,in=-10,out looseness=1.5] (1);
\end{tikzpicture}
  }
  \vspace{-8pt}
  \caption{The DAG of the \compDterm $\compd$ from Example~\ref{examp-cdterm}.
    It is identical to Fig.~\ref{fig-dterm-dag-plain}, except that labels
    of inner nodes indicate the corresponding primitive \Dterms in the domain
    of $\compd$.}
  \label{fig-dterm-dag-numbered}
\end{figure}

The number of inner nodes of the DAG of a \compDterm is $\sum_{\,l \in
  \dom(\compd)} \tsize(l\compd)$. If the \compDterm is written as a set of
bindings as in Example~\ref{examp-cdterm}, it can be read off as the total
number of occurrences of $\D$ in the bindings' right sides.

An alternative possible technical understanding of a \compDterm with a single
root is as a regular tree grammar where the domain forms the set of
nonterminals. Each nonterminal there has exactly one production and the
grammar describes a single tree
\cite{lohrey:survey:2015,lohrey:treerepair:2013}. If the regularity condition
is dropped, the grammar framework generalizes to tree representations that are
more strongly compressed than DAGs, offering further compression possibilities
also for \Dterms \cite{cw:ccs}.

\subsubsection{Comparing the Number of \Dterms of a Given Size for Different Size Measures}

The number of distinct \Dterms for increasing values of some size measure like
tree size, height or compacted size, gives an upper bound of the number of
trees to consider in proof search by enumerating \Dterms with iterative
deepening upon that size measure. This number is just an upper bound of the
actual structures to consider, because it does not take into account that
\Dterm enumeration may be interwoven with unification constrained by given
axioms and possibly a given goal where fragments of \Dterms for which
unifiability fails are immediately discarded. Heuristic restrictions may in
practice further restrict the considered number of structures. The number of
distinct \Dterms for increasing values of a size measure also indicates a
measure-specific size value up to which it is easily possible to compute for
given axioms all proofs, together with the lemmas proven by them.

If we assume a single proper axiom such that we can identify \compDterms with
full binary trees without any additional labeling, the sequences of the number
of distinct \Dterms for increasing tree size, height or compacted size are
well-known and can be found in \name{The {O}n-{L}ine {E}ncyclopedia of
  {I}nteger {S}equences} \url{https://oeis.org/} \cite{oeis}, with identifiers
\href{https://oeis.org/A000108}{A000108},
\href{https://oeis.org/A001699}{A001699}, and
\href{https://oeis.org/A254789}{A254789}, respectively, as shown in
Table~\ref{tab-oeis-numbers}.

\begin{table}[t]
  \centering
  \caption{The numbers of distinct \Dterms for a single axiom (or full binary
    trees) of given size~$n$ for different size measures.}
  \label{tab-oeis-numbers}
  \begin{tabular}{llcrrrrrrr}
    $n$ && 0 & 1 & 2 & 3 & 4 & 5 & 6\\\midrule
      Tree size & \OEISNUM{A000108} &
    1 & 1 & 2 & 5 & 14 & 42 & 132\\
    Height & \OEISNUM{A001699}
    & 1 & 1
    & 3 & 21 & 651 & 457,653 & 210,065,930,571\\
    Compacted size & \OEISNUM{A254789}
    & 1 & 1 &
    3 & 15 & 111 & 1,119 & 14,487
  \end{tabular}
\end{table}

\subsubsection{Node Labels for Proof Modularization}
\label{subsubsec-modularization}

That a \compDterm $\compd$ represents a set $D = \{d_1, \ldots, d_n\}$ of
\Dterms does not imply that the DAG of $\compd$ is the \emph{minimal} DAG
corresponding to~$D$. If the number of inner nodes of the DAG is larger than
the compacted size of~$D$, this indicates that not all subtrees of $D$ with multiple occurrences
have properly been factored out in~$\compd$. Although
obviously burdened with redundancy, such non-minimal DAGs cannot be excluded
from the outset because automated theorem provers might produce them, as in
general they do not always detect different subproof occurrences with
identical structure.

A \compDterm comprises not just the representation of a DAG, but also a
labeling of \emph{some} of its inner nodes. Nodes that receive such a label
include in particular all root nodes and all nodes that have more than a
single incoming edge. Figure~\ref{fig-dterm-dag-numbered} shows these
labelings for the \dterm $\compd$ of Example~\ref{examp-cdterm}: The unlabeled
blank node corresponds to the subtree $\D(3,3)$ which has only a single
incoming edge. In addition to labels that are necessary to describe the
structure, a \compDterm can provide labels for further nodes. In other words,
its domain may include primitive \Dterms that are neither a root nor occur
``multiple times'' in its range, where occurring ``multiple times'' in the range means
occurring in different members of the range or with multiple occurrences in
some member of the range.

\vspace{-4pt}

\begin{examp}
  Consider the \compDterm~$\compd$ of Example~\ref{examp-cdterm}, whose DAG is
  shown in Fig.~\ref{fig-representations}e and which represents the \Dterm~$d$
  from Examples~\ref{examp-dterm} and~\ref{examp-dag}. The root of the
  following \compDterm~$\compd'$ represents the same \Dterm as~$\compd$ and
  has the same number of inner nodes, but has with $3'$ one more primitive
  \Dterm in its domain, which it maps to the subterm $\D(3,3)$ and which has
  just a single occurrence in its range. This occurrence is in $\D(2,3')$,
  which is the value of $4\compd'$.
  \[\begin{array}{l@{\hspace{0.5em}}l@{\hspace{0.5em}}c@{\hspace{0.5em}}l}
  \compd'\; \eeqdef\; \{
  & 2 & \mapsto & \D(1,1),\\
  & 3 & \mapsto & \D(1,2),\\
  & 3' & \mapsto & \D(3,3),\\
  & 4 & \mapsto & \D(2,3')\;\}.
  \end{array}\]
\end{examp}

\vspace{-16pt}

\begin{examp}
  \label{examp-minimal-mer}
  The compacted \dterm~$\compd_{\f{mer}}$ from Example~\ref{examp-cdterm-mer},
  which represents the structure of Meredith's proof from
  Fig.~\ref{fig-proof-mer}, is a \compDterm where not all non-root members of the
  domain occur multiple times in the range, which is not difficult but somewhat
  tedious to verify: The primitive \Dterms $2$, $7$, $11$ and $15$ each have
  only a single occurrence in the range of $\compd_{\f{mer}}$.
\end{examp}

\vspace{-4pt}

Such node labels or domain members of a \compDterm, which are superfluous for
the purpose of describing the proof structure, can nevertheless be meaningful
for the presentation of a proof, because they indicate a modularization into
subproofs that is motivated by other reasons than the multiple occurrence of a
subproof or multiple use of a lemma. For example, to exhibit a subproof
obtained with a specific inference technique or to explicitly show the lemma
proven by a subproof as an intermediate proof stage for better comprehension
by humans.

\subsection{Proof Structures, Formula Substitutions and Semantics}
\label{subsec-subst-base}

A \CD proof combines structural aspects represented by a \Dterm, a full binary
tree, with atomic formulas associated with its nodes. Similar to a CM proof
of a clausal formula, a \CD proof involves different instances of the input
clauses, specifically the proper axioms and the detachment axiom \Det. The
atomic formulas associated with nodes are induced through unification from the
axioms and, via instances of \Det, the tree structure of the \Dterm. The
atomic formula associated with the root of the tree is the ``most general''
formula proven by the \Dterm with respect to the given proper axioms. In
particular, it proves all ground formulas that are instances of it and are
obtained from Skolemizing a universally quantified goal formula. For
goal-driven proof search, such a ground formula is taken into account from the
beginning, such that fragments of \Dterms whose root-associated formula does
not subsume the goal can be excluded early through failure of unification.

We call the most general formula proven by a \Dterm with respect to given
proper axioms the \name{most general theorem (MGT)} of the \Dterm. The MGT of
a subproof $d|_p$ of a proof $d$ represents the lemma used in $d$ at position
$p$. This MGT is determined just by the subproof $d|_p$ and the proper axioms.
Thus, occurrences of the same subproof at other positions in $d$ have the same
MGT.
There is a second useful way to associate formulas with positions in a \Dterm,
the \name{in-place theorem (IPT)} of a \Dterm $d$ at position $p$, which
represents the actual
\emph{instance} of the lemma used in $d$ at position $p$. Like the MGT, the IPT
is determined through most general unification but, in addition to the subtree
$d|_p$, also with respect to the context of $p$ in $d$. The notions of MGT and
IPT as well as their interplay will be made precise in this subsection.

\subsubsection{Most General Unifiers}

\CD involves the implicit characterization of proven lemmas as formulas that
are \emph{most general} in a certain sense, which can be specified with the
notion of most general unifier, a standard concept in modern ATP. We use it
here in a version that applies to a set of pairs of terms, as convenient in
discussions based on the CM
\cite{bibel:atp:1982,eder:subst:1985,eder:relative:1992}, and assume useful
restricting properties suggested by Elmar Eder \cite{eder:subst:1985}, gathered here
under the label \name{clean}.

\begin{defn}%
  \label{def-mgu-main}%
  Let~$M$ be a set of pairs of terms and let $\sigma$ be a substitution.
  
  \subdef{def-unifier} $\sigma$ is said to be a \defname{unifier} of $M$ if for
  all $\{s,t\} \in M$ it holds that $s\sigma = t\sigma$.

  \subdef{def-most-general-unifier} $\sigma$ is called a \defname{most general
    unifier} of $M$ if $\sigma$ is a unifier of $M$ and for all
  unifiers~$\sigma'$ of $M$ it holds that $\sigma' \subsumedBy \sigma$.

  \subdef{def-clean} $\sigma$ is called a \defname{clean most general unifier}
  of $M$ if it is a most general unifier of $M$ and, in addition, is
  idempotent and satisfies $\dom(\sigma) \cup \vrng(\sigma) \subseteq
  \vars(M)$.

  \subdef{def-mgu} If $M$ has a unifier, then $\mgu(M)$ denotes some clean
  most general unifier of $M$. $M$ is called \defname{unifiable} and $\mgu(M)$
  is called \defname{defined} in this case, otherwise it is called
  \defname{undefined}.

\end{defn}

\begin{convention}
  \label{convention-mgu-partial} Proposition,
  lemma and theorem statements implicitly assert their claims only for the
  case where occurrences of $\mgu$ in them are defined.
\end{convention}

\noindent
Although a unifier of a finite set of
pairs~$\{\{s_1,t_1\},\ldots,\{s_n,t_n\}\}$ can be expressed as unifier of the
single pair $\{\f{f}(s_1,\ldots,s_n), \f{f}(t_1,\ldots,t_n)\}$, the explicit
definition for a set of pairs fits well with the graphs in the CM and the
related \Dterms, where such sets of pairs naturally arise.

The additional properties required for a \name{clean} most general unifier do
not hold for all most general unifiers.\footnote{The inaccuracy observed by
Hindley and David Meredith
\cite{hindley:meredith:cd:1990} in early formalizations of \CD based on the
notion of \name{most general unifier} can be attributed to disregarding the
requirement $\dom(\sigma) \cup \vrng(\sigma) \subseteq \vars(M)$ of the
\name{clean} property.} However, the unification algorithms known from the
literature produce \emph{clean} most general unifiers \cite[Remark
  4.2]{eder:subst:1985}. If a set of pairs of terms has a unifier, then it has
a most general unifier and, moreover, also a \emph{clean} most general
unifier. Since we define $\mgu(M)$ as a \emph{clean} most general unifier,
whenever necessary, we can assume that it is idempotent and that all variables
occurring in its domain and range occur in $M$.
Convention~\ref{convention-mgu-partial} has the purpose to reduce clutter in
proposition, lemma and theorem statements.

\subsubsection{Positional Variables}
\label{subsubsec-posvars}

The atomic formulas associated with the nodes of a \Dterm are based on
instances of the proper axioms and \Det, which may conceptually be considered
as obtained in two steps: first, ``copies'', i.e., variants with fresh
variables, are created; second, a substitution determined by the proof
structure is applied to these copies. Let us begin with describing the first
step. We only need formulas with specific variables, which we call
\defname{\positionalvars} because each of them is firmly tied to a term
position. They are defined as follows.
\begin{defn}%
  \ \\
  \subdef{def-vpos} For all positions~$p$ and positive integers $i$ let
  $\vx{i}{p}$ and $\vy_p$ denote pairwise different variables. We call the
  variables $\vx{i}{p}$ and $\vy_p$ \defname{positional variables}.

  \subdef{def-vpos-p} For all sets~$P$ of positions define
  \[\posvar(P)\; \eqdef\; \{\vy_p \mid p \in P\}
  \cup \{\vx{i}{p} \mid p \in P \text{ and } i \geq 1\}.\]
\end{defn}

\noindent
With each leaf of a \dterm~$d$ a dedicated copy of some proper axiom is
associated. The variables $\vx{i}{p}$ are for use in these copies, where the
subscript $p$ is the position of the leaf node in $d$. The upper index~$i$
serves to distinguish different variables within the copies, as indicated with
the right side of the following exemplary equivalence, which holds for all
positions~$p$.
\begin{equation}
\label{eq-luk-canonical}
  \Luk \;\equiv\; \UC{
\P(\i(\i(\i(\vx{1}{p},\vx{2}{p}),\vx{3}{p}),
\i(\i(\vx{3}{p},\vx{1}{p}),\i(\vx{4}{p},\vx{1}{p}))))}.
\end{equation}
A variable $\vy_p$ can be associated with each position~$p$ of a \Dterm. That
each inner node of a \Dterm corresponds to a dedicated copy of the $\Det$
axiom is reflected in the following equivalence, which holds for all
positions~$p$.
\begin{equation}
  \label{eq-det-canonical}
\Det \;\equiv\; \UC{(\P(\i(\vy_{p.2},\vy_{p})) \land \P(\vy_{p.2})
    \imp \P(\vy_p))}. 
\end{equation}
Here the major premise of \Det is written to the left of the minor one,
matching the argument order of the $\D$ function symbol for proof tree
construction.
$\posvar(P)$ provides notation for referring to all positional variables
associated with members of a given set~$P$ of positions, regardless of whether
they are of the form $\vy_p$ or $\vx{i}{p}$.

The following substitution $\sshift{p}$ is a tool to systematically rename
\positionalvars while preserving the internal relationships between the
index-referenced positions.
\begin{defn}%
  For all positions~$p$ define the substitution $\sshift{p}$ as
  \[\begin{array}{r@{\hspace{0.5em}}c@{\hspace{0.5em}}l}
  \sshift{p} & \eqdef &
  \{\vy_{q} \mapsto \vy_{p.q} \mid q \text{ is a position}\}\; \cup\\[2pt]
  && \{\vx{i}{q} \mapsto \vx{i}{p.q} \mid i \geq 1 \text{ and } q \text{ is a
    position}\}.
  \end{array}
  \]
  \vspace{1pt}
\end{defn}

\noindent
The application of $\sshift{p}$ to a term~$s$ effects that $p$ is prepended to
the position indexes of all the \positionalvars occurring
in~$s$.

\begin{examp}%
  \[\begin{array}{l@{\hspace{0.5em}}c@{\hspace{0.5em}}l}
  \i(\vx{1}{\emptypos},\vx{2}{\emptypos})\sshift{1.1.2.1} & = &
  \i(\vx{1}{1.1.2.1},\vx{2}{1.1.2.1}).\\[3pt]
  \i(\vy_{2.1},\vy_{2.1.2}))\sshift{1.1} & = &
  \i(\vy_{1.1.2.1},\vy_{1.1.2.1.2}).
  \end{array}\]
  In the second equality, observe that position~$2.1.2$ refers to the right
  child of position~$2.1$. After applying $\sshift{1.1}$, it is
  position~$1.1.2.1.2$ that, again, refers to the right child of
  position~$1.1.2.1$.
\end{examp}

\noindent
Applying a $\sshift{p}$ substitution to a term always yields a variant, as
stated in the following proposition.
\begin{prop}%
  \label{prop-shift-variant}
  For all terms~$s$ whose variables are positional variables
  (Definition~\ref{def-vpos}) and for all positions~$p$ it holds that
  \[s\; \variant\; s\sshift{p}.\]
\end{prop}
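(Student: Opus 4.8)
The plan is to split $s \variant s\sshift{p}$ into its two defining subsumptions $s\sshift{p} \subsumedBy s$ and $s \subsumedBy s\sshift{p}$, and to dispatch them separately. The first is immediate from the definitions in Subsection~\ref{subsec-notation}: $s\sshift{p}$ is literally the instance of $s$ obtained by applying the substitution $\sshift{p}$, so $s$ subsumes $s\sshift{p}$, i.e.\ $s\sshift{p} \subsumedBy s$. Hence all the actual work lies in the converse direction, where I must produce a substitution $\rho$ with $(s\sshift{p})\rho = s$, witnessing $s \subsumedBy s\sshift{p}$.

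For the converse I would construct $\rho$ as an explicit ``inverse'' of $\sshift{p}$, but restricted to the finite variable set of $s$ so that it is a legitimate substitution. Concretely, since by hypothesis every variable of $s$ is positional, I set $\rho \eqdef \{\, v\sshift{p} \mapsto v \mid v \in \vars(s)\,\}$. This is well defined as a substitution of finite domain precisely when $\sshift{p}$ is injective on $\vars(s)$, so that no variable occurring in $s\sshift{p}$ is assigned two competing values. Injectivity is read off from the shape of $\sshift{p}$: it sends $\vy_q \mapsto \vy_{p.q}$ and $\vx{i}{q} \mapsto \vx{i}{p.q}$, and since the positional variables are pairwise distinct (Definition~\ref{def-vpos}) and left-concatenation of the fixed prefix $p$ is injective on positions, distinct positional variables go to distinct positional variables, with no collision between variables of the form $\vy_q$ and those of the form $\vx{i}{q}$.

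Granting this injectivity, the composed substitution $\sshift{p}\rho$ acts as the identity on $\vars(s)$: for each positional variable $v$ occurring in $s$ one has $v\sshift{p}\rho = v$ directly by the definition of $\rho$. Therefore $(s\sshift{p})\rho = s(\sshift{p}\rho) = s$, which yields $s \subsumedBy s\sshift{p}$ and, together with the first direction, the claimed $s \variant s\sshift{p}$.

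The only delicate point, and the one I would spell out carefully, is the well-definedness of $\rho$, that is, the injectivity of $\sshift{p}$ on positional variables together with the finiteness of $\dom(\rho)$; the rest is a mechanical unwinding of the definitions of instance, subsumption and variant. One could instead invoke the standard fact that any injective variable renaming maps a term to a variant, but exhibiting $\rho$ explicitly keeps the finiteness of its domain transparent, which matters since a substitution is required to differ from the identity only at finitely many variables.
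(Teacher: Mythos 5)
Your argument is correct, and since the paper disposes of this proposition with ``Easy to see,'' your proof simply supplies the standard argument the authors leave implicit: $\sshift{p}$ restricted to $\vars(s)$ is an injective variable renaming (injectivity following from the injectivity of prefixing positions with $p$ and the pairwise distinctness of positional variables), so the explicit finite-domain inverse $\rho$ witnesses the nontrivial subsumption direction. No gap; this matches the intended reasoning.
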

\begin{proof}
  Easy to see.
\end{proof}

\subsubsection{Axiom Assignments}

The association of axioms with primitive \Dterms is represented by a mapping
which we call \name{axiom assignment}, defined as follows.
\begin{defn}%
  An \defname{axiom assignment} $\alpha$ is a mapping whose domain is a set
  of primitive \Dterms and whose range is a set of terms whose variables are
  in $\{\vx{i}{\emptypos} \mid i \geq 1\}$. We say that $\alpha$ is
  \defname{for} a \dterm~$d$ if $\dom(\alpha) \supseteq \DPRIM(d)$.
\end{defn}
We write the application of an axiom assignment in postfix notation.

\begin{examp}%
  \label{examp-axiom-assignment}
The mapping
\[\alpha\;\eqdef\;\{1 \mapsto
\i(\i(\i(\vx{1}{\emptypos},\vx{2}{\emptypos}),\vx{3}{\emptypos}),\i(\i(\vx{3}{\emptypos},\vx{1}{\emptypos}),\i(\vx{4}{\emptypos},\vx{1}{\emptypos})))\}\]
is an axiom assignment for all \dterms~$d$ with $\DPRIM(d) = \{1\}$. Its sole
range element is a variant of the argument term of \Luk in the form of the
right side of (\ref{eq-luk-canonical}), with $p$ instantiated to the empty
position $\emptypos$. The application of $\alpha$ to the primitive \Dterm $1$
is written in postfix notation as $1\alpha$.
\end{examp}

\begin{examp}
In Meredith's proof presentation the axiom assignment is represented by the
steps with no trailing \dterm, such as line~1 in
Fig.~\ref{fig-representations}c, or line~1 in Fig.~\ref{fig-proof-mer}. The
latter actually represents the same axiom assignment as
Example~\ref{examp-axiom-assignment}.
\end{examp}

\subsubsection{Pairings}

As noted in the beginning of Sect.~\ref{subsubsec-posvars}, the clause
instances involved in a \CD proof may, similarly as in the CM, conceptually be
considered as obtained in two steps. We now turn to the second step, the
application of a substitution determined by the proof structure to the
previously created clause copies. This substitution is characterized as the
most general unifier of a set of term pairs that contains exactly one pair for
each node, or position, of the \Dterm. The following definition specifies this
pair for a given position.
\begin{defn}%
  \label{def-pairing}
  For \dterms~$d$, axiom assignments $\alpha$ and positions $p \in \pos(d)$
  define the pair $\pairing_{\alpha}(d, p)$ of terms as
  \[
  \begin{array}{rcll}
    \pairing_{\alpha}(d, p) & \eqdef & \{\vy_p,\, d|_p\alpha\sshift{p}\} &
    \text{ if } p \in
    \leafpos(d)\\[3pt]
    && \{\vy_{p.1},\, \i(\vy_{p.2}, \vy_p)\} & \text{ if } p \in \innerpos(d).
  \end{array}
  \]
\end{defn}

\noindent
A unifier of the set of pairings of all positions of a \dterm~$d$ equates for
each leaf position~$p$ the variable $y_p$ with the value of the axiom
assignment~$\alpha$ for the primitive \Dterm at $p$, after ``shifting''
variables by~$p$. This ``shifting'' means that the position subscript
$\emptypos$ of the variables in~$d|_p\alpha$ is replaced by $p$, yielding a
dedicated copy of the axiom for the leaf position~$p$. For inner
positions~$p$, which represent detachment steps, the unifier equates
$\vy_{p.1}$ and $\i(\vy_{p.2}, \vy_p)$, reflecting that the major premise of
\Det is proven by the left child of~$p$.
With respect to the connections shown for the case of a single axiom in
Fig.~\ref{fig-ConnT}, the pairing $\{\vy_{p.1},\, \i(\vy_{p.2}, \vy_p)\}$ for
an inner position~$p$ is induced by connection~\textbf{2} or~\textbf{4},
respectively, depending on whether $\vy_{p.1}$ is an inner node or a leaf.
Connections~\textbf{3} and~\textbf{5} would just induce the void requirement
$\{\vy_{p.2}, \vy_{p.2}\}$ and thus have no explicit correspondent in the
specification of $\pairing$. An example of a set of pairings and its most
general unifier is included in Example~\ref{examp-ipt-mgt} below.

The following proposition shows an interplay of $\pairing$ and $\f{shift}$,
which is useful as a lemma in further derivations.
\begin{prop}%
  \label{prop-tree-renaming-add}
  For all \dterms~$d$, axiom assignments $\alpha$ for $d$ and positions $p
  \in \pos(d)$ it holds that
  \[\begin{array}[t]{c@{\hspace{0.5em}}l}
  & y_{\emptypos}\mgu(\{\pairing_{\alpha}(d|_p, q) \mid q \in \pos(d|_p)\})
  \sshift{p}\\
  = & y_{p}\mgu(\{\pairing_{\alpha}(d, q) \mid q \in \pos(d) \text{ and } p
  \leq q\}).
  \end{array}\]
\end{prop}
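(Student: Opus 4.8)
The plan is to reduce the claim to two ingredients: a \emph{structural correspondence} between the two sets of pairings whose most general unifiers are taken, and the \emph{renaming-invariance of $\mgu$} under $\sshift{p}$. Write $M \eqdef \{\pairing_{\alpha}(d|_p, q') \mid q' \in \pos(d|_p)\}$ for the pairing set on the left-hand side and $N \eqdef \{\pairing_{\alpha}(d, q) \mid q \in \pos(d) \text{ and } p \leq q\}$ for the one on the right. Since $\vy_p = \vy_{\emptypos}\sshift{p}$, the asserted identity is $\vy_{\emptypos}\mgu(M)\sshift{p} = \vy_{\emptypos}\sshift{p}\,\mgu(N)$. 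I would first prove $N = M\sshift{p}$ and then push $\sshift{p}$ through the $\mgu$.

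For the structural correspondence I would use that the positions $q\in\pos(d)$ with $p\leq q$ are exactly those of the form $q = p.q'$ with $q'\in\pos(d|_p)$, that this is a bijection, and that under it $d|_{p.q'} = (d|_p)|_{q'}$ with leaf/inner status preserved. The core is the pairwise identity
\[\pairing_{\alpha}(d, p.q') \;=\; \pairing_{\alpha}(d|_p, q')\sshift{p} \qquad\text{for all } q'\in\pos(d|_p),\]
checked against the two cases of Definition~\ref{def-pairing}. For an inner position it is immediate, since $\{\vy_{q'.1},\,\i(\vy_{q'.2},\vy_{q'})\}\sshift{p} = \{\vy_{p.q'.1},\,\i(\vy_{p.q'.2},\vy_{p.q'})\}$. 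For a leaf position it follows from the composition law $\sshift{q'}\sshift{p} = \sshift{p.q'}$ on positional variables together with the fact that $(d|_p)|_{q'}\alpha$ has its variables among $\{\vx{i}{\emptypos}\mid i\geq 1\}$: then $(d|_p)|_{q'}\alpha\sshift{q'}\sshift{p} = d|_{p.q'}\alpha\sshift{p.q'}$ and $\vy_{q'}\sshift{p} = \vy_{p.q'}$, which is precisely $\pairing_{\alpha}(d, p.q')$. Taking the union over all $q'$ and using the paper's convention $M\sigma = \{\{s\sigma,t\sigma\}\mid\{s,t\}\in M\}$ gives $N = M\sshift{p}$.

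It then remains to see $\vy_{\emptypos}\mgu(M)\sshift{p} = \vy_{\emptypos}\sshift{p}\,\mgu(M\sshift{p})$. Here $\sshift{p}$ restricts to a bijective renaming of $\vars(M)$ onto $\vars(M\sshift{p})$ with an inverse renaming $\sshift{p}^{-1}$ on that range. Writing $\sigma = \mgu(M)$, a routine check shows $\sshift{p}^{-1}\sigma\sshift{p}$ is a most general unifier of $M\sshift{p}$: it unifies, because $s\sshift{p}(\sshift{p}^{-1}\sigma\sshift{p}) = s\sigma\sshift{p}$ for every $s$ over $\vars(M)$; and it is most general, because any unifier $\theta$ of $M\sshift{p}$ makes $\sshift{p}\theta$ a unifier of $M$, whence $\sshift{p}\theta \subsumedBy \sigma$ and $\theta$ is an instance of $\sshift{p}^{-1}\sigma\sshift{p}$. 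Applying this unifier to $\vy_{\emptypos}\sshift{p}$ yields $\vy_{\emptypos}\sshift{p}\sshift{p}^{-1}\sigma\sshift{p} = \vy_{\emptypos}\sigma\sshift{p}$, which is the left-hand side.

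The hard part will be justifying \emph{literal} equality rather than mere equality up to variants: $\mgu$ denotes \emph{some} clean most general unifier, so a priori $\vy_p\mgu(M\sshift{p})$ is determined only up to a variant, and Proposition~\ref{prop-shift-variant} by itself delivers only the variant relationship $\vy_{\emptypos}\mgu(M)\sshift{p} \variant \vy_p\mgu(M\sshift{p})$. What upgrades this to equality is that $\sshift{p}$ is a position-respecting bijective renaming: it carries the clean mgu $\sigma$ of $M$ (idempotent, with $\dom(\sigma)\cup\vrng(\sigma)\subseteq\vars(M)$) to a clean mgu $\sshift{p}^{-1}\sigma\sshift{p}$ of $M\sshift{p}$ of exactly the same shape, so that the canonical selection underlying $\mgu$ commutes with $\sshift{p}$. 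I would therefore isolate this commutation as the single nontrivial lemma and make the mgu choice compatible with the positional-variable indexing; the remaining manipulations above are then bookkeeping.
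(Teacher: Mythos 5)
Your proof is correct and is essentially the argument the paper leaves implicit behind its one-line ``Easy to see'': the bijection $q' \mapsto p.q'$ carries the pairing set of $d|_p$ onto the restricted pairing set of $d$ via $\sshift{p}$ (using $\sshift{q'}\sshift{p} = \sshift{p.q'}$ on the variables $\vx{i}{\emptypos}$ and $(d|_p)|_{q'} = d|_{p.q'}$), after which one only needs that $\mgu$ commutes with the bijective renaming $\sshift{p}$. Your closing caveat is also well taken: since $\mgu$ denotes an arbitrary clean most general unifier, the stated identity holds literally only up to variants unless the two selections are coordinated as you describe, but this is harmless because every downstream use of the proposition (e.g.\ in Proposition~\ref{prop-ipt-subsumedby-mgt} and Lemma~\ref{lem-sr-mult}) immediately weakens the equality to $\variant$ or $\subsumedBy$.
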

\begin{proof}
  Easy to see.
\end{proof}

\subsubsection{In-Place Theorem (IPT) and Most General Theorem (MGT)}

Based on the most general unifier of the set of pairings of all positions of a
\Dterm, a specific formula can be associated with each position of the \Dterm,
called the \name{in-place theorem (\IPT)}. The case where the position is the
top position~$\emptypos$ is distinguished as \name{most general theorem
  (\MGT)}.
\begin{defn}%
  \label{def-ipt-mgt}%
  For \dterms~$d$, axiom assignments $\alpha$ and positions $p \in \pos(d)$
  define the \defname{in-place theorem (\IPT) of}~$d$ \defname{at}~$p$
  \defname{for}~$\alpha$, $\ipt_{\alpha}(d,p)$, and the \defname{most general
    theorem (\MGT) of}~$d$ \defname{for}~$\alpha$, $\mgt_{\alpha}(d)$, as
  \smallskip\par
  \subdef{def-ipt} $\ipt_{\alpha}(d,p)\;
  \eqdef$ $\P(y_{p}\mgu(\{\pairing_{\alpha}(d, q) \mid q \in
  \pos(d)\})).$
  \smallskip\par
  \subdef{def-mgt} $\mgt_{\alpha}(d)\; \eqdef
  \ipt_{\alpha}(d,\emptypos)$.
\end{defn}

\noindent
Since $\ipt$ and $\mgt$ are defined on the basis of $\mgu$, they are undefined
if the set of pairs of terms underlying the respective application of $\mgu$
is not unifiable. Hence, we apply Convention~\ref{convention-mgu-partial} for
$\mgu$ also to occurrences of $\ipt$ and $\mgt$. If $\ipt$ and $\mgt$ are
defined, they both denote an atom whose variables are constrained by the
\name{clean} property of the underlying application of $\mgu$.

Let us illustrate the two formulas specified in Definition~\ref{def-ipt-mgt}
in a more informal way, beginning with the conceptually simpler MGT. We assume
that the axiom assignment $\alpha$ is $ \{1 \mapsto \Luk_{\emptypos}\}$, that
is, we have just a single proper axiom, \Luk, which is labeled by~$1$. The
argument $d$ of $\mgt$ is a \Dterm. If it is a \dconstant, that is, if $d=1$,
then $\mgt_{\alpha}(d)$ is just a variant of the axiom \Luk, corresponding to
the value of $1$ in the axiom assignment. Otherwise $d$ refers to some
instance of the detachment axiom $\P x \land \P\i xy \imp \P y$. If, for
example, $d = \D(1,1)$, then both premises of~$d$ are connected with two
different instances of the axiom \Luk resulting in a substitution $\sigma$ for
$x$ and $y$ such that $\mgt_\alpha(d)=\P y\sigma$. In other words, the
resulting MGT is the derived conclusion of the detachment axiom, applied
to two copies of the proper axiom as premises.

In the general case we have more instances of the detachment axiom and
more instances of the proper axiom involved; but the resulting MGT is still
the derived conclusion of the applications of the detachment axiom, one
application for each inner node of~$d$. In such a more general case, we could
be interested in the conclusion of some instance of the detachment axiom
\emph{within} the derivation other than the final one, say the one at
position~$p$. This situation is captured by the IPT, which renders exactly
such a conclusion formula. The substitution to obtain the IPT is induced not
only by the pairing constraints of the subtree rooted at position~$p$, but
also by the pairing constraints of its context in the overall proof.

In accounts of \CD in type theory
\cite{hindley:meredith:cd:1990,hindley:book:1997} the \MGT is considered as
\name{principal type-scheme} or \name{principal type}. A primitive \Dterm is
identified there with the associated axiom. A compound \Dterm $\D(d_1,d_2)$ is
identified with the principal type of the application of a function with
principal type $d_1$ to an argument with principal type $d_2$.

The following proposition relates \IPT and \MGT with respect to subsumption.
\begin{prop}%
 \label{prop-ipt-subsumedby-mgt}
  For all \dterms~$d$, axiom assignments $\alpha$ for $d$ and positions $p
  \in \pos(d)$ it holds that \[\ipt_{\alpha}(d,p) \subsumedBy
  \mgt_{\alpha}(d|_p).\]
\end{prop}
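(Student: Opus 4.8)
The plan is to reduce the statement to Proposition~\ref{prop-tree-renaming-add} together with the single most-general-unifier inequality forced by passing to a subset of constraints. Write $M \eqdef \{\pairing_{\alpha}(d,q) \mid q \in \pos(d)\}$ for the full pairing set of $d$ and set $\sigma \eqdef \mgu(M)$, so that by Definition~\ref{def-ipt-mgt} we have $\ipt_{\alpha}(d,p) = \P(y_p\sigma)$. Let $M' \eqdef \{\pairing_{\alpha}(d,q) \mid q \in \pos(d) \text{ and } p \leq q\}$ be the sub-collection of pairings located on or below~$p$, and let $M_p \eqdef \{\pairing_{\alpha}(d|_p,q) \mid q \in \pos(d|_p)\}$ be the pairing set of the isolated subtree $d|_p$, so that $\mgt_{\alpha}(d|_p) = \P(y_{\emptypos}\mgu(M_p))$. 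Under Convention~\ref{convention-mgu-partial} both $\ipt_{\alpha}(d,p)$ and $\mgt_{\alpha}(d|_p)$ are assumed defined, hence $\sigma$ and $\tau \eqdef \mgu(M_p)$ exist; moreover, since $M' \subseteq M$ is then unifiable, $\mu \eqdef \mgu(M')$ is defined as well.

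First I would observe that $\sigma$, being a unifier of the larger set $M$, is in particular a unifier of $M' \subseteq M$. Since $\mu$ is a most general unifier of $M'$, Definition~\ref{def-mgu-main} yields a substitution $\rho$ with $\mu\rho = \sigma$. Next I would invoke Proposition~\ref{prop-tree-renaming-add}, which precisely equates its isolated-subtree side with its embedded side, i.e.\ $y_{\emptypos}\tau\sshift{p} = y_p\mu$. Combining these two facts gives the chain
\[
y_p\sigma = y_p\mu\rho = (y_{\emptypos}\tau\sshift{p})\rho = (y_{\emptypos}\tau)(\sshift{p}\rho),
\]
where the last step is associativity of substitution composition. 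This exhibits $\sshift{p}\rho$ as a substitution mapping $y_{\emptypos}\tau$ to $y_p\sigma$, so that $y_p\sigma \subsumedBy y_{\emptypos}\tau$. Wrapping both terms in the predicate~$\P$ preserves this subsumption, which is exactly $\ipt_{\alpha}(d,p) \subsumedBy \mgt_{\alpha}(d|_p)$.

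Conceptually, the subtree $d|_p$ taken in isolation proves its most general theorem, whereas at position $p$ inside the full proof $d$ the same subtree is additionally constrained by the pairings of its surrounding context; the in-place theorem can therefore only be a (possibly proper) instance of the isolated most general theorem. The delicate points are bookkeeping rather than substantive: ensuring $M' \subseteq M$ so that $\sigma$ descends to a unifier of $M'$, checking that all three unifiers $\sigma,\tau,\mu$ are simultaneously defined under Convention~\ref{convention-mgu-partial}, and keeping the composition order straight when chaining $\sshift{p}$ with $\rho$. The genuine difficulty — reconciling the position-indexing of the isolated subtree $d|_p$ with that of the subtree embedded at $p$ in $d$ — has already been discharged in Proposition~\ref{prop-tree-renaming-add}, so given that lemma the present claim follows directly from $\sigma \subsumedBy \mu$.
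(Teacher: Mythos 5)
Your proof is correct and follows essentially the same route as the paper's: the key steps are the subsumption induced by restricting the pairing set to positions below~$p$ and the bridge provided by Proposition~\ref{prop-tree-renaming-add}. The only (cosmetic) difference is that where the paper invokes Proposition~\ref{prop-shift-variant} to discard the shift via a variant step, you instead fold $\sshift{p}$ into the witnessing substitution $\sshift{p}\rho$ directly, which works equally well.
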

\begin{proof}
  \prlReset{prop-ipt-subsumedby-mgt} Can be shown in the following steps,
  explained below.
  \[
  \begin{array}{r@{\hspace{0.5em}}c@{\hspace{0.5em}}l@{\hspace{0.5em}}l}
  \prl{1} & & \ipt_{\alpha}(d,p)\\
  \prl{2} & = & \P(y_{p}\mgu(\{\pairing_{\alpha}(d, q) \mid q \in \pos(d)\}))\\
  \prl{3} & \subsumedBy & \P(y_{p}\mgu(\{\pairing_{\alpha}(d,q) \mid q \in \pos(d) \text{ and } p \leq q\}))\\
  \prl{4} & = & \P(y_{\emptypos}\mgu(\{\pairing_{\alpha}(d|_p, q) \mid q \in \pos(d|_p)\})\sshift{p})\\
  \prl{5} & \variant & \P(y_{\emptypos}\mgu(\{\pairing_{\alpha}(d|_p, q) \mid q \in \pos(d|_p)\}))\\
  \prl{6} & = & \ipt_{\alpha}(d|_p, \emptypos)\\
  \prl{7} & = & \mgt_{\alpha}(d|_p).\\
  \end{array}
  \]
   Step~\pref{3} follows easily from the definition of most general unifier.
   Step~\pref{4} is justified by Proposition~\ref{prop-tree-renaming-add},
   step~\pref{5} by Proposition~\ref{prop-shift-variant}.  The remaining steps are
   obtained by expanding and contracting definitions.
\end{proof}

\smallskip

\noindent
By Proposition~\ref{prop-ipt-subsumedby-mgt}, the \IPT at some position~$p$ of a
\dterm~$d$ is subsumed by the \MGT of the subterm~$d|_p$ of $d$ rooted at
position~$p$. An intuitive argument is that the only constraints that
determine the most general unifier underlying the \MGT are induced by
positions of $d|_p$, that is, \emph{below}~$p$ (including~$p$ itself). In
contrast, the most general unifier underlying the \IPT is determined by
\emph{all} positions of $d$, including those that are not below~$p$.

\begin{exampapp}%
  \label{examp-ipt-mgt}
   This example shows for a given \Dterm the set of associated pairings
   (Definition~\ref{def-pairing}) and its most general unifier
   (Definition~\ref{def-mgu-main}), as well as the \IPT and \MGT for a specific
   position in the \Dterm (Definition~\ref{def-ipt-mgt}). Let \[\alpha\; \eeqdef\;
   \{1 \mapsto \i(\vx{1}{\emptypos}, \i(\vx{2}{\emptypos},
   \vx{1}{\emptypos}))\}.\] That is, $\alpha$ is an axiom assignment that maps
   the \dconstant~$1$ to a variant of the argument term of axiom \Simp whose
   variables are positional variables~$x_{\emptypos}^i$. Consider the
   \dterm \[d\; \eeqdef\; \D(\D(1,1),1).\] Then $\pos(d) = \{\emptypos, 1, 1.1,
   1.2, 2\}$ and
  \[\begin{array}{l@{\hspace{0.5em}}c@{\hspace{0.5em}}l}
  \pairing_{\alpha}(d,\emptypos) & = & \{\vy_{1},\, \i(\vy_{2}, \vy_{\emptypos})\}.\\
  \pairing_{\alpha}(d,1) & = & \{\vy_{1.1},\, \i(\vy_{1.2}, \vy_1)\}.\\
  \pairing_{\alpha}(d,1.1) & = & \{\vy_{1.1},\, \i(\vx{1}{1.1}, \i(\vx{2}{1.1}, \vx{1}{1.1}))\}.\\
  \pairing_{\alpha}(d,1.2) & = & \{\vy_{1.2},\, \i(\vx{1}{1.2}, \i(\vx{2}{1.2}, \vx{1}{1.2}))\}.\\
  \pairing_{\alpha}(d,2) & = & \{\vy_{2},\, \i(\vx{1}{2}, \i(\vx{2}{2}, \vx{1}{2}))\}.\\
  \end{array}
  \]
  Let $\sigma \eeqdef \mgu(\{\pairing_{\alpha}(d, q) \mid q \in
  \pos(d)\}))$. We can then calculate that
  \[\begin{array}{l@{\hspace{0.5em}}c@{\hspace{0.5em}}ll}
  \sigma & \variant & \{
  & \vy_{\emptypos} \mapsto \i(\vx{1}{1.2}, \i(\vx{2}{1.2}, \vx{1}{1.2})),\\
  &&& \vy_{1} \mapsto \i(\i(\vx{1}{2}, \i(\vx{2}{2}, \vx{1}{2})), \i(\vx{1}{1.2}, \i(\vx{2}{1.2}, \vx{1}{1.2}))),\\
  &&& \vy_{1.1} \mapsto \i(\i(\vx{1}{1.2}, \i(\vx{2}{1.2}, \vx{1}{1.2})), \i(\i(\vx{1}{2}, \i(\vx{2}{2}, \vx{1}{2})), \i(\vx{1}{1.2}, \i(\vx{2}{1.2}, \vx{1}{1.2})))),\\
  &&& \vy_{1.2} \mapsto \i(\vx{1}{1.2}, \i(\vx{2}{1.2}, \vx{1}{1.2})),\\
  &&& \vy_{2} \mapsto \i(\vx{1}{2}, \i(\vx{2}{2}, \vx{1}{2})),\\
  &&& \vx{1}{1.1} \mapsto \i(\vx{1}{1.2}, \i(\vx{2}{1.2}, \vx{1}{1.2})),\\
  &&& \vx{2}{1.1} \mapsto \i(\vx{1}{2}, \i(\vx{2}{2}, \vx{1}{2}))\; \}.
  \end{array}
  \]
  We are going to compare the IPT and MGT of
  \[d' \eeqdef d|_1,\] that is, the subterm of $d$ at
  position~$1$. Then $d' = \D(1,1)$, $\pos(d') = \{\emptypos, 1, 2\}$, and
  \[\begin{array}{l@{\hspace{0.5em}}c@{\hspace{0.5em}}l}
  \pairing_{\alpha}(d',\emptypos) & = & \{\vy_{1},\, \i(\vy_{2}, \vy_{\emptypos})\}.\\
  \pairing_{\alpha}(d',1) & = & \{\vy_{1},\, \i(\vx{1}{1}, \i(\vx{2}{1}, \vx{1}{1}))\}.\\
  \pairing_{\alpha}(d',2) & = & \{\vy_{2},\, \i(\vx{1}{2}, \i(\vx{2}{2}, \vx{1}{2}))\}.\\
  \end{array}
  \]
  Let $\sigma' \eeqdef \mgu(\{\pairing_{\alpha}(d', q) \mid q \in
  \pos(d)\}))$. We can calculate that
  \[\begin{array}{l@{\hspace{0.5em}}c@{\hspace{0.5em}}ll}
  \sigma' & \variant & \{
  & \vy_{\emptypos} \mapsto \i(\vx{2}{1}, \i(\vx{1}{2}, \i(\vx{2}{2}, \vx{1}{2}))),\\
  &&& \vy_{1} \mapsto \i(\i(\vx{1}{2}, \i(\vx{2}{2}, \vx{1}{2})), \i(\vx{2}{1}, \i(\vx{1}{2}, \i(\vx{2}{2}, \vx{1}{2})))),\\
  &&& \vy_{2} \mapsto \i(\vx{1}{2}, \i(\vx{2}{2}, \vx{1}{2})),\\
  &&& \vx{1}{1} \mapsto \i(\vx{1}{2}, \i(\vx{2}{2}, \vx{1}{2}))\; \}.
  \end{array}
  \]
  Now $\ipt(d,1)$ and $\mgt(d|_1)$ can be determined as follows, where, to
  increase readability, we supplement additional variants with variable names
  $p,q,r,s$.
  \[
  \begin{array}{l@{\hspace{0.5em}}c@{\hspace{0.5em}}l}
    \ipt(d,1) & = & \P(y_1\sigma)\\
    & \variant & \P(\i(\i(\vx{1}{2}, \i(\vx{2}{2}, \vx{1}{2})), \i(\vx{1}{1.2}, \i(\vx{2}{1.2},
    \vx{1}{1.2}))))\\
    & \variant & \P(\i(\i(p, \i qp), \i(r, \i sr))).\\[1ex]
    \mgt(d|_1) & = & \mgt(d')\\
    & = & \ipt(d',\emptypos)\\
    & = & \P(y_{\emptypos}\sigma')\\
    & \variant & \P(\i(\vx{2}{1}, \i(\vx{1}{2}, \i(\vx{2}{2}, \vx{1}{2}))))\\
    & \variant & \P(\i(p, \i(q, \i rq))).
  \end{array}
  \]
  By Proposition~\ref{prop-ipt-subsumedby-mgt} it holds that $\ipt(d,1) \subsumedBy
  \mgt(d|_1)$, that is,
  \[\i(\i(p, \i qp), \i(r, \i sr)) \subsumedBy \i(p, \i(q, \i rq)),\]
  which is easy to verify.
  
  Side remark: In this simple example it holds that $\mgt(d) \variant \P(\i(p,
  \i(q, p)))$, that is, the \MGT of $d$ is a variant of the axiom \Syll. There
  is some apparent redundancy inherent in $d$, because it does just prove what
  a strict subterm of it, the primitive \Dterm $1$, proves. Such redundancies
  will be discussed in Sect.~\ref{sec-red}.

\end{exampapp}

Semantics now enters the stage with the entailment relationship $\entails$. By
universally closing the atoms on both sides of
Proposition~\ref{prop-ipt-subsumedby-mgt} we can relate MGT and IPT through
entailment.
\begin{prop}%
  \label{prop-mgt-entails-ipt}
  For all \dterms~$d$, axiom assignments $\alpha$ for $d$ and positions $p \in
  \pos(d)$ it holds that
  \[\UCN{\mgt_{\alpha}(d|_p)}\; \entails\; \UCN{\ipt_{\alpha}(d,p)}.\]
\end{prop}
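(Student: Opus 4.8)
The plan is to derive the statement as an immediate consequence of Proposition~\ref{prop-ipt-subsumedby-mgt} together with a single elementary fact of first-order semantics. Proposition~\ref{prop-ipt-subsumedby-mgt} already supplies the subsumption $\ipt_{\alpha}(d,p) \subsumedBy \mgt_{\alpha}(d|_p)$, which by the definition of $\subsumedBy$ means that $\ipt_{\alpha}(d,p)$ is an instance of $\mgt_{\alpha}(d|_p)$; that is, there is a substitution~$\rho$ with $\mgt_{\alpha}(d|_p)\rho = \ipt_{\alpha}(d,p)$. Both objects are atoms of the form $\P(\cdot)$, defined by virtue of Convention~\ref{convention-mgu-partial}, so their universal closures are ordinary closed formulas with finitely many leading quantifiers.

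The one remaining ingredient is the standard principle that the universal closure of a formula entails the universal closure of any of its instances: from $A \subsumedBy B$ one obtains $\UCN{B} \entails \UCN{A}$. First I would recall why this holds, applied to $A \eqdef \ipt_{\alpha}(d,p)$ and $B \eqdef \mgt_{\alpha}(d|_p)$. Let $M$ be an arbitrary model of $\UCN{\mgt_{\alpha}(d|_p)}$ and fix an assignment to the variables of $\ipt_{\alpha}(d,p)$. Since $M$ satisfies $\UCN{\mgt_{\alpha}(d|_p)}$, the atom $\mgt_{\alpha}(d|_p)$ is satisfied under the assignment that maps each variable~$v$ to the value of $v\rho$ under the fixed assignment; by the substitution lemma this is precisely the statement that $\mgt_{\alpha}(d|_p)\rho = \ipt_{\alpha}(d,p)$ is satisfied under the fixed assignment. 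As the assignment was arbitrary, $M \entails \UCN{\ipt_{\alpha}(d,p)}$.

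Combining these two observations yields $\UCN{\mgt_{\alpha}(d|_p)} \entails \UCN{\ipt_{\alpha}(d,p)}$, as claimed. I do not anticipate any genuine obstacle here: all the structural work has already been done in Proposition~\ref{prop-ipt-subsumedby-mgt}, and what is added is only the passage from the purely syntactic subsumption of atoms to the corresponding semantic entailment of their universal closures. This is the routine ``universal instantiation'' direction of first-order reasoning, and the \emph{clean} property of the underlying $\mgu$ is what guarantees the variable sets are finite so that the universal closures are well-behaved closed formulas.
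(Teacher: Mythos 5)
Your proposal is correct and follows exactly the paper's route: the paper proves this proposition with the single line ``Follows from Proposition~\ref{prop-ipt-subsumedby-mgt},'' and what you have added is just the explicit justification (via the substitution lemma) of the standard step from syntactic subsumption of atoms to entailment of their universal closures, which the paper leaves implicit.
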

\begin{proof} Follows from
  Proposition~\ref{prop-ipt-subsumedby-mgt}.
\end{proof}

The following lemma expresses the core relationships between the structure of
a proof (a \dterm), the unifying substitution of the pairings (underlying the
specification of \IPTs) and semantic entailment of the formulas associated
with positions in the structure (\IPTs).

\begin{lem}[Junction Core Lemma]
  \label{lem-core} For all \dterms~$d$, axiom assignments $\alpha$ for $d$ and
  positions $p \in \pos(d)$ it holds that

  \smallskip
  \subprop{lem-core-leaf} If $p \in \leafpos(d)$, then
  \[\UCM{\P(d|_p\alpha)}\; \entails\; \ipt_{\alpha}(d,p).\]

  \subprop{lem-core-inner} If $p \in \innerpos(d)$, then
  \[\Det \land \ipt_{\alpha}(d,p.1) \land \ipt_{\alpha}(d,p.2)\; \entails\; \ipt_{\alpha}(d,p).\]
\end{lem}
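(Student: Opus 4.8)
The plan is to unfold the definition of $\ipt$ and to exploit that every in-place theorem occurring in the statement is read off from one and the same substitution. Writing
\[\sigma\; \eqdef\; \mgu(\{\pairing_{\alpha}(d, q) \mid q \in \pos(d)\}),\]
we have by Definition~\ref{def-ipt-mgt} that $\ipt_{\alpha}(d,q) = \P(\vy_q\sigma)$ for every $q \in \pos(d)$, and Convention~\ref{convention-mgu-partial} lets us assume $\sigma$ is defined. The whole argument then reduces to reading off the two defining clauses of $\pairing_{\alpha}(d,p)$ from Definition~\ref{def-pairing} and using that $\sigma$, being a unifier of the entire pairing set, satisfies the corresponding term equalities.

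For the leaf case, I would start from the clause contributing the pair $\{\vy_p,\, d|_p\alpha\sshift{p}\}$. Since $\sigma$ unifies it, $\vy_p\sigma = (d|_p\alpha\sshift{p})\sigma$, so that $\ipt_{\alpha}(d,p) = \P(\vy_p\sigma) = \P((d|_p\alpha)\sshift{p}\sigma)$. The atom on the right is the instance of $\P(d|_p\alpha)$ under the substitution $\sshift{p}\sigma$; as the variables of $d|_p\alpha$ all lie in $\{\vx{i}{\emptypos}\}$, they are exactly the variables bound by $\UCM{\P(d|_p\alpha)}$, and the universal closure of a formula entails each of its instances. This gives the claimed entailment directly.

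For the inner case, the relevant clause contributes $\{\vy_{p.1},\, \i(\vy_{p.2}, \vy_p)\}$, where $p.1, p.2 \in \pos(d)$ because $p \in \innerpos(d)$ forces $d|_p$ to be compound. As $\sigma$ unifies this pair, $\vy_{p.1}\sigma = \i(\vy_{p.2}\sigma,\, \vy_p\sigma)$, whence
\[\ipt_{\alpha}(d,p.1) = \P(\i(\vy_{p.2}\sigma, \vy_p\sigma)), \quad \ipt_{\alpha}(d,p.2) = \P(\vy_{p.2}\sigma), \quad \ipt_{\alpha}(d,p) = \P(\vy_p\sigma).\]
I would then instantiate the variables $x,y$ of $\Det = \forall xy\,(\P(x) \land \P(\i(x,y)) \imp \P(y))$ by $x \mapsto \vy_{p.2}\sigma$ and $y \mapsto \vy_p\sigma$; the resulting implication has antecedent $\ipt_{\alpha}(d,p.2) \land \ipt_{\alpha}(d,p.1)$ and consequent $\ipt_{\alpha}(d,p)$, so a single modus ponens step against $\ipt_{\alpha}(d,p.1) \land \ipt_{\alpha}(d,p.2)$ yields the claim.

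The calculations are routine; the only point requiring care is the reading of $\entails$ for these open atoms. Since the atoms share positional variables through the common $\sigma$, I would interpret both entailments as holding in every model under every assignment to the free positional variables, with the free variables on the left and right identified. Under this reading the first part is plain universal instantiation and the second is plain modus ponens against the universally quantified $\Det$, so the only real obstacle is the bookkeeping ensuring that all occurring $\ipt$ terms are governed by a single most general unifier, which is immediate from Definition~\ref{def-ipt-mgt}.
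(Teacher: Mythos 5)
Your proposal is correct and follows essentially the same route as the paper's own proof: both unfold $\ipt$ via the single most general unifier $\sigma$ of all pairings, read off the leaf and inner clauses of $\pairing$ to rewrite the relevant IPTs, and then conclude by universal instantiation (leaf case) and by instantiating $\Det$ at $x \mapsto \vy_{p.2}\sigma$, $y \mapsto \vy_p\sigma$ followed by modus ponens (inner case). Your closing remark on how to read $\entails$ for open atoms is a harmless clarification the paper leaves implicit.
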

\begin{proof}
  Let $\sigma = \mgu(\{\pairing_{\alpha}(d, q) \mid q \in \pos(d)\})$ and
  assume it is defined.

  \smallskip
  (\ref{lem-core-leaf}) From Definition~\ref{def-ipt} and Definition~\ref{def-pairing} we
  can conclude $\ipt_{\alpha}(d,p)\, =\, \P(\vy_{p}\sigma)\, =\, 
  \P(d|_p\alpha\sshift{p}\sigma)\, \subsumedBy\, \P(d|_p\alpha)$,
  which implies the proposition to be proven.

  \smallskip
  (\ref{lem-core-inner}) From Definition~\ref{def-ipt} and Definition~\ref{def-pairing} we
  can conclude $\ipt(d,p.1)\, =\, \P(\vy_{p.1}\sigma)\, =\, \P(\i(\vy_{p.2},
  \vy_p)\sigma)$, $\ipt(d,p.2)\, =\, \P(\vy_{p.2}\sigma)$, and $\ipt(d,p)\,
  =\, \P(\vy_{p}\sigma)$. Hence, we can rephrase the proposition statement as
  \[\Det \land \P(\i(\vy_{p.2},
  \vy_p)\sigma) \land \P(\vy_{p.2}\sigma)\; \entails\; \P(\vy_{p}\sigma).\] By
  expanding the definition of~$\Det$ and rearranging formula components, this
  entailment can be brought into the following form, which obviously holds as
  its right side is obtained from instantiating universal quantifiers on the
  left side.
  \[\forall xy\, (\P x \land \P\i xy \imp \P y)\;
  \entails\; \P(\vy_{p.2}\sigma) \land \P(\i(\vy_{p.2}, \vy_p)\sigma) \imp
  \P(\vy_{p}\sigma).\vspace{-2ex}\]
\end{proof}

\noindent
Based on Lemma~\ref{lem-core}, the following theorem expresses how \Det
together with the axioms referenced in leaves entails the \MGT of a \dterm.

\begin{thm}[\MGT Entailment]%
  \label{thm-sem-mgt}
  For all \dterms~$d$ and axiom assignments~$\alpha$ for $d$ 
  it holds that
  \[\Det \land \bigwedge_{p \in \leafpos(d)}\hspace{-1.2em} \UCM{\P(d|_p\alpha)}
  \;\entails\; \UCN{\mgt_{\alpha}(d)}.\]
\end{thm}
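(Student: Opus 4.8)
The plan is to fix $d$ and $\alpha$, abbreviate the antecedent as $\Gamma \eqdef \Det \land \bigwedge_{p \in \leafpos(d)} \UCM{\P(d|_p\alpha)}$, and prove the position-indexed generalization that $\Gamma \entails \ipt_\alpha(d,p)$ holds for every $p \in \pos(d)$. Since $\Gamma$ is a sentence, reading the open atom $\ipt_\alpha(d,p)$ as required to hold under all variable assignments in every model of $\Gamma$ makes this claim equivalent to $\Gamma \entails \UCN{\ipt_\alpha(d,p)}$; and by Definition~\ref{def-mgt} we have $\mgt_\alpha(d) = \ipt_\alpha(d,\emptypos)$, so instantiating the generalization at $p = \emptypos$ delivers exactly $\Gamma \entails \UCN{\mgt_\alpha(d)}$, which is the theorem. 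The whole argument thus rests on the Junction Core Lemma (Lemma~\ref{lem-core}); no further facts about $\mgu$ or $\pairing$ are needed beyond what that lemma already packages.

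I would run the induction on $\height(d|_p)$, so that the subtrees rooted at the children of an inner position are treated before the position itself. In the base case $p \in \leafpos(d)$, part~(\ref{lem-core-leaf}) of Lemma~\ref{lem-core} gives $\UCM{\P(d|_p\alpha)} \entails \ipt_\alpha(d,p)$; since $\UCM{\P(d|_p\alpha)}$ is one of the conjuncts of $\Gamma$, monotonicity of $\entails$ upgrades this to $\Gamma \entails \ipt_\alpha(d,p)$. In the inductive step $p \in \innerpos(d)$, so $p.1, p.2 \in \pos(d)$ with strictly smaller height; the induction hypothesis supplies $\Gamma \entails \ipt_\alpha(d,p.1)$ and $\Gamma \entails \ipt_\alpha(d,p.2)$, while $\Gamma \entails \Det$ is immediate. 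Consequently every model of $\Gamma$ satisfies $\Det \land \ipt_\alpha(d,p.1) \land \ipt_\alpha(d,p.2)$ under every assignment, and part~(\ref{lem-core-inner}) of Lemma~\ref{lem-core} then forces $\ipt_\alpha(d,p)$, i.e. $\Gamma \entails \ipt_\alpha(d,p)$, closing the induction.

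The one place that needs care --- and essentially the only nonroutine point --- is the treatment of the positional variables, which are shared across the IPTs at all positions of the single fixed \dterm $d$ because they are all read off from the one substitution $\mgu(\{\pairing_\alpha(d,q) \mid q \in \pos(d)\})$. The entailments coming from Lemma~\ref{lem-core} and from the induction hypotheses must therefore be chained under a uniform, rigid reading of these variables: the semantics ranges over assignments to all positional variables simultaneously, and each implication is applied pointwise in such an assignment. Because $\Gamma$ itself has no free variables, letting the assignment range freely in the conclusion $\Gamma \entails \ipt_\alpha(d,\emptypos)$ is exactly what justifies passing to the universal closure $\UCN{\mgt_\alpha(d)}$, so no separate generalization step is required at the end. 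I expect this quantifier bookkeeping, rather than any combinatorial content, to be the only subtle aspect of the proof.
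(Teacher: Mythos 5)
Your proposal is correct and follows essentially the same route as the paper: an induction over the positions of $d$ (the paper phrases it as induction on the structure of $d$) using the two parts of Lemma~\ref{lem-core} to establish $\Gamma \entails \ipt_{\alpha}(d,p)$ for every $p$, followed by contracting the definition of $\mgt$ at $p=\emptypos$ and passing to the universal closure because the antecedent is a sentence. Your explicit remarks on the rigid, shared reading of the positional variables and on why the closure step is free are exactly the bookkeeping the paper's terse proof leaves implicit.
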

\begin{proof}
 By induction on the structure of $d$ it follows from Lemma~\ref{lem-core}
 that \[\Det \land \bigwedge_{p \in \leafpos(d)}\hspace{-1.2em}
 \UCM{\P(d|_p\alpha)}
 \;\entails\; \ipt_{\alpha}(d,\emptypos).\] Contracting the definition of
 $\mgt$, the right side of this entailment can be written as
 $\mgt_{\alpha}(d)$. Since the left side of the entailment has no free
 variables, we can replace the right side with its universal closure and
 obtain the statement to be proven.
\end{proof}

\noindent
Theorem~\ref{thm-sem-mgt} states that \Det together with the ``axioms
referenced in the proof'', that is, the values of the axiom
assignment~$\alpha$ for the leaf nodes of the \Dterm~$d$ considered as
universally closed atoms, entail the universal closure of the \MGT of~$d$
for~$\alpha$.

In Meredith's proof notation, the displayed formulas represent the universal
closure of the \MGT. In a line without trailing \Dterm, the formula is an
axiom. In a line with a trailing \Dterm, the formula can be understood as
derived in two alternate ways, both yielding the same result. First, as the
universal closure of the MGT of the \Dterm after expanding the numeric labels
into their defining trees, exhaustively until all primitive \Dterms are axiom
labels. Second, as the universal closure of the MGT of the trailing \Dterm as
is, where its primitive \Dterms are taken as labels of displayed formulas in
the role of axioms.

\section{Reducing the Proof Size by Replacing Subproofs}
\label{sec-red}

The term view on proof trees suggests to shorten proofs by rewriting subterms,
that is, replacing occurrences of subproofs by other ones, with three main
aims:
\begin{enumerate}
\item To shorten given proofs, with respect to the tree size or the compacted
  size.
\item To investigate given proofs -- by humans or machines -- whether they can
  be shortened by certain rewritings or are closed under these.
\item To develop notions of redundancy for use in proof search. A proof
  fragment constructed during search may be rejected if it can be rewritten to
  a shorter one.
\end{enumerate}

\pagebreak %
Of course, any given proof of some theorem could be trivially shortened by
enumerating all smaller structures and checking whether one of them provides a
proof of the theorem.  Here our interest is in techniques for shortening
proofs that require less computational effort because they are based on
properties of subproofs of the given proof and involve criteria that can be
evaluated on the basis of a smaller search space than the set of all smaller
proofs.
As in Sect.~\ref{sec-cd-basis}, we consider purely structural aspects
separated from aspects involving formulas.

\subsection{Structural Criteria for Reducing the Compacted Size}
\label{subsec-struct-red}

To convert a proof to a smaller one or to detect that a proof is redundant
because of the existence of a smaller proof, it is essential to compare the
size of proofs before and after replacing occurrences of subproofs.
While for tree size the replacement of a subproof by a smaller one evidently
results in a smaller overall proof, for compacted size the effects of subproof
replacements are more intricate. In this subsection, a replacement criterion
for reducing the compacted size is developed, which is stated as
Theorem~\ref{thm-replace-csize-scsize} below. The theorem is based on ordering
relations on \dterms that are defined in terms of a strict version of
$\DSUBQ(d)$ (Definition~\ref{def-subq}), the set of all compound subterms of a
\Dterm~$d$.

\subsubsection{Compaction Orderings}

\begin{defn}
  \label{def-dsubsstrict} For \Dterms~$d$ define \[\DSUBSTRICT(d)\; \eqdef\; \{\D(e_1,e_2)
  \mid d \supterm \D(e_1,e_2)\}.\]
\end{defn}

\begin{defn}%
  \label{def-c-orderings}%
  For \dterms $d,e$ define

  \subdef{def:geqc}
  $d \geqc e\;\eqdef\; \DSUBSTRICT(d) \supseteq \DSUBSTRICT(e).$

  \subdef{def:gtrc}
  $d \gtrc e\; \eqdef\; d \geqc e \text{ and } e \not \geqc d.$
\end{defn}

\vspace{-5pt} %
\enlargethispage{10pt} %

We call the ordering relations $d \geqc e$ and $d \gtrc e$ \emph{compaction
orderings} because they relate to \emph{compacted} size rather than tree size.
They compare \dterms~$d$ and $e$ with respect to the superset relationship of
their sets of those strict subterms that are compound terms. For example,
$\D(\D(\D(1,1),1),1) \gtrc \D(1,\D(1,1))$ because $\{\D(1,1),\,
\D(\D(1,1),1)\} \supseteq \{\D(1,1)\}$.
The relation $d \gtrc e$ (Definition~\ref{def-c-orderings}) can equivalently be
characterized as $\DSUBSTRICT(d) \supset \DSUBSTRICT(e)$. Hence, the
underlying comparison is for $\geqc$ with respect to the non-strict superset
relationship and for $\gtrc$ the strict superset relationship. The $\geqc$
relation is a preorder on the set of \dterms, while $\gtrc$ is a strict
partial order. The subterm relationship includes the compaction orderings, as
noted by the following proposition.

\vspace{-5pt} %
\begin{prop}%
  \label{prop-gecq-supterm-all}
  For all \dterms $d,e,f$ it holds that
  \smallskip

  \subprop{prop-geqc-supterm} If $d \suptermq e $, then $d \geqc e$.

  \subprop{prop-gtrc-supterm} If $d \supterm e $ and $d$ is not of the form
  $\D(l_1,l_2)$ where both of $l_1, l_2$ are primitive \Dterms, then $d \gtrc e$.

  \subprop{prop-geqc-supterm-trans} If $d \suptermq e $ and $e \geqc f$, then
  $d \geqc f$.

  \subprop{prop-gtcr-supterm-trans} If $d \suptermq e $ and $e \gtrc f$, then
  $d \gtrc f$.
\end{prop}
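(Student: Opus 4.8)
The plan is to establish part~(\ref{prop-geqc-supterm}) first and to derive the other three parts from it. Everything rests on a single observation about the subterm relation: whenever $d \suptermq e$ and $e \supterm g$, we have $d \supterm g$. If $d = e$ this is immediate, and if $d \supterm e$ it follows because a strict subterm of a strict subterm is again a strict subterm. Taking $g$ to range over the compound strict subterms of $e$ then shows $\DSUBSTRICT(e) \subseteq \DSUBSTRICT(d)$, which is by definition $d \geqc e$.

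For part~(\ref{prop-gtrc-supterm}), since $d \supterm e$ implies $d \suptermq e$, part~(\ref{prop-geqc-supterm}) already yields $\DSUBSTRICT(d) \supseteq \DSUBSTRICT(e)$; it then remains to make the inclusion strict by producing a compound strict subterm of $d$ that is not a strict subterm of $e$. I would distinguish two cases. If $e$ is compound, then $e$ itself is the witness: $e \in \DSUBSTRICT(d)$ because $d \supterm e$ and $e$ is compound, whereas $e \notin \DSUBSTRICT(e)$ since $\supterm$ is irreflexive. If $e$ is a \dconstant, then $\DSUBSTRICT(e) = \emptyset$, and it suffices to see that $\DSUBSTRICT(d) \neq \emptyset$. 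This is exactly the place where the hypothesis on the shape of $d$ is needed: as $d \supterm e$ the term $d$ is compound, say $d = \D(d_1,d_2)$, and since $d$ is not of the form $\D(l_1,l_2)$ with both arguments primitive, at least one $d_i$ is compound and hence lies in $\DSUBSTRICT(d)$. In either case $\DSUBSTRICT(d) \supset \DSUBSTRICT(e)$, i.e.\ $d \gtrc e$.

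Parts~(\ref{prop-geqc-supterm-trans}) and~(\ref{prop-gtcr-supterm-trans}) are then immediate by chaining inclusions. From $d \suptermq e$, part~(\ref{prop-geqc-supterm}) gives $\DSUBSTRICT(d) \supseteq \DSUBSTRICT(e)$. Combining this with $\DSUBSTRICT(e) \supseteq \DSUBSTRICT(f)$ (from $e \geqc f$) yields $d \geqc f$; and combining it with $\DSUBSTRICT(e) \supset \DSUBSTRICT(f)$ (from $e \gtrc f$, using the strict-superset characterization of $\gtrc$ recorded just before the proposition) yields $\DSUBSTRICT(d) \supset \DSUBSTRICT(f)$, i.e.\ $d \gtrc f$.

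I expect no genuine difficulty in this argument: each step reduces to the definitions of $\DSUBSTRICT$, $\geqc$ and $\gtrc$ together with transitivity of the strict subterm relation and of set inclusion. The only point that deserves care is the case split in part~(\ref{prop-gtrc-supterm}): the excluded shape $\D(l_1,l_2)$ with two primitive arguments is precisely the one \Dterm form that has no compound strict subterm, so that $\DSUBSTRICT(d)$ would be empty and strictness could fail when $e$ is a leaf.
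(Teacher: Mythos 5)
Your proof is correct and is exactly the routine verification the paper leaves implicit (its proof reads only ``Easy to verify''). In particular, you correctly isolate the one non-trivial point: the excluded shape $\D(l_1,l_2)$ with two primitive arguments is the unique compound \Dterm with $\DSUBSTRICT(d)=\emptyset$, which is why strictness in part~(\ref{prop-gtrc-supterm}) would otherwise fail when $e$ is a leaf (cf.\ the paper's counterexample $\D(1,1)\supterm 1$ but $\D(1,1)\not\gtrc 1$).
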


\vspace{-5pt} %
\begin{proof}
  Easy to verify.
\end{proof}

According to Propositions~\ref{prop-geqc-supterm} and~\ref{prop-gtrc-supterm}
the subterm relationship includes the compaction orderings, with an exception,
as stated in the precondition of Proposition~\ref{prop-gtrc-supterm}. An example for
this exception is $\D(1,1) \supterm 1$ but $\D(1,1) \not \gtrc 1$. However, $d
\geqc e$ or $d \gtrc e$ also holds in cases where $d \notsuptermq e$, as
demonstrated by the following example.

\pagebreak %
\begin{examp}
  \label{examp-co}\prlReset{examp-co}
  The following table shows counterexamples for the converse statements of
  Propositions~\ref{prop-geqc-supterm} and~\ref{prop-gtrc-supterm}, that is,
  \dterms~$d$ and $e$ where $d \geqc e$ or $d \gtrc e$ holds but $d \not
  \suptermq e$. The respective values of $\DSUBSTRICT(d)$ and $\DSUBSTRICT(e)$
  underlying the definition of $\geqc$ are shown in a second table.
  {\small
  \[
  \begin{array}{r@{\hspace{0.5em}}l@{\hspace{0.5em}}c@{\hspace{0.5em}}l}
    &d&&e\\\midrule
    \prl{1} & 1 & \geqc & \D(1,1).\\
    \prl{2} & \D(1,\D(1,\D(1,1))) & \geqc & \D(\D(1,\D(1,1)),1).\\
    \prl{3} & \D(1,\D(1,\D(1,1))) & \gtrc & \D(\D(1,1),1).\\
    \prl{4} & \D(1,\D(1,\D(1,\D(1,1)))) & \gtrc & \D(\D(1,\D(1,1)),\D(1,\D(1,1))).\\
    \prl{5} & \D(1,\D(2,\D(3,3))) & \gtrc & \D(4,\D(3,3)).\\
  \end{array}
  \]}
  {\small
  \[
  \begin{array}{l@{\hspace{0.5em}}l@{\hspace{0.5em}}l}
    & \DSUBSTRICT(d) & \DSUBSTRICT(e)\\\midrule
    \pref{1} & \emptyset & \emptyset\\
    \pref{2} & \{\D(1,1),\; \D(1,\D(1,1))\} & \{\D(1,1),\; \D(1,\D(1,1))\}\\
    \pref{3} & \{\D(1,1),\; \D(1,\D(1,1))\} & \{\D(1,1)\}\\
    \pref{4} & \{\D(1,1),\; \D(1,\D(1,1)),\; \D(1,\D(1,\D(1,1)))\} & \{\D(1,1),\; \D(1,\D(1,1))\}\\
    \pref{5} & \{\D(3,3),\; \D(2,\D(3,3))\} & \{\D(3,3)\}\\
  \end{array}
  \]}
\end{examp}

The following proposition relates the compaction orderings to the compacted
size of the compared \dterms.
\begin{prop}%
  For all \dterms $d,e$ it holds that
  \smallskip

  \subprop{prop-geqc-csize} If $d$ is compound and $d \geqc e$, then
  $\csize(d) \geq \csize(e)$.

  \subprop{prop-cgt-csize} If $d \gtrc e$, then $\csize(d) > \csize(e)$.
\end{prop}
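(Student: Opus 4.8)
The plan is to reduce both claims to a single bookkeeping identity relating the compacted size $\csize$ to the cardinality of the strict compound-subterm set $\DSUBSTRICT$ (Definition~\ref{def-dsubsstrict}). First I would establish this identity. Since $\DSUBQ(d)$ collects the compound subterms under the reflexive relation $\suptermq$ while $\DSUBSTRICT(d)$ collects them under the strict relation $\supterm$, we have, for compound~$d$, that $\DSUBQ(d) = \DSUBSTRICT(d) \cup \{d\}$; and because $\supterm$ is irreflexive, $d \notin \DSUBSTRICT(d)$, so this union is disjoint and $\csize(d) = |\DSUBSTRICT(d)| + 1$. For a primitive \dterm~$d$ both sets are empty, so $\csize(d) = |\DSUBSTRICT(d)| = 0$. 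As every \dterm is a finite tree, all these sets are finite and their cardinalities are well defined.

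For part~(\ref{prop-geqc-csize}) I assume $d$ is compound and $\DSUBSTRICT(d) \supseteq \DSUBSTRICT(e)$, so $\csize(d) = |\DSUBSTRICT(d)| + 1 \geq 1$, and then split on whether $e$ is compound. If $e$ is primitive, then $\csize(e) = 0 \leq \csize(d)$ and the claim is immediate. If $e$ is compound, then $\csize(e) = |\DSUBSTRICT(e)| + 1$, and from the set inclusion together with finiteness I obtain $|\DSUBSTRICT(d)| \geq |\DSUBSTRICT(e)|$; adding $1$ on both sides yields $\csize(d) \geq \csize(e)$.

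For part~(\ref{prop-cgt-csize}) I assume $d \gtrc e$, which by the characterization noted after Definition~\ref{def-c-orderings} means $\DSUBSTRICT(d) \supset \DSUBSTRICT(e)$ (proper inclusion). In particular $\DSUBSTRICT(d) \neq \emptyset$, so $d$ has a strict compound subterm and must itself be compound; hence $\csize(d) = |\DSUBSTRICT(d)| + 1$. A proper inclusion of finite sets is strict in cardinality, so $|\DSUBSTRICT(d)| > |\DSUBSTRICT(e)|$. Splitting on $e$ as before: if $e$ is primitive then $\csize(e) = 0 < 1 \leq \csize(d)$; if $e$ is compound then $\csize(d) = |\DSUBSTRICT(d)| + 1 > |\DSUBSTRICT(e)| + 1 = \csize(e)$. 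In either case $\csize(d) > \csize(e)$.

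The only genuinely delicate point is the case distinction on whether $e$ is compound, since this governs whether the ``$+1$'' correction is present in $\csize(e)$; this is also precisely where the compoundness hypothesis on $d$ in part~(\ref{prop-geqc-csize}) is indispensable, as the counterexample $1 \geqc \D(1,1)$ with $\csize(1) = 0 < 1 = \csize(\D(1,1))$ from Example~\ref{examp-co} shows that the non-strict claim fails when $d$ is primitive. Everything else is routine set-cardinality bookkeeping built on the identity $\csize(d) = |\DSUBSTRICT(d)| + 1$ for compound~$d$.
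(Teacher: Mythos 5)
Your proof is correct: the identity $\csize(d) = |\DSUBSTRICT(d)| + 1$ for compound $d$ (and $= 0$ for primitive $d$), combined with the set-inclusion characterizations of $\geqc$ and $\gtrc$ and the case split on whether $e$ is compound, is exactly the bookkeeping the paper has in mind when it dismisses this proposition as ``easy to verify.'' Your observation that part~(\ref{prop-geqc-csize}) genuinely needs the compoundness of $d$, witnessed by $1 \geqc \D(1,1)$, matches the paper's own Example~\ref{examp-co}.
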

\begin{proof}
  Easy to verify.
\end{proof}

The converse statements of Propositions~\ref{prop-geqc-csize}
and~\ref{prop-cgt-csize} do not hold, as demonstrated by
the following example.
\begin{examp}%
  \label{examp-cocs}\prlReset{examp-cocs}
  The following table shows two counterexamples for the converse statements of
  Propositions~\ref{prop-geqc-csize} and~\ref{prop-cgt-csize}, that is,
  \dterms $d$ and $e$ such that $\csize(d) > \csize(e)$ and $d \not\geqc e$.
  The respective values of $\DSUBSTRICT(d)$ and $\DSUBSTRICT(e)$ underlying
  the definition of $\geqc$ are shown in a second table. {\small
  \[
  \begin{array}{r@{\hspace{0.5em}}l@{\hspace{0.5em}}c@{\hspace{0.5em}}l}
    &d&&e\\\midrule
    \prl{1} & \D(1,\D(1,\D(1,\D(1,1)))) & \not \geqc & \D(1,\D(\D(1,1),1)).\\
    \prl{2} & \D(1,\D(2,\D(3,3))) & \not \geqc & \D(4,\D(5,5)).\\
  \end{array}
  \]}
  {\small
  \[
  \begin{array}{l@{\hspace{0.5em}}l@{\hspace{0.5em}}l}
    & \DSUBSTRICT(d) & \DSUBSTRICT(e)\\\midrule
    \pref{1} & \{\D(1,1),\; \D(1,\D(1,1)),\; \D(1,\D(1,\D(1,1)))\}
             & \{\D(1,1),\; \D(\D(1,1),1)\}\\
    \pref{2} & \{\D(3,3),\; \D(2,\D(3,3))\} & \{\D(5,5)\}\\
  \end{array}
  \]}
\end{examp}

\subsubsection{The SC Size Measure of D-Terms}

Before we can state the main result of this subsection,
Theorem~\ref{thm-replace-csize-scsize}, we need to define a further size
measure of \Dterms, which we call \name{\SCsize}, suggesting \name{Sum of
  Compacted subterm sizes}. This auxiliary measure is useful in termination
arguments of repeated subterm replacement: The theorem shows a criterion under
which replacing subterm occurrences of a \Dterm reduces the compacted size,
but just \emph{non-strictly}, whereas the \SCsize is reduced \emph{strictly}.
The \SCsize is defined as follows.
\begin{defn}%
  \label{def-scsize} For \dterms $d$ define
  the \defname{\SCsize of $d$} as
  \[\scsize(d)\; \eqdef\; \sum_{d \suptermq e} \csize(e).\]
\end{defn}

The following two examples illustrate the \SCsize measure.

\begin{examp}%
  \label{examp-sc-size-1}
  Let $d$ be the \Dterm
  \[d\; \eeqdef\; \D(\D(\D(1,1),\D(1,1)),\D(\D(1,1),1)).\]
  Then the set $\{e \mid d \suptermq e \}$ of
  subterms of~$d$ is
  \[\begin{array}{ll}
  \{ & 1,\;
  \D(1,1),\;
  \D(\D(1,1),1),\;
  \D(\D(1,1),\D(1,1)),\\
  & \D(\D(\D(1,1),\D(1,1)),\D(\D(1,1),1))\; \},
  \end{array}
  \]
  and $\scsize(d) = 0+1+2+2+4 = 9$.
\end{examp}

\begin{examp}%
  \label{examp-sc-size-c}
  If $d,e$ are \dterms such that $\csize(d) > \csize(e)$, then
  it does not necessarily hold that $\scsize(d) \geq \scsize(e)$.
  The following $\dterms$ provide an example.
  \[
  \begin{array}{l@{\hspace{0.5em}}c@{\hspace{0.5em}}l}
    d & \eeqdef & \D(\D(\D(\D(\D(1,1),1),1),1),\D(1,\D(1,\D(1,\D(1,1))))).\\
    e & \eeqdef & \D(\D(\D(\D(\D(\D(\D(1,1),1),1),1),1),1),1).
  \end{array}
  \]
  It holds that $\csize(d) = 8 > 7 = \csize(e)$ but $\scsize(d) = 27 \not \geq
  28 = \scsize(e)$. The calculations of these values are based on the sets
  of subterms of~$d$ and of~$e$, shown in the following, where the compacted
  size of the respective member is annotated at the right.
  {\small
  \[
  \begin{array}{l@{\hspace{0.5em}}c@{\hspace{0.5em}}ll@{\hspace{0.5em}}c}
  &&&& \csize\\\midrule
  \{f \mid d \suptermq f \} & = & \{ & 1, & 0\\
  &&& \D(1,1), & 1\\
  &&& \D(1,\D(1,1)), & 2\\
  &&& \D(\D(1,1),1), & 2\\
  &&& \D(1,\D(1,\D(1,1))), & 3\\
  &&& \D(\D(\D(1,1),1),1), & 3\\
  &&& \D(1,\D(1,\D(1,\D(1,1)))), & 4\\
  &&& \D(\D(\D(\D(1,1),1),1),1), & 4\\
  &&& \D(\D(\D(\D(\D(1,1),1),1),1),\D(1,\D(1,\D(1,\D(1,1)))))\; \}. & 8\\[1ex]
  \{f \mid e \suptermq f \} & = & \{ & 1, & 0\\
  &&& \D(1,1), & 1\\
  &&& \D(\D(1,1),1), & 2\\
  &&& \D(\D(\D(1,1),1),1), & 3\\
  &&& \D(\D(\D(\D(1,1),1),1),1), & 4\\
  &&& \D(\D(\D(\D(\D(1,1),1),1),1),1), & 5\\
  &&& \D(\D(\D(\D(\D(\D(1,1),1),1),1),1),1), & 6\\
  &&& \D(\D(\D(\D(\D(\D(\D(1,1),1),1),1),1),1),1)\; \}. & 7
  \end{array}
  \]}

  \noindent %
  Hence $\csize(d) = 8$, $\scsize(d) = 0+1+2+2+3+3+4+4+8 = 27$,
  $\csize(e) = 7$ and $\scsize(e) = 0+1+2+3+4+5+6+7 = 28$.
\end{examp}

\subsubsection{Reducing the Compacted Size by Replacing Subproofs}

We are now ready to state the following theorem, the main result of this
subsection.
\begin{thm}[Reducing the Compacted Size by Replacing Subproofs]
  \label{thm-replace-csize-scsize}
  Let $d,d',e,e'$ be \mbox{\dterms} such that $e$ occurs in $d$, and $d' =
  \replace{d}{e}{e'}$. It holds that

  \smallskip

  \subthm{thm-replace-csize} If $e$ is compound and $e \geqc e'$, then
  $\csize(d) \geq \csize(d')$.

  \subthm{thm-replace-scsize} If $e \gtrc e'$, then $\scsize(d) >
  \scsize(d')$.

\end{thm}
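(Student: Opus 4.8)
The plan is to reduce both claims to a clean description of how the set of compound subterms behaves under the replacement, and then to read them off from a cardinality comparison (for the first claim) and a weighted, injection-based comparison (for the second). Throughout let $\varphi$ denote the replacement map $\varphi(t)\eqdef\replace{t}{e}{e'}$, and partition the positions of $d$ into those lying at or below an occurrence of $e$ and the remaining ``upper'' positions. Writing $C$ for the set of \emph{compound} subterms of $d$ sitting at upper positions, I would first establish the two decompositions $\DSUBQ(d)=C\cup\DSUBQ(e)$ and $\DSUBQ(d')=\varphi(C)\cup\DSUBQ(e')$: every subterm at or below an $e$-occurrence is a subterm of $e$, respectively of $e'$ after replacement, while the skeleton subterms are exactly the $\varphi$-images of the upper ones. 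The single technical fact that makes this usable is that $\varphi$ is the identity on $\DSUBSTRICT(e)$ and sends $e$ to $e'$; indeed a strict subterm $s$ of $e$ cannot contain $e$, since $e\supterm s\suptermq e$ would force $e\supterm e$. Hence $\varphi(\DSUBQ(d))=\varphi(C)\cup\{e'\}\cup\DSUBSTRICT(e)$.

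For \subthm{thm-replace-csize}, the hypothesis $e\geqc e'$ is exactly $\DSUBSTRICT(e')\subseteq\DSUBSTRICT(e)$. Thus every compound subterm of $e'$ is either $e'$ itself, which equals $\varphi(e)$, or lies in $\DSUBSTRICT(e')\subseteq\DSUBSTRICT(e)$; combined with $\varphi(C)\subseteq\varphi(\DSUBQ(d))$ this gives the inclusion $\DSUBQ(d')\subseteq\varphi(\DSUBQ(d))$. Since $\varphi$ is a function, $\csize(d')=|\DSUBQ(d')|\le|\varphi(\DSUBQ(d))|\le|\DSUBQ(d)|=\csize(d)$, which is the first claim.

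For \subthm{thm-replace-scsize}, note first that primitive subterms have compacted size $0$, so $\scsize(d)=\sum_{f\in\DSUBQ(d)}\csize(f)$. I would then build an injection $\iota\colon\DSUBQ(d')\hookrightarrow\DSUBQ(d)$ with $\csize(\iota(g))\ge\csize(g)$ for all $g$: send each strict compound subterm of $e'$ to itself, as it lies in $\DSUBSTRICT(e)\subseteq\DSUBQ(d)$; send $e'$ to $e$; and send each remaining skeleton term $g\in\varphi(C)\setminus\DSUBQ(e')$ to one of its $\varphi$-preimages in $C$. Distinct such $g$ have disjoint fibres, so this part is injective, and the size bound $\csize(\iota(g))\ge\csize(g)$ is precisely \subthm{thm-replace-csize} applied to the chosen preimage. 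Summing the inequality over $\DSUBQ(d')$ yields $\scsize(d')\le\scsize(d)$. Strictness comes from $e\gtrc e'$, which forces $|\DSUBSTRICT(e)|>|\DSUBSTRICT(e')|$: when $e'$ is compound this already makes the step $e'\mapsto e$ strict, because $\csize(e)=1+|\DSUBSTRICT(e)|>1+|\DSUBSTRICT(e')|=\csize(e')$; when $e'$ is primitive, $e$ is a compound element of $\DSUBQ(d)$ lying outside the image of $\iota$ and so contributes a positive surplus. Either way $\scsize(d')<\scsize(d)$.

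The delicate point, and the place where the real work lies rather than in the routine cardinality bookkeeping, is the well-definedness and injectivity of $\iota$ on the ``new'' skeleton terms $g\in\varphi(C)\setminus\DSUBQ(e')$. I must ensure that such a $g$ always admits a preimage in $C$ that is \emph{not} itself a strict subterm of $e'$, so that the three groups of targets — namely $\DSUBSTRICT(e')$, the singleton $\{e\}$, and the chosen preimages in $C\setminus\DSUBSTRICT(e')$ — are pairwise disjoint. This again reduces to the acyclicity fact above: if every $\varphi$-preimage of $g$ were a strict subterm of $e'$, such a preimage would have to contain $e$ — otherwise $\varphi$ would fix it and $g$ would be a subterm of $e'$, contradicting $g\notin\DSUBQ(e')$ — and then $e$ would be a strict subterm of $e'$ and hence, via $\DSUBSTRICT(e')\subseteq\DSUBSTRICT(e)$, a strict subterm of itself. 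Resolving this potential collision correctly is the crux of the argument.
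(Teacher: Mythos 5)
Your argument is correct. For the first claim it is essentially the paper's own proof in different clothing: your set $C$ of compound subterms with an occurrence outside the $e$-occurrences is exactly the paper's $T \cup \{d_1,\ldots,d_n\}$, and the inclusion $\DSUBQ(d')\subseteq\varphi(\DSUBQ(d))$ is the paper's covering of $S'$ by $(\{e'\}\cap\CompD)\cup\DSUBSTRICT(e)\cup\{\replace{d_1}{e}{e'},\ldots\}$, read as the image of a map. Where you genuinely diverge is in the second claim. The paper never builds an injection: it simply sums $\csize$ over the covering \emph{list} $e',\,\DSUBSTRICT(e),\,\replace{d_1}{e}{e'},\ldots,\replace{d_n}{e}{e'}$, observes that this dominates $\scsize(d')$ because every element of $\DSUBQ(d')$ appears at least once and $\csize$ is non-negative, and then bounds each list entry termwise ($\csize(e')<\csize(e)$ by Proposition~\ref{prop-cgt-csize}, $\csize(\replace{d_i}{e}{e'})\leq\csize(d_i)$ by the first claim), so that the upper bound collapses to $\scsize(d)$ via the disjoint-union decomposition of $\DSUBQ(d)$. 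Your $\csize$-monotone injection $\iota\colon\DSUBQ(d')\hookrightarrow\DSUBQ(d)$ proves the same inequality but forces you to resolve the target collision between $\DSUBSTRICT(e')$ and the chosen preimages in $C$ — the point you rightly identify as the crux, and which you settle correctly via the observation that $e'\supterm c\suptermq e$ together with $\DSUBSTRICT(e')\subseteq\DSUBSTRICT(e)$ would make $e$ a strict subterm of itself. The paper's multiset-domination step sidesteps that collision entirely (overcounting is harmless when all weights are non-negative), so it is shorter; your injection is more explicit and pins down exactly where the strict drop occurs (at $e'\mapsto e$ when $e'$ is compound, or at the uncovered element $e$ when $e'$ is primitive), which the paper leaves implicit in the inequality $\csize(e)>\csize(e')$. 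Both routes are sound.
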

\begin{proof}
  \prlReset{thm-replace-csize-scsize} We begin with shared aspects of the
  proofs of both subtheorems. We can assume that the \dterm $e$ is compound,
  which is explicitly stated as precondition for
  Theorem~\ref{thm-replace-csize} and implied by the precondition $e \gtrc e'$
  of Theorem~\ref{thm-replace-scsize}. There must exist a set
  $\{d_1,\ldots,d_n\}$ of compound \Dterms for some $n \geq 0$ such that the
  set $S \xeqdef \DSUBQ(d)$ of compound subterms of $d$ can be characterized
  as the disjoint union of three particular subsets in the following way.
  \[
  \begin{arrayprf}
    \prl{prf-thm-rcs-s-disjoint-union} &
    S = \{e\} \uplus \DSUBSTRICT(e) \uplus \{d_1, \ldots,
    d_n\}.
  \end{arrayprf}
  \]
  Let $T$ be the set of those strict subterms of $e$ that are compound and
  have in~$d$ an occurrence in a position other than as subterm of $e$.
  Clearly $\DSUBSTRICT(e) \supseteq T$. Thus,
  by~\pref{prf-thm-rcs-s-disjoint-union} we can characterize $S$ also as
  \[
  \begin{arrayprf}
    \prl{prf-thm-rcs-s-t} &
    S = \{e\} \cup \DSUBSTRICT(e) \cup T \cup \{d_1, \ldots,
    d_n\}.
  \end{arrayprf}
  \]
  Let $\CompD$ denote the set of all compound \Dterms.
  The set $S' \xeqdef \DSUBQ(d')$ of compound subterms of $d'$ can then be
  characterized as follows.
  \[
  \begin{arrayprf}
    \prl{prf-thm-rcs-sprime} &
    S' = (\{e'\} \cap \CompD) \cup \DSUBSTRICT(e') \cup T\; \cup\\
    & \hphantom{S' =\;} (\{\replace{d_1}{e}{e'}, \ldots,
    \replace{d_n}{e}{e'}\} \cap \CompD).
  \end{arrayprf}
  \]
  From $e \geqc e'$, which is a precondition of
  Theorem~\ref{thm-replace-csize} as well as Theorem~\ref{thm-replace-scsize},
  it follows that $\DSUBSTRICT(e) \supseteq \DSUBSTRICT(e')$. Since
  $\DSUBSTRICT(e) \supseteq T$ we can conclude from~\pref{prf-thm-rcs-sprime}
  that
  \[
  \begin{arrayprf}
    \prl{prf-thm-rcs-sprime-super} &
    (\{e'\} \cap \CompD) \cup \DSUBSTRICT(e) \cup
    \{\replace{d_1}{e}{e'}, \ldots, \replace{d_n}{e}{e'}\}\; \supseteq\; S'.
  \end{arrayprf}
  \]
  We now turn to the two individual subtheorems.

  \smallskip

  (\ref{thm-replace-csize}) Since $\csize(d) = |S|$ and $\csize(d') = |S'|$ we
  have to show that $|S| \geq |S'|$. From~\pref{prf-thm-rcs-sprime-super} it
  follows that $1 + |\DSUBSTRICT(e)| + |\{\replace{d_1}{e}{e'}, \ldots,
  \replace{d_n}{e}{e'}\}| \;\geq\; |S'|$. Since clearly $n \geq
  |\{\replace{d_1}{e}{e'}, \ldots, \replace{d_n}{e}{e'}\}|$ it follows that $1
  + |\DSUBSTRICT(e)| + n \;\geq\; |S'|$.
  Since~\pref{prf-thm-rcs-s-disjoint-union} implies $|S| = 1 +
  |\DSUBSTRICT(e)| + n$, that is, $|S|$ can be characterized as the left side
  of the previous disequation, it follows that $|S| \geq |S'|$, which
  concludes the proof of the subtheorem.

  \smallskip

  (\ref{thm-replace-scsize}) From~\pref{prf-thm-rcs-sprime-super} it follows
  that
  \[
  \begin{arrayprf}
    \prl{rs-1} &
    \csize(e') + \displaystyle\sum_{e \supterm f} \csize(f) + \sum_{i=1}^n
    \csize(\replace{d_i}{e}{e'})\; \geq\; \scsize(d').
  \end{arrayprf}
  \]
  Given the precondition $e \gtrc e'$ we can conclude by
  Theorem~\ref{thm-replace-csize} that for each $i \in \{1,\ldots,n\}$ it
  holds that $\csize(d_i) \geq \csize(\replace{d_i}{e}{e'})$. Hence
  \[
  \begin{arrayprf}
    \prl{prf-thm-rcs-si} &
    \displaystyle\sum_{i=1}^n \csize(d_i)\; \geq\; \sum_{i=1}^n
    \csize(\replace{d_i}{e}{e'}).
  \end{arrayprf}
  \]
  From the precondition $e \gtrc e'$ and Proposition~\ref{prop-cgt-csize} it follows
  that $\csize(e) > \csize(e')$. From~\pref{prf-thm-rcs-sprime-super}
  and~\pref{prf-thm-rcs-si} we can then conclude
  \[
  \begin{arrayprf}
    \prl{prf-thm-rscs-scs-gt} &
    \csize(e) + \displaystyle\sum_{e \supterm f} \csize(f) + \sum_{i=1}^n \csize(d_i)\; >\;
    \scsize(d').
  \end{arrayprf}
  \]
  By~\pref{prf-thm-rcs-s-disjoint-union}, $\scsize(d)$ can be characterized
  as follows.
  \[
  \begin{arrayprf}
    \prl{prf-thm-rscs-scs-s} &
    \scsize(d)\; =\; \csize(e) + \displaystyle\sum_{e \supterm f} \csize(f) + \sum_{i=1}^n
    \csize(d_i).
  \end{arrayprf}
  \]
  Since the right side of~\pref{prf-thm-rscs-scs-s} is identical to left side
  of~\pref{prf-thm-rscs-scs-gt} it follows that $\scsize(d) > \scsize(d')$,
  the conclusion of the subtheorem to be shown.
\end{proof}

Theorem~\ref{thm-replace-csize} states that if $d'$ is the \dterm obtained
from $d$ by simultaneously replacing \emph{all} occurrences of a compound
\dterm~$e$ with a ``c-smaller'' \dterm~$e'$, i.e., $e \geqc e'$, then the
compacted size of $d'$ is less than or equal to that of~$d$. Both,
precondition and conclusion of the theorem involve non-strict comparisons,
such that one may ask whether the strict precondition $e \gtrc e'$ would imply
the strict conclusion $\csize(d) > \csize(d')$. This is, however, not the
case, as demonstrated by Example~\ref{examp-replace-strict} below.
Nevertheless, as stated with the supplementary
Theorem~\ref{thm-replace-scsize}, the \SCsize is a measure that strictly
decreases under the strict precondition $e \gtrc e'$. By this subtheorem, the
number of replacements according to Theorem~\ref{thm-replace-csize-scsize}
that can be performed in succession with strict preconditions $e \gtrc e'$ is
finite. The \SCsize by itself, however, seems not suitable as a size measure
that refines the compacted size because, as already demonstrated by
Example~\ref{examp-sc-size-c}, there are \dterms $d,d'$ with $\csize(d) >
\csize(d')$ but $\scsize(d) < \scsize(d')$. Both of the following two examples
exhibit particularities of subproof replacements according to
Theorem~\ref{thm-replace-csize-scsize}.

\begin{examp}%
  \label{examp-replace-strict} This example shows that
  strengthening the precondition $e \geqc e'$ of
  Theorem~\ref{thm-replace-csize} to $e \gtrc e'$ does not in general permit
  the stronger conclusion $\csize(d) > \csize(d')$. Let
  \[\begin{array}{l@{\hspace{0.5em}}c@{\hspace{0.5em}}l}
  d & \eeqdef & \D(\D(1,\D(1,1)),\D(1,\D(1,\D(1,1)))).\\
  d' & \eeqdef & \D(\D(1,\D(1,1)),\D(\D(1,1),1)).\\
  e & \eeqdef & \D(1,\D(1,\D(1,1))).\\
  e' & \eeqdef & \D(\D(1,1),1).
  \end{array}
  \]
  Then~$e$ occurs in~$d$ and $d' = \replace{d}{e}{e'}$, matching the
  preconditions of Theorem~\ref{thm-replace-csize-scsize}.  Moreover, it holds
  that $e \gtrc e'$. By Theorem~\ref{thm-replace-csize} it follows that
  $\csize(d) \geq \csize(d')$. Indeed, $\csize(d) = \csize(d') = 4$.  By
  Theorem~\ref{thm-replace-scsize} it follows that $\scsize(d) >
  \scsize(d')$. Indeed, $\scsize(d) = 10$ and $\scsize(d') = 9$.  These
  properties and values can be determined on the basis of the following
  intermediate results. That $e \gtrc e'$ follows since
  \[\{f \in D \mid e \supterm f \} =
  \{\D(1,1),\; \D(1,\D(1,1))\} \supset \{\D(1,1)\} = \{f \in D \mid e'
  \supterm f \}.\]
  
\pagebreak %
  The sets $\{f \mid d \suptermq f \}$ and $\{f \mid d'
  \suptermq f\}$ underlying the calculation of $\csize(d)$, $\scsize(d)$,
  $\csize(d')$ and $\scsize(d')$ are as follows, where the compacted size of
  the respective member is annotated at the right.
  {\small
  \[
  \begin{array}{l@{\hspace{0.5em}}c@{\hspace{0.5em}}ll@{\hspace{0.5em}}c}
  &&&& \csize\\\midrule
  \{f \mid d \suptermq f \} & = & \{ & 1, & 0\\
  &&& \D(1,1), & 1\\
  &&& \D(1,\D(1,1)), & 2\\
  &&& \D(1,\D(1,\D(1,1))), & 3\\
  &&& \D(\D(1,\D(1,1)),\D(1,\D(1,\D(1,1))))\;\}. & 4\\[1ex]
  \{f \mid d' \suptermq f \} & = & \{ & 1, & 0\\
  &&& \D(1,1), & 1\\
  &&& \D(1,\D(1,1)), & 2\\
  &&& \D(\D(1,1),1), & 2\\
  &&& \D(\D(1,\D(1,1)),\D(\D(1,1),1))\; \}. & 4
  \end{array}
  \]}
\end{examp}

\begin{examp}%
  \label{examp-replace-twice} This example illustrates that the
  simultaneous replacement of \emph{all} occurrences of $e$ in $d$ by $e'$ is
  essential for Theorem~\ref{thm-replace-csize-scsize} and that $d'$, the
  formula after the replacement, can contain occurrences of~$e$ again. Let
  \[\begin{array}{l@{\hspace{0.5em}}c@{\hspace{0.5em}}l}
  d & \eeqdef &  \D(\D(\D(1,\D(1,\D(1,1))),1),\D(\D(1,\D(1,\D(1,1))),1)).\\
  d' & \eeqdef & \D(\D(\D(1,\D(1,1)),1),\D(\D(1,\D(1,1)),1)).\\
  d'' & \eeqdef & \D(\D(\D(1,\D(1,1)),1),\D(\D(1,\D(1,\D(1,1))),1)).\\
  e & \eeqdef & \D(1,\D(1,1)).\\
  e' & \eeqdef & \D(1,1).
  \end{array}
  \]
  Then~$e$ occurs in~$d$ and $d' = \replace{d}{e}{e'}$, matching the
  preconditions of Theorem~\ref{thm-replace-csize-scsize}. Moreover, it holds
  that $e \gtrc e'$. By Theorem~\ref{thm-replace-csize} it follows that
  $\csize(d) \geq \csize(d')$. Indeed, $\csize(d) = 5$ and $\csize(d') = 4$.
  Notice that $e$ occurs in $d'$, actually twice. The \dterm~$d''$ is obtained
  from $d$ by replacing just a single occurrence of $e$ with $e'$. Its
  compacted size is $\csize(d'') = 6$, thus not less than or equal to that
  of~$d$, $\csize(d) = 5$. The sets of compound subterms of $d$, $d'$
  and~$d''$, which underlie the determination of their compacted size, are as
  follows. {\small
  \[
  \begin{array}{l@{\hspace{0.5em}}c@{\hspace{0.5em}}ll}
  \DSUBQ(d) & = & \{
  & \D(1,1),\\
  &&& \D(1,\D(1,1)),\\
  &&& \D(1,\D(1,\D(1,1))),\\
  &&& \D(\D(1,\D(1,\D(1,1))),1),\\
  &&& \D(\D(\D(1,\D(1,\D(1,1))),1),\D(\D(1,\D(1,\D(1,1))),1))\;\}.\\[1ex]
  \DSUBQ(d') & = & \{
  & \D(1,1),\\
  &&& \D(1,\D(1,1)),\\
  &&& \D(\D(1,\D(1,1)),1),\\
  &&& \D(\D(\D(1,\D(1,1)),1),\D(\D(1,\D(1,1)),1))\;\}.\\[1ex]
  \DSUBQ(d'') & = & \{
  & \D(1,1),\\
  &&& \D(1,\D(1,1)),\\
  &&& \D(1,\D(1,\D(1,1))),\\
  &&& \D(\D(1,\D(1,1)),1),\\
  &&& \D(\D(1,\D(1,\D(1,1))),1),\\
  &&& \D(\D(\D(1,\D(1,1)),1),\D(\D(1,\D(1,\D(1,1))),1))\;\}.
  \end{array}
  \]}
\end{examp}

The following proposition characterizes the number of \dterms that are smaller
than a given \dterm with respect to the compaction ordering~$\geqc$.
\begin{prop}%
  \label{prop-co-number}
  For all compound \dterms $d$ it holds that
  \[\begin{array}{r@{\hspace{0.5em}}l}
  & |\{e \mid d \geqc e \text{ and } \DPRIM(e) \subseteq \DPRIM(d)\}|\\
  = & (\csize(d) - 1 + |\DPRIM(d)|)^2 + |\DPRIM(d)|.
  \end{array}
  \]
\end{prop}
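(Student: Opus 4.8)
The plan is to split the counted set $\{e \mid d \geqc e \text{ and } \DPRIM(e) \subseteq \DPRIM(d)\}$ according to whether $e$ is a primitive \Dterm or compound, and to show that the two parts contribute exactly $|\DPRIM(d)|$ and $(\csize(d) - 1 + |\DPRIM(d)|)^2$, respectively. Recall that $d \geqc e$ unfolds to $\DSUBSTRICT(e) \subseteq \DSUBSTRICT(d)$, so an $e$ is counted iff $\DSUBSTRICT(e) \subseteq \DSUBSTRICT(d)$ and $\DPRIM(e) \subseteq \DPRIM(d)$.

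First I would isolate the ``building blocks'' from which every admissible $e$ is assembled. The decisive structural observation is that for a compound term one has $\DSUBSTRICT(\D(e_1,e_2)) = \DSUBQ(e_1) \cup \DSUBQ(e_2)$, since the compound strict subterms of $\D(e_1,e_2)$ are precisely the compound subterms of its two children. Hence the condition $\DSUBSTRICT(e) \subseteq \DSUBSTRICT(d)$ for $e = \D(e_1,e_2)$ is equivalent to $\DSUBQ(e_i) \subseteq \DSUBSTRICT(d)$ for $i=1,2$. Because $e_i \in \DSUBQ(e_i)$ whenever $e_i$ is compound, this forces each compound child $e_i$ to lie in $\DSUBSTRICT(d)$ as a whole; conversely, if $e_i$ is a compound strict subterm of $d$, then all of its compound subterms are again strict subterms of $d$, so $\DSUBQ(e_i) \subseteq \DSUBSTRICT(d)$ holds. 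A leaf child contributes $\DSUBQ(e_i) = \emptyset$ and is constrained only by the condition $\DPRIM(e) \subseteq \DPRIM(d)$, i.e.\ its label must lie in $\DPRIM(d)$ (whereas for a compound-strict-subterm child the inclusion $\DPRIM(e_i)\subseteq\DPRIM(d)$ is automatic).

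Combining these facts yields the precise characterization: each child of an admissible compound $e$ is either a primitive \Dterm from $\DPRIM(d)$ or a member of $\DSUBSTRICT(d)$, and conversely every choice of two such children produces an admissible $e$. The number of available building blocks is therefore $|\DPRIM(d)| + |\DSUBSTRICT(d)|$, and since $d$ is compound we have $\DSUBQ(d) = \{d\} \uplus \DSUBSTRICT(d)$, whence $|\DSUBSTRICT(d)| = \csize(d) - 1$. The admissible compound terms $\D(e_1,e_2)$ are thus in bijection with ordered pairs of building blocks, giving $(\csize(d) - 1 + |\DPRIM(d)|)^2$ of them, while the admissible leaves are exactly the elements of $\DPRIM(d)$, giving $|\DPRIM(d)|$; these two families are disjoint, and summing gives the claimed value.

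I expect the main obstacle to be the tight equivalence in the second paragraph: arguing cleanly that $\DSUBQ(e_i) \subseteq \DSUBSTRICT(d)$ forces a compound child $e_i$ to be a strict subterm of $d$ in its entirety (rather than merely being built up from such subterms), which rests on the membership $e_i \in \DSUBQ(e_i)$ together with the converse closure property that compound subterms of a strict subterm of $d$ are again strict subterms of $d$. Once this building-block characterization is pinned down, the remaining counting is a routine product-and-sum argument, and small instances such as $d = \D(\D(1,1),1)$ (where the formula evaluates to $5$) can be used to sanity-check it.
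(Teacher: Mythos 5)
Your proof is correct and takes essentially the same route as the paper's: both reduce the count to the observation that the admissible terms are exactly the elements of $\DPRIM(d)$ together with all $\D(e_1,e_2)$ whose two children are strict subterms of $d$, of which there are $\csize(d)-1+|\DPRIM(d)|$. The paper asserts this set equality in a single step, whereas you justify it in detail via $\DSUBSTRICT(\D(e_1,e_2))=\DSUBQ(e_1)\cup\DSUBQ(e_2)$ and the closure of the strict-subterm relation — a worthwhile filling-in of what the paper leaves implicit.
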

\begin{proof}
  \prlReset{prop-co-number}
  Let $S$ be the set whose cardinality is denoted by the left side of the
  proposition. Then
  \[\begin{arrayprfeq}
  \prl{1} && S\\
  \prl{2} & =  & \{e \mid d \geqc e \text{ and } \DPRIM(e) \subseteq
  \DPRIM(d)\}\\
  \prl{3} & = & \{e \mid
  \DSUBSTRICT(d) \supseteq
  \DSUBSTRICT(e) \text{ and } \DPRIM(e) \subseteq
  \DPRIM(d)\}\\
  \prl{4} & =  & \{\D(d_1,d_2) \mid d \supterm d_1 \text{ and } d \supterm d_2\}
  \uplus \DPRIM(d).
  \end{arrayprfeq}
  \]
  Since $\{e \mid d \supterm e\} = \DSUBSTRICT(d) \uplus \DPRIM(d)$ and
  $\csize(d)$ is defined as $|\DSUBQ(d)|$ it follows that
  \[\begin{arrayprfeq}
  \prl{5} && |\{e \mid d \supterm e\}| = \csize(d) - 1 + |\DPRIM(d)|.
  \end{arrayprfeq}
  \]
  From the representation of $S$ in the form~\pref{4} and~\pref{5} it
  follows that $|S| = (\csize(d) - 1 + |\DPRIM(d)|)^2 +  |\DPRIM(d)|$,
  that is, the proposition statement.
\end{proof}

By Proposition~\ref{prop-co-number}, for a given compound \dterm~$d$, the number of
\dterms~$e$ that are smaller than~$d$ with respect to $\geqc$ is only
quadratically larger than the compacted size of~$d$ and thus also than the
tree size of~$d$. Hence techniques that inspect all these smaller \dterms for
a given \dterm can be used efficiently in practice. For example to find
\Dterms that can be replaced according to
Theorem~\ref{thm-replace-csize-scsize}, that is, in view of the preconditions
of the theorem, finding \Dterms~$e'$ for a given \Dterm~$e$. Or to classify a
\dterm as redundant because there exists a smaller \dterm that proves the
same.

\subsection{Formula-Related Criteria for Subproof Replacement}
\label{subsec-form-red}

According to Theorem~\ref{thm-sem-mgt}, a \CD proof, that is, a \dterm~$d$
together with an axiom assignment~$\alpha$ proves the \MGT of~$d$ for~$\alpha$
along with all instances of the \MGT. If~$d$ is shortened by replacing
subterms, the general objective is that at least these theorems are still
proven. That is, the \MGT of the modified \Dterm subsumes that of the original
one. In this subsection we identify conditions that ensure that subterm
replacement steps yield proofs with a \MGT that subsumes the \MGT before the
replacement. These conditions will be stated as Theorems~\ref{thm-sr-ipt}
and~\ref{thm-sr-mgt}, which are both consequences of a central underlying
property that will be stated as Lemma~\ref{lem-sr-mult}.

\subsubsection{Decomposition of the MGU Associated with a D-Term}

The proof of Lemma~\ref{lem-sr-mult} involves several applications of a
decomposition of the most general unifier ``associated'' with a \Dterm, that
is, the most general unifier of the set of pairings of all its positions, with
respect to a given axiom assignment~$\alpha$. This decomposition is specified
now with Lemma~\ref{lem-dds-mult}, preceded by an auxiliary proposition, which
shows a specific way to pass between sets of pairs of terms and most general
unifiers.
\begin{prop}[{\cite[Lemma~4.6]{eder:subst:1985}}]
  If $M, N$ are sets of pairs of terms and $\sigma$ is a most general unifier
  of $M$, then

  \subprop{prop-mgu-compose-unifiable}
  $M \cup N$ is unifiable if and only if $N\sigma$ is unifiable.

  \subprop{prop-mgu-compose-asymmetric} If $\tau$ is a most general unifier of
  $N\sigma$, then $\sigma\tau$ is a most general unifier of $M \cup N$.
\end{prop}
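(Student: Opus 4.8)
The plan is to reduce everything to a single elementary fact about right-composition of substitutions, combined with the universal factorization property of a most general unifier. Recall that by Definition~\ref{def-mgu-main}, every unifier $\theta$ of $M$ satisfies $\theta \subsumedBy \sigma$, i.e.\ there is a substitution $\rho$ with $\sigma\rho = \theta$. The key observation, which I would establish once and reuse, is that for $\{s,t\} \in N$ the equality $s\sigma\rho = t\sigma\rho$ is literally the statement that $\rho$ unifies the pair $\{s\sigma,t\sigma\} \in N\sigma$; since $s(\sigma\rho) = (s\sigma)\rho$, a substitution of the form $\sigma\rho$ unifies $N$ exactly when $\rho$ unifies $N\sigma$. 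Together with the trivial remark that a substitution unifies $M \cup N$ iff it unifies both $M$ and $N$, this translation drives both parts.

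For part~\ref{prop-mgu-compose-unifiable} I would argue both implications through this translation. For the forward direction, assume $\theta$ unifies $M \cup N$. Then $\theta$ unifies $M$, so $\theta = \sigma\rho$ for some $\rho$; and since $\theta$ also unifies $N$, for each $\{s,t\}\in N$ we get $(s\sigma)\rho = s\theta = t\theta = (t\sigma)\rho$, so $\rho$ unifies $N\sigma$. For the converse, suppose $\tau$ unifies $N\sigma$. Then $\sigma\tau$ unifies $M$, because $s\sigma = t\sigma$ for $\{s,t\}\in M$ implies $s\sigma\tau = t\sigma\tau$, and it unifies $N$ by the translation fact applied to $\rho = \tau$. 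Hence $\sigma\tau$ unifies $M \cup N$, witnessing that $M\cup N$ is unifiable.

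For part~\ref{prop-mgu-compose-asymmetric} the computation just given already shows that $\sigma\tau$ is a unifier of $M\cup N$, so it remains only to verify maximal generality. Here I would factor an arbitrary unifier twice. Let $\theta$ unify $M\cup N$. Since $\theta$ unifies $M$, write $\theta = \sigma\rho$; as in part~\ref{prop-mgu-compose-unifiable}, $\rho$ then unifies $N\sigma$, so because $\tau$ is a most general unifier of $N\sigma$ we may write $\rho = \tau\mu$ for some $\mu$. Substituting back yields $\theta = \sigma\rho = \sigma\tau\mu$, which is precisely $\theta \subsumedBy \sigma\tau$. Thus $\sigma\tau$ is a unifier of $M\cup N$ subsuming every other unifier, i.e.\ a most general unifier.

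The one place I would watch most closely — and the only potential obstacle — is keeping the two factorizations straight: the most-general property is invoked for $\sigma$ (to split off $\rho$ from a unifier of $M$) and then, independently, for $\tau$ (to split off $\mu$ from the induced unifier $\rho$ of $N\sigma$), and the conclusion depends on these being genuine equalities of substitutions of the form $\sigma\rho = \theta$ rather than mere agreement on the variables of $M\cup N$. Since every step is an equality obtained by right-composition, no real difficulty arises, but this bookkeeping is where an error would most likely creep in.
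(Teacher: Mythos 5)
Your proof is correct. Note that the paper itself gives no proof of this proposition — it is imported verbatim as Lemma~4.6 of Eder's paper on substitution properties — so there is nothing to compare against within the paper; your argument is the standard one and is exactly what that citation stands in for. The bookkeeping worry you flag at the end is resolved by the paper's own definitions: subsumption of substitutions is defined via a genuine equality $\sigma\rho = \theta$ of substitutions (not mere agreement on $\vars(M \cup N)$), and the most general unifier is required to subsume \emph{every} unifier in that strong sense, so both factorizations you invoke (first through $\sigma$, then through $\tau$) are available exactly as you use them.
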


\begin{lem}[Decomposition of the MGU Associated with a \DTerm]
  \label{lem-dds-mult}
  Let $d$ be a \dterm and let $p_1, \ldots, p_n, q$, where $n \geq 0$, be
  positions in $\pos(d)$ such that for all $i \in \{1,\ldots,n\}$ it holds
  that $p_i \not < q$. Then
  \[\vy_{q}\sigma\; \variant\;
  \vy_{q}\gamma\mgu(\{\{\vy_{p_1}\gamma,\, \vy_{p_1}\tau\gamma\},
  \ldots, \{\{\vy_{p_n}\gamma,\, \vy_{p_n}\tau\gamma\}\}),\] where
    \[\begin{array}{l@{\hspace{0.5em}}c@{\hspace{0.5em}}l@{\hspace{0.5em}}r@{\hspace{0.5em}}c@{\hspace{0.5em}}l}
  \sigma & = & \mgu(\{\pairing_{\alpha}(d, r) \mid r \in \pos(d)\}),\\
  \tau & = & \mgu(\{\pairing_{\alpha}(d, r) \mid r \in \pos(d)
  \text{ and } p_i \leq r \text{ for some } i \in \{1,\ldots,n\}\}), \text{ and}\\
  \gamma & = & \mgu(\{\pairing_{\alpha}(d, r) \mid r \in \pos(d)
  \text{ and } p_i \not \leq r \text{ for all } i \in \{1,\ldots,n\}\}).\\
  \end{array}
    \]
\end{lem}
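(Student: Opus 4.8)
The plan is to split the pairing set underlying $\sigma$ into the part ``inside'' the subtrees rooted at the $p_i$ and the part ``outside'' them, and to use the Eder composition proposition (Proposition~\ref{prop-mgu-compose-asymmetric}) to factor the global unifier through $\gamma$, leaving a residual unification that I identify with the boundary gluing $\delta := \mgu(\{\{\vy_{p_i}\gamma, \vy_{p_i}\tau\gamma\}\mid 1\le i\le n\})$. Concretely, write $P_\tau := \{\pairing_{\alpha}(d,r)\mid r\in\pos(d)\text{ and }p_i\le r\text{ for some }i\}$ and $P_\gamma := \{\pairing_{\alpha}(d,r)\mid r\in\pos(d)\text{ and }p_i\not\le r\text{ for all }i\}$, so that $\tau=\mgu(P_\tau)$, $\gamma=\mgu(P_\gamma)$, and the full pairing set is the disjoint union $P_\tau\uplus P_\gamma$, whence $\sigma=\mgu(P_\tau\cup P_\gamma)$. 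Since $\sigma$ is defined (Convention~\ref{convention-mgu-partial}), $P_\tau\cup P_\gamma$ is unifiable, so $\gamma$ is defined and $P_\tau\gamma$ is unifiable by Proposition~\ref{prop-mgu-compose-unifiable}; by Proposition~\ref{prop-mgu-compose-asymmetric} the composition $\gamma\,\nu$ with $\nu:=\mgu(P_\tau\gamma)$ is a most general unifier of $P_\tau\cup P_\gamma$, and since any two most general unifiers of the same set are variants, $\sigma\variant\gamma\nu$ and hence $\vy_q\sigma\variant\vy_q\gamma\nu$. It then remains to prove $\vy_q\gamma\nu\variant\vy_q\gamma\delta$.

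The decisive point is a variable-occurrence analysis. Because positional variables are indexed by positions (Definition~\ref{def-vpos}), the only variables common to $P_\tau$ and $P_\gamma$ are the root variables $\vy_{p_1},\dots,\vy_{p_n}$, and within $P_\tau$ each $\vy_{p_i}$ occurs only in the pairing $\pairing_{\alpha}(d,p_i)$ at the very root of its subtree, since its parent pairing lies in $P_\gamma$. By the cleanness of $\gamma$ (Definition~\ref{def-clean}), $\dom(\gamma)\cup\vrng(\gamma)\subseteq\vars(P_\gamma)$, so $\gamma$ fixes every variable of $P_\tau$ except the $\vy_{p_i}$; thus $P_\tau\gamma$ is just $P_\tau$ with each $\vy_{p_i}$ replaced by $\vy_{p_i}\gamma$. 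I would then apply Proposition~\ref{prop-mgu-compose-asymmetric} a second time to the set $P_\tau\cup\{\{\vy_{p_i},\vy_{p_i}\gamma\}\mid i\}$, which has the same unifiers as $P_\tau\gamma$ away from the $\vy_{p_i}$, peeling off $P_\tau$ via $\tau$ and reducing the residual to a unification of the boundary variables alone; using $\vy_{p_i}\tau\gamma=\vy_{p_i}\tau$ (as $\vy_{p_i}\tau$ is built from subtree-internal leaf variables that $\gamma$ fixes), this residual agrees with $\delta$ on the boundary. Here the hypothesis $p_i\not<q$ is used: it forces $q$ to be either a position with $p_i\not\le q$ for all $i$, so that $\vy_q\in\vars(P_\gamma)$ and $\vy_q\gamma$ involves only outside leaf variables and boundary variables, or to equal some root $p_i$; in both cases $\vy_q\gamma\nu$ and $\vy_q\gamma\delta$ instantiate only the boundary variables occurring in $\vy_q\gamma$, and there they coincide up to variants, yielding $\vy_q\gamma\nu\variant\vy_q\gamma\delta$ and the claim.

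The main obstacle is exactly this bookkeeping of positional variables across the inside/outside split together with the reconciliation of the two superficially different boundary-constraint sets $\{\{\vy_{p_i},\vy_{p_i}\gamma\}\}$ and $\{\{\vy_{p_i}\gamma,\vy_{p_i}\tau\gamma\}\}$: one must verify, using idempotence and cleanness of all three unifiers, that composing with $\gamma$ (respectively $\tau$) carries one into the other on precisely the variables that occur in $\vy_q\gamma$. A secondary complication is that the $p_i$ need not form an antichain; if $p_j<p_i$, then $\vy_{p_i}$ is internal to the subtree of $p_j$, so $\vy_{p_i}\gamma=\vy_{p_i}$ and the $i$-th constraint of $\delta$ is already entailed by $\tau$. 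I would dispose of such redundant roots first, or argue directly that they leave $\vy_q\gamma\delta$ unchanged, thereby reducing to the antichain case in which the subtree pairing sets are pairwise variable-disjoint and the analysis above applies cleanly.
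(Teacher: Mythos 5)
Your strategy is essentially the paper's: the same partition of the pairing set into the part below the $p_i$ and the rest, the same two applications of Eder's composition proposition (Propositions~\ref{prop-mgu-compose-unifiable} and~\ref{prop-mgu-compose-asymmetric}), the same observation that the two parts share only the boundary variables $\vy_{p_1},\ldots,\vy_{p_n}$ (with $p_i \not< q$ ensuring $\vy_q$ also meets the inner part only at the boundary), and the same reduction to a unification problem on those variables. The one real difference is the order of peeling. The paper factors out the \emph{inner} part first, obtaining $\vy_q\sigma \variant \vy_q\tau\,\mgu(G\tau)$, restricts $\tau$ to the boundary, uses $\tau|_{\{\vy_{p_1},\ldots,\vy_{p_n}\}} \variant \mgu(\{\{\vy_{p_i},\vy_{p_i}\tau\}\mid i\})$ to re-express this as $\vy_q\,\mgu(\{\{\vy_{p_i},\vy_{p_i}\tau\}\mid i\}\cup G)$, and then peels off $G$ via $\gamma$ — which lands \emph{directly} on the stated form $\vy_q\gamma\,\mgu(\{\{\vy_{p_i}\gamma,\vy_{p_i}\tau\gamma\}\mid i\})$ with no reconciliation needed. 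Your order (outer part first) is workable but leaves you with a residual of the shape $\mgu(\{\{\vy_{p_i}\tau,\vy_{p_i}\gamma\tau\}\mid i\})$ preceded by $\tau$, which must then be matched against $\gamma$ followed by $\mgu(\{\{\vy_{p_i}\gamma,\vy_{p_i}\tau\gamma\}\mid i\})$ — exactly the bookkeeping you flag as the main obstacle. One caution there: your auxiliary claim $\vy_{p_i}\tau\gamma = \vy_{p_i}\tau$ does not follow from cleanness alone. Cleanness gives $\vars(\vy_{p_i}\tau)\cap\dom(\gamma)\subseteq\{\vy_{p_1},\ldots,\vy_{p_n}\}$, not $=\emptyset$; in particular if $\vy_{p_i}\notin\dom(\tau)$ (which can happen, e.g., when the mgu of the leaf pairing is oriented the other way), then $\vy_{p_i}\tau=\vy_{p_i}$ while $\vy_{p_i}\tau\gamma=\vy_{p_i}\gamma$. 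So the reconciliation has to be argued at the level of the composites applied to $\vy_q\gamma$ (both being restrictions of most general unifiers of the same combined set) rather than literal equality of the boundary pairs. This is a repairable local gap, and your handling of the non-antichain case is a sound plan, but if you simply adopt the paper's peeling order the issue disappears.
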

\begin{proof}
  \prlReset{lem-dds-mult-new}
  Let
  \[\begin{array}{l@{\hspace{0.5em}}c@{\hspace{0.5em}}l@{\hspace{0.5em}}r@{\hspace{0.5em}}c@{\hspace{0.5em}}l}
  S & \xeqdef & \{\pairing_{\alpha}(d, r) \mid r \in \pos(d)\},\\
  T & \xeqdef & \{\pairing_{\alpha}(d, r) \mid r \in \pos(d)
  \text{ and } p_i \leq r \text{ for some } i \in \{1,\ldots,n\}\},\\
  G & \xeqdef & \{\pairing_{\alpha}(d, r) \mid r \in \pos(d)
  \text{ and } p_i \not \leq r \text{ for all } i \in \{1,\ldots,n\}\}.
  \end{array}\]
  Then $\sigma = \mgu(S)$, $\tau = \mgu(T)$, and $\gamma = \mgu(G)$.  From the
  definition of $\pairing$ (Definition~\ref{def-pairing}) and the precondition $p_i
  \not < q$ for all $i \in \{1,\ldots,n\}$ it follows that
  \[
  \begin{arrayprf}
    \prl{varst} & \vars(T) \subseteq \posvar(\{r \mid
    p_i \leq r \text{ for some } i \in \{1,\ldots,n\}\}).\\
    \prl{varsg} & \vars(G) \subseteq \posvar(\{r \mid
    p_i \not\leq r \text{ for all } i \in \{1,\ldots,n\}\}
    \cup \{y_{p_1}, \ldots, y_{p_n}\}).\\
    \prl{varsprime} & y_{q} \in \posvar(\{r \mid
    p_i \not\leq r \text{ for all } i \in \{1,\ldots,n\}\}
    \cup \{y_{p_1}, \ldots, y_{p_n}\}).
  \end{arrayprf}
  \]
  The lemma can now be shown in the following steps, explained below.
  \[
  \begin{arrayprfeq}
  \prl{1} & & \vy_{q}\sigma\\
  \prl{2}  & = & \vy_{q}\mgu(S)\\
  \prl{3}  & = & \vy_{q}\mgu(T \cup G)\\
  \prl{4}  & \variant & \vy_{q}\tau\mgu(G\tau)\\
  \prl{5}  & = & \vy_{q}\tau|_{\{\vy_{p_1},\ldots,\vy_{p_n}\}}\mgu(G\tau|_{\{\vy_{p_1},\ldots,\vy_{p_n}\}})\\
  \prl{6}  & \variant & \vy_{q}\mgu(\{\{\vy_p,\vy_p\tau\}\} \cup G)\\
  \prl{7}  & \variant &
  \vy_{q}\gamma\mgu(\{\{\vy_p\gamma,\vy_p\tau\gamma\}\}).\\
  \end{arrayprfeq}
  \]
  Step~\pref{2} is obtained by expanding the definition of $\sigma$, and
  step~\pref{3} follows since $S = T \cup G$.  Step~\pref{4} is obtained by
  Proposition~\ref{prop-mgu-compose-asymmetric}.  By~\pref{varsg} and~\pref{varst}
  it follows that $\vars(G) \cap \vars(T) \subseteq
  \{\vy_{p_1},\ldots,\vy_{p_n}\}$ and by~\pref{varsprime} and~\pref{varst}
  that $\{\vy_{q}\} \cap \vars(T) \subseteq
  \{\vy_{p_1},\ldots,\vy_{p_n}\}$. Since $\dom(\tau) \subseteq \vars(T)$ we
  can replace $\tau$ in~\pref{4} with its restriction to
  $\{\vy_{p_1},\ldots,\vy_{p_n}\}$ and obtain~\pref{5}.  Step~\pref{6} follows
  from Proposition~\ref{prop-mgu-compose-asymmetric} since $\tau|_{\{\vy_{p_1},
    \ldots, \vy_{p_n}\}} \variant \mgu(\{\{\vy_{p_1}, \vy_{p_1}\tau\},\ldots,
  \{\vy_{p_n}, \vy_{p_n}\tau\}\})$.  Finally, step~\pref{7} is obtained by
  Proposition~\ref{prop-mgu-compose-asymmetric} and the definition of~$\gamma$.
\end{proof}

\subsubsection{The Subproof Replacement Monotonicity Core Lemma}

Lemma~\ref{lem-sr-mult}, stated and proven in this subsubsection, shows how
the subsumption relationship of associated formulas transfers from subterm
occurrences in a \Dterm to the \Dterm itself. The setting of the lemma is
illustrated in Fig.~\ref{fig-lem-sr-mult}.
\begin{figure}
  \centering
  \scalebox{0.8}{
\begin{tikzpicture}
  \tikzset{
    dot/.style = {circle, fill, minimum size=13pt,
      inner sep=0pt, outer sep=0pt, draw=black},
  }
  \draw (0,0) {} -- (5.5,0) -- (2.75,4.76) -- cycle;
  \fill (0.77,0) {} -- (0.77+1.83,0) -- (0.77+0.91,1.58) -- cycle;
  \fill (3.36,0) {} -- (3.36+1.375,0) -- (3.36+0.69,1.19) -- cycle;

  \node [dot,fill=white] at (2.75,4.76) {$\emptypos$};
  \node [dot,fill=black!10] at (0.77+0.91,1.58) {$p_1$};
  \node [dot,fill=black!10] at (3.36+0.69,1.19) {$p_2$};
  \node [dot,fill=white] at (2.38,2.38) {$q$};
\end{tikzpicture}
\hspace{0.5cm}
\begin{tikzpicture}
  \tikzset{
    dot/.style = {circle, fill, minimum size=13pt,
      inner sep=0pt, outer sep=0pt, draw=black},
  }

  \fill [black!20] (0.77+0.91,1.58) {} -- (0.77+0.91-0.55,1.58-0.95) --
  (0.77+0.91+0.55,1.58-0.95) -- cycle;

  \fill [black!20] (3.36+0.69,1.19) {} -- (3.36+0.69-0.55,1.19-0.95) --
  (3.36+0.69+0.55,1.19-0.95) -- cycle;

  \draw (0,0) {} -- (0.77,0) -- (0.77+0.91,1.58)
  -- (0.77+1.83,0) -- (3.36,0) -- (3.36+0.69,1.19)
  -- (3.36+1.375,0) -- (5.5,0) -- (2.75,4.76) -- cycle;

  \node [dot,fill=white] at (2.75,4.76) {$\emptypos$};
  \node [dot,fill=black!10] at (0.77+0.91,1.58) {$p_1$};
  \node [dot,fill=black!10] at (3.36+0.69,1.19) {$p_2$};
  \node [dot,fill=white] at (2.38,2.38) {$q$};

  \node [] at (0.77+0.91,1.58-0.6) {$e$};
  \node [] at (3.36+0.69,1.19-0.6) {$e$};

\end{tikzpicture}
  }
  \vspace{10pt}
  \caption{The setting of Lemma~\ref{lem-sr-mult} for $n=2$.  The left side
  illustrates the \dterm~$d$. Of the positions~$p_1$ and~$p_2$ neither one is
  below the other one.  Position~$q$ must be neither strictly below $p_1$ nor
  strictly below $p_2$. That is, $q$ can be anywhere in the white area
  including $\emptypos$, or it can be one of $p_1$ or $p_2$.  The right side
  illustrates the \dterm $d[e]_{p_1}[e]_{p_2}$ which is obtained from~$d$ by
  replacing the subterms at~$p_1$ and~$p_2$ with occurrences of the
  \dterm~$e$, indicated by smaller gray triangles.}
  \label{fig-lem-sr-mult}  
\end{figure}

\pagebreak
\begin{lem}[Subproof Replacement Monotonicity Core Lemma]
  \label{lem-sr-mult}
  Let $d, e$ be \dterms, let~$\alpha$ be an axiom assignment for $d$ and for
  $e$, and let $p_1, \ldots, p_n, q$, where $n \geq 0$, be positions in
  $\pos(d)$ such that for all $i,j \in \{1,\ldots,n\}$ with $i \neq j$ it
  holds that $p_i \not \leq p_j$ and for all $i \in \{1,\ldots,n\}$ it holds
  that $p_i \not < q$.  If for all $i \in \{1, \ldots, n\}$ it holds that
  \[\ipt_{\alpha}(d, p_i) \subsumedBy \mgt_{\alpha}(e),\] then
  \[\ipt_{\alpha}(d,q) \subsumedBy
  \ipt_{\alpha}(d[e]_{p_1}[e]_{p_2}\ldots [e]_{p_n}, q).\]
\end{lem}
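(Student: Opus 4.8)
The plan is to reduce the whole statement to a comparison of the two substitutions $\sigma \eqdef \mgu(\{\pairing_{\alpha}(d,r)\mid r\in\pos(d)\})$ and $\sigma' \eqdef \mgu(\{\pairing_{\alpha}(d',r)\mid r\in\pos(d')\})$, where $d' \eqdef d[e]_{p_1}\ldots[e]_{p_n}$, since by Definition~\ref{def-ipt-mgt} the goal $\ipt_{\alpha}(d,q)\subsumedBy\ipt_{\alpha}(d',q)$ is exactly $\P(\vy_q\sigma)\subsumedBy\P(\vy_q\sigma')$. First I would apply the decomposition of the associated mgu (Lemma~\ref{lem-dds-mult}) to both $d$ and $d'$ at the positions $p_1,\ldots,p_n,q$; this is licensed because $p_i\not<q$ is assumed and pairwise incomparability gives $p_i\not<p_j$. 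Denoting the component unifiers as in that lemma by $\gamma,\tau$ for $d$ and $\gamma',\tau'$ for $d'$, I get $\vy_q\sigma\variant\vy_q\gamma\,\mgu(\{\{\vy_{p_i}\gamma,\vy_{p_i}\tau\gamma\}\mid i\})$ and the analogous relation for $\vy_q\sigma'$. The first key point is that the context unifier is shared: the pairings at positions not weakly below any $p_i$ form literally the same set for $d$ and $d'$, because the replacement only alters positions weakly below some $p_i$; hence I may take $\gamma'=\gamma$, and the two decompositions differ only in the subtree contributions $\vy_{p_i}\tau$ versus $\vy_{p_i}\tau'$.

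Next I would pin down those subtree contributions and translate the hypothesis. Since the $p_i$ are pairwise incomparable, the pairing sets of the distinct subtrees are variable-disjoint, so $\vy_{p_i}\tau'$ coincides with $\vy_{p_i}\mgu(\{\pairing_{\alpha}(d',r)\mid p_i\leq r\})$; as the subtree of $d'$ rooted at $p_i$ is exactly $e$, Proposition~\ref{prop-tree-renaming-add} together with Proposition~\ref{prop-shift-variant} gives $\P(\vy_{p_i}\tau')\variant\mgt_{\alpha}(e)$. Writing $A_i \eqdef \vy_{p_i}\gamma$ and $\mu \eqdef \mgu(\{\{A_i,\vy_{p_i}\tau\gamma\}\mid i\})$, the same Lemma~\ref{lem-dds-mult}, now with $q$ replaced by $p_i$ (permitted since $p_j\not<p_i$), identifies $\ipt_{\alpha}(d,p_i)=\P(\vy_{p_i}\sigma)\variant\P(A_i\mu)$. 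Feeding the hypothesis $\ipt_{\alpha}(d,p_i)\subsumedBy\mgt_{\alpha}(e)$ into these two variants yields, for each $i$, a substitution $\rho_i$ with $A_i\mu=m_i'\rho_i$, where $m_i' \eqdef \vy_{p_i}\tau'$; moreover $\rho_i$ can be taken with domain among the positional variables internal to the $i$-th copy of $e$, which are fresh, pairwise disjoint across $i$, and disjoint from all context variables.

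The hard part is the closing unification step, which turns these per-position subsumptions into the single subsumption at $q$. The plan is to build one common unifier $\theta$ of the $d'$-system $\{\{A_i,\vy_{p_i}\tau'\gamma\}\mid i\}$ by letting $\theta$ agree with $\mu$ on all context and $d$-subtree variables and with $\rho_i$ on the internal variables of the $i$-th copy of $e$. The variable bookkeeping from the previous step, together with the cleanness of $\mgu$ (Definition~\ref{def-clean}), makes $\theta$ well defined and, since $\gamma$ fixes the fresh variables of $m_i'$, gives $A_i\theta=A_i\mu=m_i'\rho_i=m_i'\theta$; hence $\theta$ unifies the $d'$-system. Writing $\mu' \eqdef \mgu(\{\{A_i,\vy_{p_i}\tau'\gamma\}\mid i\})$, the defining property of a most general unifier (Definition~\ref{def-most-general-unifier}) then furnishes $\lambda$ with $\mu'\lambda=\theta$ on the relevant variables. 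Applying this to $\vy_q\gamma$, whose variables are context variables and are therefore moved identically by $\theta$ and $\mu$, gives $\vy_q\gamma\mu=\vy_q\gamma\theta=\vy_q\gamma\mu'\lambda$, so $\vy_q\gamma\mu\subsumedBy\vy_q\gamma\mu'$. Combining with the two decomposition variants yields $\vy_q\sigma\subsumedBy\vy_q\sigma'$ and hence $\ipt_{\alpha}(d,q)\subsumedBy\ipt_{\alpha}(d',q)$.

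I expect the genuine friction to lie entirely in the variable bookkeeping: verifying the variable-disjointness of the subtree blocks, that $\gamma$ leaves the subtree contributions untouched, and that $\theta$ is well defined. These are precisely the places where the hypotheses of the lemma, namely incomparability of the $p_1,\ldots,p_n$ and the cleanness of every occurring $\mgu$, are needed, so the statement is set up so that the unification-theoretic gluing goes through without further conditions.
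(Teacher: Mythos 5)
Your proposal is correct and follows essentially the same route as the paper's proof: both decompose the associated mgus via Lemma~\ref{lem-dds-mult} into a shared context unifier $\gamma$ plus subtree contributions, identify $\mgt_{\alpha}(e)$ with $\vy_{p_i}\tau'$ via Propositions~\ref{prop-tree-renaming-add} and~\ref{prop-shift-variant}, extract the instantiating substitutions $\rho_i$ from the subsumption hypotheses, glue them into a unifier of the system defining the mgu for the rewritten term, and conclude by most-generality. The only cosmetic difference is that you define the glued unifier piecewise where the paper uses the composition $\rho\mu$ (its Part~III showing $v\mu = v\rho\mu$ on context variables is exactly what makes the two coincide), and you defer --- but correctly locate --- the variable bookkeeping, including the case split on whether $\vy_{p_i} \in \dom(\tau')$, which is where the paper spends most of its effort.
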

\begin{proof}
  \prlReset{lem-sr-mult}
  Define the shorthand $d' = d[e]_{p_1}[e]_{p_2}\ldots [e]_{p_n}$. That is,
  $d'$ is $d$ with the subterm occurrences at $p_1, \ldots, p_n$ replaced
  by~$e$. Define the following sets of pairs of terms and substitutions.
  \[\begin{array}{l@{\hspace{0.5em}}c@{\hspace{0.5em}}l@{\hspace{0.5em}}r@{\hspace{0.5em}}c@{\hspace{0.5em}}l}
  S &  \xeqdef & \{\pairing_{\alpha}(d, r) \mid r \in \pos(d)\}.\\
  T &  \xeqdef & \{\pairing_{\alpha}(d, r) \mid r \in \pos(d)
  \text{ and } p_i \leq r \text{ for some } i \in \{1,\ldots,n\}\}.\\
  T' &  \xeqdef & \{\pairing_{\alpha}(d', r) \mid r \in \pos(d')
  \text{ and } p_i \leq r \text{ for some } i \in \{1,\ldots,n\}\}.\\
  G &  \xeqdef & \{\pairing_{\alpha}(d, r) \mid r \in \pos(d)
  \text{ and } p_i \not \leq r \text{ for all } i \in \{1,\ldots,n\}\}.\\
  \sigma &  \xeqdef &  \mgu(S).\\
  \tau &  \xeqdef & \mgu(T).\\
  \tau' &  \xeqdef & \mgu(T').\\
  \gamma &   \xeqdef & \mgu(G).\\
  \mu & \xeqdef & \mgu(\{\{\vy_{p_1}\gamma,\vy_{p_1}\ssubold\gamma\},
  \ldots, \{\vy_{p_n}\gamma,\vy_{p_n}\ssubold\gamma\}\}).\\
  \nu & \xeqdef & \mgu(\{\{\vy_{p_1}\gamma,\vy_{p_1}\ssubnew\gamma\},
  \ldots, \{\vy_{p_n}\gamma,\vy_{p_n}\ssubnew\gamma\}\}).
  \end{array}
  \]
  Because the detailed proof is lengthy, we present it modularized into four
  parts, \name{(I)~Conversion of the Preconditions}, \name{(II)~Determining
    the Instantiating Substitution $\rho$}, \textit{(III)~Contexts where
    $\rho$ is Void}, and \textit{(IV)~Deriving the
    Conclusion}. Figure~\ref{fig-lem-sr-mult} may help to get an intuitive
  overview of the parameters of the lemma statement.

  \beforeproofpartskip
  \noindent{\textit{Part I. Conversion of the Preconditions}}
  \afterproofpartskip

  \noindent
  The following step is a precondition of the lemma to be proven.
  \[
  \begin{arrayprf}
    \prl{pre-disjoint} &
    p_i \not \leq p_j, \text{ for all } i,j \in \{1,\ldots,n\} \text{ with } i \neq j.
  \end{arrayprf}
  \]
  The following statements, whose proofs are described below, show that $\sigma$
  when applied to $\vy_q$ and $\vy_{p_i}$ can be decomposed into $\gamma$
  followed by $\mu$.
  \[
  \begin{arrayprf}
    \prl{pi-gamma-mu} & y_{p_i}\sall\; =\; y_{p_i}\gamma\mu,
    \text{ for all } i \in \{1,\ldots,n\}.\\
    \prl{q-gamma-mu} & y_q\sall\; =\; y_q\gamma\mu.\\
  \end{arrayprf}
  \]
  Step~\pref{pi-gamma-mu} follows from Lemma~\ref{lem-dds-mult} with its
  parameters $p_1, \ldots, p_n$ instantiated by the positions of the same name
  in the lemma to be proven but its parameter~$q$ instantiated to $p_i$ for an
  arbitrary $i \in \{1,\ldots,n\}$. The precondition $p_i \not < q$ for all $i
  \in \{1,\ldots,n\}$ of Lemma~\ref{lem-dds-mult} then instantiates to $p_j
  \not < p_i$ for all $j \in \{1,\ldots,n\}$, which follows
  from~\pref{pre-disjoint}.  Step~\pref{q-gamma-mu} follows from
  Lemma~\ref{lem-dds-mult} with all of its parameters $p_1, \ldots, p_n, q$
  instantiated by the positions of the same names in the lemma to be proven.

  Let us consider now the precondition $\ipt_{\alpha}(d, p_i) \subsumedBy
  \mgt_{\alpha}(e)$ for an arbitrary $i \in \{1,\ldots,n\}$. Its left side
  can be converted by expanding and contracting definitions and
  step~\pref{pi-gamma-mu} as follows.
  \[
  \begin{arrayprfeq}
    \prl{x1} && \ipt_{\alpha}(d, p_i)\\
    \prl{x2} & = &  \P(y_{p_i}\mgu(\{\pairing_{\alpha}(d, r) \mid r \in \pos(d)\}))\\
    \prl{x3} & = &  \P(y_{p_i}\mgu(S))\\
    \prl{x4} & = &  \P(y_{p_i}\sigma)\\
    \prl{x5} & = &  \P(y_{p_i}\gamma\mu).\\
  \end{arrayprfeq}
  \]
  The conversion of the right side of the considered precondition is based on
  some auxiliary definitions and statements.  For all $i \in \{1,\ldots,n\}$
  define the following sets of pairs of terms and substitutions.
  \[\begin{array}{l@{\hspace{0.5em}}c@{\hspace{0.5em}}l@{\hspace{0.5em}}r@{\hspace{0.5em}}c@{\hspace{0.5em}}l}
  T'_i & \xeqdef & \{\pairing_{\alpha}(d', r) \mid r \in \pos(d')
  \text{ and } p_i \leq r\}.\\
  \NT'_i & \xeqdef & \bigcup_{j \in \{1,\ldots,n\}\setminus\{i\}} T'_j.
  \end{array}\]
  Then, as explained below, for all $i,j \in \{1,\ldots,n\}$ the
  following holds.
  \[
  \begin{arrayprf}
    \prl{tpi1} & T'_i \cup \NT'_i = T'.\\
    \prl{tpi2} & \vars(T'_i) \subseteq \posvar(\{r \mid p_i \leq r\}).\\
    \prl{tpi3} & \text{If } i \neq j, \text{ then } \vars(T'_i) \cap \vars(T'_j)
    = \emptyset.\\
    \prl{tpi4} & \vars(\NT'_i) \cap \{y_{p_i}\} = \emptyset.\\
    \prl{tpi5} & \vars(\NT'_i) \cap \vars(T'_i) = \emptyset.\\
    \prl{tpi6} & y_{p_i}\tau' = y_{p_i}\mgu(T'_i).\\
    \prl{idx-ssubnew-dom} &
    \text{If } y_{p_i} \in \dom(\ssubnew), \text{ then }
    \vars(y_{p_i}\ssubnew) \subseteq \posvar(\{r \mid p_i < r\}).\\
    \prl{nonoverlap-term} & \text{If } i \neq j, \text{ then }
    \vars(y_{p_i}\ssubnew) \cap \vars(y_{p_j}\ssubnew) = \emptyset.\\
  \end{arrayprf}
  \]
  Step~\pref{tpi1} follows immediately from the definitions of $T'_i$,
  $\NT'_i$ and $T$.  Step~\pref{tpi2} follows from the definition of $T'_i$
  and the definition of $\pairing$ (Definition~\ref{def-pairing}). Step~\pref{tpi3}
  follows from~\pref{tpi2} and~\pref{pre-disjoint}.  Step~\pref{tpi4} follows
  from the definition of $\NT'_i$ and steps~\pref{tpi2} and~\pref{pre-disjoint}.
  Step~\pref{tpi5} follows from the definition of $\NT'_i$ and
  step~\pref{tpi3}.  Step~\pref{tpi6} follows from the definition of~$\tau'$
  and steps~\pref{tpi1}, \pref{tpi4} and~\pref{tpi5}.
  Step~\pref{idx-ssubnew-dom} follows from~\pref{tpi6} and~\pref{tpi2}.
  Step~\pref{nonoverlap-term} follows from~\pref{tpi6}, \pref{tpi3} and
  \pref{pre-disjoint}.

  The right side of the precondition $\ipt_{\alpha}(d, p_i) \subsumedBy
  \mgt_{\alpha}(e)$ can now be converted in the following steps described
  below.
  \[
  \begin{arrayprfeq}
    \prl{y1} && \mgt_{\alpha}(e)\\
    \prl{y2} & = & \P(y_{\emptypos}\mgu(\{\pairing_{\alpha}(e, r) \mid r \in
    \pos(e)\}))\\
    \prl{y3} & \variant
    & \P(y_{\emptypos}\mgu(\{\pairing_{\alpha}(e, r) \mid r \in \pos(e)\})\sshift{p_i})\\
    \prl{y4} & =
    & \P(y_{\emptypos}\mgu(\{\pairing_{\alpha}(d'|_{p_i}, r) \mid r \in \pos(d'|_{p_i})\})\sshift{p_i})\\
    \prl{y5} & =
    & \P(y_{p_i}\mgu(\{\pairing_{\alpha}(d', r) \mid r \in \pos(d') \text{ and } p_i \leq r\}))\\
    \prl{y6} & = & \P(y_{p_i}\mgu(T'_i))\\
    \prl{yx10} & = & \P(y_{p_i}\tau').\\
  \end{arrayprfeq}
  \]
  Step~\pref{y2} is obtained from~\pref{y1} by expanding the definition of
  $\mgt$.  Step~\pref{y3} follows from Proposition~\ref{prop-shift-variant},
  step~\pref{y4} since by the definition of $d'$ it holds that $d'|_{p_i} = e$,
  and step~\pref{y5} from Proposition~\ref{prop-tree-renaming-add}.  Step~\pref{y6}
  is obtained by contracting the definition of $T'_i$. Step~\pref{yx10}
  follows from~\pref{tpi6}. Note that \pref{y1} is independent from~$i$ and
  the conversion of~\pref{y1} to~\pref{yx10} is possible for any $i \in
  \{1,\ldots,n\}$.

  Because~\pref{x1} and~\pref{x5} as well as~\pref{y1} and~\pref{yx10} are
  equal, we can now reformulate the precondition that for all $i \in
  \{1,\ldots,n\}$ it holds that $\ipt_{\alpha}(d, p_i) \subsumedBy
  \mgt_{\alpha}(e)$ as
  \[\begin{arrayprf}
  \prl{pre-v1} & y_{p_i}\gamma\mu \subsumedBy y_{p_i}\tau',
  \text{ for all } i \in \{1,\ldots,n\}.
  \end{arrayprf}\]

  \beforeproofpartskip
  \noindent{\textit{Part II. Determining the Instantiating Substitution $\rho$}}
  \afterproofpartskip

  \noindent
  We show, as explained below, that for all $i \in \{1,\ldots,n\}$ there
  exists a substitution~$\rho_i$ with the following properties.
  \[
  \begin{arrayprf}
    \prl{pre-i} & y_{p_i}\gamma\mu\; =\; y_{p_i}\ssubnew\rho_i.\\
    \prl{pre-rho-dom} &
    \dom(\rho_i) \subseteq \vars(y_{p_i}\ssubnew).\\
    \prl{pre-i-dom-2} &
    \text{If } y_{p_i} \in \dom(\ssubnew), \text{ then }
    \dom(\rho_i) \subseteq \posvar(\{r \mid p_i < r\}).\\
    \prl{pre-i-dom-disjoint} &
    \text{If } i \neq j, \text{ then }
    \dom(\rho_i) \cap \dom(\rho_j) = \emptyset.\\
    \prl{pre-i-dom-cap} &
    \dom(\rho_i) \cap \dom(\ssubnew) = \emptyset.\\
  \end{arrayprf}
  \]
  Steps~\pref{pre-i} and~\pref{pre-rho-dom} follow from~\pref{pre-v1}.
  Step~\pref{pre-i-dom-2} follows from~\pref{pre-rho-dom}
  and~\pref{idx-ssubnew-dom}, step~\pref{pre-i-dom-disjoint}
  from~\pref{pre-rho-dom} and~\pref{nonoverlap-term}.
  Step~\pref{pre-i-dom-cap} follows from~\pref{pre-rho-dom} since the
  idempotence of~$\tau'$ is equivalent to $\dom(\tau') \cap \vrng(\tau') =
  \emptyset$, which implies $\vars(\vy_{p_i}\tau') \cap \dom(\tau') =
  \emptyset$.

  Step~\pref{pre-i-dom-disjoint} justifies to define a substitution~$\rho$,
  which combines the substitutions $\rho_i$ by forming their union:
  \[\rho\; \xeqdef\; \bigcup_{i=1}^{n}\{v \mapsto v\rho_i \mid v \in
  \dom(\rho_i)\}.\]
  The substitution~$\rho$ has the following properties, whose derivation
  is described below.
 \[
 \begin{arrayprf}
   \prl{pre-dom-components} & y_{p_i}\ssubnew\rho \; =\;  y_{p_i}\ssubnew\rho_i,
   \text{ for all } i \in \{1,\ldots,n\}.\\
    \prl{pre} & y_{p_i}\gamma\mu\; =\; y_{p_i}\ssubnew\rho,
    \text{ for all } i \in \{1,\ldots,n\}.\\
    \prl{dom-rho-2} & \dom(\rho) \subseteq \posvar(\{r \mid
    p_i \leq r \text{ for some } i \in \{1,\ldots,n\}\}).\\
   \prl{pre-dom-cap} & \dom(\rho) \cap \dom(\ssubnew) = \emptyset.\\
 \end{arrayprf}
  \]
  Step~\pref{pre-dom-components} follows from the definition of~$\rho$, given
  that for all $i,j \in \{1,\ldots,n\}$ with $i \neq j$ it holds that
  $\vars(\vy_{p_i}\tau') \cap \dom(\rho_j) = \emptyset$, which follows from
  \pref{pre-rho-dom} and \pref{nonoverlap-term}.
  Step~\pref{pre} follows from \pref{pre-dom-components} and~\pref{pre-i}.
  Step~\pref{dom-rho-2} follows from the definition of~$\rho$ and
  steps~\pref{pre-rho-dom}, \pref{tpi6}, and~\pref{tpi2}.
  Step~\pref{pre-dom-cap} follows from the definition of~$\rho$ and
  step~\pref{pre-i-dom-cap}.

  \beforeproofpartskip
  \noindent{\textit{Part III. Contexts where $\rho$ is Void}}
  \afterproofpartskip

  \noindent
  The variables occurring in members of the range of $\gamma$ as well as
  $\vy_q$ are contained in the same set of \positionalvars.
  \[
  \begin{arrayprf}
    \prl{rng-gamma} & \vrng(\gamma) \subseteq \posvar(\{r \mid p_i \not \leq r
    \text{ for all } i \in \{1,\ldots,n\}\}) \cup \{y_{p_1}, \ldots,
    y_{p_n}\}.\\
    \prl{yq-incl} & y_q \in \posvar(\{r \mid p_i \not \leq r
    \text{ for all } i \in \{1,\ldots,n\}\}) \cup \{y_{p_1}, \ldots,
    y_{p_n}\}.
  \end{arrayprf}
  \]
  Step~\pref{rng-gamma} follows from the definitions of~$\gamma$ and $G$ and
  the definition of $\pairing$ (Definition~\ref{def-pairing}).  Step \pref{yq-incl}
  follows from the precondition that for all $i \in \{1,\ldots,n\}$ it holds
  that $p_i \not < q$.
  Now, let $y$ be a positional variable and let $v$ be a variable such
  that
  \[\begin{array}{l}
  y \in \{y_{p_1},\ldots,y_{p_n},y_{p_q}\}, \text{ and}\\
  v \in \vars(y\gamma).
  \end{array}\]
  From~\pref{rng-gamma} and~\pref{yq-incl} it follows that
  \[
  \begin{arrayprf}
    \prl{v-in} & v \in \posvar(\{r \mid p_i \not \leq r
    \text{ for all } i \in \{1,\ldots,n\}\}) \cup \{y_{p_1}, \ldots,
    y_{p_n}\}.
  \end{arrayprf}
  \]
  As proven below, then
  \[
  \begin{arrayprf}
    \prl{aux-vy} & v\mu = v\rho\mu.
  \end{arrayprf}
  \]
  Step~\pref{aux-vy} is proven by considering three cases (the first two
  overlap, the third applies if none of the first two applies):
  \begin{enumerate}
  \item Case $v \notin \{y_{p_1}, \ldots, y_{p_n}\}$.  Then, by~\pref{v-in}
    and~\pref{dom-rho-2}, $v \notin \dom(\rho)$, hence $v\mu = v\rho\mu$.

  \item Case $v \in \dom(\ssubnew)$.  Then, by~\pref{pre-dom-cap}, $v \notin
    \dom(\rho)$, hence $v\mu = v\rho\mu$.

  \item Case $v \in \{y_{p_1}, \ldots, y_{p_n}\} \setminus \dom(\ssubnew)$.
    Then, by \pref{pre}, $v\gamma\mu = v\rho$.  Since $v \in \vars(y\gamma)$
    and $\gamma$ is idempotent it follows that $v = v\gamma$. Hence $v\mu =
    v\rho$, and, since $\mu$ is idempotent, $v\mu = v\rho\mu$.
  \end{enumerate}
  Given the definition of $v$ and $y$ we can instantiate~\pref{aux-vy} to the
  following statements about the $y_{p_i} \in \dom(\tau')$ for $i \in
  \{1,\ldots, n\}$ and $y_q$.
  \[
  \begin{arrayprf}
    \prl{gap-1-a} & y_{p_i}\gamma\mu\; =\; y_{p_i}\gamma\rho\mu,
    \text{ for all } i \in \{1,\ldots,n\}.\\
    \prl{gap-3} & \vy_q\gamma\mu = \vy_q\gamma\rho\mu.\\
  \end{arrayprf}
  \]

  \beforeproofpartskip
  \noindent{\textit{Part IV. Deriving the Conclusion}}
  \afterproofpartskip

  \noindent
  The conclusion of the lemma to be proven, that is, \[\ipt_{\alpha}(d,q)
  \subsumedBy \ipt_{\alpha}(d[e]_{p_1}[e]_{p_2}\ldots [e]_{p_n}, q)\]
  can be reformulated as
  \[
  \begin{arrayprf}
    \prl{goal} & y_{q}\gamma\mu \subsumedBy y_{q}\gamma\nu.
  \end{arrayprf}
  \]
  For the left side, the reformulation follows since $\ipt_{\alpha}(d,q) =
  \P(y_{q}\gamma\mu)$, which can be derived analogously to
  steps~\pref{x1}--\pref{x5}, but by applying~\pref{q-gamma-mu} instead
  of~\pref{pi-gamma-mu}. For the right side it follows since
  $\ipt_{\alpha}(d[e]_{p_1}[e]_{p_2}\ldots [e]_{p_n}, q) =
  \ipt_{\alpha}(d', q) =\linebreak \P(\vy_q\mgu(\{\pairing_{\alpha}(d',r)
  \mid r \in \pos(d')\})) \variant \P(\vy_q\gamma\nu)$, which can be derived
  by expanding definitions and, for the last step, applying
  Lemma~\ref{lem-dds-mult}.

  To prove~\pref{goal}, we need a further auxiliary statement, which is
  derived along with an intermediate step about the domain of~$\gamma$ as
  explained below.
  \[
  \begin{arrayprf}
    \prl{dom-gamma} & \dom(\gamma) \subseteq \posvar(\{r \mid p_i \not \leq r
    \text{ for all } i \in \{1,\ldots,n\}\}) \cup \{y_{p_1}, \ldots,
    y_{p_n}\}.\\
    \prl{gap-2} & y_{p_i}\ssubnew\; =\;
    y_{p_i}\ssubnew\gamma,
    \text{ for all } i \in \{1,\ldots,n\} \text{ s.th. } y_{p_i} \in \dom(\ssubnew).\\

  \end{arrayprf}
  \]
  Step~\pref{dom-gamma} follows from the definitions of~$\gamma$ and $G$ and
  the definition of $\pairing$ (Definition~\ref{def-pairing}).  Step~\pref{gap-2}
  can be shown as follows. Assume $y_{p_i} \in \dom(\ssubnew)$.
  By~\pref{idx-ssubnew-dom} it follows that $\vars(y_{p_i}\ssubnew) \subseteq
  \posvar(\{r \mid p_i < r\})$. With~\pref{dom-gamma} it follows that
  $\vars(y_{p_i}\ssubnew) \cap \dom(\gamma) = \emptyset$, which
  implies~\pref{gap-2}.  We can now proceed to prove the goal~\pref{goal} as
  follows, explained below.
  \[
  \begin{arrayprf}
    \prl{xunif-00} &  y_{p_i}\gamma\rho\mu\; =\ y_{p_i}\ssubnew\rho,
    \text{ for all } i \in \{1,\ldots,n\}.\\
    \prl{unif-00} &  y_{p_i}\gamma\rho\mu\; =\ y_{p_i}\ssubnew\gamma\rho,
    \text{ for all } i \in \{1,\ldots,n\} \text{ s.th. } y_{p_i} \in \dom(\ssubnew).\\
    \prl{unif-01} & y_{p_i}\gamma\rho\mu\mu\; =\ y_{p_i}\ssubnew\gamma\rho\mu,
    \text{ for all } i \in \{1,\ldots,n\} \text{ s.th. } y_{p_i} \in \dom(\ssubnew).\\
    \prl{unif-1-indom} & y_{p_i}\gamma\rho\mu\;
    =\ y_{p_i}\ssubnew\gamma\rho\mu,
    \text{ for all } i \in \{1,\ldots,n\} \text{ s.th. } y_{p_i} \in
    \dom(\ssubnew).\\
    \prl{unif-1-notindom} & y_{p_i}\gamma\rho\mu\;
    =\ y_{p_i}\ssubnew\gamma\rho\mu,
    \text{ for all } i \in \{1,\ldots,n\}
    \text{ s.th. } y_{p_i} \notin \dom(\ssubnew).\\
    \prl{unif-1} & y_{p_i}\gamma\rho\mu\; =\ y_{p_i}\ssubnew\gamma\rho\mu,
    \text{ for all } i \in \{1,\ldots,n\}.\\
    \prl{unif-2} & \rho\mu \text{ is a unifier of }
    \{\{\vy_{p_1}\gamma,\vy_{p_1}\ssubnew\gamma\},
    \ldots, \{\vy_{p_n}\gamma,\vy_{p_n}\ssubnew\gamma\}\}.\\
    \prl{unif-3} & \rho\mu \subsumedBy \nu.\\
    \prl{goal-0} &  y_q\gamma\rho\mu \subsumedBy y_q\gamma\nu.\\
    \prl{goal-1} &  y_q\gamma\mu \subsumedBy y_q\gamma\nu.\\
  \end{arrayprf}
  \]
  Step~\pref{xunif-00} follows from~\pref{gap-1-a} and~\pref{pre}.
  Step~\pref{unif-00} follows from~\pref{xunif-00} and~\pref{gap-2}.
  Step~\pref{unif-01} follows from \pref{unif-00}.  Step~\pref{unif-1-indom}
  follows from~\pref{unif-01} since~$\mu$ is idempotent.
  Step~\pref{unif-1-notindom} holds since if $y_{p_i} \notin \dom(\ssubnew)$,
  then $y_{p_i}\ssubnew = y_{p_i}$.  Step~\pref{unif-1} follows
  from~\pref{unif-1-indom} and~\pref{unif-1-notindom}.  Step~\pref{unif-2}
  follows from~\pref{unif-1}.  Step~\pref{unif-3} follows from~\pref{unif-2}
  and the definition of~$\nu$.  Step~\pref{goal-0} follows from~\pref{unif-3}.
  Finally, step~\pref{goal-1}, which is the goal to be proven listed above
  as~\pref{goal}, follows from~\pref{goal-0} and~\pref{gap-3}.
\end{proof}

\subsubsection{Subproof Replacement Based on IPT and MGT}

Lemma~\ref{lem-sr-mult} is now applied to justify the following two theorems,
which may be practically applied to modify proofs represented by a \dterm
together with an axiom assignment.
\begin{thm}[\IPT-Based Subproof Replacement]
  \label{thm-sr-ipt}
  Let $d, e$ be \dterms, let~$\alpha$ be an axiom assignment for $d$ and for
  $e$, and let $p_1, \ldots, p_n$, where $n \geq 0$, be positions in $\pos(d)$
  such that for all $i,j \in \{1,\ldots,n\}$ with $i \neq j$ it holds that
  $p_i \not \leq p_j$.  If for all $i \in \{1, \ldots, n\}$ it holds that
  \[\ipt_{\alpha}(d, p_i) \subsumedBy \mgt_{\alpha}(e),\] then
  \[\mgt_{\alpha}(d) \subsumedBy
  \mgt_{\alpha}(d[e]_{p_1}[e]_{p_2}\ldots [e]_{p_n}).\]
\end{thm}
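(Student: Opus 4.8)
The plan is to obtain this theorem as the special case $q = \emptypos$ of the Subproof Replacement Monotonicity Core Lemma (Lemma~\ref{lem-sr-mult}), which has already carried out all the substantial work. First I would observe that the two hypotheses of the theorem---that the positions $p_1,\ldots,p_n$ are pairwise incomparable (for $i \neq j$, $p_i \not\leq p_j$) and that $\ipt_{\alpha}(d,p_i) \subsumedBy \mgt_{\alpha}(e)$ for every $i$---are literally the corresponding hypotheses of Lemma~\ref{lem-sr-mult}. So the only thing left to supply is a suitable choice of the extra position $q$ featured in the lemma.

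Taking $q$ to be $\emptypos$, I would then check that the remaining precondition of Lemma~\ref{lem-sr-mult}, namely $p_i \not< q$ for all $i \in \{1,\ldots,n\}$, is satisfied vacuously. Indeed, $\emptypos$ is a prefix of every position, so no position can be a strict prefix of $\emptypos$; equivalently, nothing lies strictly above the root, whence $p_i \not< \emptypos$ holds for each $i$, regardless of where the $p_i$ sit in $d$. Thus all preconditions of the lemma are met with this choice of $q$.

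Applying Lemma~\ref{lem-sr-mult} with $q = \emptypos$ then yields
\[\ipt_{\alpha}(d,\emptypos) \subsumedBy \ipt_{\alpha}(d[e]_{p_1}[e]_{p_2}\ldots [e]_{p_n},\emptypos).\]
To finish, I would contract the definition of $\mgt$ on both sides: by Definition~\ref{def-mgt} we have $\mgt_{\alpha}(d) = \ipt_{\alpha}(d,\emptypos)$ and likewise $\mgt_{\alpha}(d[e]_{p_1}[e]_{p_2}\ldots [e]_{p_n}) = \ipt_{\alpha}(d[e]_{p_1}[e]_{p_2}\ldots [e]_{p_n},\emptypos)$, turning the displayed subsumption into exactly the asserted $\mgt_{\alpha}(d) \subsumedBy \mgt_{\alpha}(d[e]_{p_1}[e]_{p_2}\ldots [e]_{p_n})$.

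I do not expect any genuine obstacle here, since the theorem is a direct corollary: the intricate case analysis and the MGU-decomposition machinery (Lemma~\ref{lem-dds-mult}) all live inside the core lemma. The single point deserving a moment's care is confirming that the condition $p_i \not< q$ degenerates harmlessly at $q = \emptypos$; once that is noted, the theorem reduces to a definitional rewriting via Definition~\ref{def-mgt}.
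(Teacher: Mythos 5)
Your proposal is correct and is essentially identical to the paper's own proof: both obtain the theorem as the special case $q = \emptypos$ of Lemma~\ref{lem-sr-mult}, noting that the precondition $p_i \not< q$ holds trivially at the root, and then contract the definition of $\mgt$. No gaps.
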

\begin{proof}
  The theorem expresses the special case of Lemma~\ref{lem-sr-mult} with $q =
  \emptypos$. The precondition of that lemma that for all $i \in
  \{1,\ldots,n\}$ it holds that $p_i \not < q$ then holds trivially. The
  remaining preconditions are the same as those of Lemma~\ref{lem-sr-mult}.
  The conclusion is obtained from the conclusion of Lemma~\ref{lem-sr-mult} by
  contracting the definition of $\mgt$.
\end{proof}
Theorem~\ref{thm-sr-ipt} states that simultaneously replacing a number of
occurrences of possibly different subterms in a \dterm by the same subterm
with the property that its \MGT subsumes each of the \IPTs of the original
occurrences results in an overall \dterm whose \MGT subsumes that of the
original overall \dterm.
The following theorem is like Theorem~\ref{thm-sr-ipt}, but restricted to the
case of a single replaced subterm occurrence and with a stronger precondition,
which refers to the \MGT of that subterm instead of the
\IPT.

\pagebreak %
\begin{thm}[\MGT-Based Subproof Replacement]
  \label{thm-sr-mgt}
  Let $d, e$ be \dterms and let~$\alpha$ be an axiom assignment for $d$ and
  for $e$.  For all positions $p \in \pos(d)$ it then holds that if
  \[\mgt_{\alpha}(d|_p) \subsumedBy \mgt_{\alpha}(e),\] then
  \[\mgt_{\alpha}(d) \subsumedBy \mgt_{\alpha}(d[e]_p).\]
\end{thm}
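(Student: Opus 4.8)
The plan is to obtain Theorem~\ref{thm-sr-mgt} as an immediate specialization of Theorem~\ref{thm-sr-ipt} to a single replacement position, after strengthening its hypothesis through Proposition~\ref{prop-ipt-subsumedby-mgt}. The key observation is that the hypothesis stated here in terms of $\mgt_{\alpha}(d|_p)$ is \emph{stronger} than the hypothesis of Theorem~\ref{thm-sr-ipt} stated in terms of $\ipt_{\alpha}(d,p)$, precisely because the \IPT at a position is always subsumed by the \MGT of the subterm rooted there.

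First I would invoke Proposition~\ref{prop-ipt-subsumedby-mgt} at the given position, yielding $\ipt_{\alpha}(d,p) \subsumedBy \mgt_{\alpha}(d|_p)$. Combining this with the theorem's hypothesis $\mgt_{\alpha}(d|_p) \subsumedBy \mgt_{\alpha}(e)$ via transitivity of the subsumption relation $\subsumedBy$ (immediate from its definition, since the witnessing substitutions compose), I would conclude
\[\ipt_{\alpha}(d,p) \subsumedBy \mgt_{\alpha}(e).\]
Here the chain runs in the correct direction: $\mgt_{\alpha}(e)$ subsumes $\mgt_{\alpha}(d|_p)$, which in turn subsumes $\ipt_{\alpha}(d,p)$, so $\mgt_{\alpha}(e)$ subsumes $\ipt_{\alpha}(d,p)$.

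Finally I would apply Theorem~\ref{thm-sr-ipt} in the case $n = 1$ with $p_1 = p$. The side condition that $p_i \not\leq p_j$ for all $i \neq j$ is vacuous when there is only one position, and the hypothesis required there is exactly the subsumption just derived. Its conclusion, $\mgt_{\alpha}(d) \subsumedBy \mgt_{\alpha}(d[e]_p)$, is the claim to be shown. Definedness of all the \MGTs and \IPTs involved is governed by Convention~\ref{convention-mgu-partial}, so no separate case analysis on unifiability is needed.

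I do not anticipate a genuine obstacle: the entire technical weight of the result has already been absorbed into Lemma~\ref{lem-sr-mult} and its corollary Theorem~\ref{thm-sr-ipt}. The only things to verify are the transitivity of $\subsumedBy$ and that passing from the \MGT formulation to the \IPT formulation goes in the right direction, both of which are routine once Proposition~\ref{prop-ipt-subsumedby-mgt} is in hand.
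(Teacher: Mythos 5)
Your proposal is correct and is exactly the paper's argument: the paper's own proof reads ``Follows from Theorem~\ref{thm-sr-ipt} and Proposition~\ref{prop-ipt-subsumedby-mgt},'' and your write-up simply makes explicit the transitivity step $\ipt_{\alpha}(d,p) \subsumedBy \mgt_{\alpha}(d|_p) \subsumedBy \mgt_{\alpha}(e)$ and the instantiation of Theorem~\ref{thm-sr-ipt} with $n=1$. The subsumption directions are handled correctly throughout.
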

\begin{proof}
Follows from Theorem~\ref{thm-sr-ipt} and Proposition~\ref{prop-ipt-subsumedby-mgt}.
\end{proof}

\emph{Simultaneous} replacements of subterm occurrences are essential for
reducing the compacted size of proofs according to
Theorem~\ref{thm-replace-csize-scsize}. For replacements according to
Theorem~\ref{thm-sr-mgt} these can be achieved by successive replacements of
individual occurrences. In Theorem~\ref{thm-sr-ipt} simultaneous replacements
are explicitly considered because the replacement of one occurrence according
to this theorem can invalidate the preconditions of another occurrence.
Specifically, replacing an occurrence at some position~$p_1$ may result in a
value of $\ipt_{\alpha}(d,p_2)$ for another position~$p_2$ that
\emph{subsumes} its original value such that the precondition
$\ipt_{\alpha}(d,p_2) \subsumedBy \mgt_{\alpha}(e)$ then fails. Hence,
Theorem~\ref{thm-sr-mgt} is formulated just for a \emph{single} subterm
occurrence, while in Theorem~\ref{thm-sr-ipt} simultaneous replacement of
multiple occurrences is explicitly taken into account.

The precondition of Theorem~\ref{thm-sr-mgt} is stronger then that of
Theorem~\ref{thm-sr-ipt}, permitting rewriting according to the theorem in
fewer situations. Nevertheless, Theorem~\ref{thm-sr-mgt} can be useful in
practice, in particular because its precondition $\mgt_{\alpha}(d|_p)
\subsumedBy \mgt_{\alpha}(e)$ can be evaluated on the basis of $\alpha$, $e$
and \emph{just the subterm} $d|_p$ of $d$, whereas determining
$\ipt_{\alpha}(d,p)$ for the precondition of Theorem~\ref{thm-sr-ipt} requires
also consideration of the \emph{context} of $d|_p$ in~$d$.

\subsection{Specific Reductions and Regularities}
\label{subsec-regularities}

\name{Regularity} is a well-known important device in tableau-based theorem
proving (see, e.g., \cite{handbook:ar:haehnle}): A clausal tableau is
\emph{regular} if none of its branches contains more than one occurrence of
the same literal.  Regularity is usually considered with respect to
\emph{completeness}, that is, it is shown that if there exists a proof (closed
clausal tableau), then there exists one that is regular. Intuitively, this is
justified because in a proof that is not regular, the subproof attached at the
upper occurrence of the repeated literal can be replaced by the smaller
subproof rooted at the lower one.  Hence, a non-regular proof can be
\emph{reduced} to a smaller proof by replacing a subproof.  From this point of
view, regularity is just the failure of a particular form of reducibility.
In theorem proving, proofs that are not regular can be excluded from the
search space. If the objective is to shorten given proofs, the reductions
associated with regularities can be applied.

On the basis of the tools developed in the previous sections several related
forms of reduction that are suitable for proof shortening and, viewed as
regularities, suitable for proof search can be naturally specified. Some of
these are stronger than others, with weaker ones often suggesting advantage in
ease and efficiency of implementation.

We group the considered reductions into two families, depending on whether
they are based on the replacement of a single subproof occurrence with a
subproof of itself, discussed in Sect.~\ref{subsubsec-red-rs}, or based on
the replacement of all occurrences of a subproof by proofs that are smaller
with respect to the compaction ordering, discussed in
Sect.~\ref{subsubsec-red-co}.

\subsubsection{Reductions Based on Replacement by a Subterm}
\label{subsubsec-red-rs}

We consider the following reductions based on the replacement of a single
subproof occurrence with a subproof of itself.

\begin{defn}%
  \label{def-red-single}
  Let $d$ be a \dterm and let $\alpha$ be an axiom assignment for~$d$. For
  positions $p, p' \in \pos(d)$ such that $p < p'$, we say that
  the \dterm $d' \eqdef d[d|_{p'}]_p$ \defname{is obtained from $d$ for
  $\alpha$ by}
  \smallskip

  \subdef{def-red-rsii} \defname{\IS-reduction}, if $\ipt_{\alpha}(d,p') =
  \ipt_{\alpha}(d,p)$.

  \subdef{def-red-rsmm} \defname{\MS-reduction}, if $\mgt_{\alpha}(d|_p)
  \subsumedBy \mgt_{\alpha}(d|_{p'})$.

  \subdef{def-red-rsim} \defname{\XS-reduction}, if $\ipt_{\alpha}(d,p)
  \subsumedBy \mgt_{\alpha}(d|_{p'})$.

  \smallskip
  \noindent
  The \dterm~$d$ is called \defname{$X$-reducible} (where $X$ is \IS, \MS or
  \XS) for~$\alpha$ if and only if there exist positions $p,p'$ such that
  $d[d|_{p'}]_p$ is obtained by $X$-reduction from~$d$ for~$\alpha$.
  Otherwise, $d$ is called \defname{$X$-regular}.

\end{defn}

In the names of the defined reductions, \name{I} and \name{M} indicate
characterization solely in terms of IPTs and MGTs, respectively, and \name{S}
indicates replacement of a \name{single} subproof occurrence, contrasted
with \name{C} discussed below in Sect.~\ref{subsubsec-red-co}.

\begin{examp}%
  \label{examp-red-simple}
  The \dterm $d \eqdef \D(\D(\D(1,1),1),1)$ when considered for
  \name{Syll-Simp} (see Sect.~\ref{subsec-comparing})
  as axiom is \IS-, \MS- as well as
  \XS-reducible. For all three reductions the respective positions are $p =
  \emptypos$ and $p' = 1.2$. Hence $d|_p = d$ and $d|_{p'} = 1$ and the
  \dterm~$d'$ obtained from the reduction is just~$1$. As a tree in
  indentation representation \cite[Section~2.3, Figure~20c]{knuth:1}~$d$ with
  associated MGTs and IPTs can be depicted as follows.
  {\small
    \[
    \setlength{\fboxsep}{0pt}
    \begin{array}{l}
      \trs{0}\trsbox{21}{\g{CCCpqrCqr}  \trslash  \g{CCCpqrCqr}}\\
      \trs{1}  \g{CpCCCqrsCrs}  \trslash  \g{CCCCstuCtuCCCpqrCqr}\\
      \trs{2}    \g{CpCqp}  \trslash  \g{CCCCpqrCqrCCCC\trskip{stuCtuCCCpqrCqr}}\\
      \trs{3}      \g{CCCpqrCqr}  \trslash  \g{CCCCvCCCstuCtu\trskip{CCCpqrCqrCCCCstuCtuCCCpqrCqrCCCCpqrCqrCCCCstuCtuCCCpqrCqr}}\\
      \trs{3}      \g{CCCpqrCqr}  \trslash  \g{CCCvCCCstuCtuC\trskip{CCpqrCqrCCCCstuCtuCCCpqrCqr}}\\
      \trs{2}    \trsgray{19}{\g{CCCpqrCqr}  \trslash  \g{CCCpqrCqr}}\\
      \trs{1}  \g{CCCpqrCqr}  \trslash  \g{CCCstuCtu}\\
    \end{array}
    \]\par}%
  \noindent The nodes of the tree appear here from top to bottom in the order
  in which they are visited by pre-order traversal. We represent each node at
  a position~$q$ with the argument term of the \MGT of $d|_q$ and, separated
  by a slash, the argument term of the \IPT of~$d$ at~$q$. These argument
  terms are written in \Lukasiewicz's notation. Variables are renamed to
  $p,q,r,\ldots$, (unrelated to the use of $p,q$ as symbols for positions in
  \Dterms) starting freshly in each \MGT and globally (corresponding to the
  notion of \name{rigid} variables -- see, e.g., \cite{handbook:ar:haehnle})
  for the \IPTs. Long terms are only partially presented. The nodes at
  positions~$p$ and~$p'$ are highlighted by framing and gray background,
  respectively. Observe that for~$2$ as position~$p'$, represented by the
  bottom line in the tree presentation, \MS- and \XS-reduction would also be
  applicable, but not \IS-reduction.
\end{examp}

All three reductions specified in Definition~\ref{def-red-single} effect that
a subterm occurrence $d|_p$ of~$d$ is replaced by a \dterm~$d|_{p'}$, which,
because of the precondition $p < p'$, is a strict subterm of $d|_p$.
Concerning structure, it follows that, if $d'$ is obtained from $d$ by one of
these reductions, then $\tsize(d) > \tsize(d')$, $\csize(d) \geq \csize(d')$
and $\scsize(d) \geq \scsize(d')$. Concerning the associated formulas, for all
three reductions it holds that if $\mgt_{\alpha}(d)$ is defined, then also
$\mgt_{\alpha}(d')$ is defined and $\mgt_{\alpha}(d) \subsumedBy
\mgt_{\alpha}(d')$. This subsumption relationship follows for \IS-reduction
from Theorem~\ref{thm-sr-ipt} together with
Proposition~\ref{prop-ipt-subsumedby-mgt}, for \MS-reduction directly from
Theorem~\ref{thm-sr-mgt}, and for \XS-reduction directly from
Theorem~\ref{thm-sr-ipt}.

From Proposition~\ref{prop-ipt-subsumedby-mgt} it follows that \IS-reduction
and \MS-reduction both are special cases of \XS-reduction. It can be shown
with examples that of \IS- and \MS-reduction neither one is more general than
the other. \XS-reduction, however, is \emph{strictly} more general than both
of \IS- and \MS-reduction, as demonstrated with the following example.
\begin{examp}%
\label{examp-red-xs}
The \Dterm $d \eqdef \D(\D(\D(\D(1,\D(1,1)),1),1),\D(1,1))$ when considered
for \name{{\L}ukasie\-wicz} as axiom is \XS-reducible but neither \MS- nor \IS-reducible.
Analogously to Example~\ref{examp-red-simple}, the \Dterm with associated MGTs
and IPTs can be presented as follows.
{\small
\[
\setlength{\fboxsep}{0pt}
\begin{array}{l}
    \trs{0}\g{CpCCqrCrr} \trslash \g{CpCCqrCrr}\\
\trs{1}  \trsbox{28.2}{\g{CCCpqpCrp} \trslash \g{CCCCCqrCrrCrrCCqrCrrCpCCqrCrr}}\\
\trs{2}    \g{CpCCCqrqCsq} \trslash \g{CCCCstuCCusCvs\trskip{CCCCCqrCrrCrrCCqrCrrCpCCqrCrr}}\\
\trs{3}      \trsgray{26.2}{\g{CCCpCqrCCsqCtqCuCCsqCtq} \trslash \g{CCCCCCqrCrrwCC\trskip{CqrCrrCrrCCCCCqrCrrCrrCCqrCrrCpCCqrCrrCCCCstuCCusCvsCCCCCqrCrrCrrCCqrCrrCpCCqrCrr}}}\\
\trs{4}        \g{CCCpqrCCrpCsp} \trslash \g{CCCCCCCCqrCrrC\trskip{rrCCqrCrrCpCCqrCrrCCCqrCrrCrrCCCCqrCrrwCCCqrCrrCrrCCCCCCqrCrrwCCCqrCrrCrrCCCCCqrCrrCrrCCqrCrrCpCCqrCrrCCCCstuCCusCvsCCCCCqrCrrCrrCCqrCrrCpCCqrCrr}}\\
\trs{4}        \g{CCCCpqCrqCqsCtCqs} \trslash \g{CCCCCCCqrCrrCr\trskip{rCCqrCrrCpCCqrCrrCCCqrCrrCrrCCCCqrCrrwCCCqrCrrCrr}}\\
\trs{5}          \g{CCCpqrCCrpCsp} \trslash \g{CCCCCCqrCrrCrr\trskip{CCCqrCrrCrrCCCCCqrCrrCrrCCqrCrrCpCCqrCrrCCCCCCCqrCrrCrrCCqrCrrCpCCqrCrrCCCqrCrrCrrCCCCqrCrrwCCCqrCrrCrr}}\\
\trs{5}          \g{CCCpqrCCrpCsp} \trslash \g{CCCCCqrCrrCrrC\trskip{CCqrCrrCrrCCCCCqrCrrCrrCCqrCrrCpCCqrCrr}}\\
\trs{3}      \g{CCCpqrCCrpCsp} \trslash \g{CCCCCqrCrrwCCC\trskip{qrCrrCrrCCCCCqrCrrCrrCCqrCrrCpCCqrCrr}}\\
\trs{2}    \g{CCCpqrCCrpCsp} \trslash \g{CCCstuCCusCvs}\\
\trs{1}  \g{CCCCpqCrqCqsCtCqs} \trslash \g{CCCCqrCrrCrrCC\trskip{qrCrr}}\\
\trs{2}    \g{CCCpqrCCrpCsp} \trslash \g{CCCCrrqCCqrCrr\trskip{CCCCqrCrrCrrCCqrCrr}}\\
\trs{2}    \g{CCCpqrCCrpCsp} \trslash \g{CCCrrqCCqrCrr}\\
\end{array}
\]\par}
\end{examp}

\IS-reduction corresponds to the well-known notion of regularity for
rigid-variable tableaux (see, e.g., \cite{handbook:ar:haehnle}), while
\MS-reduction corresponds to forms of regularity considered for tableaux with
universal variables such as hypertableaux \cite{hypertableaux}. The strictly
more general \XS-reduction combines aspects of both.

\subsubsection{Reductions Based on the Compaction Ordering}
\label{subsubsec-red-co}

We now turn to the second family of reductions that are based on the
replacement of all occurrences of a subproof by proofs that are smaller with
respect to the compaction ordering. The underlying justifications are
Theorems~\ref{thm-sr-ipt} and~\ref{thm-replace-csize-scsize}. We define the
following notions of reduction and regularity.

\begin{defn}%
  \label{def-red-all}

  Let $d$ be a \dterm, let $e$ be a subterm of $d$ and let $\alpha$ be an
  axiom assignment for~$d$. For \dterms $e'$, we say that the \dterm
  $d' \eqdef \replace{d}{e}{e'}$ \defname{is obtained from $d$
  for $\alpha$ by}

  \smallskip

  \subdef{def-red-racm} \defname{\MC-reduction}, if $e \gtrc e'$,
  $\mgt_{\alpha}(e')$ is defined and $\mgt_{\alpha}(e) \subsumedBy
  \mgt_{\alpha}(e')$.

  \subdef{def-red-raci} \defname{\XC-reduction}, if $e \gtrc e'$,
  $\mgt_{\alpha}(e')$ is defined, and for all positions $p \in \pos(d)$ such
  that $d|_p = e$ it holds that $\ipt_{\alpha}(d,p) \subsumedBy
  \mgt_{\alpha}(e')$.

  \smallskip
  \noindent
  The \dterm~$d$ is called \defname{$X$-reducible} (where $X$ is \MC or \XC)
  for~$\alpha$ if and only if there exists a \dterm~$e'$ such that
  $\replace{d}{e}{e'}$ is obtained by $X$-reduction from~$d$ for~$\alpha$.
  Otherwise, $d$ is called \defname{$X$-regular}.
\end{defn}

In the names of the defined reductions, \name{M} indicates, as
for \defname{\MS-reduction}, characterization solely in terms of MGTs,
and \name{C} indicates replacement based on the compaction ordering.
While \MC-reduction and \XC-reduction are similar to \MS-reduction and
\XS-reduction in that they compare two MGTs or an IPT with an MGT,
respectively, they differ from these in that they are not based on the
replacement of a single subproof by a subproof of itself, but on the
replacement of \emph{all} occurrences of a subproof by a subproof that is
smaller with respect to the compaction ordering. They aim at reducing the
compacted size. Differently from the \IS-, \MS- and \XS-reductions, they do
not transfer from subterms to containing \dterms. It is, for example, possible
that a subterm of a \dterm is \XC-reducible while the \dterm itself is not
\XC-reducible. This does not come as a surprise, because a proof with smallest
compacted size among all proofs of the same theorem may have a subproof of a
lemma that has not the smallest compacted size among all proofs of the lemma.
A possibility to implement \MC- and \XC-reduction is by enumerating the set
$\{f \mid e \gtrc f\}$ as indicated in the proof of
Proposition~\ref{prop-co-number}.

If $d'$ is obtained from~$d$ by \MC- or \XC-reduction, then by
Theorem~\ref{thm-sr-mgt} or Theorem~\ref{thm-sr-ipt}, respectively, it follows
that $\mgt_{\alpha}(d) \subsumedBy \mgt_{\alpha}(d')$. Concerning structural
properties, by Theorem~\ref{thm-replace-csize-scsize} it follows that
$\csize(d) \geq \csize(d')$ and $\scsize(d) > \scsize(d')$. Combining this
with the structural effects of the reductions from
Definition~\ref{def-red-single}, we can conclude that for all the reductions
specified in Definitions~\ref{def-red-single} and~\ref{def-red-all} it holds
that
\[\la \csize(d), \scsize(d), \tsize(d) \ra > \la \csize(d'), \scsize(d'),
\tsize(d') \ra,\] where the triples of numbers are compared lexically. Hence
any succession of replacement steps with these reductions, intermingling them
arbitrarily, terminates after a finite number of steps.

\subsection{Removing Irrelevant Minor Premises: N-Simplification}
\label{subsec-simp-n}

Proofs may involve applications of \Det where the conclusion $\P y$ is
actually independent from the minor premise $\P x$. Any axiom can then serve
as a trivial minor premise. Meredith expresses this with the symbol~$\n$ as
second argument of the respective \dterm. The following function $\simpn$
specifies a simplification of \dterms with respect to an axiom
assignment~$\alpha$ that replaces subterms with $\n$ accordingly on the basis
of the preservation of the \MGT.

\begin{defn}%
  \label{def-simp-n}
  Let $d$ be a \dterm and let $\alpha$ be an axiom assignment for~$d$. Then
  the \defname{n-simplification of~$d$ with respect to~$\alpha$} is the \dterm
  $\simpn_{\alpha}(d)$, where $\simpn$ is the following function.
  \[
  \begin{array}{l@{\hspace{0.5em}}c@{\hspace{0.8em}}ll}
    \simpn_{\alpha}(d) & \eqdef & d, \text{ if } d \text{ is a \dconstant},\\
    \simpn_{\alpha}(\D(d_1,d_2)) & \eqdef & \D(\simpn_{\alpha}(d_1),\n),
     \text{if } \mgt_{\alpha}(d_1) \text{ is a variable
       or is of}\\
     && \text{the form } \i(x,t) \text{ with } x \text{ a variable
     not in } \vars(t),\\
  \simpn_{\alpha}(\D(d_1,d_2)) & \eqdef &
  \D(\simpn_{\alpha}(d_1),\simpn_{\alpha}(d_2)), \text{ else.}
  \end{array}
  \]
\end{defn}

N-simplification preserves the \MGT of subterms in all positions, except of
those that are replaced by $\n$. That is, if $d' = \simpn_{\alpha}(d)$, then
for all positions $p \in \pos(d')$ such that $d'|_p \neq \n$ it holds
$\mgt_{\alpha}(d'|_p) \variant \mgt_{\alpha}(d|_p)$. The particular effect of
n-simplification is that occurrences of complex subterms of a \dterm may be
replaced by the \dconstant $\n$, resulting in a shortened proof. We will see
examples of the effect of n-simplification in
Sects.~\ref{subsec-prop-struct} and~\ref{subsec-prover9}.

In some applications it is undesirable to have $\n$ as a special primitive
\Dterm symbol. For example, if there is originally a single proper axiom like
\Luk, the \dterms then can have two different leaf symbols, altering
combinatoric properties such as the number of different \dterms of a given
tree size or compacted size. This can be addressed by using instead of $\n$
just other primitive \Dterms that identify an arbitrary axiom, such as the
numeral~1 in previously considered example proofs. The size reduction achieved
by n-simplification is then retained, only the explicit marking of
independence from the minor premise expressed by~$\n$ is lost. When required,
however, this marking can easily be restored with an application of
conventional n-simplification, which then has just the effect of replacing
occurrences of primitive \Dterms by~$\n$.

\section{Inspecting \Lukasiewicz's Proof and its Variation by Meredith}
\label{sec-inspecting}

As noted in Sect.~\ref{subsec-luk-shortest-axiom}, \Lukasiewicz
\cite{luk:1948} has formally proven that his axiom \Luk entails \Syll, \Peirce
and \Simp, and Meredith \cite{meredith:notes:1963} presented a variation of
this proof in his framework of CD, reproduced here as
Fig.~\ref{fig-proof-mer} (p.~\pageref{fig-proof-mer}). Can we learn
something from these proofs that helps to improve ATP? Developed with only
human resources, do they lie among the vast combinatory possibilities within
some smaller space that can be characterized by certain features, regular
patterns and size restrictions of involved components? To approach these
questions, we take a close look at these proofs, inspecting each of their
subproofs for various properties.

This section provides a comprehensive analysis of these historic proofs. It
takes into account the accumulated knowledge from nearly a century of research
as well as new insights. The latter as well as the entire analysis have been
made possible due to the formal basis established in the preceding sections.
The results of our analysis are presented in a condensed form in the
Tables~\ref{tab-bigmer} and~\ref{tab-bigluk}, which are discussed in detail
throughout this section.

\subsection{The Considered Proofs}
\label{subsec-considered-proofs}

We call the two proofs considered in this section $\DMER$ and $\DLUK$.
Basically, each proof can be understood as a set of three \Dterms, one \Dterm
for each of the goal theorems, \Syll, \Peirce and \Simp, which are proven from
the axiom \Luk. The set of the three trees is represented by a single DAG with
three roots, one for each goal theorem. The DAG represents the proofs of the
three goals simultaneously such that subproofs used for more than a single
goal can be shared. In the \TPTP the three goals appear separated as problems
\tp{LCL038-1}, \tp{LCL083-1} and \tp{LCL082-1}, respectively. On occasion, we
consider information about the modularization of the original proof
presentations by Meredith and \Lukasiewicz, which is not captured by the set
of \Dterms and the respective minimal DAG alone, but would be rendered
formally by a representation as \compDterm, as discussed in
Sect.~\ref{subsubsec-modularization}.

Proof~$\DMER$ is Meredith's variation \cite{meredith:notes:1963} of
\Lukasiewicz's proof \cite{luk:1948} and is expressed with \CD.
Figure~\ref{fig-proof-mer} reproduces the presentation by Meredith.
Proof~$\DLUK$ is a \CD proof that results from a conversion of \Lukasiewicz's
proof \cite{luk:1948}, originally expressed by the method of substitution and
detachment, with explicitly annotated formula substitutions. We first
converted \Lukasiewicz's original proof straightforwardly to \CD. In the
result the structure of the detachment applications is strictly retained,
while the formula substitutions are considered only implicitly by unification
with most general unifiers. The lemma formulas of the intermediate stages, or
``theses'' \cite{luk:1948}, are then most general theorems of the respective
subproofs. These differ slightly from \Lukasiewicz’s original theses: in most
cases both are identical modulo renaming of variables, and in some cases
\Lukasiewicz's thesis is a strict instance of the most general theorem. As a
second conversion step we applied n-simplification
(Sect.~\ref{subsec-simp-n}) to eliminate trivial redundancies.
Figure~\ref{fig-proof-luk} shows the resulting proof~$\DLUK$, in the notation
by Meredith \cite{meredith:notes:1963}, arranged into intermediate steps that
match \Lukasiewicz's original presentation. Figure~\ref{fig-lukcd-ordering}
shows the label dependency ordering of that proof.

\begin{figure}
  \small
  \centering
\begin{tabular}{r@{\hspace{0.5em}}L{24em}@{\hspace{0.0em}}R{3em}@{\hspace{0.5em}}R{3em}}  
1. & $\g{CCCpqrCCrpCsp}$ & \luk{1} & \mer{1}\\
2. & $\g{CCCCpqCrqCqsCtCqs} \mereq \f{D}11$ & \subsumesluk{2} & \\
3. & $\g{CCCpCqrCCsqCtqCuCCsqCtq} \mereq \f{D}12$ & \subsumesluk{3} & \\
4. & $\g{CCCpqpCrp} \mereq \f{D}\f{D}31\mathrm{n}$ & \luk{4} & \mer{2}\\
5. & $\g{CCCpqCqrCsCqr} \mereq \f{D}14$ & \luk{5} & \\
6. & $\g{CCCpCqrCsqCtCsq} \mereq \f{D}15$ & \luk{6} & \\
7. & $\g{CCCpCqrCsCrtCuCsCrt} \mereq \f{D}16$ & \luk{7} & \\
8. & $\g{CCCpqrCqr} \mereq \f{D}\f{D}71\mathrm{n}$ & \luk{8} & \mer{3}\\
9. & $\g{CpCCpqCrq} \mereq \f{D}81$ & \luk{9} & \mer{4}\\
10. & $\g{CCCCCpqrCsrpCtp} \mereq \f{D}19$ & \luk{10} & \\
11. & $\g{CCCpqCCCqrsCtsCuCCCqrsCts} \mereq \f{D}1.10$ & \luk{11} & \\
12. & $\g{CCCpCCCqrsCtsCuqCvCuq} \mereq \f{D}1.11$ & \luk{12} & \\
13. & $\g{CCCpCqrCsCCCrtuCvuCwCsCCCrtuCvu} \mereq \f{D}1.12$ & \luk{13} & \\
14. & $\g{CCCpqCrsCCCqtsCrs} \mereq \f{D}\f{D}13.1\mathrm{n}$ & \luk{14} & \mer{5}\\
15. & $\g{CCCpqCrsCCpsCrs} \mereq \f{D}14.1$ & \luk{15} & \mer{6}\\
16. & $\g{CCpCqrCCCpsrCqr} \mereq \f{D}15.9$ & \luk{16} & \mer{7}\\
17. & $\g{CCCCCpqrsCtpCCrpCtp} \mereq \f{D}16.1$ & \luk{17} & \mer{8}\\
18. & $\g{CCCCpqCrqCCCqsptCuCCCqspt} \mereq \f{D}1.17$ & \luk{18} & \mer{10}\\
19. & $\g{CCCCpqrCsqCCCqtsCpq} \mereq \f{D}\f{D}18.18.\mathrm{n}$ & \subsumesluk{19} & \mer{11}\\
20. & $\g{CCCCpqrCsqCCCqtpCsq} \mereq \f{D}14.19$ & \subsumesluk{20} & \mer{12}\\
21. & $\g{CCCCpqrsCCsqCpq} \mereq \f{D}20.15$ & \subsumesluk{21} & \mer{13}\\
22. & $\g{Cpp} \mereq \f{D}\f{D}54\mathrm{n}$ & \luk{22} & \subsumesmer{9}\\
23. & $\g{CCCpqrCCrpp} \mereq \f{D}20.22$ & \luk{23} & \mer{14}\\
24. & $\g{CpCCpqq} \mereq \f{D}8.23$ & \luk{24} & \mer{15}\\
25. & $\g{CCpqCCCprqq} \mereq \f{D}15.24$ & \luk{25} & \mer{16}\\
26. & $\g{CCCCpqrCCCpsqqCCCpsqq} \mereq \f{D}25.25$ & \subsumesluk{26} & \mer{16\pr}\\
* 27. & $\g{CpCqp} \mereq \f{D}88$ & \luk{27} & \mer{19}\\
* 28. & $\g{CCCpqpp} \mereq \f{D}25.22$ & \luk{28} & \mer{18}\\
* 29. & $\g{CCpqCCqrCpr} \mereq \f{D}\f{D}21.26.21$ & \luk{29} & \mer{17}\\
\end{tabular}
  \vspace{6pt}
  \caption{The proof~$\DLUK$, that is, \Lukasiewicz's proof from
    his 1948 paper \cite{luk:1948} converted to \CD and n-simplified, broken into 29 steps
    matching the original presentation. The notation follows Meredith
    \cite{meredith:notes:1963}. The shown lemmas are MGTs of the respective
    subproofs. The two right columns indicate corresponding proof steps in
    \Lukasiewicz's original presentation \cite{luk:1948} and in Meredith's
    variation \cite{meredith:notes:1963}, reproduced here as
    Fig.~\ref{fig-proof-mer} (p.~\pageref{fig-proof-mer}). The meaning of
    the labels is explained in Sect.~\ref{subsubsec-mer-luk-labels}.}
  \label{fig-proof-luk}
\end{figure}

\begin{figure}
  \centering
  \begin{tikzpicture}[>=latex',line join=bevel,scale=0.455]
      \pgfsetlinewidth{1bp}
\pgfsetcolor{black}
  \draw [] (29.891bp,44.0bp) .. controls (27.385bp,44.0bp) and (24.698bp,44.0bp)  .. (22.188bp,44.0bp);
  \draw [] (59.891bp,44.0bp) .. controls (57.385bp,44.0bp) and (54.698bp,44.0bp)  .. (52.188bp,44.0bp);
  \draw [] (89.891bp,44.0bp) .. controls (87.385bp,44.0bp) and (84.698bp,44.0bp)  .. (82.188bp,44.0bp);
  \draw [] (119.89bp,44.0bp) .. controls (117.39bp,44.0bp) and (114.7bp,44.0bp)  .. (112.19bp,44.0bp);
  \draw [] (149.89bp,58.696bp) .. controls (147.39bp,56.638bp) and (144.7bp,54.43bp)  .. (142.19bp,52.369bp);
  \draw [] (179.89bp,67.0bp) .. controls (177.39bp,67.0bp) and (174.7bp,67.0bp)  .. (172.19bp,67.0bp);
  \draw [] (209.89bp,67.0bp) .. controls (207.39bp,67.0bp) and (204.7bp,67.0bp)  .. (202.19bp,67.0bp);
  \draw [] (240.52bp,102.72bp) .. controls (237.6bp,97.088bp) and (234.39bp,90.892bp)  .. (231.47bp,85.263bp);
  \draw [] (269.89bp,120.36bp) .. controls (267.39bp,120.45bp) and (264.7bp,120.55bp)  .. (262.19bp,120.64bp);
  \draw [] (299.89bp,120.0bp) .. controls (297.39bp,120.0bp) and (294.7bp,120.0bp)  .. (292.19bp,120.0bp);
  \draw [] (329.89bp,120.0bp) .. controls (327.39bp,120.0bp) and (324.7bp,120.0bp)  .. (322.19bp,120.0bp);
  \draw [] (359.89bp,120.0bp) .. controls (357.39bp,120.0bp) and (354.7bp,120.0bp)  .. (352.19bp,120.0bp);
  \draw [] (389.89bp,120.0bp) .. controls (387.39bp,120.0bp) and (384.7bp,120.0bp)  .. (382.19bp,120.0bp);
  \draw [] (419.89bp,120.0bp) .. controls (417.39bp,120.0bp) and (414.7bp,120.0bp)  .. (412.19bp,120.0bp);
  \draw [] (449.89bp,120.0bp) .. controls (447.39bp,120.0bp) and (444.7bp,120.0bp)  .. (442.19bp,120.0bp);
  \draw [] (479.89bp,120.0bp) .. controls (477.39bp,120.0bp) and (474.7bp,120.0bp)  .. (472.19bp,120.0bp);
  \draw [] (509.89bp,119.36bp) .. controls (507.39bp,119.45bp) and (504.7bp,119.55bp)  .. (502.19bp,119.64bp);
  \draw [] (539.89bp,119.0bp) .. controls (537.39bp,119.0bp) and (534.7bp,119.0bp)  .. (532.19bp,119.0bp);
  \draw [] (569.89bp,118.36bp) .. controls (567.39bp,118.45bp) and (564.7bp,118.55bp)  .. (562.19bp,118.64bp);
  \draw [] (659.91bp,118.0bp) .. controls (643.02bp,118.0bp) and (609.23bp,118.0bp)  .. (592.25bp,118.0bp);
  \draw [] (539.79bp,37.224bp) .. controls (528.57bp,30.439bp) and (509.76bp,21.0bp)  .. (492.0bp,21.0bp) .. controls (492.0bp,21.0bp) and (492.0bp,21.0bp)  .. (190.0bp,21.0bp) .. controls (172.24bp,21.0bp) and (153.43bp,30.439bp)  .. (142.21bp,37.224bp);
  \draw [] (601.14bp,79.035bp) .. controls (597.87bp,85.689bp) and (594.17bp,93.226bp)  .. (590.9bp,99.887bp);
  \draw [] (599.9bp,58.039bp) .. controls (589.38bp,54.957bp) and (572.99bp,50.153bp)  .. (562.38bp,47.042bp);
  \draw [] (629.89bp,61.0bp) .. controls (627.39bp,61.0bp) and (624.7bp,61.0bp)  .. (622.19bp,61.0bp);
  \draw [] (659.89bp,61.0bp) .. controls (657.39bp,61.0bp) and (654.7bp,61.0bp)  .. (652.19bp,61.0bp);
  \draw [] (689.89bp,68.029bp) .. controls (687.39bp,67.044bp) and (684.7bp,65.988bp)  .. (682.19bp,65.002bp);
  \draw [] (239.89bp,67.0bp) .. controls (237.39bp,67.0bp) and (234.7bp,67.0bp)  .. (232.19bp,67.0bp);
  \draw [] (689.89bp,33.524bp) .. controls (687.39bp,37.373bp) and (684.7bp,41.5bp)  .. (682.19bp,45.354bp);
  \draw [] (719.9bp,99.006bp) .. controls (709.38bp,103.18bp) and (692.99bp,109.67bp)  .. (682.38bp,113.88bp);
  \draw [] (719.89bp,86.696bp) .. controls (717.39bp,84.638bp) and (714.7bp,82.43bp)  .. (712.19bp,80.369bp);
\begin{scope}
  \definecolor{strokecol}{rgb}{0.0,0.0,0.0}
  \pgfsetstrokecolor{strokecol}
  \draw (11.0bp,44.0bp) node {1};
\end{scope}
\begin{scope}
  \definecolor{strokecol}{rgb}{0.0,0.0,0.0}
  \pgfsetstrokecolor{strokecol}
  \draw (41.0bp,44.0bp) node {2};
\end{scope}
\begin{scope}
  \definecolor{strokecol}{rgb}{0.0,0.0,0.0}
  \pgfsetstrokecolor{strokecol}
  \draw (71.0bp,44.0bp) node {3};
\end{scope}
\begin{scope}
  \definecolor{strokecol}{rgb}{0.0,0.0,0.0}
  \pgfsetstrokecolor{strokecol}
  \draw (101.0bp,44.0bp) node {4};
\end{scope}
\begin{scope}
  \definecolor{strokecol}{rgb}{0.0,0.0,0.0}
  \pgfsetstrokecolor{strokecol}
  \draw (131.0bp,44.0bp) node {5};
\end{scope}
\begin{scope}
  \definecolor{strokecol}{rgb}{0.0,0.0,0.0}
  \pgfsetstrokecolor{strokecol}
  \draw (161.0bp,67.0bp) node {6};
\end{scope}
\begin{scope}
  \definecolor{strokecol}{rgb}{0.0,0.0,0.0}
  \pgfsetstrokecolor{strokecol}
  \draw (191.0bp,67.0bp) node {7};
\end{scope}
\begin{scope}
  \definecolor{strokecol}{rgb}{0.0,0.0,0.0}
  \pgfsetstrokecolor{strokecol}
  \draw (221.0bp,67.0bp) node {8};
\end{scope}
\begin{scope}
  \definecolor{strokecol}{rgb}{0.0,0.0,0.0}
  \pgfsetstrokecolor{strokecol}
  \draw (251.0bp,121.0bp) node {9};
\end{scope}
\begin{scope}
  \definecolor{strokecol}{rgb}{0.0,0.0,0.0}
  \pgfsetstrokecolor{strokecol}
  \draw (281.0bp,120.0bp) node {10};
\end{scope}
\begin{scope}
  \definecolor{strokecol}{rgb}{0.0,0.0,0.0}
  \pgfsetstrokecolor{strokecol}
  \draw (311.0bp,120.0bp) node {11};
\end{scope}
\begin{scope}
  \definecolor{strokecol}{rgb}{0.0,0.0,0.0}
  \pgfsetstrokecolor{strokecol}
  \draw (341.0bp,120.0bp) node {12};
\end{scope}
\begin{scope}
  \definecolor{strokecol}{rgb}{0.0,0.0,0.0}
  \pgfsetstrokecolor{strokecol}
  \draw (371.0bp,120.0bp) node {13};
\end{scope}
\begin{scope}
  \definecolor{strokecol}{rgb}{0.0,0.0,0.0}
  \pgfsetstrokecolor{strokecol}
  \draw (401.0bp,120.0bp) node {14};
\end{scope}
\begin{scope}
  \definecolor{strokecol}{rgb}{0.0,0.0,0.0}
  \pgfsetstrokecolor{strokecol}
  \draw (431.0bp,120.0bp) node {15};
\end{scope}
\begin{scope}
  \definecolor{strokecol}{rgb}{0.0,0.0,0.0}
  \pgfsetstrokecolor{strokecol}
  \draw (461.0bp,120.0bp) node {16};
\end{scope}
\begin{scope}
  \definecolor{strokecol}{rgb}{0.0,0.0,0.0}
  \pgfsetstrokecolor{strokecol}
  \draw (491.0bp,120.0bp) node {17};
\end{scope}
\begin{scope}
  \definecolor{strokecol}{rgb}{0.0,0.0,0.0}
  \pgfsetstrokecolor{strokecol}
  \draw (521.0bp,119.0bp) node {18};
\end{scope}
\begin{scope}
  \definecolor{strokecol}{rgb}{0.0,0.0,0.0}
  \pgfsetstrokecolor{strokecol}
  \draw (551.0bp,119.0bp) node {19};
\end{scope}
\begin{scope}
  \definecolor{strokecol}{rgb}{0.0,0.0,0.0}
  \pgfsetstrokecolor{strokecol}
  \draw (581.0bp,118.0bp) node {20};
\end{scope}
\begin{scope}
  \definecolor{strokecol}{rgb}{0.0,0.0,0.0}
  \pgfsetstrokecolor{strokecol}
  \draw (671.0bp,118.0bp) node {21};
\end{scope}
\begin{scope}
  \definecolor{strokecol}{rgb}{0.0,0.0,0.0}
  \pgfsetstrokecolor{strokecol}
  \draw (551.0bp,44.0bp) node {22};
\end{scope}
\begin{scope}
  \definecolor{strokecol}{rgb}{0.0,0.0,0.0}
  \pgfsetstrokecolor{strokecol}
  \draw (611.0bp,61.0bp) node {23};
\end{scope}
\begin{scope}
  \definecolor{strokecol}{rgb}{0.0,0.0,0.0}
  \pgfsetstrokecolor{strokecol}
  \draw (641.0bp,61.0bp) node {24};
\end{scope}
\begin{scope}
  \definecolor{strokecol}{rgb}{0.0,0.0,0.0}
  \pgfsetstrokecolor{strokecol}
  \draw (671.0bp,61.0bp) node {25};
\end{scope}
\begin{scope}
  \definecolor{strokecol}{rgb}{0.0,0.0,0.0}
  \pgfsetstrokecolor{strokecol}
  \draw (701.0bp,72.0bp) node {26};
\end{scope}
\begin{scope}
  \definecolor{strokecol}{rgb}{0.0,0.0,0.0}
  \pgfsetstrokecolor{strokecol}
  \draw (251.0bp,67.0bp) node {27};
\end{scope}
\begin{scope}
  \definecolor{strokecol}{rgb}{0.0,0.0,0.0}
  \pgfsetstrokecolor{strokecol}
  \draw (701.0bp,18.0bp) node {28};
\end{scope}
\begin{scope}
  \definecolor{strokecol}{rgb}{0.0,0.0,0.0}
  \pgfsetstrokecolor{strokecol}
  \draw (731.0bp,95.0bp) node {29};
\end{scope}
  \end{tikzpicture}%
  \vspace{2pt}
  \caption{The label dependency ordering $<_{\compd}$ of \Lukasiewicz's proof
    \cite{luk:1948}. In other words, the label dependency ordering of
    proof~$\DLUK$ when viewed as \compDterm~$\compd$ with a domain that corresponds
    to \Lukasiewicz's original presentation.}
  \label{fig-lukcd-ordering}  
\end{figure}

\subsection{Examined Properties}
\label{subsec-properties}

In the following subsections we are going to inspect each subproof of $\DMER$
and $\DLUK$ in view of various properties. We use the term \name{property}
there informally in a generic sense. More precisely, we consider properties of
the subproof's structure, properties of the formula proven by the subproof and
properties which take contexts into account, specifically the embedding of
occurrences of the subproof into the overall proof and global contexts such as
other proofs of the formula proven by the subproof or uses of this formula in
the relevant literature.

The considered properties can be grouped into several families. We start with
discussing aspects around labeling and naming: which lemmas are explicitly
exposed and which are taken as implicit intermediate step; what cross
correspondences are among proofs and with formulas well-known in other
contexts. Next we examine structural properties of the \DTerm and then
syntactic properties of the MGTs, that is, the lemmas proven by the subproofs.
We continue with considering properties that relate a subproof to all possible
proofs of its MGT, for example, to compare with a minimal \Dterm measure such
as compacted size required to prove the MGT. Then we will look at regularity
properties as discussed in Sect.~\ref{subsec-regularities} and finally at
properties of the \IPTs, which are associated with each occurrence of a
subproof in the overall proof when viewed as tree.

Values of the properties are shown for $\DMER$ in Table~\ref{tab-bigmer} and
for $\DLUK$ in Table~\ref{tab-bigluk}. These tables contain a row for each
distinct subproof, that is, for each subterm of at least one of the three
\Dterms corresponding to the three goal theorems. Even if a subproof is
referenced multiple times in the proof, it is represented just by a single
line. The number of rows is thus the compacted size of the set of the three
\Dterms, plus one for the axiom.

In the following subsections we specify these properties and discuss their
values for $\DMER$ and $\DLUK$. Because in both proofs we have \Luk as a
single axiom, we quietly assume the corresponding axiom assignments, not
distinguishing between proofs and \Dterms, their structural component. For
reference and use in the table headers, each property is given a short
identifier.

\subsection{Labels and Names of Formulas}

Properties concerning labels and names of formulas refer to the MGT proven by
the subproof, independently of the subproof itself. We consider the
concordance with the presentations by Meredith and \Lukasiewicz as well as
appearances of the MGT formula in the literature, independently from the
particular proofs.

\attitem{\xatt{MER}, \xatt{\LUK}: Corresponding Step in Meredith's and
  \Lukasiewicz's Proof Presentation}
\label{subsubsec-mer-luk-labels}

Properties \xatt{MER} and \xatt{\LUK} show the number of the corresponding
step in the original proof presentations by Meredith
\cite{meredith:notes:1963}, indicated by the prefix \textit{M}, and
\Lukasiewicz \cite{luk:1948}, indicated by the prefix \textit{\L}. In some
cases the referenced lemma in the original presentation is not the MGT but
just a strict instance of the MGT, which is indicated by prefixing the
reference with $\strictlySubsumes$.\footnote{The use of the symbol
$\strictlySubsumes$ here is an adaptation of $t \subsumedBy s$, which stands
for $s$ \name{subsumes}~$t$ (Sect.~\ref{subsec-notation}).}

Through presence or absence of an entry in the table, the respective columns
show for which of the subproofs the proof goal is explicitly displayed as a
lemma in the original presentation and which are considered just as implicit
``unnamed'' intermediate steps, as discussed in
Sect.~\ref{subsubsec-modularization}.

To indicate a specific correspondence of two steps in $\DMER$ and $\DLUK$ we
use \mer{16\pr} as an additional reference to the most general theorem of
subproof $\f{D}16.16$ of the proof of step~17 in $\DMER$
(Fig.~\ref{fig-proof-mer}, p.~\pageref{fig-proof-mer}), which is strictly
more general than \Lukasiewicz's \cite{luk:1948} thesis~26. \mer{16\pr}
appears as the most general theorem of subproofs~$30$ and~$31$ in
Tables~\ref{tab-bigmer} and~\ref{tab-bigluk}, respectively.

The cross reference columns $\xatt{\LUK}$ in Table~\ref{tab-bigmer} and
$\xatt{MER}$ in \ref{tab-bigluk} include gray bullets to indicate that the MGT
of the respective row is also the MGT of some subproof of the referenced
proof, but is not made explicit with a label there. Actually all fields of the
cross reference columns that do not contain a label are filled with a gray
bullet, with exception of subproof 26 of $\DLUK$, whose MGT does not appear in
$\DMER$.

\begin{sidewaystable}[p]
  \centering
  \small
  \caption{Properties of all subproofs of $\DMER$ (Fig.~\ref{fig-proof-mer}),
    Meredith's variation \cite{meredith:notes:1963} of \Lukasiewicz's proof
    \cite{luk:1948}.}
  \label{tab-bigmer}
  \setlength{\tabcolsep}{3.1pt}
  \renewcommand{\arraystretch}{0.89}
  \rowcolors{2}{gray!20}{white}
  \scalebox{0.97}{\begin{tabular}{rlrrrrrrrrccccrrrcrrccrrrrr}
 & \tabatt{} & \tabatt{MER} & \tabatt{{\L}UK} & \tabatt{NN} & \tabatt{DC} & \tabatt{DT} & \tabatt{DH} & \tabatt{DI} & \tabatt{DR} & \tabatt{DS} & \tabatt{DP} & \tabatt{DK$_{L}$} & \tabatt{DK$_{R}$} & \tabatt{FC} & \tabatt{FT} & \tabatt{FH} & \tabatt{FO} & \tabatt{MC} & \tabatt{MT} & \tabatt{RS} & \tabatt{RC} & \tabatt{IT$_{U}$} & \tabatt{IT$_{M}$} & \tabatt{IH$_{U}$} & \tabatt{IH$_{M}$} & \tabatt{}\\\midrule
1. & $1$ & \mer{1} & \luk{1} & \spec{1} & 0 & 0 & 0 & 17 & 554 & -- & \textbullet & 0 & 0 & 6 & 6 & 3 & \textbullet & 0 & 0 & \textbullet & \textbullet & 4451 & 203 & 18 & 11 & 1.\\
2. & $\f{D}11$ &  & \subsumesluk{2} &  & 1 & 1 & 1 & 1 & 45 & $1{=}1$ & \textbullet & 1 & 1 & 7 & 8 & 4 & \textbullet & 1 & 1 & \textbullet & \textbullet & 1640 & 220 & 17 & 12 & 2.\\
3. & $\f{D}12$ &  & \subsumesluk{3} &  & 2 & 2 & 2 & 1 & 45 & $1\lhd$ & \textbullet & 1 & 2 & 8 & 11 & 4 & \textbullet & 2 & 2 & \textbullet & \textbullet & 1881 & 252 & 17 & 12 & 3.\\
4. & $\f{D}31$ &  & \textcolor{black!30}{\textbullet} &  & 3 & 3 & 3 & 1 & 45 & $\rhd1$ & \textbullet & 2 & 2 & 5 & 5 & 4 & \textcolor{black!30}{\textbullet} & 3 & 3 & \textbullet & \textbullet & 689 & 92 & 16 & 11 & 4.\\
5. & $\f{D}4\mathrm{n}$ & \mer{2} & \luk{4} &  & 4 & 4 & 4 & 1 & 45 & $\rhd\n$ & \textbullet & 3 & 2 & 4 & 4 & 3 & \textbullet & 4 & 4 & \textbullet & \textbullet & 688 & 91 & 15 & 10 & 5.\\
6. & $\f{D}15$ &  & \luk{5} &  & 5 & 5 & 5 & 1 & 45 & $1\lhd$ & \textbullet & 3 & 2 & 5 & 6 & 3 & \textbullet & 5 & 5 & \textbullet & \textbullet & 1667 & 198 & 15 & 10 & 6.\\
7. & $\f{D}16$ &  & \luk{6} &  & 6 & 6 & 6 & 1 & 45 & $1\lhd$ & \textbullet & 3 & 3 & 6 & 7 & 4 & \textbullet & 6 & 6 & \textbullet & \textbullet & 1802 & 208 & 16 & 11 & 7.\\
8. & $\f{D}17$ &  & \luk{7} &  & 7 & 7 & 7 & 1 & 45 & $1\lhd$ & \textbullet & 3 & 4 & 7 & 9 & 4 & \textbullet & 7 & 7 & \textbullet & \textbullet & 2648 & 303 & 16 & 11 & 8.\\
9. & $\f{D}81$ &  & \textcolor{black!30}{\textbullet} & \spec{2} & 8 & 8 & 8 & 1 & 45 & $\rhd1$ & \textbullet & 3 & 4 & 5 & 5 & 4 & \textcolor{black!30}{\textbullet} & 8 & 8 & \textbullet & \textbullet & 1032 & 119 & 15 & 10 & 9.\\
10. & $\f{D}9\mathrm{n}$ & \mer{3} & \luk{8} & \spec{3} & 9 & 9 & 9 & 5 & 45 & $\rhd\n$ & \textbullet & 3 & 4 & 4 & 4 & 3 & \textbullet & 9 & 9 & \textbullet & \textbullet & 1031 & 118 & 14 & 9 & 10.\\
11. & $\f{D}10.1$ & \mer{4} & \luk{9} & \spec{4} & 10 & 10 & 10 & 2 & 37 & $\rhd1$ & \textbullet & 4 & 4 & 4 & 4 & 3 & \textbullet & 10 & 10 & \textbullet & \textbullet & 448 & 60 & 13 & 9 & 11.\\
12. & $\f{D}1.11$ &  & \luk{10} &  & 11 & 11 & 11 & 1 & 23 & $1\lhd$ & \textbullet & 4 & 4 & 7 & 7 & 5 & \textbullet & 11 & 11 & \textbullet & \textbullet & 498 & 73 & 14 & 10 & 12.\\
13. & $\f{D}1.12$ &  & \luk{11} &  & 12 & 12 & 12 & 1 & 23 & $1\lhd$ & \textbullet & 4 & 4 & 8 & 12 & 5 & \textbullet & 12 & 12 & \textbullet & \textbullet & 1157 & 168 & 14 & 10 & 13.\\
14. & $\f{D}1.13$ &  & \luk{12} &  & 13 & 13 & 13 & 1 & 23 & $1\lhd$ & \textbullet & 4 & 4 & 9 & 10 & 6 & \textbullet & $\iv{12}{13}$ & 13 & \textbullet & \textbullet & 1050 & 159 & 15 & 11 & 14.\\
15. & $\f{D}1.14$ &  & \luk{13} &  & 14 & 14 & 14 & 1 & 23 & $1\lhd$ & \textbullet & 4 & 5 & 10 & 15 & 6 & \textbullet & $\iv{12}{14}$ & 14 & \textbullet & \textbullet & 1657 & 246 & 15 & 11 & 15.\\
16. & $\f{D}15.1$ &  & \textcolor{black!30}{\textbullet} &  & 15 & 15 & 15 & 1 & 23 & $\rhd1$ & \textbullet & 4 & 5 & 8 & 9 & 5 & \textcolor{black!30}{\textbullet} & $\iv{12}{15}$ & 15 & \textbullet & \textbullet & 684 & 100 & 14 & 10 & 16.\\
17. & $\f{D}16.\mathrm{n}$ & \mer{5} & \luk{14} &  & 16 & 16 & 16 & 2 & 23 & $\rhd\n$ & \textbullet & 4 & 5 & 7 & 8 & 4 & \textbullet & $\iv{12}{16}$ & 16 & \textbullet & \textbullet & 683 & 99 & 13 & 9 & 17.\\
18. & $\f{D}17.1$ & \mer{6} & \luk{15} &  & 17 & 17 & 17 & 3 & 18 & $\rhd1$ & \textbullet & 4 & 5 & 6 & 7 & 3 & \textbullet & $\iv{12}{17}$ & 17 & \textbullet & \textbullet & 395 & 56 & 12 & 8 & 18.\\
19. & $\f{D}18.11$ & \mer{7} & \luk{16} &  & 18 & 28 & 18 & 1 & 14 & $\rhd$ & -- & 5 & 5 & 6 & 7 & 4 & \textbullet & $\iv{12}{14}$ & 14 & \textbullet & \textbullet & 209 & 61 & 11 & 9 & 19.\\
20. & $\f{D}19.1$ & \mer{8} & \luk{17} &  & 19 & 29 & 19 & 2 & 14 & $\rhd1$ & -- & 6 & 5 & 8 & 9 & 5 & \textbullet & $\iv{12}{15}$ & 15 & \textbullet & \textbullet & 132 & 38 & 10 & 8 & 20.\\
21. & $\f{D}1.20$ & \mer{10} & \luk{18} &  & 20 & 30 & 20 & 2 & 10 & $1\lhd$ & -- & 6 & 5 & 9 & 12 & 5 & \textbullet & $\iv{12}{16}$ & 16 & \textbullet & \textbullet & 158 & 47 & 10 & 8 & 21.\\
22. & $\f{D}21.21$ &  & \textcolor{black!30}{\textbullet} &  & 21 & 61 & 21 & 1 & 5 & $=$ & -- & 6 & 5 & 9 & 10 & 5 & \textcolor{black!30}{\textbullet} & $\iv{12}{17}$ & $\iv{24}{33}$ & \textbullet & \textbullet & 53 & 16 & 9 & 7 & 22.\\
23. & $\f{D}22.\mathrm{n}$ & \mer{11} & \subsumesluk{19} &  & 22 & 62 & 22 & 1 & 5 & $\rhd\n$ & -- & 6 & 5 & 8 & 9 & 4 & \textbullet & $\iv{12}{18}$ & $\iv{24}{34}$ & \textbullet & \textbullet & 52 & 15 & 8 & 6 & 23.\\
24. & $\f{D}17.23$ & \mer{12} & \subsumesluk{20} &  & 23 & 79 & 23 & 2 & 5 & $\lhd$ & -- & 6 & 5 & 8 & 9 & 4 & \textbullet & $\iv{12}{23}$ & $\iv{23}{51}$ & \textbullet & \textbullet & 57 & 16 & 7 & 5 & 24.\\
25. & $\f{D}24.18$ & \mer{13} & \subsumesluk{21} &  & 24 & 97 & 24 & 2 & 2 & $\rhd$ & -- & 6 & 5 & 6 & 7 & 4 & \textbullet & $\iv{12}{24}$ & $\iv{24}{69}$ & \textbullet & \textbullet & 27 & 17 & 6 & 5 & 25.\\
26. & $\f{D}20.10$ & \mer{9} & \subsumedByluk{22} & \spec{5} & 20 & 39 & 20 & 2 & 4 & $\rhd$ & -- & 7 & 5 & 2 & 3 & 2 & \textbullet & 6 & 8 & \textbullet & -- & 27 & 7 & 6 & 4 & 26.\\
27. & $\f{D}24.26$ & \mer{14} & \luk{23} & \spec{8} & 25 & 119 & 24 & 2 & 3 & $>_{\mathrm{c}}$ & -- & 7 & 5 & 5 & 5 & 3 & \textbullet & $\iv{13}{25}$ & $\iv{24}{91}$ & \textbullet & \textbullet & 24 & 7 & 6 & 4 & 27.\\
28. & $\f{D}10.27$ & \mer{15} & \luk{24} & \spec{9} & 26 & 129 & 25 & 1 & 2 & $\lhd$ & -- & 7 & 5 & 3 & 3 & 3 & \textbullet & $\iv{13}{26}$ & $\iv{24}{101}$ & \textbullet & \textbullet & 19 & 12 & 6 & 5 & 28.\\
29. & $\f{D}18.28$ & \mer{16} & \luk{25} & \spec{10} & 27 & 147 & 26 & 2 & 2 & $\lhd$ & -- & 7 & 5 & 5 & 5 & 4 & \textbullet & $\iv{12}{26}$ & $\iv{24}{36}$ & \textbullet & \textbullet & 19 & 12 & 6 & 5 & 29.\\
30. & $\f{D}29.29$ & \mer{16\pr} & \subsumesluk{26} &  & 28 & 295 & 27 & 1 & 1 & $=$ & -- & 7 & 6 & 7 & 10 & 5 & \textbullet & $\iv{12}{27}$ & $\iv{24}{239}$ & \textbullet & \textbullet & 13 & 13 & 5 & 5 & 30.\\
31. & $\f{D}25.30$ &  & \textcolor{black!30}{\textbullet} &  & 30 & 393 & 28 & 1 & 1 & $<_{\mathrm{c}}$ & -- & 7 & 7 & 7 & 7 & 5 & \textbullet & $\iv{12}{29}$ & $\iv{23}{121}$ & \textbullet & \textbullet & 13 & 13 & 5 & 5 & 31.\\
32. & $\f{D}31.25$ & \mer{17} & \luk{29} & \spec{11} & 31 & 491 & 29 & 0 & 1 & $\rhd$ & -- & 7 & 7 & 5 & 5 & 3 & \textbullet & $\iv{12}{22}$ & $\iv{24}{64}$ & \textbullet & \textbullet & 5 & 5 & 3 & 3 & 32.\\
33. & $\f{D}27.26$ & \mer{18} & \luk{28} & \spec{12} & 26 & 159 & 25 & 0 & 1 & $\rhd$ & -- & 7 & 5 & 3 & 3 & 3 & \textbullet & 11 & 15 & -- & \textbullet & 3 & 3 & 3 & 3 & 33.\\
34. & $\f{D}10.10$ & \mer{19} & \luk{27} & \spec{13} & 10 & 19 & 10 & 0 & 1 & $=$ & -- & 4 & 4 & 2 & 2 & 2 & \textbullet & 6 & 7 & \textbullet & \textbullet & 2 & 2 & 2 & 2 & 34.\\
\end{tabular}
}
\end{sidewaystable}

\attitem{\xatt{NN}: Pointer to Nicknames if it is a  Generally Often Used Formula}
\label{subsubsec-xatt-nicknames}

As noted in Sect.~\ref{sec-background}, the \Lukasiewicz school introduced
nicknames for important and often referenced formulas
\cite[p.~319]{prior:formal:logic:1962}, \cite{ulrich:legacy:2001}. With regard
to ATP it was conjectured that these may be of special importance for guiding
proof search \cite[p.~112]{ulrich:legacy:2001}. Table~\ref{tab-formula-names}
lists all those MGTs of subproofs of $\DMER$ and $\DLUK$ that are known under
such a name.
The names considered there include those collected by Ulrich
\cite{ulrich:legacy:2001}. In addition, beyond the combinators appearing
already there, also short combinator terms with
well-known combinators are listed here as names of their principal
type-scheme.\footnote{The \label{footnote-comb-labels} precise restriction was
to terms formed from up to five occurrences of the well-known combinators
$\comb{I},\comb{K},\comb{B},\comb{C},\comb{S},\comb{W}$. Combinators
$\comb{S}$ and $\comb{W}$ and terms with five occurrences were not among the
characterizations of the considered MGTs.} As a further source of ``names'' we
took \name{Thesis~1--68} from a textbook by \Lukasiewicz \cite{luk:book}.
These were often used for experiments in ATP
\cite{wos:bledsoe:91,mccune:wos:cd:1992,wos:resonance:95,wos:combining:96,cw:cdtools:2022}.
If the MGT of a subproof is a named formula in this sense, this is indicated
as property~\xatt{NN}, whose value points to the respective row of
Table~\ref{tab-formula-names}.

\begin{table}
  \centering
  \caption{``Named'' formulas that occur as MGTs of subproofs of $\DMER$ and $\DLUK$.}
  \label{tab-formula-names}
  \begin{tabular}{rll}
    Id & Formula & Names\\\midrule
    $\spec{1}$ & $\g{CCCpqrCCrpCsp}$ & \name{\Luk}\\
    $\spec{2}$ & $\g{CpCCCqrsCrs}$ & $\comb{K (C B K)}$\\
    $\spec{3}$ & $\g{CCCpqrCqr}$ & \name{Syll-Simp}, $\comb{CBK}$, \thes{19}\\
    $\spec{4}$ & $\g{CpCCpqCrq}$ & $\comb{C(BK)}$\\
    $\spec{5}$ & $\g{CCpqCpq}$ & $\comb{BI}$\\
    $\spec{6}$ & $\g{CpCqq}$ & \name{Simp*}, \name{Irrel},
    $\comb{K^*}$, $\comb{KI}$, $\comb{CK}$, \thes{63}\\
    $\spec{7}$ & $\g{Cpp}$ & \name{Id}, $\comb{I}$, \thes{16}\\
    $\spec{8}$ & $\g{CCCpqrCCrpp}$ & \name{Roll}, \thes{26}\\
    $\spec{9}$ & $\g{CpCCpqq}$ & \name{Pon}, \name{Aff}, \name{Assertion}, $\comb{I'}$, $\comb{CI}$, \thes{20}\\
    $\spec{10}$ & $\g{CCpqCCCprqq}$ & \name{Comm-Tarski}, \name{Henkin}\\
    $\spec{11}$ & $\g{CCpqCCqrCpr}$ & \name{Syll}, \name{Syl}, \name{Suffixing}, $\comb{B'}$, $\comb{CB}$, \thes{1}\\
    $\spec{12}$ & $\g{CCCpqpp}$ & \name{Peirce}, \thes{24}\\
    $\spec{13}$ & $\g{CpCqp}$ & \name{Simp}, $\comb{K}$, \thes{18} \\\midrule
  \end{tabular}
\end{table}

\subsection{Structural Properties of the \DTerm}
\label{subsec-prop-struct}

Structural Properties of the \DTerm refer to the respective subproof as \dterm
or full binary tree.

\attitem{\xatt{DC}, \xatt{DT}, \xatt{DH}: Compacted Size, Tree Size and Height}

The properties \xatt{DC}, \xatt{DT}, \xatt{DH} describe the basic dimensions
of the subproof's structure: compacted size, tree size and height. For the
proofs of $\Syll$ (subproof~32 of $\DMER$ and subproof~33 of $\DLUK$), the
compacted size of $\DMER$ improves with~31 upon that of $\DLUK$, which is~32,
by one. Both proofs have the same height, 29. However, with respect to tree
size $\DMER$ with~491 is larger than $\DLUK$ with only 435.

N-simplification has no reducing effect on $\DMER$, but was applied to obtain
$\DLUK$ from the straightforward conversion of \Lukasiewicz's original proof
to \CD. There, n-simplification effected on the subproof of
$\Syll$ a reduction of the tree size from 563 to 435, while compacted size and
height remained unchanged, 32 and~29, respectively.

Dimensions for the whole proof of the three goal theorems $\Syll$, $\Peirce$
and $\Simp$ as a set of the three \Dterms can be determined as follows. The
compacted size is the total number of compound subterms, that is, the number
of rows in the respective tables, not counting the first row, which represents
the primitive \Dterm $1$ that corresponds to the axiom. Thus the compacted
size of $\DMER$ is~33, improving by one on that of $\DLUK$, which is~34.

As tree size of the overall proof, the set of the three \Dterms for the goal
theorems, we take the sum over the tree sizes of its members, which is
$491+159+19 = 669$ for $\DMER$ and $435+131+19 = 585$ for $\DLUK$. As height
of the overall proof we take the maximum of the height of its members, which
is $\f{max}(\{29,25,10\}) = 29$ for $\DMER$ and $\f{max}(\{29,27,10\}) = 29$
for $\DLUK$. For the \CD conversion of \Lukasiewicz's proof before
n-simplification, the overall dimensions are: compacted size~34, tree
size~751, and height~29.

\begin{sidewaystable}[p]
  \centering
  \small
  \caption{Properties of all subproofs of $\DLUK$ (Fig.~\ref{fig-proof-luk}),
    \Lukasiewicz's proof \cite{luk:1948}, converted to \CD and
    n-simplified.}
  \label{tab-bigluk}
  \setlength{\tabcolsep}{3.1pt}
  \renewcommand{\arraystretch}{0.89}
  \rowcolors{2}{gray!20}{white}
  \scalebox{0.9}{\begin{tabular}{rlrrrrrrrrccccrrrcrrccrrrrr}
 & \tabatt{} & \tabatt{{\L}UK} & \tabatt{MER} & \tabatt{NN} & \tabatt{DC} & \tabatt{DT} & \tabatt{DH} & \tabatt{DI} & \tabatt{DR} & \tabatt{DS} & \tabatt{DP} & \tabatt{DK$_{L}$} & \tabatt{DK$_{R}$} & \tabatt{FC} & \tabatt{FT} & \tabatt{FH} & \tabatt{FO} & \tabatt{MC} & \tabatt{MT} & \tabatt{RS} & \tabatt{RC} & \tabatt{IT$_{U}$} & \tabatt{IT$_{M}$} & \tabatt{IH$_{U}$} & \tabatt{IH$_{M}$} & \tabatt{}\\\midrule
1. & $1$ & \luk{1} & \mer{1} & \spec{1} & 0 & 0 & 0 & 17 & 481 & -- & \textbullet & 0 & 0 & 6 & 6 & 3 & \textbullet & 0 & 0 & \textbullet & \textbullet & 4451 & 183 & 18 & 11 & 1.\\
2. & $\f{D}11$ & \subsumesluk{2} & \textcolor{black!30}{\textbullet} &  & 1 & 1 & 1 & 1 & 43 & $1{=}1$ & \textbullet & 1 & 1 & 7 & 8 & 4 & \textbullet & 1 & 1 & \textbullet & \textbullet & 1640 & 141 & 17 & 11 & 2.\\
3. & $\f{D}12$ & \subsumesluk{3} & \textcolor{black!30}{\textbullet} &  & 2 & 2 & 2 & 1 & 43 & $1\lhd$ & \textbullet & 1 & 2 & 8 & 11 & 4 & \textbullet & 2 & 2 & \textbullet & \textbullet & 1881 & 185 & 17 & 11 & 3.\\
4. & $\f{D}31$ &  & \textcolor{black!30}{\textbullet} &  & 3 & 3 & 3 & 1 & 43 & $\rhd1$ & \textbullet & 2 & 2 & 5 & 5 & 4 & \textcolor{black!30}{\textbullet} & 3 & 3 & \textbullet & \textbullet & 689 & 67 & 16 & 10 & 4.\\
5. & $\f{D}4\mathrm{n}$ & \luk{4} & \mer{2} &  & 4 & 4 & 4 & 2 & 43 & $\rhd\n$ & \textbullet & 3 & 2 & 4 & 4 & 3 & \textbullet & 4 & 4 & \textbullet & \textbullet & 688 & 66 & 15 & 9 & 5.\\
6. & $\f{D}15$ & \luk{5} & \textcolor{black!30}{\textbullet} &  & 5 & 5 & 5 & 2 & 39 & $1\lhd$ & \textbullet & 3 & 2 & 5 & 6 & 3 & \textbullet & 5 & 5 & \textbullet & \textbullet & 1667 & 173 & 15 & 10 & 6.\\
7. & $\f{D}16$ & \luk{6} & \textcolor{black!30}{\textbullet} &  & 6 & 6 & 6 & 1 & 35 & $1\lhd$ & \textbullet & 3 & 3 & 6 & 7 & 4 & \textbullet & 6 & 6 & \textbullet & \textbullet & 1802 & 208 & 16 & 11 & 7.\\
8. & $\f{D}17$ & \luk{7} & \textcolor{black!30}{\textbullet} &  & 7 & 7 & 7 & 1 & 35 & $1\lhd$ & \textbullet & 3 & 4 & 7 & 9 & 4 & \textbullet & 7 & 7 & \textbullet & \textbullet & 2648 & 303 & 16 & 11 & 8.\\
9. & $\f{D}81$ &  & \textcolor{black!30}{\textbullet} & \spec{2} & 8 & 8 & 8 & 1 & 35 & $\rhd1$ & \textbullet & 3 & 4 & 5 & 5 & 4 & \textcolor{black!30}{\textbullet} & 8 & 8 & \textbullet & \textbullet & 1032 & 119 & 15 & 10 & 9.\\
10. & $\f{D}9\mathrm{n}$ & \luk{8} & \mer{3} & \spec{3} & 9 & 9 & 9 & 4 & 35 & $\rhd\n$ & \textbullet & 3 & 4 & 4 & 4 & 3 & \textbullet & 9 & 9 & \textbullet & \textbullet & 1031 & 118 & 14 & 9 & 10.\\
11. & $\f{D}10.1$ & \luk{9} & \mer{4} & \spec{4} & 10 & 10 & 10 & 2 & 30 & $\rhd1$ & \textbullet & 4 & 4 & 4 & 4 & 3 & \textbullet & 10 & 10 & \textbullet & \textbullet & 448 & 64 & 13 & 9 & 11.\\
12. & $\f{D}1.11$ & \luk{10} & \textcolor{black!30}{\textbullet} &  & 11 & 11 & 11 & 1 & 20 & $1\lhd$ & \textbullet & 4 & 4 & 7 & 7 & 5 & \textbullet & 11 & 11 & \textbullet & \textbullet & 498 & 72 & 14 & 10 & 12.\\
13. & $\f{D}1.12$ & \luk{11} & \textcolor{black!30}{\textbullet} &  & 12 & 12 & 12 & 1 & 20 & $1\lhd$ & \textbullet & 4 & 4 & 8 & 12 & 5 & \textbullet & 12 & 12 & \textbullet & \textbullet & 1157 & 167 & 14 & 10 & 13.\\
14. & $\f{D}1.13$ & \luk{12} & \textcolor{black!30}{\textbullet} &  & 13 & 13 & 13 & 1 & 20 & $1\lhd$ & \textbullet & 4 & 4 & 9 & 10 & 6 & \textbullet & $\iv{12}{13}$ & 13 & \textbullet & \textbullet & 1050 & 153 & 15 & 11 & 14.\\
15. & $\f{D}1.14$ & \luk{13} & \textcolor{black!30}{\textbullet} &  & 14 & 14 & 14 & 1 & 20 & $1\lhd$ & \textbullet & 4 & 5 & 10 & 15 & 6 & \textbullet & $\iv{12}{14}$ & 14 & \textbullet & \textbullet & 1657 & 241 & 15 & 11 & 15.\\
16. & $\f{D}15.1$ &  & \textcolor{black!30}{\textbullet} &  & 15 & 15 & 15 & 1 & 20 & $\rhd1$ & \textbullet & 4 & 5 & 8 & 9 & 5 & \textcolor{black!30}{\textbullet} & $\iv{12}{15}$ & 15 & \textbullet & \textbullet & 684 & 99 & 14 & 10 & 16.\\
17. & $\f{D}16.\mathrm{n}$ & \luk{14} & \mer{5} &  & 16 & 16 & 16 & 2 & 20 & $\rhd\n$ & \textbullet & 4 & 5 & 7 & 8 & 4 & \textbullet & $\iv{12}{16}$ & 16 & \textbullet & \textbullet & 683 & 98 & 13 & 9 & 17.\\
18. & $\f{D}17.1$ & \luk{15} & \mer{6} &  & 17 & 17 & 17 & 3 & 15 & $\rhd1$ & \textbullet & 4 & 5 & 6 & 7 & 3 & \textbullet & $\iv{12}{17}$ & 17 & \textbullet & \textbullet & 395 & 56 & 12 & 8 & 18.\\
19. & $\f{D}18.11$ & \luk{16} & \mer{7} &  & 18 & 28 & 18 & 1 & 10 & $\rhd$ & -- & 5 & 5 & 6 & 7 & 4 & \textbullet & $\iv{12}{14}$ & 14 & \textbullet & \textbullet & 209 & 61 & 11 & 9 & 19.\\
20. & $\f{D}19.1$ & \luk{17} & \mer{8} &  & 19 & 29 & 19 & 1 & 10 & $\rhd1$ & -- & 6 & 5 & 8 & 9 & 5 & \textbullet & $\iv{12}{15}$ & 15 & \textbullet & \textbullet & 132 & 38 & 10 & 8 & 20.\\
21. & $\f{D}1.20$ & \luk{18} & \mer{10} &  & 20 & 30 & 20 & 2 & 10 & $1\lhd$ & -- & 6 & 5 & 9 & 12 & 5 & \textbullet & $\iv{12}{16}$ & 16 & \textbullet & \textbullet & 158 & 47 & 10 & 8 & 21.\\
22. & $\f{D}21.21$ &  & \textcolor{black!30}{\textbullet} &  & 21 & 61 & 21 & 1 & 5 & $=$ & -- & 6 & 5 & 9 & 10 & 5 & \textcolor{black!30}{\textbullet} & $\iv{12}{17}$ & $\iv{24}{33}$ & \textbullet & \textbullet & 53 & 16 & 9 & 7 & 22.\\
23. & $\f{D}22.\mathrm{n}$ & \subsumesluk{19} & \mer{11} &  & 22 & 62 & 22 & 1 & 5 & $\rhd\n$ & -- & 6 & 5 & 8 & 9 & 4 & \textbullet & $\iv{12}{18}$ & $\iv{24}{34}$ & \textbullet & \textbullet & 52 & 15 & 8 & 6 & 23.\\
24. & $\f{D}17.23$ & \subsumesluk{20} & \mer{12} &  & 23 & 79 & 23 & 2 & 5 & $\lhd$ & -- & 6 & 5 & 8 & 9 & 4 & \textbullet & $\iv{12}{23}$ & $\iv{23}{51}$ & \textbullet & \textbullet & 57 & 16 & 7 & 5 & 24.\\
25. & $\f{D}24.18$ & \subsumesluk{21} & \mer{13} &  & 24 & 97 & 24 & 2 & 2 & $\rhd$ & -- & 6 & 5 & 6 & 7 & 4 & \textbullet & $\iv{12}{24}$ & $\iv{24}{69}$ & \textbullet & \textbullet & 27 & 17 & 6 & 5 & 25.\\
26. & $\f{D}65$ &  &  & \spec{6} & 6 & 10 & 6 & 1 & 4 & $\rhd$ & -- & 3 & 2 & 2 & 2 & 2 & \textcolor{black!30}{\textbullet} & 5 & 9 & \textbullet & -- & 28 & 8 & 7 & 5 & 26.\\
27. & $\f{D}26.\mathrm{n}$ & \luk{22} & \subsumesmer{9} & \spec{7} & 7 & 11 & 7 & 2 & 4 & $\rhd\n$ & -- & 3 & 2 & 1 & 1 & 1 & \textbullet & 6 & 10 & \textbullet & -- & 27 & 7 & 6 & 4 & 27.\\
28. & $\f{D}24.27$ & \luk{23} & \mer{14} & \spec{8} & 26 & 91 & 24 & 1 & 3 & -- & -- & 6 & 5 & 5 & 5 & 3 & \textbullet & $\iv{13}{25}$ & $\iv{24}{91}$ & \textbullet & -- & 24 & 7 & 6 & 4 & 28.\\
29. & $\f{D}10.28$ & \luk{24} & \mer{15} & \spec{9} & 27 & 101 & 25 & 1 & 3 & $\lhd$ & -- & 6 & 5 & 3 & 3 & 3 & \textbullet & $\iv{13}{26}$ & $\iv{24}{101}$ & \textbullet & -- & 19 & 5 & 6 & 4 & 29.\\
30. & $\f{D}18.29$ & \luk{25} & \mer{16} & \spec{10} & 28 & 119 & 26 & 3 & 3 & $\lhd$ & -- & 6 & 5 & 5 & 5 & 4 & \textbullet & $\iv{12}{26}$ & $\iv{24}{36}$ & \textbullet & -- & 19 & 5 & 6 & 4 & 30.\\
31. & $\f{D}30.30$ & \subsumesluk{26} & \mer{16\pr} &  & 29 & 239 & 27 & 1 & 1 & $=$ & -- & 6 & 5 & 7 & 10 & 5 & \textbullet & $\iv{12}{27}$ & $\iv{24}{239}$ & \textbullet & -- & 13 & 13 & 5 & 5 & 31.\\
32. & $\f{D}25.31$ &  & \textcolor{black!30}{\textbullet} &  & 31 & 337 & 28 & 1 & 1 & $<_{\mathrm{c}}$ & -- & 6 & 6 & 7 & 7 & 5 & \textbullet & $\iv{12}{29}$ & $\iv{23}{121}$ & \textbullet & -- & 13 & 13 & 5 & 5 & 32.\\
33. & $\f{D}32.25$ & \luk{29} & \mer{17} & \spec{11} & 32 & 435 & 29 & 0 & 1 & $\rhd$ & -- & 7 & 6 & 5 & 5 & 3 & \textbullet & $\iv{12}{22}$ & $\iv{24}{64}$ & \textbullet & -- & 5 & 5 & 3 & 3 & 33.\\
34. & $\f{D}30.27$ & \luk{28} & \mer{18} & \spec{12} & 29 & 131 & 27 & 0 & 1 & $\rhd$ & -- & 6 & 5 & 3 & 3 & 3 & \textbullet & 11 & 15 & -- & -- & 3 & 3 & 3 & 3 & 34.\\
35. & $\f{D}10.10$ & \luk{27} & \mer{19} & \spec{13} & 10 & 19 & 10 & 0 & 1 & $=$ & -- & 4 & 4 & 2 & 2 & 2 & \textbullet & 6 & 7 & \textbullet & \textbullet & 2 & 2 & 2 & 2 & 35.\\
\end{tabular}
}
\end{sidewaystable}

\attitem{\xatt{DI}: Number of Incoming DAG Edges}

With \xatt{DI} we refer to the number of incoming edges in the DAG
representation of the overall proof of all theorems.
The roots of the DAG, corresponding to the goal theorems, can be identified by
the \xatt{DI} value~0. In both tables it can be observed that there are rows
such as the row for subproof~5 of $\DMER$ in Table~\ref{tab-bigmer} where the
\xatt{DI} value is 1 and there is nevertheless an entry in the column with the
labels of the original proof presentation, $\xatt{MER}$ for
Table~\ref{tab-bigmer} and $\xatt{\LUK}$ for Table~\ref{tab-bigluk}. These
rows exemplify the use of labels by Meredith and \Lukasiewicz to modularize
proofs as addressed in Sect.~\ref{subsubsec-modularization}.

\attitem{\xatt{DR}: Repeats}
\label{subsubsec-dr}

\xatt{DR} denotes the total number of occurrences in the set of expanded trees
of all roots of the DAG. Because leaves labeled by n-simplification with $\n$
are not considered here, the number of occurrences of the primitive subproof 1
shown in the tables is smaller than the total number of leaves of the three
trees, which is the overall tree size plus one, that is, 670 for $\DMER$ and
586 for $\DLUK$.

\attitem{\xatt{DS}: Structural Relationship between the Subproofs of Major
  and Minor Premise}

\xatt{DS} describes special cases of the structural relationship between the
subproofs of major and minor premise. Possible values are identity, expressed
with $=$, the strict subterm and superterm relationships expressed with $\lhd$
and $\rhd$, respectively, and the strict compaction ordering relationship (if
none of the other relationships holds) expressed with
$\mathrel{<_{\mathrm{c}}}$ and~$\mathrel{>_{\mathrm{c}}}$.\footnote{Cases
where the compaction ordering applied only non-strictly did not occur in the
investigated proofs.} In addition, it is indicated if a premise is the axiom
or is $\n$.

Consideration of this property was motivated by the empiric observation that
for most subproofs of $\DMER$ and $\DLUK$ the subproofs of both premises are
related by the subterm relationship. In fact, in each of the proofs $\DMER$
and $\DLUK$ the value of $\xatt{DS}$ is for all compound subproofs with
exception of two ones either $=$, $\lhd$ or $\rhd$. This observed pattern can
actually be reversed into a proof construction method that succeeds for
many \CD problems, also with multiple axioms, and leads in some cases to
proofs with small compacted size where the exhaustive search for proofs with
guaranteed smallest compacted size appears unfeasible \cite{cw:cdtools:2022}. In
Sect.~\ref{subsec-psp} we will specify this method and show a
particularly short proof of \Syll from \Luk obtained with it.

\attitem{\xatt{DP}: Is Prime}

\xatt{DP} expresses that \xatt{DT} and \xatt{DC} are the same. We call \Dterms
with this property \emph{prime}, because they do not have repeated subterms
that can be ``factored'' in a DAG representation. Assuming a singleton set
$\DPRIMSET = \{1\}$ of primitive \Dterms, the property can be characterized in
different ways: (1)~\xatt{DT} and \xatt{DC} are the same. (2)~\xatt{DT} and
\xatt{DH} are the same. (3)~Every compound subterm of the given \Dterm has
only a single occurrence in it. (4)~The given \Dterm is a member of
$\bigcup_{i = 0}^\omega \primelevel(i)$. where for natural numbers $n \geq 0$
the set $\primelevel(n)$ of \Dterms is specified inductively as

\beforelistskip
\begin{enumerate}
\item $\primelevel(0)\;\eqdef\; \{1\}$.
\item $\primelevel(1)\;\eqdef\; \{\D(1,1)\}$.
\item $\displaystyle \primelevel(n+2)\; \eqdef\!\!\! \bigcup_{d \in
  \primelevel(n+1)}\!\!\!\! \{\D(1,d)\} \cup \{\D(d,1)\}$.
\end{enumerate}  

Characterization~(3) suggests to perform proof search by enumerating \Dterms
for increasing values of $\primelevel$. Members of $\primelevel(n)$ have size
(compacted size, tree size or height, which are identical for them)~$n$. The
number of distinct prime \dterms of a given size~$n$ grows by the sequence
\OEISNUM{A011782} of integers \cite{oeis}
(Table~\ref{tab-oeis-numbers-prime}), i.e.,~$1$ for $n=0$ and $2^{n-1}$ for $n
> 0$, which is much slower than the growth for compacted size, tree size or
height shown in Table~\ref{tab-oeis-numbers} on p.~\pageref{tab-oeis-numbers}.

\begin{table}[t]
  \centering
  \caption{The number of distinct \Dterms for a single axiom (or full binary
    trees) in $\primelevel(n)$.}
  \label{tab-oeis-numbers-prime}
  \begin{tabular}{llcrrrrrrrrrr}
    $n$ && 0 & 1 & 2 & 3 & 4 & 5 & 6 & 7 & 8 & 9\\\midrule    
    $|\primelevel(n)|$ & \OEISNUM{A011782}
    & 1 & 1 & 2 & 4  & 8   & 16    &  32 & 64 & 128 & 256
  \end{tabular}
\end{table}

For $\DMER$ we observe in Table~\ref{tab-bigmer} that the subproofs~1--18 are
exactly those that are prime. Moreover, all these prime proofs in $\DMER$ are
a subproof of a single subproof, subproof~18. This suggests that proof search
may be decomposed into two phases. First, identifying a small number of
``maximal prime proofs'' or ``prime cores'' \cite{cwwb:lukas:2021}, such as
subproof~18 in $\DMER$ for axiom \Luk. This is in a search space that --
narrowed through the prime property and possibly further properties --
relatively quickly leads beyond small proof sizes for which all structures can
be trivially explored. Second, further search with the MGTs of the prime cores
available as proven lemma formulas. Such experiments were performed for
deriving \Syll from \Luk with \ProverN \cite{prover9} as prover for the second
phase, leading to proofs with much smaller compacted size (44) than obtained
by \ProverN alone (80--94, see Sect.~\ref{subsec-prover9})
\cite{cwwb:lukas:2021,cwwb:lukas:2021:extended}. Yet above the size of the
human-made proofs (31--32) and a machine proof obtained with another technique
(22) described in Sect.~\ref{subsec-psp}.

\pagebreak
\attitem{\xatt{DK$_L$}, \xatt{DK$_R$}: Left and Right Successive Height}

\xatt{DK$_L$}, \xatt{DK$_R$} are the maximal number of successive edges going
to the left and right, respectively, on any path from the root to a leaf.
These properties were motivated by the observation that in $\DMER$ and $\DLUK$
these values are relatively low compared to the height of the subproof. This
suggests that limiting them could restrict the number of candidate structures
during proof search. Both proofs would, for example, satisfy the constraint
$\xatt{DK$_L$}^2 \leq 2.5*\xatt{DH}$ \textit{and} $\xatt{DK$_R$}^2 \leq
2.5*\xatt{DH}$. Whether such restrictions can indeed be successfully used in
proof search has not yet been settled. Empirical observations obtained in our
experiments suggest that with structure enumeration for increasing
tree size they lead to a linear reduction of the number of considered trees.
Namely, while the numbers of full binary trees of tree sizes 13 and~14 are
742,900 and 2,674,440, respectively (\OEISNUM{A000108}), with the above
constraints these numbers are roughly halved to 385,234 and 1,405,546,
respectively. With enumeration for increasing height the reduction seems
stronger: The number of binary trees of heights~4 and~5 are 651 and 457,653,
respectively (\OEISNUM{A001699}). With the above constraints, the numbers are
reduced to~231 and 9,153, respectively.

\subsection{Properties of the \MGT}
\label{subsec-prop-mgt}

Here we discuss properties of the argument term $f$ of the \MGT~$\P(f)$ of the
respective subproof.

\attitem{\xatt{FC}, \xatt{FT}, \xatt{FH}: Compacted Size, Tree Size and
  Height}
\label{subsubsect-fcth}

The properties \xatt{FC}, \xatt{FT}, \xatt{FH} describe the basic dimensions
of $f$. They are defined now for terms in full analogy to the respective
measures for \Dterms (Definitions~\ref{def-csize}
and~\ref{def-treesize-height}): The compacted size \xatt{FC} is the number of
inner nodes of the minimal DAG representing the tree; the tree size \xatt{FT}
is the number of inner nodes, in other words, the number of occurrences of
function symbols of arity larger than $0$; the height \xatt{FH} is the length
(number of edges) of the longest downward path from the root to a leaf. In the
literature, the term height is also called \name{term depth}.

The maximal \xatt{FT} value in $\DMER$ as well as in $\DLUK$ is~15. It
pertains in both proofs to the same formula, which, moreover, happens to
appear in both proofs as MGT of the respective subproof number~15. In
Meredith's presentation it is just an implicit intermediate formula, indicated
by the empty value of $\xatt{MER}$ in Table~\ref{tab-bigmer}, whereas in
\Lukasiewicz's presentation it is made explicit as thesis number~\luk{13}.
With respect to the tree size, this formula stands out: the next largest value
of $\xatt{FT}$ is 12, which pertains in both proofs to two subproofs. The
maximal value of \xatt{FH} in both proofs is 6 and pertains in each of the
proofs to two subproofs, including that with the $\xatt{FT}$ value~15.

Deleting inferred formulas whose tree size or height exceeds a threshold are
basic techniques to restrict the search space of resolution provers.
Corresponding \ProverN options are for example \texttt{max\_weight} and
\texttt{max\_depth} \cite{prover9}. The default measure used as term weight by
\ProverN is linearly related to the tree size as defined here. \CD problems
are processed by \ProverN in default settings with positive hyperresolution.
The inferred resolvents are then actually MGTs of \Dterms that can be
associated with the hyperresolution derivations. In contrast, clausal tableau
provers with rigid variables do not explicitly construct these MGTs; they only
construct the deeper instantiated IPTs associated with particular nodes of the
tableau tree. Hence, restricting the search space by limiting term dimensions
of MGTs is usually not available for clausal tableau provers.

Blending goal-driven structure enumeration with axiom-driven structure
enumeration that permits the application of heuristic limitations to MGTs was
recently studied for \CD problems; it led to a drastic improvement compared to
conventional clausal tableau provers \cite{cw:sgcd}.

\attitem{\xatt{FV}: Number of Distinct Variables}

Like \xatt{FT} and \xatt{FH}, the property \xatt{FV}, that is, the number of
distinct variables, is commonly used in resolution provers as a threshold to
delete inferred formulas that exceed it. In \ProverN this threshold can be
specified with the \texttt{max\_vars} option. The discussion in
Sect.~\ref{subsubsect-fcth} on the availability of MGTs for heuristic
restrictions applies here as well.

\attitem{\xatt{FO}: Is [Weakly] Organic}

The \name{organic} property \xatt{FO} of a propositional formula, with respect
to a set of axioms, says that it has no strict subformula that is itself a
theorem entailed by the axioms. With our wrapper predicate $\P$ this means
that an MGT $\P(f)$ is organic if $f$ has no strict subterm~$f'$ such that
$\forall \P(f')$ is entailed by the given axioms. \Lukasiewicz and his
collaborators aimed at finding axiomatizations of propositional logics with
axioms that are organic \cite{luk:tarski:aussagenkalkuel:1930,luk:1948}. For
axiomatizations of fragments of propositional logic, the \name{organic}
property can be checked by a SAT solver. In the proofs $\DMER$ and $\DLUK$ we
observe that with a few exceptions the MGTs of all subproofs actually are
organic. The exceptions can, however, be ascribed a weakened form of
\name{organic} that is specified as follows: We call an atomic formula~$\P(f)$
\emph{weakly organic} if it is not organic and $f$ is an implication $\i(p,g)$
(or $\g{Cpg}$ in \Lukasiewicz's notation) where $p$ is a variable that does
not occur in $g$ and $\P(g)$ is organic. The \name{weakly organic} property is
indicated in the property tables by a gray bullet.

\subsection{Comparisons with all Proofs of the \MGT}

The properties considered in this subsection apply to all proofs of the MGT of
the respective subproof, regarded as a set of D-terms.

\attitem{\xatt{MC}, \xatt{MT}: Minimal Compacted and Tree Size of a Proof}

The values of \xatt{MC} and \xatt{MT} are the minimal compacted size of a
proof of the MGT and the minimal tree size of a proof of the MGT,
respectively. These values may be hard to determine such that they often can
only be narrowed down to an integer interval. Values of these properties were
found with the provers \name{CCS} \cite{cw:ccs} and \SGCD \cite{cw:sgcd} in
configurations that exhaustively search for proofs with a given compacted size
or tree size, respectively.

In particular for the goals \Peirce and \Simp (subproofs~33 and~34 in
Table~\ref{tab-bigmer}, subproofs~34 and~35 in Table~\ref{tab-bigluk}) it can
be observed that the compacted size \xatt{DC} and tree size \xatt{DT} are much
larger than the respective minimal values \xatt{MC} and \xatt{MT}. This is
understandable because the apparent aim of Meredith and \Lukasiewicz was to
reduce the overall compacted size. \Peirce and \Simp are thus proven in
$\DMER$ and $\DLUK$ not as standalone problems but as side results from the
given proof of $\Syll$. Subproofs of that proof are permitted to be re-used
there without increase of the overall compacted size.

\subsection{Regularity}

The regularity properties hold for the respective subproof as \dterm.

\attitem{\xatt{RS}, \xatt{RC}: Is \XS-Regular, Is \XC-Regular}

These properties are regularities as specified in
Definitions~\ref{def-red-rsim} and~\ref{def-red-raci}. In $\DMER$ and $\DLUK$
all subproofs are \XS-regular, with the exception of a single subproof that
derives \Peirce as a side result. In $\DMER$ there is just a single subproof
that is not \XC-regular, while in $\DLUK$ \XC-regularity fails for nine
subproofs, indicating a greater redundancy.

\subsection{Properties of Occurrences of the \IPTs}

The respective subproof has \xatt{DR} (see Sect.~\ref{subsubsec-dr})
occurrences in the overall proof as a set of trees. The following properties
refer to the multiset of the arguments $f$ of the IPTs~$\P(f)$ of all these
occurrences.

\attitem{\xatt{IT$_U$}, \xatt{IT$_M$}: Tree Size of the IPTs -- Maximum and
  Rounded Median}

\xatt{IT$_U$} and \xatt{IT$_M$} indicate the tree size of the members of the
considered multiset by the values of the maximum and the rounded median. These
values may be compared with \xatt{FT}, the tree size of (the argument term of)
the MGT. In particular for subproofs that appear at deeper levels in the
overall proof, \xatt{IT$_U$} and \xatt{IT$_M$} are much larger than \xatt{FT},
illustrating Proposition~\ref{prop-ipt-subsumedby-mgt}. The largest tree size
of (the argument term of) an IPT in $\DMER$ as well as $\DLUK$ is 4451. It is
the value of an instance of the axiom, where the tree size of (the argument
term of) the MGT, that is, the axiom formula itself, is just 6.

\attitem{\xatt{IH$_U$}, \xatt{IH$_M$}: Height of the IPTs -- Maximum and Rounded
  Median}

\xatt{IH$_U$} and \xatt{IH$_M$} indicate the height of the members of the
considered multiset by the values of the maximum and the rounded median.
Compared with \xatt{FH}, the height of (the argument term of) the MGT, they
are similarly as in the comparison of \xatt{IT$_U$} and \xatt{IT$_M$} with
\xatt{FT} much higher for subproofs appearing at deeper levels, however on a
quite different scale: The largest height of (the argument term of) an IPT in
$\DMER$ as well as $\DLUK$ is 18, for an instance of the axiom, where the
height of (the argument term of) the MGT is~3.

\section{Proofs of \Syll from \Luk by ATP Systems}
\label{sec-atp}

Deriving \Syll from \Luk and \Det, that is, showing the validity of~\PSYLL
(Sect.~\ref{sec-background}), or solving \TPTP problem \TP{LCL038-1}, which
was achieved without a computer by \Lukasiewicz \cite{luk:1948}, was brought
up as a challenge problem for ATP by Frank Pfenning in 1988
\cite{pfenning:single:1988}. In this section we summarize the achievements of
ATP systems on the problem since then and report the dimensions of proofs
found by \ProverN \cite{prover9}, which in essence are \CD proofs. For the
proofs by \ProverN we show the effects of the novel reductions introduced in
Sect.~\ref{sec-red}. Finally we present a new proof, which is much shorter
than all known ones. It has been obtained with a novel technique inspired by
observations made at the investigation of the human-made proofs.

\subsection{From a Challenge Problem to a Not-That-Easy Zero-Rated Problem}
\label{subsec-challenge-zero}

According to Larry Wos et al. \cite{wos:contributes:1990} \Syll, \Peirce
and \Simp could be derived in 1990 by \Otter \cite{otter} in about 11 hours.
Techniques were weighting formulas by symbol count and hyperresolution as
inference rule. In 1992 \Otter needed about 8 hours, generating 6.7 million
clauses and keeping about 20 thousands clauses to derive \Syll, while the
parallel prover
\name{Roo} achieved a nearly linear speedup for the problem, solving it with
24 processes in about 21 minutes \cite{roo:parallel:1992}. The inference rule
was hyperresolution, and forward subsumption (but not back subsumption) was
applied. In addition, to conserve memory, generated clauses with more than 20
symbols were discarded. Also in 1992 strategies for \CD with Otter were
compared \cite{mccune:wos:cd:1992}. Depending on the strategy, \Otter could
derive \Syll in about 2--4 hours. As mentioned there, proving \Syll from \Luk
was the first truly difficult \CD theorem proved by \Otter and has been used
extensively as a benchmark for parallel deduction programs. \name{CODE}
\cite{fuchs:code:1997}, a dedicated solver for \CD from 1997 apparently could
also solve the problem.

Branden Fitelson and Wos \cite{fitelson:missing:2001} studied various classes
of ``missing'' proofs. \Lukasiewicz's proof is there the leading example of a
proof with
\emph{omissions}, where subproofs of some steps are missing. \Lukasiewicz's
presentation shows 28 steps. The objective of Fitelson and Wos was to produce
from these displayed steps a proof that contains all of these, but is entirely
formed by the more fine-grained \CD steps. Otter succeeded, finding a proof of
length (i.e., compacted size) 36. Actually, our proof~$\DLUK$
(Fig.~\ref{fig-proof-luk}) is another such completion, but was obtained
without proof search just from a detailed transcription of
\Lukasiewicz's presentation, as described in
Sect.~\ref{subsec-considered-proofs}. Its compacted size is~34.

The problem of deriving \Syll from \Luk and \Det entered the \TPTP as
\TP{LCL038-1}. Its first documented difficulty rating in \TPTP version 2.0.0,
1997, is 1.00, meaning that the problem is hard because no state-of-the-art
ATP system in a specific sense \cite{tptp-rating} can solve it. A value of
0.00, meaning that the problem is easy, or all ``state-of-the-art ATP
systems'' can solve the problem, first appeared with version 3.2.0 in 2006.
Since then the difficulty rating fluctuated between 0.00 and 0.81. Its current
value in version 9.0.0 is 0.60.

According to the
\name{ProblemAndSolutionStatistics} file of \TPTP~9.0.0 from 2024 the two
well-known powerful provers \E \cite{eprover} and \Vampire \cite{vampire} fail
on it in their recent versions 3.2.0 and 4.9, respectively. Nevertheless, in
earlier versions they succeed, as documented in the
\name{ProblemAndSolutionStatistics} file of \TPTP~7.5.0 from 2021 and
replicable with versions downloadable from the systems' Web
pages.\footnote{\url{http://www.eprover.org/} and
\url{https://vprover.github.io/}, accessed Jan 15, 2023.}
\E~2.6\footnote{Invoked with flags \texttt{-s --print-statistics
  --proof-object=1.}} finds a proof with 88 steps and
\Vampire~4.5.1\footnote{Invoked with flags \texttt{--time\_limit 600 --mode
  casc.}} a proof with 148 steps (in both cases not counting the three initial
clauses as steps). It is not evident how these proofs would be translated to
\CD proofs and thus how their size actually compares to that of the human
proofs. For a rough estimate, however, we can observe that the compacted size,
which is 32 and 31 for the proof by \Lukasiewicz and Meredith's variation,
respectively, is the exact number of positive hyperresolution steps to build
the proof. If the hyperresolution is modeled by binary resolution, the number
of steps doubles to 64 or 62, respectively.

For the goal-driven first-order provers such as \leancop \cite{leancop},
\SETHEO \cite{setheo:92} or \PTTP \cite{pttp}, which may described as based on
clausal tableaux \cite{letz:habil}, the CM
\cite{bibel:atp:1982,bibel:otten:2020} or model elimination
\cite{loveland:1978}, the problem remains out of reach. This is not
surprising, given that these systems in essence enumerate tree structures
whose size is linearly related to the tree size of \Dterms, 435 and 491 for
\Lukasiewicz's proof and Meredith's variation, respectively, and 64 as
currently known smallest value (Sect.~\ref{subsec-psp}). The only known solutions
of the problem with this approach are with a recent generalization where the
goal-driven structure enumeration is interwoven with heuristically restricted
axiom-driven structure enumeration \cite{cw:sgcd}. We will discuss a proof
obtained in this way below in Sect.~\ref{subsec-psp}.

\subsection{Prover9's Proofs and Reductions by Replacing Subproofs}
\label{subsec-prover9}

\ProverN, like \Otter \cite{otter}, succeeds on \TP{LCL038-1}. Moreover, by
default it applies positive hyperresolution to \CD problems, where proofs
directly translate to \CD proofs, that is, \Dterms. It appears that in
applications with axiomatizations of logics it is often desired to have \CD
proofs in contrast to arbitrary resolution proofs \cite{veroff:cd:2011}.
\CDTools \cite{cw:cdtools:2022}, a \SWIProlog library to support experimenting with
\CD, provides a conversion of \ProverN's hyperresolution proofs to \Dterms.
This is implemented using \name{Prooftrans}, a proof conversion tool, which
comes with \ProverN. The availability of \ProverN's proofs as \Dterms permits
to compare their dimensions with those of the human proofs and to experiment
with the reductions introduced in Sect.~\ref{subsec-regularities}.

\ProverN in default settings returns for \TP{LC038-1} different proofs,
although of roughly similar size, depending on whether in the clause \Det the
major premise appears before the minor premise, as in the original \TPTP
problem file, or \Det is reordered such that the major premise appears after
the minor premise.\footnote{Provers that are more sensitive to the ordering of
literals in a clause typically determine this ordering on the basics of
heuristics, independently from the ordering in the input, e.g.,
\cite[Sect.~5.3]{setheo:92}.} Tables~\ref{tab-prover9}
and~\ref{tab-prover9-major-minor} show properties of the respective proofs:
(1) in its original form as obtained from \ProverN; (2) after n-simplification
(Definition~\ref{def-simp-n}); (3) and (4) after exhaustively applying
\XS-reduction (\ref{def-red-rsim}) and \XC-reduction (\ref{def-red-raci}),
respectively, to (2); and (5) after applying \XC-reduction to (3). The
proofs~(4) and~(5) within each table are identical.

The shown properties are as those specified in Sect.~\ref{subsec-properties} with
the following additions. $\xatt{DX}$ is the \SCsize
(Definition~\ref{def-scsize}) of the \Dterm. \xatt{FT}$_{\g{Max}}$ and
\xatt{FH}$_{\g{Max}}$ are the maximal values of $\xatt{FT}$ and $\xatt{FH}$
among all subproofs of the given proof, i.e., the maximal tree size and
maximal height of the MGT of a subproof. \textit{Red.} indicates the number of
reduction steps performed to obtain the proof as described in the \name{Source
  of the \Dterm} column. Specifically, for n-simplification \textit{Red.}
shows the number of occurrences of $\n$ in the \Dterm and for \XS- and
\XC-reduction it shows the actual number of rewriting steps according to
Definitions~\ref{def-red-rsim} and~\ref{def-red-raci}, respectively.

We also experimented with configuring \ProverN such that it continues to
search for further proofs after a proof was found, but this did not lead to
finding a second proof within several minutes. In another experiment we tried
\ProverN with increasing values of \texttt{max\_depth}, which limits
\xatt{FH}$_{\g{Max}}$. The lowest number where it succeeds is~7, corresponding
in our scale, not counting the predicate, to term height~6. The prover then
succeeds very quickly, in 7~s, compared to 44~s without \texttt{max\_depth}
restriction, but the proofs are larger, with compacted size~110 (tree size
315,246, height~50) if the major premise of \Det appears after the minor
premise, and compacted size~131 (tree size 400,792, height~50) if it appears
before. Also the value of \xatt{FT}$_{\g{Max}}$ with 14 is in both cases
larger.

\begin{table}
  \centering
  \caption{Properties of the proof \TP{LCL038-1} found by \ProverN in default
    settings if in input clause \Det the major premise appears after the minor
    premise along with the effects of reductions on the proof.}
  \label{tab-prover9}
\setlength{\tabcolsep}{3pt}
\begin{tabular}{rlrrrrrrccr}
  & \textit{Source of \Dterm} & \xatt{DC} & \xatt{DT} & \xatt{DH} & \xatt{DX} &
  \xatt{FT$_{\g{Max}}$} &  \xatt{FH$_{\g{Max}}$} &
  \xatt{RS} & \xatt{RC} & \textit{Red.}\\\midrule
  (1) & From \ProverN & 94 & 304,890 & 40 & 3,247 & 11 & 7 &  \tabno & \tabno\\
  (2) & From (1) by n-simp. & 83 & 8,217 & 38 & 2,485 & 11 & 7 & \tabno & \tabno & 1,708\\
  (3) & From (2) by \XS-red. & 80 & 7,058 & 38 & 2,311 & 13 & 7 & \tabyes & \tabno & 61\\
  (4) & From (2) by \XC-red. & 80 & 5,746 & 36 & 2,290 & 13 & 7 & \tabyes & \tabyes &  2\\
  (5) & From (3) by \XC-red. & 80 & 5,746 & 36 & 2,290 & 13 & 7 & \tabyes & \tabyes &  1
\end{tabular}
\end{table}

\begin{table}
  \centering
  \caption{Proof properties and effects of reductions as in
    Table~\ref{tab-prover9}, but for the case where in clause \Det the major
    premise appears before the minor premise, as in the original \TPTP problem
    file.}
    \label{tab-prover9-major-minor}
\setlength{\tabcolsep}{3pt}
\begin{tabular}{rlrrrrrrccr}
  & \textit{Source of \Dterm} & \xatt{DC} & \xatt{DT} & \xatt{DH} & \xatt{DX}
  & \xatt{FT$_{\g{Max}}$} & \xatt{FH$_{\g{Max}}$} & \xatt{RS} & \xatt{RC} &
  \textit{Red.}\\\midrule
  (1) & From \ProverN & 93 & 216,094 & 40 & 3,011 & 11 & 7 & \tabno & \tabno\\
  (2) & From (1) by n-simp. &  91 & 18,261 & 38 & 2,870 & 11 & 7 & \tabno & \tabno & 3,700\\
  (3) & From (2) by \XS-red. & 88 & 12,922 & 38 & 2,669 & 13 & 7 & \tabyes & \tabno & 281\\
  (4) & From (2) by \XC-red. & 84 & 8,200 & 36 & 2,410  & 13 & 7 & \tabyes & \tabyes & 6\\
  (5) & From (3) by \XC-red. & 84 & 8,200 & 36 & 2,410 & 13 & 7 & \tabyes & \tabyes & 5
\end{tabular}
\end{table}

The most striking values in Tables~\ref{tab-prover9}
of~\ref{tab-prover9-major-minor} are the vast tree sizes \xatt{DT} of the
original proofs, which are drastically reduced by n-simplification. It is not
clear whether this apparent redundancy has a negative effect on proof search.

Actually, tree size seems to be not much taken into consideration in the
context of resolution. Being closely related to the \name{multiplicity} of a
clause in a proof, it may be seen as a fundamental measure for clausal
tableaux with rigid variables. While it is considered by Veroff
as \name{CDcount} in the investigation of finding shortest
proofs \cite{veroff:shortest:2001}, it is, in contrast to compacted size and
height, not even mentioned in a CD-related work by
Wos \cite{wos:combining:96}. On the other hand, it appears that compacted size
-- underlying DAGs as proof structures -- is considered in the context of
clausal tableaux only rarely, for example in \cite{eder:cs:1989,cw:ccs}. The
deeper reason for these preferences lies in the fact that any resolvent may be
regarded as a lemma. The use of lemmas leads to DAGs, hence the focus on these
in resolution.

\subsection{PSP Level Enumeration and a Short Proof}
\label{subsec-psp}

Column \xatt{DS} in Tables~\ref{tab-bigmer} and~\ref{tab-bigluk} shows that
steps in the human-made proofs can often be described in a proof-structural
way as a \Dterm $\D(d,d')$ where either~$d$ is the proof of some previously
proven lemma and~$d'$ is a subterm of~$d$, or vice versa. The question is then
whether this observed pattern can be turned into a proof construction method
that is useful for proof search. As a basis for such a method we define an
inductive characterization of sets of \Dterms by \name{PSP level}, with
\name{``PSP''} suggesting \name{``Proof-SubProof''}.

\begin{defn}
  \label{def-psp}
  We assume a singleton set $\DPRIMSET = \{1\}$ of primitive \Dterms. For
  natural numbers $n \geq 0$, the \defname{PSP level} of $n$, in symbols
  $\psplevel(n)$, is a set of \Dterms specified inductively as
  \beforelistskip
  \begin{enumerate}
  \item $\psplevel(0)\; \eqdef\; \{1\}$.
  \item $\displaystyle \psplevel(n+1)\; \eqdef
    \!\!\!\bigcup_{d \in \psplevel(n)}\!\!\!\!
    \{\D(d, d') \mid d \suptermq d'\}
    \cup 
    \{\D(d', d) \mid d \supterm d'\} ).$
  \end{enumerate}
\end{defn}

Assuming a procedure that enumerates the subterms of a given \Dterm, we can
associate with Definition~\ref{def-psp} straightforwardly a procedure that
enumerates \Dterms interwoven with unification in an axiom-driven way for
increasing PSP levels. The procedure may be improved by caching computed PSP
levels instead of recomputing them.

PSP levels are disjoint. All \Dterms in PSP level $n$ have compacted size~$n$.
However, the cardinality of \Dterms at PSP level $n$ grows slower than that of
\Dterms of compacted size $n$, according to the sequence \OEISNUM{A001147}
\cite{oeis} of integers in contrast to \OEISNUM{A254789}.
Table~\ref{tab-oeis-numbers-psp} shows the initial values of both sequences.
It follows that the enumeration of \Dterms according to the PSP level is
``incomplete'', that is, there are \Dterms that are not a member of any PSP
level.

\begin{table}[t]
  \centering
  \caption{The numbers of distinct \Dterms for a single axiom (or full binary
    trees) in PSP level~$n$ and of compacted size~$n$.}
  \label{tab-oeis-numbers-psp}
  \setlength{\tabcolsep}{4pt}
  \begin{tabular}{llcrrrrrrrrr}
    $n$ && 0 & 1 & 2 & 3 & 4 & 5 & 6 & 7 & 8\\\midrule
    $|\psplevel(n)|$ & \OEISNUM{A001147} &
    1 & 1 & 3 & 15 & 105 & 945 & 10,395 & 135,135 & 2,027,025\\
    Compacted size & \OEISNUM{A254789}
    & 1 & 1 & 3 & 15 & 111 & 1,119 & 14,487 & 230,943 & 4,395,855\\
  \end{tabular}
  \vspace{-10pt} %
\end{table}

Enumeration by PSP level is not just growing slower than by compacted size,
but also apparently simpler to realize. In contrast to DAG enumeration based
on variations of the \name{value-number method}
\cite{aho:compilers:86,cw:ccs}, enumeration by PSP level does not require an
interplay of rigid variables with copies of MGTs \cite{cw:ccs} or forgetting
of variables \cite{eder:cs:1989}. For enumeration by PSP level it is
straightforward to maintain just MGTs.

Most importantly for proof search, the maintenance of MGTs permits simple
incorporation of heuristic restrictions based on their properties as discussed
in Sect.~\ref{subsec-prop-mgt}. This includes discarding \Dterms whose MGT
dimensions exceed configured thresholds, discarding \Dterms whose MGT already
appeared as MGT of a \Dterm produced earlier in the enumeration, and limiting
the overall size of cached solutions by deleting entries according to
heuristic criteria based on properties of the MGTs.

Experiments showed that the enumeration of \Dterms by PSP level indeed
succeeds on many \CD problems. For problems with more than a single axiom, the
definition of $\psplevel(n+1)$ was there extended to include also $\D(d,a)$
and $\D(a,d)$ for $d \in \psplevel(n)$ and arbitrary axiom identifiers $a \in
\DPRIMSET$, not just those occurring in $d$. \SGCD \cite{cw:sgcd} can operate
with enumeration by PSP level. In five such configurations with different
heuristic restrictions, \SGCD enumeration succeeded for 153 of the 196
``basic'' \CD problems in \TPTP~8.0.0\footnote{These ``basic'' \CD problems
are all \CD problems in TPTP~8.0.0 with exception of two with status
\name{satisfiable}, five with a form of detachment that is based on
implication represented by disjunction and negation, and three with a
non-atomic goal theorem.} \cite{cw:cdtools:2022}. Among the 196 problems of the corpus
there are 189 rated $< 1.00$. Among the 153 solutions obtained
with enumeration by PSP level there are 12~problems rated~0.25 and two
rated~0.50.\footnote{For \label{foot-psp-table} details, see
\url{http://cs.christophwernhard.com/cdtools/exp-tptpcd-2022-07/table_4.html}.}
The proofs obtained with enumeration by PSP level tend to have small compacted
size, also for problems where exhaustive enumeration by compacted size to find
a proof with minimal compacted size appears not feasible. The \CCS system
\cite{cw:ccs}, for example, succeeds in finding solutions with minimal
compacted size for only 86 problems.\footnote{Details are included in the
table referenced in footnote~\ref{foot-psp-table}.}

Lemmas obtained from \SGCD with enumeration by PSP level can substantially
increase the performance of first-order provers, including the leading
system \Vampire \cite{vampire}, on CD problems \cite{mrcwzzwb:lemmas:2023}.
Moreover, \TP{LCL073-1}, a problem known as really hard for automated provers,
can be solved by \SGCD in a setting based on enumeration by PSP
level \cite{mrcwzzwb:lemmas:2023}. \SGCD is invoked there twice, for lemma
generation by PSP level and for proving with a combination of enumeration by
PSP level and by height. Both phases use different heuristic restrictions. The
problem is rated 1.00, continuously since ratings were introduced in the \TPTP
in 1997. Mechanically, it was so far proven only once, in 2000 by
Wos \cite{wos:meredith} with transferring outputs and insights between several
invocations of \Otter.

\begin{figure}[p]
  \centering
      {\normalfont \scalebox{0.315}{\input{img6/dg_luk_22}}}
      \vspace{6pt}
  \caption{The DAG representation of the proof of \Syll from \Luk
    (\TP{LCL038-1}) obtained by \SGCD configured to enumeration by PSP level.
    Its compacted size is~22. A dashed arrow indicates that the actual formula
    used as minor premise plays no role to determine the conclusion, which is
    indicated by ``$\n$'' in Meredith's notation (see
    Sect.~\ref{subsec-simp-n}). Node numbers provide correspondence to
    Fig.~\ref{fig-proof-short} below.}
  \label{fig-luk-short-graph}
\end{figure}

For deriving \Syll from \Luk, problem \TP{LCL038-1}, \SGCD with enumeration by
PSP level finds in a few seconds a proof that is substantially smaller than
the proof by \Lukasiewicz and its variation by Meredith: The proof has
compacted size~22, tree size~64 and height~22.
Figure~\ref{fig-luk-short-graph} shows it as a DAG.

\begin{figure}[t] %
  \small
  \centering
\begin{tabular}{r@{\hspace{0.5em}}l@{\hspace{0.0em}}R{3em}@{\hspace{0.5em}}R{3em}}  
1. & $\g{CCCpqrCCrpCsp}$ & \mer{1} & \luk{1}\\
2. & $\g{CCCCpqCrqCqsCtCqs} \mereq \f{D}11$ &  & \subsumesluk{2}\\
3. & $\g{CCCpCqrCCsqCtqCuCCsqCtq} \mereq \f{D}12$ &  & \subsumesluk{3}\\
4. & $\g{CCCpCqrCstCCqtCst} \mereq \f{D}\f{D}\f{D}\f{D}1\f{D}1\f{D}1\f{D}1\f{D}\f{D}\f{D}\f{D}131\mathrm{n}11\mathrm{n}1$ &  & \\
5. & $\g{CCCCpqCrqCCCsCptCrquCvCCCsCptCrqu} \mereq \f{D}1\f{D}\f{D}414$ &  & \\
6. & $\g{CCCpqpCrp} \mereq \f{D}\f{D}31\mathrm{n}$ & \mer{2} & \luk{4}\\
* 7. & $\g{CCpqCCqrCpr} \mereq \f{D}\f{D}\f{D}\f{D}1\f{D}\f{D}55\mathrm{n}1\mathrm{n}1$ & \mer{17} & \luk{29}\\
* 8. & $\g{CCCpqpp} \mereq \f{D}\f{D}\f{D}426\mathrm{n}$ & \mer{18} & \luk{28}\\
* 9. & $\g{CpCqp} \mereq \f{D}\f{D}26\mathrm{n}$ & \mer{19} & \luk{27}\\
\end{tabular}
  \vspace{5pt} %
  \caption{The proof $\DSHORT$ in Meredith's notation
    \cite{meredith:notes:1963}. The two right columns indicate corresponding
    proof steps in \Lukasiewicz's original \cite{luk:1948} and in Meredith's
    variation \cite{meredith:notes:1963} (Fig.~\ref{fig-proof-mer}), as
    explained in Sect.~\ref{subsubsec-mer-luk-labels}.}
  \label{fig-proof-short}
\end{figure}

\begin{figure}[h!] %
  \centering
  \vspace{-5pt}
  \begin{tikzpicture}[>=latex',line join=bevel,scale=0.455]
      \pgfsetlinewidth{1bp}
\pgfsetcolor{black}
  \draw [] (29.891bp,72.0bp) .. controls (27.385bp,72.0bp) and (24.698bp,72.0bp)  .. (22.188bp,72.0bp);
  \draw [] (59.891bp,72.0bp) .. controls (57.385bp,72.0bp) and (54.698bp,72.0bp)  .. (52.188bp,72.0bp);
  \draw [] (89.891bp,89.252bp) .. controls (87.385bp,86.836bp) and (84.698bp,84.244bp)  .. (82.188bp,81.824bp);
  \draw [] (119.89bp,116.25bp) .. controls (117.39bp,113.84bp) and (114.7bp,111.24bp)  .. (112.19bp,108.82bp);
  \draw [] (89.891bp,54.748bp) .. controls (87.385bp,57.164bp) and (84.698bp,59.756bp)  .. (82.188bp,62.176bp);
  \draw [] (149.89bp,126.0bp) .. controls (147.39bp,126.0bp) and (144.7bp,126.0bp)  .. (142.19bp,126.0bp);
  \draw [] (119.89bp,81.748bp) .. controls (117.39bp,84.164bp) and (114.7bp,86.756bp)  .. (112.19bp,89.176bp);
  \draw [] (119.89bp,62.252bp) .. controls (117.39bp,59.836bp) and (114.7bp,57.244bp)  .. (112.19bp,54.824bp);
  \draw [] (119.89bp,27.748bp) .. controls (117.39bp,30.164bp) and (114.7bp,32.756bp)  .. (112.19bp,35.176bp);
\begin{scope}
  \definecolor{strokecol}{rgb}{0.0,0.0,0.0}
  \pgfsetstrokecolor{strokecol}
  \draw (11.0bp,72.0bp) node {1};
\end{scope}
\begin{scope}
  \definecolor{strokecol}{rgb}{0.0,0.0,0.0}
  \pgfsetstrokecolor{strokecol}
  \draw (41.0bp,72.0bp) node {2};
\end{scope}
\begin{scope}
  \definecolor{strokecol}{rgb}{0.0,0.0,0.0}
  \pgfsetstrokecolor{strokecol}
  \draw (71.0bp,72.0bp) node {3};
\end{scope}
\begin{scope}
  \definecolor{strokecol}{rgb}{0.0,0.0,0.0}
  \pgfsetstrokecolor{strokecol}
  \draw (101.0bp,99.0bp) node {4};
\end{scope}
\begin{scope}
  \definecolor{strokecol}{rgb}{0.0,0.0,0.0}
  \pgfsetstrokecolor{strokecol}
  \draw (131.0bp,126.0bp) node {5};
\end{scope}
\begin{scope}
  \definecolor{strokecol}{rgb}{0.0,0.0,0.0}
  \pgfsetstrokecolor{strokecol}
  \draw (101.0bp,45.0bp) node {6};
\end{scope}
\begin{scope}
  \definecolor{strokecol}{rgb}{0.0,0.0,0.0}
  \pgfsetstrokecolor{strokecol}
  \draw (161.0bp,126.0bp) node {7};
\end{scope}
\begin{scope}
  \definecolor{strokecol}{rgb}{0.0,0.0,0.0}
  \pgfsetstrokecolor{strokecol}
  \draw (131.0bp,72.0bp) node {8};
\end{scope}
\begin{scope}
  \definecolor{strokecol}{rgb}{0.0,0.0,0.0}
  \pgfsetstrokecolor{strokecol}
  \draw (131.0bp,18.0bp) node {9};
\end{scope}
  \end{tikzpicture}%
    \vspace{2pt}
  \caption{The label dependency ordering $<_{\compd}$ of proof $\DSHORT$
    as presented in Fig.~\ref{fig-proof-short}.}
  \label{fig-short-ordering}  
\end{figure}

\enlargethispage{13pt} %

This proof of \Syll was supplemented with enumeration techniques to derive also
\Peirce and \Simp \cite{cw:cdtools:2022}. We call the overall proof of the three goal
theorems, whose compacted size is 29, $\DSHORT$. Figure~\ref{fig-proof-short}
shows it in Meredith's notation, where labeled intermediate steps are only
introduced for nodes with multiple incoming edges.
Figure~\ref{fig-short-ordering} shows the corresponding label dependency
ordering.

\begin{sidewaystable}
  \centering
  \small
  \caption{Properties as specified in Sect.~\ref{subsec-properties} of all
    subproofs of $\DSHORT$ (Fig.~\ref{fig-proof-short}).}
  \label{tab-bigshort}
    \setlength{\tabcolsep}{3.1pt}
    \renewcommand{\arraystretch}{0.89}
    \rowcolors{2}{gray!20}{white}
  \scalebox{0.97}{
\begin{tabular}{rlrrrrrrrrrccccrrrcrrccrrrrr}
 & \tabatt{} & \tabatt{$\DSHORT$} & \tabatt{MER} & \tabatt{{\L}UK} & \tabatt{NN} & \tabatt{DC} & \tabatt{DT} & \tabatt{DH} & \tabatt{DI} & \tabatt{DR} & \tabatt{DS} & \tabatt{DP} & \tabatt{DK$_{L}$} & \tabatt{DK$_{R}$} & \tabatt{FC} & \tabatt{FT} & \tabatt{FH} & \tabatt{FO} & \tabatt{MC} & \tabatt{MT} & \tabatt{RS} & \tabatt{RC} & \tabatt{IT$_{U}$} & \tabatt{IT$_{M}$} & \tabatt{IH$_{U}$} & \tabatt{IH$_{M}$} & \tabatt{}\\\midrule
1. & $1$ & \textit{P1} & \mer{1} & \luk{1} & \spec{1} & 0 & 0 & 0 & 18 & 79 & -- & \textbullet & 0 & 0 & 6 & 6 & 3 & \textbullet & 0 & 0 & \textbullet & \textbullet & 2495 & 103 & 15 & 9 & 1.\\
2. & $\f{D}11$ & \textit{P2} & \textcolor{black!30}{\textbullet} & \subsumesluk{2} &  & 1 & 1 & 1 & 3 & 9 & $1{=}1$ & \textbullet & 1 & 1 & 7 & 8 & 4 & \textbullet & 1 & 1 & \textbullet & \textbullet & 929 & 57 & 13 & 7 & 2.\\
3. & $\f{D}12$ & \textit{P3} & \textcolor{black!30}{\textbullet} & \subsumesluk{3} &  & 2 & 2 & 2 & 2 & 7 & $1\lhd$ & \textbullet & 1 & 2 & 8 & 11 & 4 & \textbullet & 2 & 2 & \textbullet & \textbullet & 1032 & 154 & 14 & 10 & 3.\\
4. & $\f{D}13$ &  &  &  &  & 3 & 3 & 3 & 1 & 5 & $1\lhd$ & \textbullet & 1 & 3 & 9 & 11 & 5 & \textbullet & 3 & 3 & \textbullet & \textbullet & 1462 & 290 & 14 & 11 & 4.\\
5. & $\f{D}41$ &  &  &  &  & 4 & 4 & 4 & 1 & 5 & $\rhd1$ & \textbullet & 2 & 3 & 7 & 7 & 5 & \textcolor{black!30}{\textbullet} & 4 & 4 & \textbullet & \textbullet & 564 & 111 & 13 & 10 & 5.\\
6. & $\f{D}5\mathrm{n}$ &  &  &  &  & 5 & 5 & 5 & 1 & 5 & $\rhd\n$ & \textbullet & 3 & 3 & 6 & 6 & 4 & \textbullet & 5 & 5 & \textbullet & \textbullet & 563 & 110 & 12 & 9 & 6.\\
7. & $\f{D}61$ &  &  &  & \spec{14} & 6 & 6 & 6 & 1 & 5 & $\rhd1$ & \textbullet & 4 & 3 & 5 & 5 & 4 & \textbullet & 6 & 6 & \textbullet & \textbullet & 230 & 43 & 11 & 8 & 7.\\
8. & $\f{D}17$ &  &  &  &  & 7 & 7 & 7 & 1 & 5 & $1\lhd$ & \textbullet & 4 & 3 & 8 & 8 & 6 & \textbullet & 7 & 7 & \textbullet & \textbullet & 259 & 45 & 12 & 9 & 8.\\
9. & $\f{D}18$ &  &  &  &  & 8 & 8 & 8 & 1 & 5 & $1\lhd$ & \textbullet & 4 & 3 & 9 & 14 & 6 & \textbullet & 8 & 8 & \textbullet & \textbullet & 569 & 102 & 12 & 9 & 9.\\
10. & $\f{D}19$ &  &  &  &  & 9 & 9 & 9 & 1 & 5 & $1\lhd$ & \textbullet & 4 & 4 & 10 & 11 & 7 & \textbullet & 9 & 9 & \textbullet & \textbullet & 507 & 86 & 13 & 10 & 10.\\
11. & $\f{D}1.10$ &  &  &  &  & 10 & 10 & 10 & 1 & 5 & $1\lhd$ & \textbullet & 4 & 5 & 11 & 17 & 7 & \textbullet & 10 & 10 & \textbullet & \textbullet & 802 & 140 & 13 & 10 & 11.\\
12. & $\f{D}11.1$ &  &  &  &  & 11 & 11 & 11 & 1 & 5 & $\rhd1$ & \textbullet & 4 & 5 & 9 & 10 & 6 & \textcolor{black!30}{\textbullet} & 11 & 11 & \textbullet & \textbullet & 333 & 59 & 12 & 9 & 12.\\
13. & $\f{D}12.\mathrm{n}$ &  &  &  &  & 12 & 12 & 12 & 1 & 5 & $\rhd\n$ & \textbullet & 4 & 5 & 8 & 9 & 5 & \textbullet & 12 & 12 & \textbullet & \textbullet & 332 & 58 & 11 & 8 & 13.\\
14. & $\f{D}31$ &  & \textcolor{black!30}{\textbullet} & \textcolor{black!30}{\textbullet} &  & 3 & 3 & 3 & 1 & 2 & $\rhd1$ & \textbullet & 2 & 2 & 5 & 5 & 4 & \textcolor{black!30}{\textbullet} & 3 & 3 & \textbullet & \textbullet & 8 & 8 & 5 & 5 & 14.\\
15. & $\f{D}13.1$ & \textit{P4} &  &  &  & 13 & 13 & 13 & 3 & 5 & $\rhd1$ & \textbullet & 4 & 5 & 7 & 8 & 4 & \textbullet & $\iv{12}{13}$ & 13 & \textbullet & \textbullet & 197 & 37 & 10 & 7 & 15.\\
16. & $\f{D}15.2$ &  &  &  & \spec{15} & 14 & 15 & 14 & 1 & 1 & $\rhd$ & -- & 5 & 5 & 4 & 5 & 3 & \textbullet & 11 & 13 & \textbullet & \textbullet & 11 & 11 & 5 & 5 & 16.\\
17. & $\f{D}15.1$ &  & \mer{7} & \luk{16} &  & 14 & 14 & 14 & 1 & 2 & $\rhd1$ & \textbullet & 5 & 5 & 6 & 7 & 4 & \textbullet & $\iv{12}{14}$ & 14 & \textbullet & \textbullet & 95 & 62 & 9 & 8 & 17.\\
18. & $\f{D}17.15$ &  &  &  &  & 15 & 28 & 15 & 1 & 2 & $\rhd$ & -- & 6 & 5 & 9 & 11 & 6 & \textbullet & $\iv{12}{15}$ & $\iv{23}{28}$ & \textbullet & \textbullet & 57 & 37 & 8 & 7 & 18.\\
19. & $\f{D}1.18$ & \textit{P5} &  &  &  & 16 & 29 & 16 & 2 & 2 & $1\lhd$ & -- & 6 & 5 & 10 & 16 & 6 & \textbullet & $\iv{12}{16}$ & $\iv{23}{29}$ & \textbullet & \textbullet & 81 & 64 & 8 & 8 & 19.\\
20. & $\f{D}19.19$ &  &  &  &  & 17 & 59 & 17 & 1 & 1 & $=$ & -- & 6 & 5 & 11 & 18 & 7 & \textcolor{black!30}{\textbullet} & $\iv{12}{17}$ & $\iv{24}{59}$ & \textbullet & \textbullet & 33 & 33 & 7 & 7 & 20.\\
21. & $\f{D}20.\mathrm{n}$ &  &  &  &  & 18 & 60 & 18 & 1 & 1 & $\rhd\n$ & -- & 6 & 5 & 10 & 17 & 6 & \textbullet & $\iv{12}{18}$ & $\iv{24}{60}$ & \textbullet & \textbullet & 32 & 32 & 6 & 6 & 21.\\
22. & $\f{D}1.21$ &  &  &  &  & 19 & 61 & 19 & 1 & 1 & $1\lhd$ & -- & 6 & 5 & 11 & 15 & 6 & \textbullet & $\iv{12}{19}$ & $\iv{23}{61}$ & \textbullet & \textbullet & 36 & 36 & 6 & 6 & 22.\\
23. & $\f{D}22.1$ &  &  &  &  & 20 & 62 & 20 & 1 & 1 & $\rhd1$ & -- & 6 & 5 & 8 & 10 & 5 & \textcolor{black!30}{\textbullet} & $\iv{12}{20}$ & $\iv{24}{62}$ & \textbullet & \textbullet & 13 & 13 & 5 & 5 & 23.\\
24. & $\f{D}23.\mathrm{n}$ &  &  &  &  & 21 & 63 & 21 & 1 & 1 & $\rhd\n$ & -- & 6 & 5 & 7 & 9 & 4 & \textbullet & $\iv{12}{21}$ & $\iv{23}{63}$ & \textbullet & \textbullet & 12 & 12 & 4 & 4 & 24.\\
25. & $\f{D}14.\mathrm{n}$ & \textit{P6} & \mer{2} & \luk{4} &  & 4 & 4 & 4 & 2 & 2 & $\rhd\n$ & \textbullet & 3 & 2 & 4 & 4 & 3 & \textbullet & 4 & 4 & \textbullet & \textbullet & 7 & 7 & 4 & 4 & 25.\\
26. & $\f{D}16.25$ &  &  &  &  & 17 & 20 & 15 & 1 & 1 & -- & -- & 6 & 5 & 4 & 4 & 4 & \textcolor{black!30}{\textbullet} & 10 & 14 & \textbullet & \textbullet & 4 & 4 & 4 & 4 & 26.\\
27. & $\f{D}2.25$ &  &  &  & \spec{16} & 5 & 6 & 5 & 1 & 1 & $\lhd$ & -- & 3 & 2 & 3 & 3 & 3 & \textcolor{black!30}{\textbullet} & 5 & 6 & \textbullet & \textbullet & 3 & 3 & 3 & 3 & 27.\\
28. & $\f{D}24.1$ & \textit{P7} & \mer{17} & \luk{29} & \spec{11} & 22 & 64 & 22 & 0 & 1 & $\rhd1$ & -- & 6 & 5 & 5 & 5 & 3 & \textbullet & $\iv{12}{22}$ & $\iv{24}{64}$ & \textbullet & \textbullet & 5 & 5 & 3 & 3 & 28.\\
29. & $\f{D}26.\mathrm{n}$ & \textit{P8} & \mer{18} & \luk{28} & \spec{12} & 18 & 21 & 16 & 0 & 1 & $\rhd\n$ & -- & 7 & 5 & 3 & 3 & 3 & \textbullet & 11 & 15 & \textbullet & \textbullet & 3 & 3 & 3 & 3 & 29.\\
30. & $\f{D}27.\mathrm{n}$ & \textit{P9} & \mer{19} & \luk{27} & \spec{13} & 6 & 7 & 6 & 0 & 1 & $\rhd\n$ & -- & 3 & 2 & 2 & 2 & 2 & \textbullet & 6 & 7 & \textbullet & \textbullet & 2 & 2 & 2 & 2 & 30.\\
\end{tabular}
}
\end{sidewaystable}

\begin{table}[h!] %
  \centering
  \caption{``Named'' formulas that occur as MGTs of subproofs of $\DSHORT$
    and are not listed in Table~\ref{tab-formula-names}.}
  \label{tab-named-shortproof}
  \begin{tabular}{rll}
    Id & Formula & Names\\\midrule
    $\spec{14}$ & $\g{CpCCCqprCsr}$ & $\comb{B (C (B K)) K}$\\
    $\spec{15}$ & $\g{CCpCpqCrCpq}$ & $\comb{C (K W)}$, $\comb{B K W}$\\
    $\spec{16}$ & $\g{CpCqCrq}$ & $\comb{K K}$\\
  \end{tabular}
\end{table}

The criteria on combinator terms as formula names from
footnote~\ref{footnote-comb-labels} (p.~\pageref{footnote-comb-labels}) lead
for $\DSHORT$ to three additional ``named'' formulas, which are shown in
Table~\ref{tab-named-shortproof}. Properties of all subproofs of $\DSHORT$ are
shown in Table~\ref{tab-bigshort}, in analogy to Tables~\ref{tab-bigmer}
and~\ref{tab-bigluk}.

The small \Dterm size apparently comes for the price of a slight extension of
the maximal size of MGTs: For $\DSHORT$ the maximal value of \xatt{FT} is 17
and the maximal value of \xatt{FH} is 7, compared to 15 and~6 for $\DMER$ and
$\DLUK$, respectively. Subproof~26 is the only one where the \xatt{DS} column
has an empty value, indicating that it cannot be obtained by a PSP induction
step from some subproof appearing at a row further above. Subproof~26 does not
belong to the subproof of \Syll, subproof~27, which was obtained purely by PSP
level enumeration, but to the supplements to prove \Peirce and \Simp.
Figure~\ref{fig-luk-short-graph} shows just the proof of \Syll as a DAG.

A further size reduction of our proof of \TP{LCL038-1} from
Fig.~\ref{fig-luk-short-graph} can be achieved with combinatory
compression \cite{cw:ccs}: The tree grammar obtained from a grammar-based tree
compression tool \cite{lohrey:treerepair:2013} for the proof can be converted
to a generalized form of \Dterm that permits leaves labeled by combinators
expressing proof structure transformations of the original \Dterm. It has
compacted size~19, height~15, but tree size~119
\cite{cw:cdtools:2022}.

\section{Conclusion}
\label{sec-conclusion}

Our leading motivation has been improving proof search in ATP by the incorporation
of operations that are more global than extending a set of formulas by an
inferred formula. A comparative analysis of proof systems seemed necessary to
this end. Our focus here was on Meredith's system known as condensed
detachment (CD). For it we have elaborated a new formal reconstruction as a
special case of the connection method (CM).

Our reconstruction preserves an important aspect of CD, the reification of
proofs as terms, more specifically \Dterms, which may be regarded as full
binary trees. The underlying ATP model is the CM, where structures formed by
connections attached to the formula provide the key concept. \Dterms then
are one way to represent such structures for problems of a certain restricted
class.

The incorporation of lemmas belongs to the key global operations for reducing
the amount of search and the size of proofs. We specifically considered a form
of lemmas that corresponds to the repeated use of a substructure -- subtree or
subterm -- in a proof. Or, in other words, the interplay of trees as proof
structures and their representation as DAGs. Lemmas are then characterized by
way of \Dterms, along with various measures and properties concerning the
proof structure as well as the proven formulas.

The resulting formalism has opened the door towards enhancement of ATP systems
by taking into account global features within the proof search, suggesting
various techniques that are immediately applicable in practice. First
experiments on the restricted kind of problems considered in the paper, which,
among others, include the 196 CD problems in the TPTP problem collection, are
promising and encourage future work.

On the basis of the formalism we analyzed and compared the remarkable historic
proofs by {\L}ukasiewicz and Meredith of a problem stated by the former. The
problems played a historic role also in ATP, which is surveyed in the paper.
However, an in-depth analysis of its human-made proofs has not been undertaken
before. In a particular experiment we ``learned'' from the human-made proofs
by converting an observed structural feature into a novel method for proof
search. It finds short proofs for many problems for which a systematic search for
shortest proofs appears unfeasible. In particular, for {\L}ukasiewicz's
problem, it quickly yields a particularly short proof, shorter than the
human-made model proofs, and drastically shorter than all known proofs by ATP
systems.

In the longer run, our approach lends itself towards supporting ATP by machine
learning (see, e.g., \cite{faerber:2021:mlct,enigma:2020,mrcwzzwb:lemmas:2023}). This is because
the reification of proof structures provides information that can be exploited
in the learning process and is not available within other ATP approaches.

\paragraph{Acknowledgments.}

We thank Michael Rawson and Zsolt Zombori as well as anonymous reviewers of
CADE~2021 and of JAR for helpful comments and suggestions that led to
significant improvements of the presentation. Funded by the Deutsche
Forschungsgemeinschaft (DFG, German Research Foundation) --
Project-ID~457292495. The work was supported by the North-German
Supercomputing Alliance (HLRN).

\nocite{rezus:2020:witness} %
\nocite{luk:selected:1970} %
\phantomsection
\fancyhead[CO]{\small References}
\fancyhead[CE]{\small References}
\addcontentsline{toc}{section}{References}
\nocite{bibsettings}
\bibliography{biblukas202}


\begin{thebibliography}{85}
\ifx \bisbn   \undefined \def \bisbn  #1{ISBN #1}\fi
\ifx \binits  \undefined \def \binits#1{#1}\fi
\ifx \bauthor  \undefined \def \bauthor#1{#1}\fi
\ifx \batitle  \undefined \def \batitle#1{#1}\fi
\ifx \bjtitle  \undefined \def \bjtitle#1{#1}\fi
\ifx \bvolume  \undefined \def \bvolume#1{\textbf{#1}}\fi
\ifx \byear  \undefined \def \byear#1{#1}\fi
\ifx \bissue  \undefined \def \bissue#1{#1}\fi
\ifx \bfpage  \undefined \def \bfpage#1{#1}\fi
\ifx \blpage  \undefined \def \blpage #1{#1}\fi
\ifx \burl  \undefined \def \burl#1{\textsf{#1}}\fi
\ifx \doiurl  \undefined \def \doiurl#1{\url{https://doi.org/#1}}\fi
\ifx \betal  \undefined \def \betal{\textit{et al.}}\fi
\ifx \binstitute  \undefined \def \binstitute#1{#1}\fi
\ifx \binstitutionaled  \undefined \def \binstitutionaled#1{#1}\fi
\ifx \bctitle  \undefined \def \bctitle#1{#1}\fi
\ifx \beditor  \undefined \def \beditor#1{#1}\fi
\ifx \bpublisher  \undefined \def \bpublisher#1{#1}\fi
\ifx \bbtitle  \undefined \def \bbtitle#1{#1}\fi
\ifx \bedition  \undefined \def \bedition#1{#1}\fi
\ifx \bseriesno  \undefined \def \bseriesno#1{#1}\fi
\ifx \blocation  \undefined \def \blocation#1{#1}\fi
\ifx \bsertitle  \undefined \def \bsertitle#1{#1}\fi
\ifx \bsnm \undefined \def \bsnm#1{#1}\fi
\ifx \bsuffix \undefined \def \bsuffix#1{#1}\fi
\ifx \bparticle \undefined \def \bparticle#1{#1}\fi
\ifx \barticle \undefined \def \barticle#1{#1}\fi
\bibcommenthead
\ifx \bconfdate \undefined \def \bconfdate #1{#1}\fi
\ifx \botherref \undefined \def \botherref #1{#1}\fi
\ifx \url \undefined \def \url#1{\textsf{#1}}\fi
\ifx \bchapter \undefined \def \bchapter#1{#1}\fi
\ifx \bbook \undefined \def \bbook#1{#1}\fi
\ifx \bcomment \undefined \def \bcomment#1{#1}\fi
\ifx \oauthor \undefined \def \oauthor#1{#1}\fi
\ifx \citeauthoryear \undefined \def \citeauthoryear#1{#1}\fi
\ifx \endbibitem  \undefined \def \endbibitem {}\fi
\ifx \bconflocation  \undefined \def \bconflocation#1{#1}\fi
\ifx \arxivurl  \undefined \def \arxivurl#1{\textsf{#1}}\fi
\csname PreBibitemsHook\endcsname

\bibitem[\protect\citeauthoryear{Aho et~al.}{1986}]{aho:compilers:86}
\begin{bbook}
\bauthor{\bsnm{Aho}, \binits{A.V.}},
\bauthor{\bsnm{Sethi}, \binits{R.}},
\bauthor{\bsnm{Ullman}, \binits{J.D.}}:
\bbtitle{Compilers -- Principles, Techniques, and Tools}.
\bpublisher{Addison-Wesley},
\blocation{Reading, MA}
(\byear{1986})
\end{bbook}
\endbibitem

\bibitem[\protect\citeauthoryear{Astrachan and
  Stickel}{1992}]{astrachan:stickel:caching:1992}
\begin{bchapter}
\bauthor{\bsnm{Astrachan}, \binits{O.L.}},
\bauthor{\bsnm{Stickel}, \binits{M.E.}}:
\bctitle{Caching and lemmaizing in model elimination theorem provers}.
In: \beditor{\bsnm{Kapur}, \binits{D.}} (ed.)
\bbtitle{CADE-11},
pp. \bfpage{224}--\blpage{238}.
\bpublisher{Springer},
\blocation{Berlin}
(\byear{1992}).
\doiurl{10.1007/3-540-55602-8_168}
\end{bchapter}
\endbibitem

\bibitem[\protect\citeauthoryear{Baumgartner et~al.}{1996}]{hypertableaux}
\begin{bchapter}
\bauthor{\bsnm{Baumgartner}, \binits{P.}},
\bauthor{\bsnm{Furbach}, \binits{U.}},
\bauthor{\bsnm{Niemel{\"a}}, \binits{I.}}:
\bctitle{Hyper tableaux}.
In: \beditor{\bsnm{Alferes}, \binits{J.J.}},
\beditor{\bsnm{Pereira}, \binits{L.M.}},
\beditor{\bsnm{Orlowska}, \binits{E.}} (eds.)
\bbtitle{JELIA'96}.
\bsertitle{LNCS (LNAI)},
vol. \bseriesno{1126},
pp. \bfpage{1}--\blpage{17}.
\bpublisher{Springer},
\blocation{Berlin}
(\byear{1996}).
\doiurl{10.1007/3-540-61630-6_1}
\end{bchapter}
\endbibitem

\bibitem[\protect\citeauthoryear{Bibel}{1982}]{bibel:atp:1982}
\begin{bbook}
\bauthor{\bsnm{Bibel}, \binits{W.}}:
\bbtitle{Automated Theorem Proving}.
\bpublisher{Vieweg},
\blocation{Braunschweig}
(\byear{1982}).
\doiurl{10.1007/978-3-322-90102-6}.
\bcomment{Second edition 1987}
\end{bbook}
\endbibitem

\bibitem[\protect\citeauthoryear{Bibel}{1993}]{bibel:deduction:1993}
\begin{bbook}
\bauthor{\bsnm{Bibel}, \binits{W.}}:
\bbtitle{Deduction: Automated Logic}.
\bpublisher{Academic Press},
\blocation{London}
(\byear{1993})
\end{bbook}
\endbibitem

\bibitem[\protect\citeauthoryear{Bibel}{2024a}]{wb:comparison:2024}
\begin{bchapter}
\bauthor{\bsnm{Bibel}, \binits{W.}}:
\bctitle{Comparison of proof methods}.
In: \beditor{\bsnm{Otten}, \binits{J.}},
\beditor{\bsnm{Bibel}, \binits{W.}} (eds.)
\bbtitle{AReCCa 2023}.
\bsertitle{CEUR Workshop Proc.},
vol. \bseriesno{3613},
pp. \bfpage{119}--\blpage{132}.
\bpublisher{CEUR-WS.org},
\blocation{Aachen}
(\byear{2024})
\end{bchapter}
\endbibitem

\bibitem[\protect\citeauthoryear{Bibel}{2024b}]{wb:conjecture:2024}
\begin{botherref}
\oauthor{\bsnm{Bibel}, \binits{W.}}:
A conjecture for {ATP} research.
CoRR
\textbf{abs/2403.10334}
(2024).
\doiurl{10.48550/2403.10334}
\end{botherref}
\endbibitem

\bibitem[\protect\citeauthoryear{Bibel and Otten}{2020}]{bibel:otten:2020}
\begin{bchapter}
\bauthor{\bsnm{Bibel}, \binits{W.}},
\bauthor{\bsnm{Otten}, \binits{J.}}:
\bctitle{From {S}ch\"utte's formal systems to modern automated deduction}.
In: \beditor{\bsnm{Kahle}, \binits{R.}},
\beditor{\bsnm{Rathjen}, \binits{M.}} (eds.)
\bbtitle{The Legacy of Kurt Sch\"utte},
pp. \bfpage{215}--\blpage{249}.
\bpublisher{Springer},
\blocation{Cham}
(\byear{2020}).
\bcomment{Chap. 13}.
\doiurl{10.1007/978-3-030-49424-7\_13}
\end{bchapter}
\endbibitem

\bibitem[\protect\citeauthoryear{Bull and Cubrinovska}{2018}]{bull:interview}
\begin{botherref}
\oauthor{\bsnm{Bull}, \binits{R.}},
\oauthor{\bsnm{Cubrinovska}, \binits{A.}}:
Interview with {R}obert {B}ull.
online: \textit{Popper and Prior in New Zealand},
  \url{http://popper-prior.nz/items/show/255}, accessed Jul 09, 2024
(2018)
\end{botherref}
\endbibitem

\bibitem[\protect\citeauthoryear{Bunder}{1995}]{bunder:cd:1995}
\begin{barticle}
\bauthor{\bsnm{Bunder}, \binits{M.W.}}:
\batitle{A simplified form of condensed detachment}.
\bjtitle{J. Log., Lang. Inf.}
\bvolume{4}(\bissue{2}),
\bfpage{169}--\blpage{173}
(\byear{1995}).
\doiurl{10.1007/BF01048619}
\end{barticle}
\endbibitem

\bibitem[\protect\citeauthoryear{Claessen and
  Smallbone}{2018}]{claessen:smallbone:2018}
\begin{bchapter}
\bauthor{\bsnm{Claessen}, \binits{K.}},
\bauthor{\bsnm{Smallbone}, \binits{N.}}:
\bctitle{Efficient encodings of first-order {H}orn formulas in equational
  logic}.
In: \beditor{\bsnm{Galmiche}, \binits{D.}},
\beditor{\bsnm{Schulz}, \binits{S.}},
\beditor{\bsnm{Sebastiani}, \binits{R.}} (eds.)
\bbtitle{IJCAR 2018}.
\bsertitle{LNCS (LNAI)},
vol. \bseriesno{10900},
pp. \bfpage{388}--\blpage{404}.
\bpublisher{Springer},
\blocation{Cham}
(\byear{2018}).
\doiurl{10.1007/978-3-319-94205-6\_26}
\end{bchapter}
\endbibitem

\bibitem[\protect\citeauthoryear{Dershowitz and
  Jouannaud}{1991}]{dershowitz:notations:1991}
\begin{barticle}
\bauthor{\bsnm{Dershowitz}, \binits{N.}},
\bauthor{\bsnm{Jouannaud}, \binits{J.}}:
\batitle{Notations for rewriting}.
\bjtitle{Bull. {EATCS}}
\bvolume{43},
\bfpage{162}--\blpage{174}
(\byear{1991})
\end{barticle}
\endbibitem

\bibitem[\protect\citeauthoryear{Downey et~al.}{1980}]{downey:variations:1980}
\begin{barticle}
\bauthor{\bsnm{Downey}, \binits{P.J.}},
\bauthor{\bsnm{Sethi}, \binits{R.}},
\bauthor{\bsnm{Tarjan}, \binits{R.E.}}:
\batitle{Variations on the common subexpression problem}.
\bjtitle{JACM}
\bvolume{27}(\bissue{4}),
\bfpage{758}--\blpage{771}
(\byear{1980}).
\doiurl{10.1145/322217.322228}
\end{barticle}
\endbibitem

\bibitem[\protect\citeauthoryear{Eder}{1985}]{eder:subst:1985}
\begin{barticle}
\bauthor{\bsnm{Eder}, \binits{E.}}:
\batitle{Properties of substitutions and unification}.
\bjtitle{J. Symb. Comput.}
\bvolume{1}(\bissue{1}),
\bfpage{31}--\blpage{46}
(\byear{1985}).
\doiurl{10.1016/S0747-7171(85)80027-4}
\end{barticle}
\endbibitem

\bibitem[\protect\citeauthoryear{Eder}{1989}]{eder:cs:1989}
\begin{bchapter}
\bauthor{\bsnm{Eder}, \binits{E.}}:
\bctitle{A comparison of the resolution calculus and the connection method, and
  a new calculus generalizing both methods}.
In: \beditor{\bsnm{B{\"o}rger}, \binits{E.}},
\beditor{\bsnm{{Kleine B{\"u}ning}}, \binits{H.}},
\beditor{\bsnm{Richter}, \binits{M.M.}} (eds.)
\bbtitle{CSL~'88}.
\bsertitle{LNCS},
vol. \bseriesno{385},
pp. \bfpage{80}--\blpage{98}.
\bpublisher{Springer},
\blocation{Berlin}
(\byear{1989}).
\doiurl{10.1007/BFb0026296}
\end{bchapter}
\endbibitem

\bibitem[\protect\citeauthoryear{Eder}{1992}]{eder:relative:1992}
\begin{bbook}
\bauthor{\bsnm{Eder}, \binits{E.}}:
\bbtitle{Relative Complexities of First Order Calculi}.
\bpublisher{Vieweg},
\blocation{Braunschweig}
(\byear{1992}).
\doiurl{10.1007/978-3-322-84222-0}
\end{bbook}
\endbibitem

\bibitem[\protect\citeauthoryear{F{\"{a}}rber et~al.}{2021}]{faerber:2021:mlct}
\begin{barticle}
\bauthor{\bsnm{F{\"{a}}rber}, \binits{M.}},
\bauthor{\bsnm{Kaliszyk}, \binits{C.}},
\bauthor{\bsnm{Urban}, \binits{J.}}:
\batitle{Machine learning guidance for connection tableaux}.
\bjtitle{J. Autom. Reasoning}
\bvolume{65}(\bissue{2}),
\bfpage{287}--\blpage{320}
(\byear{2021}).
\doiurl{10.1007/s10817-020-09576-7}
\end{barticle}
\endbibitem

\bibitem[\protect\citeauthoryear{Fitelson and
  Wos}{2001}]{fitelson:missing:2001}
\begin{barticle}
\bauthor{\bsnm{Fitelson}, \binits{B.}},
\bauthor{\bsnm{Wos}, \binits{L.}}:
\batitle{Missing proofs found}.
\bjtitle{J. Autom. Reasoning}
\bvolume{27}(\bissue{2}),
\bfpage{201}--\blpage{225}
(\byear{2001}).
\doiurl{10.1023/A:1010695827789}
\end{barticle}
\endbibitem

\bibitem[\protect\citeauthoryear{Flajolet et~al.}{1990}]{flajolet:1990}
\begin{bchapter}
\bauthor{\bsnm{Flajolet}, \binits{P.}},
\bauthor{\bsnm{Sipala}, \binits{P.}},
\bauthor{\bsnm{Steyaert}, \binits{J.}}:
\bctitle{Analytic variations on the common subexpression problem}.
In: \bbtitle{ICALP90}.
\bsertitle{LNCS},
vol. \bseriesno{443},
pp. \bfpage{220}--\blpage{234}.
\bpublisher{Springer},
\blocation{Berlin}
(\byear{1990}).
\doiurl{10.1007/BFb0032034}
\end{bchapter}
\endbibitem

\bibitem[\protect\citeauthoryear{Fuchs and Fuchs}{1997}]{fuchs:code:1997}
\begin{bchapter}
\bauthor{\bsnm{Fuchs}, \binits{D.}},
\bauthor{\bsnm{Fuchs}, \binits{M.}}:
\bctitle{{CODE}: A powerful prover for problems of condensed detachment}.
In: \beditor{\bsnm{McCune}, \binits{W.}} (ed.)
\bbtitle{CADE-14},
pp. \bfpage{260}--\blpage{263}.
\bpublisher{Springer},
\blocation{Berlin}
(\byear{1997}).
\doiurl{10.1007/3-540-63104-6_25}
\end{bchapter}
\endbibitem

\bibitem[\protect\citeauthoryear{Fuchs}{1999}]{fuchs:lemmas:ijcai:1999}
\begin{bchapter}
\bauthor{\bsnm{Fuchs}, \binits{M.}}:
\bctitle{Lemma generation for model elimination by combining top-down and
  bottom-up inference}.
In: \beditor{\bsnm{Dean}, \binits{T.}} (ed.)
\bbtitle{IJCAI 1999},
pp. \bfpage{4}--\blpage{9}.
\bpublisher{Morgan Kaufmann},
\blocation{San Francisco, CA}
(\byear{1999}).
\burl{http://ijcai.org/Proceedings/99-1/Papers/001.pdf}
\end{bchapter}
\endbibitem

\bibitem[\protect\citeauthoryear{Genitrini et~al.}{2020}]{genitrini:2020}
\begin{barticle}
\bauthor{\bsnm{Genitrini}, \binits{A.}},
\bauthor{\bsnm{Gittenberger}, \binits{B.}},
\bauthor{\bsnm{Kauers}, \binits{M.}},
\bauthor{\bsnm{Wallner}, \binits{M.}}:
\batitle{Asymptotic enumeration of compacted binary trees of bounded right
  height}.
\bjtitle{J. Comb. Theory, Ser. A}
\bvolume{172},
\bfpage{105177}
(\byear{2020}).
\doiurl{10.1016/j.jcta.2019.105177}
\end{barticle}
\endbibitem

\bibitem[\protect\citeauthoryear{H{\"{a}}hnle}{2001}]{handbook:ar:haehnle}
\begin{bchapter}
\bauthor{\bsnm{H{\"{a}}hnle}, \binits{R.}}:
\bctitle{Tableaux and related methods}.
In: \beditor{\bsnm{Robinson}, \binits{A.}},
\beditor{\bsnm{Voronkov}, \binits{A.}} (eds.)
\bbtitle{Handb. of Autom. Reasoning}
vol. \bseriesno{1},
pp. \bfpage{101}--\blpage{178}.
\bpublisher{Elsevier},
\blocation{Amsterdam}
(\byear{2001}).
\bcomment{Chap. 3}.
\doiurl{10.1016/b978-044450813-3/50005-9}
\end{bchapter}
\endbibitem

\bibitem[\protect\citeauthoryear{Hindley}{1997}]{hindley:book:1997}
\begin{bbook}
\bauthor{\bsnm{Hindley}, \binits{J.R.}}:
\bbtitle{Basic Simple Type Theory}.
\bpublisher{Cambridge University Press},
\blocation{Cambridge}
(\byear{1997}).
\doiurl{10.1017/CBO9780511608865}
\end{bbook}
\endbibitem

\bibitem[\protect\citeauthoryear{Hindley and
  Meredith}{1990}]{hindley:meredith:cd:1990}
\begin{barticle}
\bauthor{\bsnm{Hindley}, \binits{J.R.}},
\bauthor{\bsnm{Meredith}, \binits{D.}}:
\batitle{Principal type-schemes and condensed detachment}.
\bjtitle{J. Symb. Log.}
\bvolume{55}(\bissue{1}),
\bfpage{90}--\blpage{105}
(\byear{1990}).
\doiurl{10.2307/2274956}
\end{barticle}
\endbibitem

\bibitem[\protect\citeauthoryear{Jakubuv et~al.}{2020}]{enigma:2020}
\begin{bchapter}
\bauthor{\bsnm{Jakubuv}, \binits{J.}},
\bauthor{\bsnm{Chvalovsk{\'{y}}}, \binits{K.}},
\bauthor{\bsnm{Ols{\'{a}}k}, \binits{M.}},
\bauthor{\bsnm{Piotrowski}, \binits{B.}},
\bauthor{\bsnm{Suda}, \binits{M.}},
\bauthor{\bsnm{Urban}, \binits{J.}}:
\bctitle{{ENIGMA} {Anonymous}: Symbol-independent inference guiding machine
  (system description)}.
In: \beditor{\bsnm{Peltier}, \binits{N.}},
\beditor{\bsnm{Sofronie-Stokkermans}, \binits{V.}} (eds.)
\bbtitle{IJCAR 2020}.
\bsertitle{LNCS (LNAI)},
vol. \bseriesno{12167},
pp. \bfpage{448}--\blpage{463}.
\bpublisher{Springer},
\blocation{Cham}
(\byear{2020}).
\doiurl{10.1007/978-3-030-51054-1\_29}
\end{bchapter}
\endbibitem

\bibitem[\protect\citeauthoryear{Kalman}{1983}]{kalman:cd:1983}
\begin{barticle}
\bauthor{\bsnm{Kalman}, \binits{J.A.}}:
\batitle{Condensed detachment as a rule of inference}.
\bjtitle{Studia Logica}
\bvolume{42},
\bfpage{443}--\blpage{451}
(\byear{1983}).
\doiurl{10.1007/BF01371632}
\end{barticle}
\endbibitem

\bibitem[\protect\citeauthoryear{Knuth}{1968}]{knuth:1}
\begin{bbook}
\bauthor{\bsnm{Knuth}, \binits{D.E.}}:
\bbtitle{The Art of Computer Programming: Volume 1 / Fundamental Algorithms}.
\bpublisher{Addison-Wesley},
\blocation{Reading, MA}
(\byear{1968})
\end{bbook}
\endbibitem

\bibitem[\protect\citeauthoryear{Kov{\'a}cs and Voronkov}{2013}]{vampire}
\begin{bchapter}
\bauthor{\bsnm{Kov{\'a}cs}, \binits{L.}},
\bauthor{\bsnm{Voronkov}, \binits{A.}}:
\bctitle{First-order theorem proving and \textsc{Vampire}}.
In: \beditor{\bsnm{Sharygina}, \binits{N.}},
\beditor{\bsnm{Veith}, \binits{H.}} (eds.)
\bbtitle{CAV 2013}.
\bsertitle{LNCS},
vol. \bseriesno{8044},
pp. \bfpage{1}--\blpage{35}
(\byear{2013}).
\doiurl{10.1007/978-3-642-39799-8_1}.
\bcomment{Springer}
\end{bchapter}
\endbibitem

\bibitem[\protect\citeauthoryear{Lemmon
  et~al.}{1969}]{lemmon:meredith:purestrict:1957}
\begin{bchapter}
\bauthor{\bsnm{Lemmon}, \binits{E.J.}},
\bauthor{\bsnm{Meredith}, \binits{C.A.}},
\bauthor{\bsnm{Meredith}, \binits{D.}},
\bauthor{\bsnm{Prior}, \binits{A.N.}},
\bauthor{\bsnm{Thomas}, \binits{I.}}:
\bctitle{Calculi of pure strict implication}.
In: \beditor{\bsnm{Davis}, \binits{J.W.}},
\beditor{\bsnm{Hockney}, \binits{D.J.}},
\beditor{\bsnm{Wilson}, \binits{W.K.}} (eds.)
\bbtitle{Philosophical Logic},
pp. \bfpage{215}--\blpage{250}.
\bpublisher{Springer},
\blocation{Dordrecht}
(\byear{1969}).
\doiurl{10.1007/978-94-010-9614-0\_17}.
\bcomment{Reprint of a technical report, Canterbury University College,
  Christchurch, 1957}
\end{bchapter}
\endbibitem

\bibitem[\protect\citeauthoryear{Letz}{1999}]{letz:habil}
\begin{botherref}
\oauthor{\bsnm{Letz}, \binits{R.}}:
Tableau and connection calculi. structure, complexity, implementation.
Habilitationsschrift,
TU München
(1999).
\url{https://web.archive.org/web/20230604101128/https://www2.tcs.ifi.lmu.de/~letz/habil.ps},
  accessed Jul 09, 2024
\end{botherref}
\endbibitem

\bibitem[\protect\citeauthoryear{Letz et~al.}{1994}]{letz:cut:1994}
\begin{barticle}
\bauthor{\bsnm{Letz}, \binits{R.}},
\bauthor{\bsnm{Mayr}, \binits{K.}},
\bauthor{\bsnm{Goller}, \binits{C.}}:
\batitle{Controlled integration of the cut rule into connection tableaux
  calculi}.
\bjtitle{J. Autom. Reasoning}
\bvolume{13}(\bissue{3}),
\bfpage{297}--\blpage{337}
(\byear{1994})
\end{barticle}
\endbibitem

\bibitem[\protect\citeauthoryear{Letz et~al.}{1992}]{setheo:92}
\begin{barticle}
\bauthor{\bsnm{Letz}, \binits{R.}},
\bauthor{\bsnm{Schumann}, \binits{J.}},
\bauthor{\bsnm{Bayerl}, \binits{S.}},
\bauthor{\bsnm{Bibel}, \binits{W.}}:
\batitle{{SETHEO:} {A} high-performance theorem prover}.
\bjtitle{J. Autom. Reasoning}
\bvolume{8}(\bissue{2}),
\bfpage{183}--\blpage{212}
(\byear{1992}).
\doiurl{10.1007/BF00244282}
\end{barticle}
\endbibitem

\bibitem[\protect\citeauthoryear{Lohrey}{2015}]{lohrey:survey:2015}
\begin{bchapter}
\bauthor{\bsnm{Lohrey}, \binits{M.}}:
\bctitle{Grammar-based tree compression}.
In: \beditor{\bsnm{Potapov}, \binits{I.}} (ed.)
\bbtitle{DLT 2015}.
\bsertitle{LNCS},
vol. \bseriesno{9168},
pp. \bfpage{46}--\blpage{57}.
\bpublisher{Springer},
\blocation{Cham}
(\byear{2015}).
\doiurl{10.1007/978-3-319-21500-6\_3}
\end{bchapter}
\endbibitem

\bibitem[\protect\citeauthoryear{Lohrey et~al.}{2013}]{lohrey:treerepair:2013}
\begin{barticle}
\bauthor{\bsnm{Lohrey}, \binits{M.}},
\bauthor{\bsnm{Maneth}, \binits{S.}},
\bauthor{\bsnm{Mennicke}, \binits{R.}}:
\batitle{{XML} tree structure compression using {RePair}}.
\bjtitle{Inf. Syst.}
\bvolume{38}(\bissue{8}),
\bfpage{1150}--\blpage{1167}
(\byear{2013}).
\doiurl{10.1016/j.is.2013.06.006}.
\bcomment{System available from \url{https://github.com/dc0d32/TreeRePair},
  accessed Jul 09, 2024}
\end{barticle}
\endbibitem

\bibitem[\protect\citeauthoryear{Loveland}{1978}]{loveland:1978}
\begin{bbook}
\bauthor{\bsnm{Loveland}, \binits{D.W.}}:
\bbtitle{Automated Theorem Proving: A Logical Basis}.
\bpublisher{North-Holland},
\blocation{Amsterdam}
(\byear{1978})
\end{bbook}
\endbibitem

\bibitem[\protect\citeauthoryear{{\L}ukasiewicz}{1948}]{luk:1948}
\begin{bchapter}
\bauthor{\bsnm{{\L}ukasiewicz}, \binits{J.}}:
\bctitle{The shortest axiom of the implicational calculus of propositions}.
In: \bbtitle{Proc. of the Royal Irish Academy},
vol. \bseriesno{52, Sect.~A, No.~3},
pp. \bfpage{25}--\blpage{33}
(\byear{1948}).
\burl{http://www.jstor.org/stable/20488489}
\end{bchapter}
\endbibitem

\bibitem[\protect\citeauthoryear{{\L}ukasiewicz}{1963}]{luk:book}
\begin{bbook}
\bauthor{\bsnm{{\L}ukasiewicz}, \binits{J.}}:
\bbtitle{Elements of Mathematical Logic}.
\bpublisher{Pergamon Press},
\blocation{Oxford}
(\byear{1963}).
\bcomment{English translation of the second edition (1958) of \textit{Elementy
  logiki matematycznej}, PWM, Warszawa}
\end{bbook}
\endbibitem

\bibitem[\protect\citeauthoryear{{\L}ukasiewicz}{1970}]{luk:selected:1970}
\begin{bbook}
\bauthor{\bsnm{{\L}ukasiewicz}, \binits{J.}}:
\bbtitle{Selected Works}.
\bpublisher{North-Holland},
\blocation{Amsterdam}
(\byear{1970}).
\bcomment{Edited by L. Borkowski}
\end{bbook}
\endbibitem

\bibitem[\protect\citeauthoryear{{\L}ukasiewicz and
  Tarski}{1930}]{luk:tarski:aussagenkalkuel:1930}
\begin{botherref}
\oauthor{\bsnm{{\L}ukasiewicz}, \binits{J.}},
\oauthor{\bsnm{Tarski}, \binits{A.}}:
Untersuchungen über den {A}ussagenkalkül.
Comptes rendus des séances de la Soc. d. Sciences et d. Lettres de Varsovie
\textbf{23}
(1930).
{E}nglish translation in \cite{luk:selected:1970}, p.~131--152
\end{botherref}
\endbibitem

\bibitem[\protect\citeauthoryear{Lusk and McCune}{1992}]{roo:parallel:1992}
\begin{bchapter}
\bauthor{\bsnm{Lusk}, \binits{E.L.}},
\bauthor{\bsnm{McCune}, \binits{W.W.}}:
\bctitle{Experiments with {ROO}, a parallel automated deduction system}.
In: \beditor{\bsnm{Fronh{\"o}fer}, \binits{B.}},
\beditor{\bsnm{Wrightson}, \binits{G.}} (eds.)
\bbtitle{Parallelization in Inference Systems}.
\bsertitle{LNCS (LNAI)},
vol. \bseriesno{590},
pp. \bfpage{139}--\blpage{162}.
\bpublisher{Springer},
\blocation{Berlin}
(\byear{1992}).
\doiurl{10.1007/3-540-55425-4\_6}
\end{bchapter}
\endbibitem

\bibitem[\protect\citeauthoryear{McCune}{2003}]{otter}
\begin{botherref}
\oauthor{\bsnm{McCune}, \binits{W.}}:
{OTTER} 3.3 {Reference} {Manual}.
Technical Report ANL/MCS-TM-263,
Argonne National Laboratory
(2003).
\url{https://www.cs.unm.edu/~mccune/otter/Otter33.pdf}, accessed Jul 09, 2024
\end{botherref}
\endbibitem

\bibitem[\protect\citeauthoryear{McCune}{2005--2010}]{prover9}
\begin{botherref}
\oauthor{\bsnm{McCune}, \binits{W.}}:
Prover9 and {Mace4}.
\url{http://www.cs.unm.edu/~mccune/prover9}, accessed Jul 09, 2024
(2005--2010)
\end{botherref}
\endbibitem

\bibitem[\protect\citeauthoryear{McCune and Wos}{1992}]{mccune:wos:cd:1992}
\begin{bchapter}
\bauthor{\bsnm{McCune}, \binits{W.}},
\bauthor{\bsnm{Wos}, \binits{L.}}:
\bctitle{Experiments in automated deduction with condensed detachment}.
In: \beditor{\bsnm{Kapur}, \binits{D.}} (ed.)
\bbtitle{CADE-11}.
\bsertitle{LNCS (LNAI)},
vol. \bseriesno{607},
pp. \bfpage{209}--\blpage{223}.
\bpublisher{Springer},
\blocation{Berlin}
(\byear{1992}).
\doiurl{10.1007/3-540-55602-8\_167}
\end{bchapter}
\endbibitem

\bibitem[\protect\citeauthoryear{Megill}{1995}]{megill:1995}
\begin{barticle}
\bauthor{\bsnm{Megill}, \binits{N.D.}}:
\batitle{A finitely axiomatized formalization of predicate calculus with
  equality}.
\bjtitle{Notre Dame J. of Formal Logic}
\bvolume{36}(\bissue{3}),
\bfpage{435}--\blpage{453}
(\byear{1995}).
\doiurl{10.1305/ndjfl/1040149359}
\end{barticle}
\endbibitem

\bibitem[\protect\citeauthoryear{Megill and Wheeler}{2019}]{metamath:book}
\begin{bbook}
\bauthor{\bsnm{Megill}, \binits{N.}},
\bauthor{\bsnm{Wheeler}, \binits{D.A.}}:
\bbtitle{Metamath: A Computer Language for Mathematical Proofs},
\bedition{2}nd edn.
\bpublisher{lulu.com},
\blocation{Morrisville}
(\byear{2019}).
\bcomment{Online \url{https://us.metamath.org/downloads/metamath.pdf}}
\end{bbook}
\endbibitem

\bibitem[\protect\citeauthoryear{Meredith and
  Prior}{1963}]{meredith:notes:1963}
\begin{barticle}
\bauthor{\bsnm{Meredith}, \binits{C.A.}},
\bauthor{\bsnm{Prior}, \binits{A.N.}}:
\batitle{Notes on the axiomatics of the propositional calculus}.
\bjtitle{Notre Dame J. of Formal Logic}
\bvolume{4}(\bissue{3}),
\bfpage{171}--\blpage{187}
(\byear{1963}).
\doiurl{10.1305/ndjfl/1093957574}
\end{barticle}
\endbibitem

\bibitem[\protect\citeauthoryear{Meredith}{1977}]{meredith:memoriam:1977}
\begin{barticle}
\bauthor{\bsnm{Meredith}, \binits{D.}}:
\batitle{In memoriam: {C}arew {A}rthur {M}eredith (1904--1976).}
\bjtitle{Notre Dame J. of Formal Logic}
\bvolume{18}(\bissue{4}),
\bfpage{513}--\blpage{516}
(\byear{1977}).
\doiurl{10.1305/ndjfl/1093888116}
\end{barticle}
\endbibitem

\bibitem[\protect\citeauthoryear{{OEIS Foundation Inc.}}{2022}]{oeis}
\begin{botherref}
\oauthor{\bsnm{{OEIS Foundation Inc.}}}:
The {O}n-{L}ine {E}ncyclopedia of {I}nteger {S}equences.
\url{http://oeis.org}
(2022)
\end{botherref}
\endbibitem

\bibitem[\protect\citeauthoryear{Otten}{2010}]{leancop}
\begin{barticle}
\bauthor{\bsnm{Otten}, \binits{J.}}:
\batitle{Restricting backtracking in connection calculi}.
\bjtitle{AI Communications}
\bvolume{23}(\bissue{2-3}),
\bfpage{159}--\blpage{182}
(\byear{2010}).
\doiurl{10.3233/AIC-2010-0464}
\end{barticle}
\endbibitem

\bibitem[\protect\citeauthoryear{Pfenning}{1988}]{pfenning:single:1988}
\begin{bchapter}
\bauthor{\bsnm{Pfenning}, \binits{F.}}:
\bctitle{Single axioms in the implicational propositional calculus}.
In: \beditor{\bsnm{Lusk}, \binits{E.}},
\beditor{\bsnm{Overbeek}, \binits{R.}} (eds.)
\bbtitle{CADE-9}.
\bsertitle{LNCS (LNAI)},
vol. \bseriesno{310},
pp. \bfpage{710}--\blpage{713}.
\bpublisher{Springer},
\blocation{Berlin}
(\byear{1988}).
\doiurl{10.1007/BFb0012869}
\end{bchapter}
\endbibitem

\bibitem[\protect\citeauthoryear{Prior}{1956}]{prior:logicians:1956}
\begin{barticle}
\bauthor{\bsnm{Prior}, \binits{A.N.}}:
\batitle{Logicians at play; or {S}yll, {S}imp and {H}ilbert}.
\bjtitle{Australasian Journal of Philosophy}
\bvolume{34}(\bissue{3}),
\bfpage{182}--\blpage{192}
(\byear{1956}).
\doiurl{10.1080/00048405685200181}
\end{barticle}
\endbibitem

\bibitem[\protect\citeauthoryear{Prior}{1962}]{prior:formal:logic:1962}
\begin{bbook}
\bauthor{\bsnm{Prior}, \binits{A.N.}}:
\bbtitle{Formal Logic},
\bedition{2nd} edn.
\bpublisher{Clarendon Press},
\blocation{Oxford}
(\byear{1962}).
\doiurl{10.1093/acprof:oso/9780198241560.001.0001}
\end{bbook}
\endbibitem

\bibitem[\protect\citeauthoryear{Rawson et~al.}{2023}]{mrcwzzwb:lemmas:2023}
\begin{bchapter}
\bauthor{\bsnm{Rawson}, \binits{M.}},
\bauthor{\bsnm{Wernhard}, \binits{C.}},
\bauthor{\bsnm{Zombori}, \binits{Z.}},
\bauthor{\bsnm{Bibel}, \binits{W.}}:
\bctitle{Lemmas: Generation, selection, application}.
In: \beditor{\bsnm{Ramanayake}, \binits{R.}},
\beditor{\bsnm{Urban}, \binits{J.}} (eds.)
\bbtitle{TABLEAUX 2023}.
\bsertitle{LNCS (LNAI)},
vol. \bseriesno{14278},
pp. \bfpage{153}--\blpage{174}.
\bpublisher{Springer},
\blocation{Cham}
(\byear{2023}).
\doiurl{10.1007/978-3-031-43513-3_9}.
\bcomment{Extended version: \url{https://arxiv.org/abs/2303.05854}}
\end{bchapter}
\endbibitem

\bibitem[\protect\citeauthoryear{Rezus}{1982}]{rezus:1982}
\begin{barticle}
\bauthor{\bsnm{Rezus}, \binits{A.}}:
\batitle{On a theorem of {T}arski}.
\bjtitle{Libertas Mathematica}
\bvolume{2},
\bfpage{63}--\blpage{97}
(\byear{1982})
\end{barticle}
\endbibitem

\bibitem[\protect\citeauthoryear{Rezuş}{2020a}]{rezus:tarski:2019}
\begin{bchapter}
\bauthor{\bsnm{Rezuş}, \binits{A.}}:
\bctitle{{T}arski singleton bases: 1925-1932 (on an allegedly lost `method of
  proof' of {A}lfred {T}arski) (2019)}.
In: \bbtitle{Witness Theory -- Notes on $\lambda$-calculus and Logic}.
\bsertitle{Studies in Logic},
vol. \bseriesno{84},
pp. \bfpage{227}--\blpage{243}.
\bpublisher{College Publications},
\blocation{London}
(\byear{2020}).
\bcomment{Preprint (2019): \url{https://doi.org/10.13140/RG.2.2.10955.34081}}
\end{bchapter}
\endbibitem

\bibitem[\protect\citeauthoryear{Rezuş}{2020b}]{rezus:tc:2016}
\begin{bchapter}
\bauthor{\bsnm{Rezuş}, \binits{A.}}:
\bctitle{{T}arski's {C}laim thirty years later (2010)}.
In: \bbtitle{Witness Theory -- Notes on $\lambda$-calculus and Logic}.
\bsertitle{Studies in Logic},
vol. \bseriesno{84},
pp. \bfpage{217}--\blpage{225}.
\bpublisher{College Publications},
\blocation{London}
(\byear{2020}).
\bcomment{Preprint (2016):
  \url{http://www.equivalences.org/editions/proof-theory/ar-tc-20160512.pdf}}
\end{bchapter}
\endbibitem

\bibitem[\protect\citeauthoryear{Rezuş}{2020c}]{rezus:2020:witness}
\begin{bbook}
\bauthor{\bsnm{Rezuş}, \binits{A.}}:
\bbtitle{Witness Theory -- Notes on $\lambda$-calculus and Logic}.
\bsertitle{Studies in Logic},
vol. \bseriesno{84}.
\bpublisher{College Publications},
\blocation{London}
(\byear{2020})
\end{bbook}
\endbibitem

\bibitem[\protect\citeauthoryear{Robinson}{1965}]{robinson:1965}
\begin{barticle}
\bauthor{\bsnm{Robinson}, \binits{J.A.}}:
\batitle{A machine-oriented logic based on the resolution principle}.
\bjtitle{JACM}
\bvolume{12}(\bissue{1}),
\bfpage{23}--\blpage{41}
(\byear{1965})
\end{barticle}
\endbibitem

\bibitem[\protect\citeauthoryear{Schulz et~al.}{2019}]{eprover}
\begin{bchapter}
\bauthor{\bsnm{Schulz}, \binits{S.}},
\bauthor{\bsnm{Cruanes}, \binits{S.}},
\bauthor{\bsnm{Vukmirovi{\'c}}, \binits{P.}}:
\bctitle{Faster, higher, stronger: {E} 2.3}.
In: \beditor{\bsnm{Fontaine}, \binits{P.}} (ed.)
\bbtitle{CADE~27}.
\bsertitle{LNAI},
pp. \bfpage{495}--\blpage{507}.
\bpublisher{Springer},
\blocation{Cham}
(\byear{2019}).
\doiurl{10.1007/978-3-030-29436-6\_29}
\end{bchapter}
\endbibitem

\bibitem[\protect\citeauthoryear{Schumann}{1994}]{schumann:delta:1994}
\begin{bchapter}
\bauthor{\bsnm{Schumann}, \binits{J.M.P.}}:
\bctitle{{DELTA} -- {A} bottom-up preprocessor for top-down theorem provers}.
In: \bbtitle{CADE-12}.
\bsertitle{LNCS (LNAI)},
vol. \bseriesno{814},
pp. \bfpage{774}--\blpage{777}.
\bpublisher{Springer},
\blocation{Berlin}
(\byear{1994}).
\doiurl{10.1007/3-540-58156-1\_58}
\end{bchapter}
\endbibitem

\bibitem[\protect\citeauthoryear{Sobocínski}{1932}]{sobocinski:1932}
\begin{barticle}
\bauthor{\bsnm{Sobocínski}, \binits{B.}}:
\batitle{Z badań nad teorią dedukcji}.
\bjtitle{Przegląd Filozoficzny}
\bvolume{35},
\bfpage{171}--\blpage{193}
(\byear{1932}).
\bcomment{Excerpts translated into English and edited by A. Rezuş are
  published as \cite[p.~257--268 (Appendix: Boles{\l}aw Sobocínski 1932,
  §~1)]{rezus:2020:witness}}
\end{barticle}
\endbibitem

\bibitem[\protect\citeauthoryear{Stickel}{1988}]{pttp}
\begin{barticle}
\bauthor{\bsnm{Stickel}, \binits{M.E.}}:
\batitle{A {P}rolog technology theorem prover: implementation by an extended
  {P}rolog compiler}.
\bjtitle{J. Autom. Reasoning}
\bvolume{4}(\bissue{4}),
\bfpage{353}--\blpage{380}
(\byear{1988}).
\doiurl{10.1007/BF00297245}
\end{barticle}
\endbibitem

\bibitem[\protect\citeauthoryear{Sutcliffe}{2017}]{tptp}
\begin{barticle}
\bauthor{\bsnm{Sutcliffe}, \binits{G.}}:
\batitle{The {TPTP} problem library and associated infrastructure. {F}rom {CNF}
  to {TH0}, {TPTP v6.4.0}}.
\bjtitle{J. Autom. Reasoning}
\bvolume{59}(\bissue{4}),
\bfpage{483}--\blpage{502}
(\byear{2017}).
\doiurl{10.1007/s10817-017-9407-7}
\end{barticle}
\endbibitem

\bibitem[\protect\citeauthoryear{Sutcliffe and Suttner}{2001}]{tptp-rating}
\begin{barticle}
\bauthor{\bsnm{Sutcliffe}, \binits{G.}},
\bauthor{\bsnm{Suttner}, \binits{C.}}:
\batitle{{Evaluating General Purpose Automated Theorem Proving Systems}}.
\bjtitle{AI}
\bvolume{131}(\bissue{1--2}),
\bfpage{39}--\blpage{54}
(\byear{2001}).
\doiurl{10.1016/S0004-3702(01)00113-8}
\end{barticle}
\endbibitem

\bibitem[\protect\citeauthoryear{Thomas}{1970}]{thomas:final}
\begin{barticle}
\bauthor{\bsnm{Thomas}, \binits{I.}}:
\batitle{Final word on a shortest implicational axiom}.
\bjtitle{Notre Dame J. of Formal Logic}
\bvolume{11}(\bissue{1}),
\bfpage{16}
(\byear{1970})
\end{barticle}
\endbibitem

\bibitem[\protect\citeauthoryear{Ulrich}{2001}]{ulrich:legacy:2001}
\begin{barticle}
\bauthor{\bsnm{Ulrich}, \binits{D.}}:
\batitle{A legacy recalled and a tradition continued}.
\bjtitle{J. Autom. Reasoning}
\bvolume{27}(\bissue{2}),
\bfpage{97}--\blpage{122}
(\byear{2001}).
\doiurl{10.1023/A:1010683508225}
\end{barticle}
\endbibitem

\bibitem[\protect\citeauthoryear{Ulrich}{2007}]{ulrich:cd:web}
\begin{botherref}
\oauthor{\bsnm{Ulrich}, \binits{D.}}:
Sentential Calculi Pages.
Online: \url{https://web.ics.purdue.edu/~dulrich/Home-page.htm}, accessed Jul
  09, 2024
(2007)
\end{botherref}
\endbibitem

\bibitem[\protect\citeauthoryear{Ulrich}{2016}]{ulrich:single:2016}
\begin{bchapter}
\bauthor{\bsnm{Ulrich}, \binits{D.}}:
\bctitle{Single axioms and axiom-pairs for the implicational fragments of {R},
  {R-Mingle}, and some related systems}.
In: \beditor{\bsnm{Bimbó}, \binits{K.}} (ed.)
\bbtitle{J.~Michael Dunn on Information Based Logics}.
\bsertitle{Outstanding Contributions to Logic},
vol. \bseriesno{8},
pp. \bfpage{53}--\blpage{80}.
\bpublisher{Springer},
\blocation{Cham}
(\byear{2016}).
\doiurl{10.1007/978-3-319-29300-4\_4}
\end{bchapter}
\endbibitem

\bibitem[\protect\citeauthoryear{Veroff}{1996}]{veroff:hints:1996}
\begin{barticle}
\bauthor{\bsnm{Veroff}, \binits{R.}}:
\batitle{Using hints to increase the effectiveness of an automated reasoning
  program: Case studies}.
\bjtitle{J. Autom. Reasoning}
\bvolume{16}(\bissue{3}),
\bfpage{223}--\blpage{239}
(\byear{1996}).
\doiurl{10.1007/BF00252178}
\end{barticle}
\endbibitem

\bibitem[\protect\citeauthoryear{Veroff}{2001}]{veroff:shortest:2001}
\begin{barticle}
\bauthor{\bsnm{Veroff}, \binits{R.}}:
\batitle{Finding shortest proofs: An application of linked inference rules}.
\bjtitle{J. Autom. Reasoning}
\bvolume{27}(\bissue{2}),
\bfpage{123}--\blpage{139}
(\byear{2001}).
\doiurl{10.1023/A:1010635625063}
\end{barticle}
\endbibitem

\bibitem[\protect\citeauthoryear{Veroff}{2011}]{veroff:cd:2011}
\begin{botherref}
\oauthor{\bsnm{Veroff}, \binits{R.}}:
Challenge Problems with Condensed Detachment.
Online: \url{https://www.cs.unm.edu/~veroff/CD/}, accessed Jul 09, 2024
(2011)
\end{botherref}
\endbibitem

\bibitem[\protect\citeauthoryear{Walsh and
  Fitelson}{2021}]{fitelson:walsh:2021}
\begin{botherref}
\oauthor{\bsnm{Walsh}, \binits{M.}},
\oauthor{\bsnm{Fitelson}, \binits{B.}}:
Answers to some open questions of {U}lrich and {M}eredith
(2021).
Under review, preprint: \url{http://fitelson.org/walsh.pdf}, accessed Jul 09,
  2024
\end{botherref}
\endbibitem

\bibitem[\protect\citeauthoryear{Wernhard}{2022a}]{cw:cdtools:2022}
\begin{botherref}
\oauthor{\bsnm{Wernhard}, \binits{C.}}:
{CD Tools} -- {C}ondensed detachment and structure generating theorem proving
  (system description).
CoRR
\textbf{abs/2207.08453}
(2022).
\doiurl{10.48550/arXiv.2207.08453}
\end{botherref}
\endbibitem

\bibitem[\protect\citeauthoryear{Wernhard}{2022b}]{cw:ccs}
\begin{bchapter}
\bauthor{\bsnm{Wernhard}, \binits{C.}}:
\bctitle{Generating compressed combinatory proof structures -- an approach to
  automated first-order theorem proving}.
In: \beditor{\bsnm{Konev}, \binits{B.}},
\beditor{\bsnm{Schon}, \binits{C.}},
\beditor{\bsnm{Steen}, \binits{A.}} (eds.)
\bbtitle{PAAR~2022}.
\bsertitle{CEUR Workshop Proc.},
vol. \bseriesno{3201}.
\bpublisher{CEUR-WS.org},
\blocation{Aachen}
(\byear{2022}).
\bcomment{Preprint: \url{https://arxiv.org/abs/2209.12592}}
\end{bchapter}
\endbibitem

\bibitem[\protect\citeauthoryear{Wernhard}{2024}]{cw:sgcd}
\begin{bchapter}
\bauthor{\bsnm{Wernhard}, \binits{C.}}:
\bctitle{Structure-generating first-order theorem proving}.
In: \beditor{\bsnm{Otten}, \binits{J.}},
\beditor{\bsnm{Bibel}, \binits{W.}} (eds.)
\bbtitle{AReCCa 2023}.
\bsertitle{CEUR Workshop Proc.},
vol. \bseriesno{3613},
pp. \bfpage{64}--\blpage{83}.
\bpublisher{CEUR-WS.org},
\blocation{Aachen}
(\byear{2024}).
\burl{https://ceur-ws.org/Vol-3613/AReCCa2023\_paper5.pdf}
\end{bchapter}
\endbibitem

\bibitem[\protect\citeauthoryear{Wernhard and Bibel}{2021a}]{cwwb:lukas:2021}
\begin{bchapter}
\bauthor{\bsnm{Wernhard}, \binits{C.}},
\bauthor{\bsnm{Bibel}, \binits{W.}}:
\bctitle{Learning from {{\L}}ukasiewicz and {M}eredith: Investigations into
  proof structures}.
In: \beditor{\bsnm{Platzer}, \binits{A.}},
\beditor{\bsnm{Sutcliffe}, \binits{G.}} (eds.)
\bbtitle{CADE~28}.
\bsertitle{LNCS (LNAI)},
vol. \bseriesno{12699},
pp. \bfpage{58}--\blpage{75}.
\bpublisher{Springer},
\blocation{Cham}
(\byear{2021}).
\doiurl{10.1007/978-3-030-79876-5_4}
\end{bchapter}
\endbibitem

\bibitem[\protect\citeauthoryear{Wernhard and
  Bibel}{2021b}]{cwwb:lukas:2021:extended}
\begin{botherref}
\oauthor{\bsnm{Wernhard}, \binits{C.}},
\oauthor{\bsnm{Bibel}, \binits{W.}}:
Learning from {{\L}}ukasiewicz and {M}eredith: Investigations into proof
  structures (extended version).
CoRR
\textbf{abs/2104.13645}
(2021).
\doiurl{10.48550/arXiv.2104.13645}
\end{botherref}
\endbibitem

\bibitem[\protect\citeauthoryear{Wielemaker et~al.}{2012}]{swiprolog}
\begin{barticle}
\bauthor{\bsnm{Wielemaker}, \binits{J.}},
\bauthor{\bsnm{Schrijvers}, \binits{T.}},
\bauthor{\bsnm{Triska}, \binits{M.}},
\bauthor{\bsnm{Lager}, \binits{T.}}:
\batitle{{SWI-Prolog}}.
\bjtitle{Theory and Practice of Logic Programming}
\bvolume{12}(\bissue{1-2}),
\bfpage{67}--\blpage{96}
(\byear{2012}).
\doiurl{10.1017/S1471068411000494}
\end{barticle}
\endbibitem

\bibitem[\protect\citeauthoryear{Wos}{1991}]{wos:bledsoe:91}
\begin{bchapter}
\bauthor{\bsnm{Wos}, \binits{L.}}:
\bctitle{Automated reasoning and {B}ledsoe's dream for the field}.
In: \beditor{\bsnm{Boyer}, \binits{R.S.}} (ed.)
\bbtitle{Automated Reasoning: Essays in Honor of Woody Bledsoe}.
\bsertitle{Automated Reasoning Series},
pp. \bfpage{297}--\blpage{345}.
\bpublisher{Kluwer Academic Publishers},
\blocation{Dordrecht}
(\byear{1991}).
\doiurl{10.1007/978-94-011-3488-0_15}
\end{bchapter}
\endbibitem

\bibitem[\protect\citeauthoryear{Wos}{1995}]{wos:resonance:95}
\begin{barticle}
\bauthor{\bsnm{Wos}, \binits{L.}}:
\batitle{The resonance strategy}.
\bjtitle{Computers Math. Applic.}
\bvolume{29}(\bissue{2}),
\bfpage{133}--\blpage{178}
(\byear{1995}).
\doiurl{10.1016/0898-1221(94)00220-F}
\end{barticle}
\endbibitem

\bibitem[\protect\citeauthoryear{Wos}{1996}]{wos:combining:96}
\begin{barticle}
\bauthor{\bsnm{Wos}, \binits{L.}}:
\batitle{The power of combining resonance with heat}.
\bjtitle{J. Autom. Reasoning}
\bvolume{17}(\bissue{1}),
\bfpage{23}--\blpage{81}
(\byear{1996}).
\doiurl{10.1007/BF00247668}
\end{barticle}
\endbibitem

\bibitem[\protect\citeauthoryear{Wos}{2001}]{wos:meredith}
\begin{barticle}
\bauthor{\bsnm{Wos}, \binits{L.}}:
\batitle{Conquering the {M}eredith single axiom}.
\bjtitle{J. Autom. Reasoning}
\bvolume{27}(\bissue{2}),
\bfpage{175}--\blpage{199}
(\byear{2001}).
\doiurl{10.1023/A:1010691726881}
\end{barticle}
\endbibitem

\bibitem[\protect\citeauthoryear{Wos et~al.}{1990}]{wos:contributes:1990}
\begin{bchapter}
\bauthor{\bsnm{Wos}, \binits{L.}},
\bauthor{\bsnm{Winker}, \binits{S.}},
\bauthor{\bsnm{McCune}, \binits{W.}},
\bauthor{\bsnm{Overbeek}, \binits{R.}},
\bauthor{\bsnm{Lusk}, \binits{E.}},
\bauthor{\bsnm{Stevens}, \binits{R.}},
\bauthor{\bsnm{Butler}, \binits{R.}}:
\bctitle{Automated reasoning contributes to mathematics and logic}.
In: \beditor{\bsnm{Stickel}, \binits{M.E.}} (ed.)
\bbtitle{CADE-10},
pp. \bfpage{485}--\blpage{499}.
\bpublisher{Springer},
\blocation{Berlin}
(\byear{1990}).
\doiurl{10.1007/3-540-52885-7\_109}
\end{bchapter}
\endbibitem

\end{thebibliography}

\end{document}